\newcommand{\ignore}[1]{}
\title{EEMARQ: Efficient Lock-Free Range Queries with Memory Reclamation} 
\author{Gali Sheffi}{Department of Computer Science, Technion, Haifa, Israel}{sheffiga@gmail.com}{}{}
\author{Pedro Ramalhete}{Cisco Systems, Switzerland}{pramalhe@gmail.com}{}{}
\author{Erez Petrank}{Department of Computer Science, Technion, Haifa, Israel}{erez@cs.technion.ac.il}{}{}
\authorrunning{G. Sheffi, P. Ramalhete and E. Petrank} 
\keywords{safe memory reclamation, lock-freedom, snapshot, concurrency, range query} 
\begin{document}
\nolinenumbers
\begin{titlepage}

\maketitle

\begin{abstract}
Multi-Version Concurrency Control (MVCC) is a common mechanism for achieving linearizable range queries in database systems and concurrent data-structures.
The core idea is to keep previous versions of nodes to serve range queries, while still providing atomic reads and updates.
Existing concurrent data-structure implementations, that support linearizable range queries, are either slow, use locks, or rely on blocking reclamation schemes. 
We present EEMARQ, the first scheme that uses MVCC with lock-free memory reclamation to obtain a fully lock-free data-structure supporting linearizable inserts, deletes, contains, and range queries.
Evaluation shows that EEMARQ outperforms existing solutions across most workloads, with lower space overhead and while providing full lock freedom.
\end{abstract}

\end{titlepage}

\newpage

\section{Introduction} \label{sec-intro}

Online Analytical Processing (OLAP) transactions
are typically long and may read data from a large subset of the records in a database~\cite{plattner2009common,reddy2010data}. As such, analytical workloads pose a significant challenge in the design and implementation of efficient concurrency controls for database management systems (DBMS).
%
Two-Phase Locking (2PL)~\cite{thomasian1991performance} is sometimes used, but locking each record before it is read
implies a high synchronization cost and, moreover, the inability to modify these records over long periods.
%
Another way to deal with OLAP queries is to use Optimistic Concurrency Controls~\cite{harder1984observations}, where the records are not locked, but they need to be validated at commit time to guarantee serializability~\cite{papadimitriou1979serializability}. If during the time that the analytical transaction executes, there is any modification to one of these records, the analytical query will have to abort and restart. Aborting can prevent long read-only queries from ever completing. 

DBMS designers typically address these obstacles using Multi-Version Concurrency Control (MVCC).
MVCC's core idea is to keep previous versions of a record, allowing transactions to read data from a fixed point in time.
For managing the older versions, each record is associated with its list of older records. Each version record contains a copy of an older version, and its respective time stamp, indicating its commit time.
Each update of the record's values adds a new version record to the top of the version list, and every read of an older version is done by traversing the version list, until the relevant timestamp is reached.

Keeping the version lists relatively short is fundamental for high performance and low memory footprint~\cite{bottcher2019scalable,lee2016hybrid,diaconu2013hekaton,kim2020long,kim2021rethink}.
However, the version lists should be carefully pruned, as a missing version record can be harmful to an ongoing range query.
In the Database setting, the common approach is to garbage collect old versions once the version list lengths exceed a certain threshold~\cite{bottcher2019scalable,lee2016hybrid,postgresql1996postgresql}.
During garbage collection, the entire database is scanned for record instances with versions that are not required for future scanners. 
However, garbage collecting old versions with update-intensive workloads considerably slows down the entire system.
An alternative approach is to use transactions and associate each scan operation with an undo-log~\cite{bartholomew2014mariadb,mysql2001mysql}.
But this requires allocating memory for storing the undo logs and a safe memory reclamation scheme to recycle log entries. 

In contrast to the DBMS approach, many concurrent in-memory data-structures implementations do not provide an MVCC mechanism, and simply give up on range queries. 
Many map-based data-structures provide linearizable~\cite{herlihy1990linearizability} insertions, deletions and membership queries of single items. 
Most data-structures that do provide range queries are blocking. They either use blocking MVCC mechanisms~\cite{nelson2021bundled,arbel2018harnessing,bronson2010practical}, or rely on the garbage collector of managed programming languages~\cite{petrank2013lock,basin2017kiwi,fatourou2019persistent,winblad2021lock} which is blocking.
It may seem like lock-free data-structures can simply employ a lock-free reclamation scheme with an MVCC mechanism to obtain full lock-freedom, but interestingly this poses a whole new challenge. 

Safe manual reclamation (SMR)~\cite{DBLP:journals/corr/abs-2107-13843,singh2021nbr,ramalhete2017brief,wen2018interval,michael2004hazard} 
algorithms rely on \textit{retire()} invocations by the program, announcing that a certain object has been unlinked from a data-structure. 
The task of the SMR mechanism is to decide which retired objects can be safely reclaimed, making their memory space available for re-allocation.
Most SMR techniques~\cite{singh2021nbr,ramalhete2017brief,wen2018interval,michael2004hazard} heavily rely on the fact that retired objects are no longer reachable for threads that concurrently traverse the data structure.
Typically, objects are retired when deleted from the data-structure.
However, when using version lists, if we retire objects when they are deleted from the current version of the data-structure, they would still be reachable by range queries via the version list links.
Therefore, it is not safe to recycle old objects with existing memory reclamation schemes.
Epoch-based reclamation (EBR)~\cite{fraser2004practical,brown2015reclaiming} is an exception to the rule because it only requires that an operation does not access nodes that were retired before the operation started. Namely, an existing range query prohibits reclamation of any deleted node, and subsequent range queries do not access these nodes, so they can be deleted safely. 
Therefore, EBR can be used without making any MVCC-specific adjustments, and indeed  EBR is used in many in-memory solutions~\cite{wei2021constant,arbel2018harnessing,nelson2021bundled}. However, EBR is not robust~\cite{singh2021nbr,wen2018interval,DBLP:journals/corr/abs-2107-13843}. I.e., a slow executing thread may prevent the reclamation of an unbounded number of retired objects, which may affect performance, and theoretically block all new allocations.

One lock-free memory reclamation scheme that can be adopted to provide lock-free support for MVCC is the VBR optimistic memory reclamation scheme~\cite{DBLP:journals/corr/abs-2107-13843,sheffi2021vbr}. 
VBR allows a program to access reclaimed space, but it raises a warning when accessed data has been re-allocated. This allows the program to retry the access with refreshed data. 
The main advantages of VBR are that it is lock-free, it is fast, and it has very low memory footprint. Any retired object can be immediately reclaimed safely. The main drawback\footnote{VBR also necessitates type-preservation. However, this does not constitute a problem in our setting, as all allocated memory objects are of the same type. For more details, see Section~\ref{sec-algoritm}.} is that reclaimed memory cannot be returned to the operating system, but must be kept for subsequent node allocations, as program threads may still access reclaimed nodes. %
Similarly to other schemes, the correctness of VBR depends on the inability of operations to visit previously retired objects. In data structures that do not use multi versions, the deleting thread adequately retires a node after disconnecting it from the data structure. But in the MVCC setting, disconnected nodes remain connected via the version list, foiling correctness of VBR. 

In this paper we modify VBR to work correctly in the MVCC setting. 
It turns out that for the specific case of old nodes in the version list, correctness can be obtained. VBR keeps a slow-ticking epoch clock and it maintains a birth epoch field for each node. Interestingly, this birth epoch can be used to tell whether a retired node in the version list has been re-allocated. As we show in this paper, an invariant of nodes in the version list is that they have non-increasing birth-epoch numbers. Moreover, if one of the nodes in the version list is re-allocated, then this node must foil the invariant. Therefore, when a range query traverses a version list to locate the version to use for its traversal, it can easily detect a node that has been re-allocated and whose data is irrelevant. When a range query detects such re-allocation, it restarts. As shown in the evaluation, restarting happens infrequently with VBR and the obtained performance is the best among existing schemes in the literature.
Using the modified VBR in the MVCC setting, a thread can delete a node and retire it after disconnecting it from the data structure (and while it is still reachable via version list pointers).

 

Recently, two novel papers~\cite{wei2021constant,nelson2021bundled} presented efficient MVCC-based key-value stores. The main new idea is to track and keep a version list of modified fields only and not of entire nodes. For many data-structures, all fields are immutable except for one or two pointer fields.
For such data-structures, it is enough to keep a version list of pointers only. 
The first paper proposed a lock-free mechanism, based on {\em versioned CAS objects} (vCAS)~\cite{wei2021constant}, and the second proposed a blocking mechanism, based on {\em Bundle objects} (Bundles)~\cite{nelson2021bundled}.
While copying one field instead of the entire node reduces the space overhead, the resulting indirection is harmful for performance: to dereference a pointer during a traversal, one must first move to the top node of the version list, and then use its pointer to continue with the traversal. As we show in Section~\ref{sec-evaluation}, this indirection suffers from high overheads in read-intensive workloads. Bundles ameliorate this overhead by caching the most recent pointer in the node to allow quick access for traversals that do not use older versions.
However, range queries still need to dereference twice as many references during a traversal.
The vCAS approach presents a more complicated optimization that completely eliminates indirection
(which we further discuss in Section~\ref{sec-algoritm}). However, its applicability depends on assumptions on the original data-structure that many data structures do not satisfy. Therefore, it is unclear how it can be integrated into general state-of-the-art concurrent data-structures.
%
In terms of memory reclamation, both schemes use the EBR technique (that may block in theory, due to allocation heap exhaustion). This means that even vCAS does not provide a lock-free range query mechanism. A subsequent paper on MVCC-specific garbage collectors~\cite{ben2021space}, provides a robust memory reclamation method for collecting the version records, but it does not deal with the actual data-structure nodes.

In this paper we present EEMARQ (End-to-End lock-free MAp with Range Queries), a design for a lock-free in-memory map with MVCC and a robust lock-free memory management scheme. 
EEMARQ provides linearizable and high-performance inserts, deletes, searches, and range queries. 

The design starts by applying vCAS to a linked-list. The linked-list is simple, and so it allows applying vCAS easily. 
However, the linked-list does not satisfy the optimization assumptions required by~\cite{wei2021constant}, and so a (non-trivial) extension is required to fit the linked-list. 
In the unoptimized variant, there is a version node for each of the updates (insert or delete) and this node is used to route the scans in the adequate timestamp. But 
this is costly, because traversals end up accessing twice the number of the original nodes. A natural extension 
is to associate the list nodes with the data that is originally kept in the version nodes. 
We augment the linked-list construction to allow the optimization of~\cite{wei2021constant} for it. They use a clever mapping from the original version nodes to existing list nodes, that allows moving the data from version nodes to list nodes, and then elide the version nodes. Now traversals need no extra memory accesses to version nodes. 
This method 
is explained in Section~\ref{sec-range-queries}.

Second, we extend VBR by adding support for reachable retired nodes on the version list. 
The extended VBR allows keeping retired reachable version nodes in the data structure (which the original VBR forbids) while maintaining high performance, lock-freedom, and robustness. 

Finally, we deal with the inefficiency of a linked-list by adding a fast indexing to the linked-list nodes. A fast index can be obtained from a binary search tree or a skip list. But the advantage we get from separating the linked-list from the indexing mechanism is that we do not have to maintain versions for the index (e.g., for the binary search tree or the skip list), but only for the underlying linked-list. 
This separation between the design of the versioned linked-list and the non-versioned index, simplifies each of the sub-designs, and also obtains high performance, because operations on the index do not need to maintain versions. 
Previous work uses this separation idea between the lower level of the skip list or leaves of the tree from the rest for various goals (e.g.,~\cite{zuriel2019efficient,nelson2021bundled}). 

The combination of all these three ideas, i.e., optimized versioned linked-list, extended VBR, and independent indexing, yields a highly performant data structure design with range queries. 
%
Evaluation shows that EEMARQ outperforms both vCAS and Bundles (while providing full lock-freedom).
A proof of correctness is provided in the appendix. 

\ignore{
Like previous solutions~\cite{wei2021constant,nelson2021bundled,arbel2018harnessing}, our map uses a global epoch counter for marking data-structure updates with their respective timestamp. It requires neither complex helping mechanisms nor any form of version records. 
I.e., as opposed to~\cite{wei2021constant}, a traversal of the data-structure does not involve any form of indirection.
Moreover, although the strong assumptions, presented in~\cite{wei2021constant}, do not hold for our implementation, we prove that a similar technique can be used for avoiding indirection.
Our baseline is a new implementation of a linearizable and lock-free linked-list (presented in Section~\ref{sec-algoritm}), which offers efficient updates, single-point reads and multi-point range queries.
For enhancing searches and updates, we allow a simple integration of faster data-structures, serving as our fast-access index.
As opposed to the vCAS technique, we do not keep old versions of nodes and pointers, required only for fast indexing (e.g., the internal nodes in a binary search tree or the upper levels pointers in a skiplist).
Finally, memory is managed using an enhanced VBR variant, providing robustness and maintaining both efficiency and lock-freedom. As VBR cannot be integrated as is, we suggest the needed changes and prove that they preserve correctness.
}

\section{Related Work} \label{sec-related}

Existing work on linearizable range queries in the shared-memory setting includes many solutions which are not based on MVCC techniques. 
Some data-structure interfaces originally include a tailor-made range query operation. E.g., there are trees~\cite{bronson2010practical,brown2012range,fatourou2019persistent,winblad2021lock}, hash tries~\cite{prokopec2012concurrent}, queues~\cite{nikolakopoulos2015consistency,nikolakopoulos2015concurrent,prokopec2015snapqueue}, skip lists~\cite{avni2013leaplist} and graphs~\cite{kallimanis2016wait} with a built-in range query mechanism.
Other and more general solutions execute range queries by explicitly taking a snapshot of the whole data-structure, followed by collecting the set of keys in the given range~\cite{afek1993atomic,attiya2008partial}. 
The Snapcollector~\cite{petrank2013lock} forces the cooperation of all executing threads while a certain thread is scanning the data-structure. Despite being lock-free and general, the Snapcollector's memory and performance overheads are high. The Snapcollector was enhanced to support range queries that do not force a snapshot of the whole data-structure~\cite{chatterjee2017lock}. However, this solution still suffers from major time and space overheads.

Another way to implement range queries is to use transactional memory ~\cite{fernandes2011lock,keidar2015multi,perelman2011smv,perelman2010maintaining}. Transactions can either be implemented in software or in hardware, allowing range queries to take effect atomically. Although transactions may seem as ideal candidates for long queries, software transactions incur high performance overheads and hardware transactions frequently abort when accessing big memory chunks.
Read-log-update (RLU)~\cite{matveev2015read} borrows software transaction techniques and extends read-copy-update (RCU)~\cite{mckenney1998read} to support multiple updates. RLU yields relatively simple implementations, but its integration involves re-designing the entire data-structure. In addition, similarly to RCU, it suffers from high overheads in write-extensive workloads.

Arbel-Raviv and Brown exploited the EBR manual reclamation to support range queries~\cite{arbel2018harnessing}. 
Their technique uses EBR's global epoch clock for  associating data-structure items with their insertion and deletion timestamps. EEMARQ uses a similar technique for associating data-structure modifications with respective timestamps.
They also took advantage of EBR's retire-lists, in order to locate nodes that were deleted during a range query scan. However, while their solution indeed avoids extensive scanning helping when deleting an item from the data-structure (as imposed by the Snapcollector~\cite{petrank2013lock}), it may still impose significant performance overheads (as shown in Section~\ref{sec-evaluation}). EEMARQ minimizes these overheads by keeping a reachable path to deleted nodes (for more details, see Section~\ref{sec-range-queries}).

\subparagraph{Multi-Version Concurrency Control}

MVCC easily provides isolation between concurrent reads and updates. I.e., range queries can work on a consistent view of the data, while not interfering with update operations. This powerful technique is widely used in commercial systems~\cite{farber2012sap,diaconu2013hekaton}, as well as in research-oriented DBMS~\cite{kemper2011hyper,postgresql1996postgresql}, in-memory shared environments~\cite{wei2021constant,nelson2021bundled} and transactional memory~\cite{fernandes2011lock,keidar2015multi,kumar2014timestamp,perelman2011smv,perelman2010maintaining}.
MVCC has been investigated in the DBMS setting for the last four decades, both from a theoretical~\cite{bernstein1982concurrency,bernstein1983multiversion,papadimitriou1984concurrency,wu2017empirical,kim2021rethink} and a practical~\cite{mysql2001mysql,diaconu2013hekaton,lee2016hybrid,postgresql1996postgresql,kemper2011hyper} point of view.
A lot of effort has been put in designing MVCC implementations and addressing the unwanted side-effects of long version lists. This issue is crucial both for accelerating version list scans and for reducing the contention between updates and garbage collection.
Accelerating DBMS version list scans (independently of the constant need to prone them) has been investigated in~\cite{kim2021rethink}.
Most DBMS-related work that focuses on this issue, tries to minimize the problem by eagerly collecting unnecessary versions~\cite{bottcher2019scalable,farber2012sap,lu2013generic,postgresql1996postgresql,mysql2001mysql,kim2020long,antonopoulos2019constant}. 


\subparagraph{Safe Memory Reclamation}

Most existing in-memory environments that enable linearizable range queries, avoid managing their memory, by relying on automatic (and blocking) garbage collection of old versions~\cite{petrank2013lock,basin2017kiwi,fatourou2019persistent,winblad2021lock}.
Solutions that do manually manage their allocated memory~\cite{wei2021constant,nelson2021bundled,arbel2018harnessing}, use EBR for safe reclamation.
In EBR, a shared epoch counter is incremented periodically, and upon each operation invocation, the threads announce their observed epoch. 
The epoch clock can be advanced when no executing thread has an announcement with a previous epoch.
During reclamation, only nodes that had been retired at least two epochs ago are reclaimed.
EBR is safe for MVCC because a running scan prevents the advance of the epoch clock, and also the reclamation of any node in the data-structure that was not deleted before the scan.


The MVCC-oriented garbage collector from~\cite{ben2021space} incorporates reference counting (RC)~\cite{correia2021orcgc,detlefs2002lock,herlihy2005nonblocking}, in order to maintain lock-freedom while safely reclaiming old versions. Any object can be immediately reclaimed once its reference count reaches zero, without the need in invoking explicit \textit{retire()} calls.
While RC simplifies reclamation, it incurs high performance overheads and does not guarantee a tight bound on unreclaimed garbage (i.e., it is not robust).

Other reclamation schemes were also considered when designing EEMARQ. NBR~\cite{singh2021nbr} was one of the strongest candidates, as it is fast and lock-free (under some hardware assumptions). However, it is not clear whether NBR can be integrated into a skip list implementation (which serves as one of our fast indexes).
Pointer–based reclamation methods (e.g., Hazard Pointers~\cite{michael2004hazard}) allow threads to protect specific objects (i.e., temporarily prevent their reclamation), by announcing their future access to these objects, or publishing an announcement indicating the protection of a bigger set of objects (e.g., Hazard Eras~\cite{ramalhete2017brief}, Interval-Based Reclamation~\cite{wen2018interval}, Margin Pointers~\cite{solomon2021efficiently}).
Although these schemes are robust (as opposed to EBR), it is unclear how they can be used in MVCC environments. They require that reaching a reclaimed node from a protected one would be impossible (even if the protected node is already retired). I.e., they require an explicit unlinking of old versions before retiring them. Besides the obvious performance overheads, it may affect robustness (as very old versions would not be reclaimed).
The garbage collector from~\cite{ben2021space} uses Hazard Eras for unlinking old versions (to be eventually collected using an RC-based strategy), but it has not been evaluated in practice. 

\section{The Algorithm} \label{sec-algoritm}



In this Section we present EEMARQ's design. We start by introducing a new lock-free linearizable linked-list implementation in Section~\ref{sec-list-implementation}.
The list implementation is based on Harris's lock-free linked-list~\cite{harris2001pragmatic}, and includes the standard \emph{insert()}, \emph{remove()} and \emph{contains()} operations. In Section~\ref{sec-range-queries} we explain how to add a linearizable and efficient range query operation. We describe the integration of the designated robust SMR algorithm in Section~\ref{sec-mvcc-vbr}, and explain how to improve performance by adding an external index in Section~\ref{sec-index}.
A full linearizability and lock-freedom proof for our implementation (including the range queries mechanism) appears in Appendix~\ref{sec-correctness}. Additional correctness proofs for the SMR and fast indexing integration appear in Appendix~\ref{sec-vbr-correctness} and~\ref{sec-index-correctness}, respectively. 

As discussed in~\cite{wei2021constant}, node-associated version lists introduce an extra level of indirection per node access. Methods that use designated version objects for recording updates, suffer from high overheads, especially in read-intensive workloads.
For avoiding this level of indirection, we introduce a new variant of Harris's linked-list. 
In our new variant, there is no need to store any update-related data in designated version records, since it can be stored directly inside nodes, in a well-defined manner.
Associating each node with a single data-structure update (i.e., an insertion or a removal) is challenging. 
%
Typically, an insert operation includes a single update, physically inserting a node into the list.
A remove operation involves a marking of the target node's {\em next} pointer (serving as its logical deletion, and the operation's linearization point in many existing implementations) and a following physical removal from the list. In other words, each node may be associated with multiple list updates.
Since the target node's physical deletion is not the linearization point of any operation, there is no need to record this update. However, each node may still be associated with either one or two updates throughout an execution (i.e., its logical insertion and deletion). 

\begin{figure}[h] 
  \centering
  \includegraphics[width=\linewidth,trim= 0in 2.6in 0in 0in]{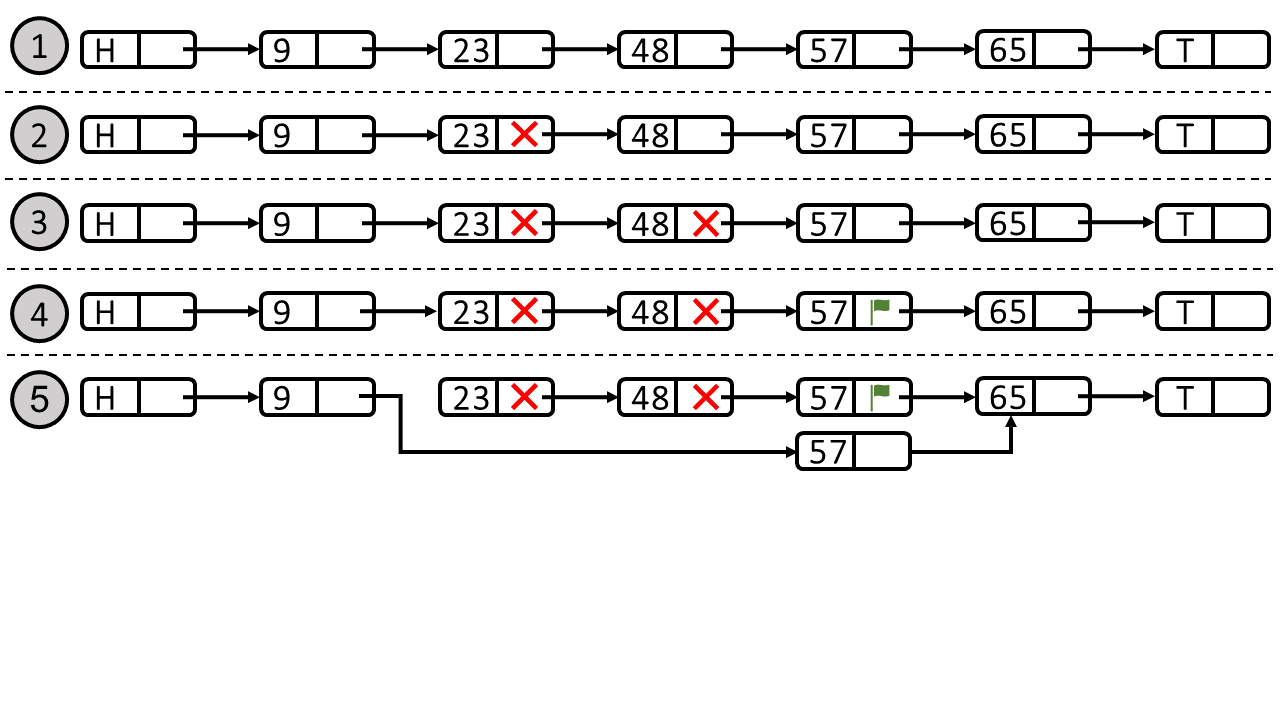}
  \caption{Removing nodes 23 and 48 from the linked-list. At stages 1-4, the list logically contains 5 nodes and at stage 5, it logically contains 3 nodes. The logical deletions of nodes 23 and 48 are executed via marking them (stages 2 and 3, respectively), flagging node 57 (stage 4) and inserting a new 57 representative instead of the three of them (stage 5). Nodes 23, 48, and the flagged 57 are then retired.} \label{fig-physical-delete}
\end{figure}

In our linked-list implementation, some new node is inserted into the list during every physical update of the list (either a physical insertion or deletion), which obviously yields the desirable association between nodes and data-structure updates (each node is associated with the update that involved its insertion into the list).
The node inserted during a physical insertion is simply the inserted node, logically inserted into the list.
The node inserted during a physical removal is a designated new node that replaces the deleted node's successor, and is physically inserted together with the physical removal of the deleted node~\footnote{When multiple nodes are physically removed together, it replaces the successor of the last node in the sequence of deleted nodes.}.

Figure~\ref{fig-physical-delete} shows an example for inserting a new node during deletion.
This list illustration shows the list layout throughout the deletion procedure of two nodes.
At the first stage, the list contains five nodes, ordered by their keys (together with the \emph{head} and \emph{tail} sentinels). Then, some thread marks node 23 for deletion. At this point, before physically removing it, some other thread marks node 48 for deletion. I.e., both nodes must be physically unlinked together, as successive marked nodes. In order to remove the marked nodes, node 57 is flagged, as it is the successor of the last marked node in the sequence. Node 57 is flagged for making sure that its \emph{next} pointer does not change. Finally, all three nodes are physically unlinked from the list, together with the physical insertion of a new node, representing the old flagged one. I.e., although all three nodes were physically removed from the list, node 57 was not logically removed, as it was replaced by a new node with the same key.
After the physical deletion at stage 5, the three nodes are retired (for more details, see Section~\ref{sec-mvcc-vbr}).
As opposed to Harris's implementation, the linearization point of both deletions is this physical removal, which atomically inserts the new representative into the list.
Our mapping from modifications to nodes, maps this deletion to the new node, inserted at stage 5 (which represents the deletion of both nodes). In Section~\ref{sec-range-queries} we explain how this mapping is used for executing range queries.


\subsection{The Linked-List Implementation} \label{sec-list-implementation}

Our linked-list implementation, together with the list node class, is presented in Algorithm~\ref{pseudo-list}. The simple pointer access methods implementation (e.g., \emph{mark()} in line~\ref{alg-remove-mark} and \emph{getRef()} in line~\ref{alg-find-curr-gets-curr-next}) appear in Appendix~\ref{sec-auxiliary}.
In a similar way to Harris's list, the API includes the \emph{insert()}, \emph{remove()} and \emph{contains()} operations~\footnote{The \emph{rangeQuery()} operation is added in Section~\ref{sec-range-queries}. In addition, lines marked in blue in Algorithm~\ref{pseudo-list} can be ignored at this point. They will also be discussed in Section~\ref{sec-range-queries}}.
The \emph{insert()} operation (lines~\ref{alg-insert-call}-\ref{alg-insert-return-no-val}) receives a key and a value. If there already exists a node with the given key in the list, it returns its value (line~\ref{alg-insert-return-curr-val}). Otherwise, is adds a new node with the given key and value to the list, and returns a designated NO\_VAL answer (line~\ref{alg-insert-return-no-val}). 
The \emph{remove()} operation (lines~\ref{alg-remove-call}-\ref{alg-remove-return-curr-val}) receives a key. If there exists a node with the given key in the list, it removes it and returns its value (line~\ref{alg-remove-return-curr-val}). Otherwise, it returns NO\_VAL (line~\ref{alg-remove-return-no-val}).
The \emph{contains()} operation (lines~\ref{alg-contains-call}-\ref{alg-contains-return-curr-val}) receives a key. If there exists a node with the given key in the list, it returns its value (line~\ref{alg-contains-return-curr-val}). Otherwise, it returns NO\_VAL (line~\ref{alg-contains-return-no-val}). 

\begin{algorithm*}[!ht]
\setlength\multicolsep{0pt}
\small
\begin{multicols}{2}
\begin{algorithmic}[1]

\State \textbf{class} Node \label{alg-class-node}
\Indent
    \State {\color{blue}  Long ts} \label{alg-class-node-ts}
    \State K key \label{alg-class-node-key}
    \State V value \label{alg-class-node-value}
    \State Node* next \label{alg-class-node-next}
    \State {\color{blue} Node* prior} \label{alg-class-node-prior}
    
\EndIndent

\Procedure{insert}{key, val} \label{alg-insert-call}
    \While{(true)}
        \State pred, curr $\leftarrow$ FIND(key) \label{alg-insert-find}
        \If{(curr $\rightarrow$ key == key)} \textbf{return} curr $\rightarrow$ val  \label{alg-insert-return-curr-val}
        \EndIf
        \State newNode := \textbf{alloc}(key, val{\color{blue}, $\bot$}) \label{alg-insert-alloc}
        \State newNode $\rightarrow$ next := curr \label{alg-insert-update-next}
        \State {\color{blue} newNode $\rightarrow$ prior := curr} \label{alg-insert-update-prior}
        \If{(CAS(\&pred $\rightarrow$ next, curr, newNode))} \label{alg-insert-update-pred} 
            \State {\color{blue} CAS(\&newNode $\rightarrow$ ts, $\bot$, getTS())} \label{alg-insert-update-ts}
            \State \textbf{return} NO\_VAL \label{alg-insert-return-no-val}  
        \EndIf        
    \EndWhile
\EndProcedure

\Procedure{remove}{key} \label{alg-remove-call}
    \While{(true)}
        \State pred, curr $\leftarrow$ FIND(key) \label{alg-remove-find}
        \If{(curr $\rightarrow$ key $\neq$ key)} \textbf{return} NO\_VAL  \label{alg-remove-return-no-val}
        \EndIf
        \If{(!\textbf{mark}(curr))} \textbf{continue} \label{alg-remove-mark} 
        \EndIf
        \State FIND(key) {\color{red} \Comment{for the physical deletion} \label{alg-remove-trim}}
        \State \textbf{return} curr $\rightarrow$ val {\color{red} \Comment{successful deletion}} \label{alg-remove-return-curr-val}
    \EndWhile
\EndProcedure

\Procedure{contains}{key} \label{alg-contains-call}
    \State pred, curr $\leftarrow$ FIND(key) \label{alg-contains-find}
    \If{(curr $\rightarrow$ key $\neq$ key)} \textbf{return} NO\_VAL  \label{alg-contains-return-no-val}
    \Else\ \textbf{return} curr $\rightarrow$ val \label{alg-contains-return-curr-val} 
    \EndIf
\EndProcedure


\Procedure{find}{key} \label{alg-find-call}
    \State \textbf{retry:} 
    \State pred := head \label{alg-find-head}
    \State predNext := pred $\rightarrow$ next \label{alg-find-head-next}
    \State curr := \textbf{getRef}(predNext) \label{alg-find-curr-gets-head-next}
        \While{(true)} \label{alg-find-while-true}
            \State \textbf{while}(\textbf{isMarkdOrFlagged}(curr $\rightarrow$ next))             \label{alg-find-while-is-marked}
            \Indent
                \If{(!\textbf{getRef}(curr $\rightarrow$ next))} \textbf{break} \label{alg-find-if-tail}
                \EndIf
                \State curr := \textbf{getRef}(curr $\rightarrow$ next) \label{alg-find-curr-gets-curr-next}
            \EndIndent
            \If{(curr $\rightarrow$ key $\geq$ key)} \textbf{break} \label{alg-find-key-bigger}
            \EndIf
            \State pred := curr \label{alg-find-pred-gets-curr}
            \State predNext := pred $\rightarrow$ next \label{alg-find-get-pred-next}
            \If{(\textbf{isMarkedOrFlagged}(predNext))}  \label{alg-find-pred-marked-flagged}
                \State \textbf{goto retry}
            \EndIf
            \State curr := \textbf{getRef}(predNext)  \label{alg-find-curr-gets-pred-next}
        \EndWhile
        \State {\color{blue} CAS(\&pred $\rightarrow$ ts, $\bot$, getTS())} \label{alg-find-update-pred-ts}
        \If{(predNext $\neq$ curr)} \label{alg-find-if-not-adjacent}
            \If{(!TRIM(pred, \textbf{getRef}(predNext)))} \label{alg-find-goto-after-trim}
                \State \textbf{goto retry} \label{alg-find-trim}
            \EndIf
            \State predNext := pred $\rightarrow$ next \label{alg-find-pred-next-after-trim}
            \If{(\textbf{isMarkedOrFlagged}(predNext))}  \label{alg-find-pred-marked-flagged-after-trim}
                \State \textbf{goto retry}
            \EndIf
            \State curr := predNext \label{alg-find-curr-gets-pred-next-after-trim}
            \If{( \textbf{isMarkedOrFlagged}(curr $\rightarrow$} \label{alg-find-key-bigger-after-trim}
            
            \Statex $\; \; \; \; \; \; \; \; \; \; \; \; \; \; \; \; \; \; \; \; \; \; \; \; \; \; \; \; \; \;$ next $\vee$ curr $\rightarrow$ key < key)
            \State \textbf{goto retry} 
            \EndIf
             
        \EndIf
        \State {\color{blue} CAS(\&curr $\rightarrow$ ts, $\bot$, getTS())} \label{alg-find-update-curr-ts}
        \State \textbf{return} pred, curr \label{alg-find-return-window}
\EndProcedure


\Procedure{trim}{pred, victim} \label{alg-trim-call}
    \State curr := victim \label{alg-trim-curr-gets-victim}
    \While{(\textbf{isMarked}(curr $\rightarrow$ next))} \label{alg-trim-while}
        \State curr := \textbf{getRef}(curr $\rightarrow$ next) \label{alg-trim-curr-gets-curr-next}
    \EndWhile
    \State {\color{blue} CAS(\&curr $\rightarrow$ ts, $\bot$, getTS())} \label{alg-trim-update-curr-ts}
    \If{(!\textbf{flag}(curr) $\wedge$ !\textbf{isFlagged}(curr $\rightarrow$ next))} \label{alg-trim-flag}
        \State \textbf{return} false \label{alg-trim-flag-return-false}
    \EndIf
    \State succ := \textbf{getRef}(curr $\rightarrow$ next) \label{alg-trim-read-succ}
    \State {\color{blue} \textbf{if} (succ)  CAS(\&succ $\rightarrow$ ts, $\bot$, getTS())} \label{alg-trim-update-succ-ts}
    \State newCurr := alloc(curr $\rightarrow$ key, curr $\rightarrow$ val{\color{blue}, $\bot$}) \label{alg-trim-alloc}
    \State newCurr $\rightarrow$ next := succ \label{alg-trim-init-next}
    \State {\color{blue}  newCurr $\rightarrow$ prior := victim} \label{alg-trim-update-prior}
    \If{(CAS(\&pred $\rightarrow$ next, victim, newCurr))} \label{alg-trim-cas} 
        \State {\color{blue}  CAS(\&newCurr $\rightarrow$ ts, $\bot$, getTS())} \label{alg-trim-update-new-curr-ts}
        \State \textbf{return} true
    \EndIf
    \State \textbf{return} false \label{alg-trim-return-false}
    
\EndProcedure

\end{algorithmic}
\end{multicols}
\caption{Our Linked-List Implementation.}
\label{pseudo-list}
\end{algorithm*}

All three API operations use the \emph{find()} auxiliary method (lines~\ref{alg-find-call}-\ref{alg-find-return-window}), which receives a key and returns pointers to two nodes, \emph{pred} and \emph{curr} (line~\ref{alg-find-return-window}). As in Harris's implementation, it is guaranteed that at some point during the method execution, both nodes are consecutive reachable nodes in the list, \emph{pred}'s key is strictly smaller than the input key, and \emph{curr}'s key is equal or bigger than the given key. I.e., if \emph{curr}'s key is strictly bigger than the input key, it is guaranteed that there is no node with the given input key in the list at this point. 
The method traverses the list, starting from the \emph{head} sentinel node (line~\ref{alg-find-head}), and until it gets to an unmarked and unflagged node with a key which is at least the input key (line~\ref{alg-find-key-bigger}).
Recall that the two output variables are guaranteed to have been reachable, adjacent, unmarked and not flagged at some point during the method execution. 
Therefore, as long as the current traversed node is either marked or flagged (checked in line~\ref{alg-find-while-is-marked}), the traversal continues, regardless of the current key (lines~\ref{alg-find-while-is-marked}-\ref{alg-find-curr-gets-curr-next}).
Once the traversal terminates (either in line~\ref{alg-find-if-tail} or~\ref{alg-find-key-bigger}), if the current two nodes, saved in the \emph{pred} and \emph{curr} variables, are adjacent (the condition checked in line~\ref{alg-find-if-not-adjacent} does not hold), then the method returns them in line~\ref{alg-find-return-window}.
Otherwise, similarly to the original implementation, the method is also in charge of physically removing marked nodes from the list.

As we are going to discuss next, our physical removal procedure, as depicted in Figure~\ref{fig-physical-delete}, is slightly different from the original one~\cite{harris2001pragmatic}.
Physical deletions are executed via the \emph{trim()} auxiliary method (lines~\ref{alg-trim-call}-\ref{alg-trim-return-false}).
Although nodes are still marked for deletion in our implementation (line~\ref{alg-remove-mark}), their successful marking does not serve as the removal linearization point. I.e., reachable marked nodes are still considered as list members.
The \emph{trim()} method receives two nodes as its input parameters, \emph{pred} and \emph{victim}. \emph{victim} is the physical removal candidate, and is assumed to already be marked. \emph{pred} is assumed to be \emph{victim}'s predecessor in the list, and to be neither marked nor flagged. 
As depicted in Figure~\ref{fig-physical-delete}, consecutive marked nodes are removed together.
Therefore, the method traverses the list, starting from \emph{victim}, for locating the first node which is not marked (lines~\ref{alg-trim-while}-\ref{alg-trim-curr-gets-curr-next}).
When such a node is found, the method tries to flag its \emph{next} pointer, for freezing it until the removal procedure is done.
In general, pointers are marked and flagged using their two least significant bits (which are practically redundant when reading node address aligned to a word). Both marked and flagged pointers are immutable, and a pointer cannot be both marked and flagged. 
Therefore, the flagging trial in line~\ref{alg-trim-flag} fails if \emph{curr}'s \emph{next} pointer is either marked or flagged. 
If the flagging trial is unsuccessful, and not because some other thread has already flagged \emph{curr}'s \emph{next} pointer, the method returns in line~\ref{alg-trim-flag-return-false}. 
Otherwise, a new node is created in order to replace the flagged one (lines~\ref{alg-trim-alloc}-\ref{alg-trim-update-prior}). Note that since this node's \emph{next} pointer is flagged (i.e., immutable), it is guaranteed that the new node points to the original one's current successor.
The actual trimming is executed in line~\ref{alg-trim-cas}. If the compare-and-swap (CAS) is successful, then the sequence of marked nodes, together with the single flagged one (at the end of the sequence), are atomically removed from the list, together with the insertion of the new copy of the flagged node (the new copy is neither flagged nor marked).

As the physical removal is necessary for linearizing the removal (as will be further discussed in Section~\ref{sec-range-queries}), a remover must physically remove the deleted node before it returns from a  \emph{remove()} call.
The marking of a node in line~\ref{alg-remove-mark} only determines the remover's identity and announces its intention to delete the marked node. Therefore, the remover must additionally ensure that the node is indeed unlinked, by calling the \emph{find()} method in line~\ref{alg-remove-trim}.
In Appendix~\ref{sec-correctness} we formally prove that the list implementation, presented in Algorithm~\ref{pseudo-list}, is linearizable and lock-free.


\subsection{Adding Range Queries} \label{sec-range-queries}

Given Algorithm~\ref{pseudo-list}, adding a linearizable range queries mechanism is relatively straight forward. We use a method which is similar to the vCAS technique~\cite{wei2021constant}.
As discussed in Section~\ref{sec-intro}, the vCAS scheme introduces an extra level of indirection for the linked-list per node access. Indeed, we show in Section~\ref{sec-evaluation} that the vCAS implementation suffers from high overheads.
The original vCAS paper provides a technique for avoiding this level of indirection. 
The suggested optimization relies on the following (very specific) assumption: 
a certain node can be the third input parameter to a successful CAS operation only once throughout the entire execution.
That successful CAS is considered as the \textit{recording} of this node, and the property is referred to as \textit{recorded-once} in~\cite{wei2021constant}.

Although the recorded-once property yields a linearizable solution, which reduces memory and time overheads, 
this assumption does not hold in the presence of physical deletions, as they usually set the deleted node's predecessor to point to the deleted node's successor~\cite{harris2001pragmatic}, or to another, already reachable node~\cite{natarajan2014fast,brown2014general}, and then, this reachable node is recorded more than once.
This makes the suggested technique inapplicable to Harris's linked-list~\cite{harris2001pragmatic} and most other concurrent data-structures (e.g.,~\cite{natarajan2014fast,bronson2010practical,herlihy2020art}). The original vCAS paper implemented a recorded-once binary search tree, based on~\cite{ellen2010non}, which we compare against in Section~\ref{sec-evaluation}.
We extend the recorded-once condition and make it fit for the linked-list and other data-structures. We claim that associating each node with the data of a single data-structure update (as provided by our list) is enough for avoiding indirection in this setting.
Given such an association, there is no need to store update-related data in designated version records, since it can be stored directly inside nodes, in a well-defined manner. 

First, in a similar way to~\cite{wei2021constant,arbel2018harnessing,nelson2021bundled}, we add a shared clock, for associating each node with a timestamp. The shared clock is read and updated using the \emph{getTS()} and \emph{fetchAddTS()} methods, respectively (see Appendix~\ref{sec-auxiliary}).
The shared clock is incremented whenever a range query is executed (e.g., see line~\ref{alg-range-query-faa} in Algorithm~\ref{pseudo-range}), and is read before setting a new node's timestamp (e.g, see lines~\ref{alg-insert-update-ts},~\ref{alg-find-update-pred-ts},~\ref{alg-find-update-curr-ts},~\ref{alg-trim-update-curr-ts},~\ref{alg-trim-update-succ-ts} and~\ref{alg-trim-update-new-curr-ts} in Algorithm~\ref{pseudo-list}).
Next, we change the nodes layout (see our node class description in Algorithm~\ref{pseudo-list}).
On top of the standard fields (i.e., key, value and \textit{next} pointer), we add two extra fields to each node. The first field is the node's timestamp (denoted as \textit{ts}), representing its insertion into the list. 
Nodes' \textit{ts} fields are always initialized with a special $\bot$ value (see lines~\ref{alg-insert-alloc} and~\ref{alg-trim-alloc} in Algorithm~\ref{pseudo-list}), to be given an actual timestamp after being inserted into the list.
The second field, \textit{prior}, points to the previous successor of this node's first predecessor in the list (its predecessor when being inserted into the list). Both fields are set once and then remain immutable. E.g., consider the new node, inserted into the list at stage 5 in Figure~\ref{fig-physical-delete}. Its \textit{prior} field points to the node whose key is 23, as this is the former successor of the node whose key is 9, which is the first predecessor of the newly inserted node.
By its specification, once the \textit{prior} field is set (see line~\ref{alg-insert-update-prior} and~\ref{alg-trim-update-prior} in Algorithm~\ref{pseudo-list}), it is immutable.
These two new fields are not used during the list operations from Algorithm~\ref{pseudo-list}, but we do specify their proper initialization, in order to support linearizable range queries.
Moreover (and in a similar way to~\cite{wei2021constant,arbel2018harnessing}), list inserts and deletes are linearized during the execution of the \emph{getTS()} method, as follows.
Let \emph{newNode} be the node inserted into the list in line~\ref{alg-insert-update-pred}, during a successful \emph{insert()} operation. \emph{newNode}'s timestamp is set at some point, not later than the CAS in line~\ref{alg-insert-update-ts} (it may be updated earlier, by a different thread). The \emph{getTS()} invocation that precedes the successful update of \emph{newNode}'s timestamp is the operation's linearization point.
In a similar way, consider a successful \emph{remove()} operation. The removed node is unlinked from the list during a successful \emph{trim()} execution. Let \emph{newCurr} be the node successfully inserted into the list in line~\ref{alg-trim-cas}, during this successful \emph{trim()} execution. \emph{newCurr}'s timestamp is set at some point, not later than the CAS in line~\ref{alg-trim-update-new-curr-ts} (it may be updated earlier, by a different thread). The \emph{getTS()} invocation that precedes the successful update of \emph{newCurr}'s timestamp is the operation's linearization point.

\begin{algorithm*}[!ht]
\begin{multicols}{2}
\begin{algorithmic}[1]

\Procedure{rangeQuery}{low, high, *arr} \label{alg-range-query-call}
    \State ts := fetchAddTS() \label{alg-range-query-faa}
    \State currKey := low
    \While{(true)} {\color{red} \label{alg-range-query-while-find} \Comment{finding a starting point}}
        \State pred, curr $\leftarrow$ FIND(currKey) \label{alg-range-query-find}
        \State currKey := pred $\rightarrow$ key
        \While{(pred $\rightarrow$ ts > ts)} \label{alg-range-query-while-pred}
            \State pred := pred $\rightarrow$ prior \label{alg-range-query-pred-gets-prior}
        \EndWhile
        \If{(pred $\rightarrow$ key $\leq$ low)}
            \State curr := pred \label{alg-range-query-curr-gets-pred}
            \State \textbf{break} \label{alg-range-query-break}
        \EndIf
        \State ts := getTS() - 1 \label{alg-range-query-new-ts}
    \EndWhile
    \While{(curr $\rightarrow$ key < low)} \label{alg-range-query-while-curr-smaller} 
        \State succ := \textbf{getRef}(curr $\rightarrow$ next) \label{alg-range-query-succ-gets-curr-next-smaller}
        \State CAS(\&succ $\rightarrow$ ts, $\bot$, getTS()) \label{alg-range-query-smaller-curr-succ-ts-update}
        \While{(succ $\rightarrow$ ts > ts)} \label{alg-range-query-smaller-curr-while-succ}
            \State succ := succ $\rightarrow$ prior \label{alg-range-query-smaller-curr-succ-gets-prior}
        \EndWhile
        \State curr := succ \label{alg-range-query-smaller-curr-gets-succ}
    \EndWhile
    \State count := 0 \label{alg-range-query-init-count}
    \While{(curr $\rightarrow$ key $\leq$ high)} \label{alg-range-query-while-curr-bigger} 
        \State arr[count] $\rightarrow$ key := curr $\rightarrow$ key \label{alg-range-query-write-key}
        \State arr[count] $\rightarrow$ value := curr $\rightarrow$ value \label{alg-range-query-write-value}
        \State count := count + 1
        \State succ := \textbf{getRef}(curr $\rightarrow$ next) 
        \label{alg-range-query-succ-gets-curr-next-bigger}
        \State CAS(\&succ $\rightarrow$ ts, $\bot$, getTS()) \label{alg-range-query-bigger-curr-succ-ts-update}
        \While{(succ $\rightarrow$ ts > ts)} \label{alg-range-query-bigger-curr-while-succ}
            \State succ := succ $\rightarrow$ prior \label{alg-range-query-bigger-curr-succ-gets-prior}
        \EndWhile
        \State curr := succ \label{alg-range-query-bigger-curr-gets-succ}
    \EndWhile
    \State \textbf{return} count \label{alg-range-query-return-count}
\EndProcedure

\end{algorithmic}
\end{multicols}
\caption{The Range Queries Mechanism.}
\label{pseudo-range}
\end{algorithm*}

Our range queries mechanism is presented in Algorithm~\ref{pseudo-range}.
The \emph{rangeQuery()} operation receives three input parameters (see line~\ref{alg-range-query-call}): the lowest and highest keys in the range, and an output array for returning the actual keys and associated values in the range. In addition to filling this array, it also returns its accumulated size in the \emph{count} variable.
The operation starts by fetching and incrementing the global timestamp counter (line~\ref{alg-range-query-faa}), which serves as the range query's linearization point. I.e., the former timestamp is the one associated with the range query. This way, the range query is indeed linearized between its invocation and response, along with guaranteeing that the respective view is immutable during the operation (as new updates will be associated with the new timestamp). 

After incrementing the global timestamp counter, the operation uses the \emph{find()} auxiliary method in order to locate the first node in range (lines~\ref{alg-range-query-while-find}-\ref{alg-range-query-new-ts}).
As opposed to the vCAS mechanism~\cite{wei2021constant}, and in a similar way to the Bundles mechanism~\cite{nelson2021bundled}, we observe that until the traversal reaches the target range, there is no need to take timestamps into consideration. This observation is crucial for performance, as there is no need to traverse nodes via the \emph{prior} fields (which produce longer traversals in practice).
In addition, it enables using the fast index (described in Section~\ref{sec-index}) for enhancing the search.
During each loop iteration, we first find a node with a key which is smaller than the lowest key in the range (saved as the \emph{pred} variable in line~\ref{alg-range-query-find}).
Then, we optimistically try to find a relatively close node, following \emph{prior} pointers, until we get to a small enough timestamp (lines~\ref{alg-range-query-while-pred}-\ref{alg-range-query-pred-gets-prior}).
Since this search may result in a node with a bigger key (e.g., see line~\ref{alg-insert-update-prior} in Algorithm~\ref{pseudo-list}), the next iteration sends a smaller key as input to the \emph{find()} execution in line~\ref{alg-range-query-find}.
Note that in the worst case scenario, the loop in lines~\ref{alg-range-query-while-find}-\ref{alg-range-query-new-ts} stops after the  \emph{find()} execution in line~\ref{alg-range-query-find} outputs the \emph{head} sentinel node (as its timestamp is necessarily smaller than ts). Therefore, it never runs infinitely.
The purpose of updating $ts$ in line~\ref{alg-range-query-new-ts} will be clarified in Section~\ref{sec-mvcc-vbr}, as it is related to the VBR mechanism. Note that in any case, this update does not foil correctness, since it is still guaranteed that the range query is linearized between the operation's invocation and response.

When \emph{pred} has a key which is smaller than the range lower bound, the operation moves on to the next step (the loop breaks in line~\ref{alg-range-query-break}). At this point, the traversal continues according to the respective timestamp\footnote{In Appendix~\ref{sec-correctness}
we prove that a node's successor at timestamp $T$ can be found by starting from its current successor and then following \textit{prior} references until reaching a node with a timestamp which is not greater than $T$.}, until getting to a node with a key which is at least the range lower bound (lines~\ref{alg-range-query-while-curr-smaller}-\ref{alg-range-query-smaller-curr-gets-succ}).
Once a node with a big enough key is found, the traversal continues in lines~\ref{alg-range-query-while-curr-bigger}-\ref{alg-range-query-bigger-curr-gets-succ}. At this stage, the \emph{count} output variable and the output array are updated according to the data accumulated during the range traversal. Finally, the \emph{count} output variable, indicating the total number of keys in range, is returned in line~\ref{alg-range-query-return-count}.
Note that throughout the traversals in lines~\ref{alg-range-query-smaller-curr-while-succ}-\ref{alg-range-query-smaller-curr-succ-gets-prior} and lines~\ref{alg-range-query-bigger-curr-while-succ}-\ref{alg-range-query-bigger-curr-succ-gets-prior}, there is no need to update \emph{succ}'s timestamp (as done in lines~\ref{alg-range-query-smaller-curr-succ-ts-update} and~\ref{alg-range-query-bigger-curr-succ-ts-update}), since it serves as a node's \emph{prior} and thus, is guaranteed to already have an updated timestamp 
(see Appendix~\ref{sec-correctness}).

\subsection{Adding A Safe Memory Reclamation Mechanism} \label{sec-mvcc-vbr}

Before integrating our list with a manual memory reclamation mechanism, we must first install \textit{retire()} invocations, for announcing that a node's memory space is available for re-allocation.
Naturally, nodes are retired after unsuccessful insertions, or after they are unlinked from the list. I.e., \emph{newNode} is retired if the CAS in line~\ref{alg-insert-update-pred} is unsuccessful, \emph{newCurr} is retired if the CAS in line~\ref{alg-trim-cas} is unsuccessful, and upon a successful trimming in line~\ref{alg-trim-cas}, the unlinked nodes are retired, starting from \emph{victim}. The last retired node is \emph{curr}, which is replaced by its new representative in the list, \emph{newCurr}.
Note that we do not handle physical removals of \emph{prior} links. Handling them is unnecessary, and might cause significant overheads, both to the list operations and to the reclamation procedure.
Therefore, retired nodes are still reachable from the list head during retirement: \emph{newCurr}'s \emph{prior} field points to \emph{victim}, making all of the unlinked nodes reachable via this pointer (and their \emph{next} pointers).

To add a safe memory reclamation mechanism to our list, we use an improved variant of Version Based Reclamation (VBR)~\cite{DBLP:journals/corr/abs-2107-13843}.
VBR cannot be integrated as is. Similarly to most safe memory reclamation techniques, it assumes that retired objects are not reachable via the data-structure links. This assumption is crucial to the correctness of VBR, as retired objects may be immediately reclaimed. 
In addition, VBR uses a slow ticking epoch clock, and ensures that the clock ticks at least once between the retirement and future re-allocation of the same node. 
During execution, the operating threads constantly check that the global epoch clock has not changed. Upon a clock tick, they conservatively treat all data read from shared memory as stale, and move control to an adequate previous point in the code in order to read a fresh value. 
As long as the clock does not tick, threads may continue executing without worrying about use-after-free issues. 
The intuition is that if a node is accessed during a certain epoch, then it must have been reachable during this epoch. I.e., even if this node has already been retired, its retirement was during the current epoch, which means that it has not been re-allocated yet (as the clock has not ticked yet).

Our list implementation poses a new challenge in this context. Suppose that the current epoch is $E$, and that a certain node, $n$, is currently in the list (i.e., it has not been unlinked using the \emph{trim()} method yet). In addition, suppose that $n$'s \emph{prior} field points to another node, $m$, that has been retired during an earlier epoch. Then $m$ may be reclaimed and re-allocated during $E$. A traversing thread may access $n$'s \emph{prior} field during $E$, without getting any indication to the fact that the referenced node is a stale value.
Another problem, which does not affect correctness, but may cause frequent thread starvations, is that the global epoch clock is likely to tick during a long range query. In the original VBR scheme, a clock tick forces the executing thread to start its traversal from scratch, even if it has not encountered any reclaimed node in practice.

In order to overcome the above problems, we made some small adjustments to the original VBR scheme.
First, we kept the global epoch clock of VBR and the timestamp clock of the range queries separated. We separated the two, as VBR works best with a (very) slow ticking clock. Read-intensive workloads (in which range queries dominate the execution) incur high overheads when combining the two clocks. The separation of the two independent clocks helps overcome the potential aborts.
The second step was to modify the nodes' layout
(The VBR-integrated node layout appears in Appendix~\ref{sec-vbr-correctness}).
Recall that the VBR scheme adds a \textit{birth epoch} to every node, along with a version per mutable field. Non-pointer mutable fields are associated with the node's birth epoch, and pointers are associated with a version which is the maximum between the birth epoch of the node and the birth epoch of its successor.
Our list nodes have two mutable fields, their timestamp $ts$ (changes only once), and their \emph{next} pointer. The \emph{prior} pointers are immutable.
Accordingly, we associated the node's timestamp with its VBR-integrated birth epoch, serving as its version (there was no need to add an extra $ts$ version), and added a designated \emph{next} pointer version. 
Writes to the mutable fields are handled exactly as in the original VBR scheme. Accordingly, upon allocation, a node's timestamp is initialized to $\bot$, along with the current VBR epoch as its associated birth epoch. When the timestamp is updated (see line~\ref{alg-insert-update-ts},~\ref{alg-find-update-pred-ts},~\ref{alg-find-update-curr-ts},~\ref{alg-trim-update-curr-ts},~\ref{alg-trim-update-succ-ts} and~\ref{alg-trim-update-new-curr-ts} in Algorithm~\ref{pseudo-list}, or lines~\ref{alg-range-query-smaller-curr-succ-ts-update} and~\ref{alg-range-query-bigger-curr-succ-ts-update} in Algorithm~\ref{pseudo-range}), the birth epoch (also serving as the timestamp's version) does not change, as the two fields are accessed together, via a wide-compare-and-swap (WCAS) instruction.
Similarly, \emph{next} pointers are associated with the maximum between the two respective birth epochs, and are also updated using WCAS.

Reads are handled in a different manner from the original VBR, as the problems we mentioned above must be treated with special care. The original VBR repeatedly reads the global epoch in order to make sure that it has not changed. In our extended VBR variant, it is read once. After reading the global epoch, and as long as the executing thread does not encounter a birth epoch or a version which is bigger then this epoch, it may continue executing its code. The motivation behind this behavior is that even if a certain node in the system has meanwhile been reclaimed, this node does not pose a problem as long as the current thread does not encounter it. 
Therefore, traversing threads follow three guidelines: (1) a node's birth epoch is read again after each read of another field, (2) after dereferencing a \emph{next} pointer, the reader additionally makes sure that the successor's birth epoch is not greater than the pointer's version, and (3) after dereferencing a \emph{prior} pointer (which is not associated with a version), the reader additionally makes sure that the successor's birth epoch is not greater than the predecessor's birth epoch. 
If any of these conditions does not hold, then the reader needs to proceed according to the original VBR's protocol.
In Appendix~\ref{sec-vbr-correctness} we prove that these three guidelines are sufficient for maintaining correctness.
%
Upon an epoch change, the original VBR enforces a rollback to a predefined checkpoint in the code. Accordingly, we install code checkpoints. Whenever a check that our guidelines impose fails, the executing thread rolls-back to the respective checkpoint. Checkpoints are installed in the beginning of each API operation (i.e., \emph{insert()}, \emph{remove()}, \emph{contains()} and \emph{rangeQuery()}). Another checkpoint is installed after a successful marking in line~\ref{alg-remove-mark} of Algorithm~\ref{pseudo-list}, as the identity of the marking thread affects linearizability (and therefore, a rollback to the beginning of the operation would foil linearizability). Note that a successful insertion in line~\ref{alg-insert-update-pred} does not force a checkpoint (although it affects linearizability, by setting the inserter identity), as it is not followed by any reads of potentially reclaimed memory.

Another issue that needs to be dealt with is the guarantee that life-cycles of nodes, allocated from the same memory address, do not overlap. The original VBR scheme does so by associating each node with a retire epoch. A node's retire epoch is set upon retirement. During re-allocation, if the current global epoch is equal to the node's retire epoch, then the global epoch is incremented before re-allocation.
We chose to optimize over the original VBR, discarding the retire epoch field, as it adds an extra field per allocated node. Instead, each retire list is associated with the epoch, recorded once it is full (right before it is returned to the global pool of nodes). Upon pulling such a list from the global pool, if its associated epoch is equal to the current one, then the global epoch is incremented.

Finally, consider the following scenario. Suppose that a thread $T_1$ is running a range query, the current epoch is $E$ and the current global timestamp is $t$. Next, suppose that another thread, $T_2$, reclaims a node $n$ that has been retired during $E$, and that is relevant for $T_1$'s range query. Starting from this point, whenever $T_1$ accesses the newly allocated node, it rolls back and starts its traversal from scratch (as the new node has a birth epoch which is greater then its predecessor through the \emph{prior} pointer, foiling guideline 3).
As long as $T_1$'s $ts$ variable does not change, $T_1$ will infinitely get to the new allocated node and then roll-back to the beginning. 
We reduce the probability of such scenarios in practice, by updating the $ts$ variable in line~\ref{alg-range-query-new-ts} of Algorithm~\ref{pseudo-range}. We further ensure that that the current global timestamp is up-to-date by incrementing it upon each re-allocation, if necessary.

\subsection{Adding A Fast Index} \label{sec-index}

Our linked-list implementation encapsulates the key-value pairs and enables the timestamps mechanism. However, when key ranges are large, the linked-list does not perform as good as other concurrent data-structures. It forces a linear traversal per operation, as opposed to skip lists~\cite{fraser2004practical,herlihy2020art} and binary search trees~\cite{natarajan2014fast,brown2014general}. We observe that the index links in such data-structures (e.g., the links connecting the upper levels in a skip list or the inner levels in a tree) are only required for fast access. The actual data exists only in the lowest level of the skip list (or the leaves of an external tree).
Therefore, we allow a simple integration of an external index, enabling fast access instead of long traversals. The index should provide an \emph{insert(key, node)} operation, receiving a key and a node pointer as its associated value. It additionally should provide a \emph{remove(key)} operation. Finally, instead of providing a \emph{contains(key)} operation, it should provide a \emph{findPred(key)} operation, receiving a key and returning a pointer to the node associated with some key which is smaller than the given one. 

The \emph{findPred(key)} can be naively implemented by calling the data-structure search method (there usually exists such method. E.g.,~\cite{harris2001pragmatic,natarajan2014fast,herlihy2020art}) with a smaller key as input. I.e., it is possible to search for key minus 2 or minus 10.
Obviously, this does not guarantee that a suitable node will indeed be returned. However, if the selected smaller key is not small enough, it is possible to start a new trial and search for a smaller key. Our experiments showed that limiting the number of such trials per search to a small constant (e.g., 5 in our experiments) is negligible in terms of performance, and is usually enough for locating a relevant node. In addition, the index is used only for fast access, so correctness is not affected even if all trials fail.
Specifically, the \emph{findPred(key)} operation can be easily implemented for a skip list, using its built-in search auxiliary method~\cite{fraser2004practical,herlihy2020art} (as it returns a predecessor with a smaller key). Examples for using a skip list as a fast index have already been introduced for linearizable data-structures~\cite{sheffi2018scalable} and in the transactional memory setting~\cite{spiegelman2016transactional}.
The \emph{findPred(key)} operation can also be implemented for some binary search trees, by traversing the left child, instead of the right child, at some point during the search path. We applied this method when implementing our tree index, based on Natarajan and Mittal's BST~\cite{natarajan2014fast}.

\begin{algorithm*}[!ht]
\begin{multicols}{2}
\begin{algorithmic}[1]

\State currKey := key \label{alg-index-init-key}
\State attempts := MAX\_ATTEMPTS \label{alg-index-max-attempts}
\While{($--$attempts $\neq 0$)} \label{alg-index-while}
    \State pred := index $\rightarrow$ \textbf{findPred}(currKey) \label{alg-index-find-pred}
    \State predTS := pred $\rightarrow$ timestamp \label{alg-index-pred-ts}
    \State predNext := pred $\rightarrow$ next \label{alg-index-pred-next}
    \State predKey := pred $\rightarrow$ key \label{alg-index-pred-key}
    \If{(pred $\rightarrow$ birth > currEpoch)} \textbf{rollback} \label{alg-index-rollback} 
    \ElsIf{(predKey $\geq$ key $\vee$ predTS == $\bot$)} \label{alg-index-continue} 
        \State \textbf{continue}
    \ElsIf{(\textbf{isMarkdOrFlagged}(predNext))} \label{alg-index-if-marked}
        \State currKey := predKey \label{alg-index-curr-gets-pred}
    \Else{\textbf{ break}} \label{alg-index-break} 
    \EndIf
\EndWhile
\If{(attempts == 0)} \label{alg-index-if-attempts} 
    \State pred := head \label{alg-index-pred-head} 
    \State predNext := pred $\rightarrow$ next \label{alg-index-head-next} 
\EndIf 

\end{algorithmic}
\end{multicols}
\caption{Starting a new traversal using the index.}
\label{pseudo-index}
\end{algorithm*}

We update the fast index as follows: New Nodes are inserted into the index after being inserted into the list (i.e., right before the \emph{insert()} operation returns in line~\ref{alg-insert-return-no-val} of Algorithm~\ref{pseudo-list}). Nodes are removed from the index after being removed from the list, and right before being retired (see Section~\ref{sec-mvcc-vbr}). Note that the \emph{curr} node, replaced by \emph{newCurr} via the CAS in line~\ref{alg-trim-cas}, should be removed from the index, followed by an insertion of  \emph{newCurr}. 
In our implemented index, we have implemented an \emph{update()} operation instead. This operation receives as input a node reference, and uses it to replace a node with the same key in the index\footnote{The implemented \emph{update()} operation also takes the node's birth epoch into account, and does not replace a node with a reclaimed node or with a node with a smaller birth epoch}. 
In case the external index does not provide an \emph{update()} operation, this also may be executed via the standard  \emph{remove()} operation, followed by a respective \emph{insert()} operation.
The index is read only during the \emph{find()} auxiliary method. Instead of starting each list traversal from the \emph{head} sentinel node (see line~\ref{alg-find-head} in Algorithm~\ref{pseudo-list}), the traversing thread tries to shorten the traversal by accessing the fast index. 
I.e., instead of executing the code in lines~\ref{alg-find-head}-\ref{alg-find-head-next} of Algorithm~\ref{pseudo-list}, each thread executes the code from Algorithm~\ref{pseudo-index}.
It starts by initializing the searched key to the input key (received as input in line~\ref{alg-find-call} of Algorithm~\ref{pseudo-list}).
Then, after finding the alleged predecessor, using the \emph{findPred()} operation (line~\ref{alg-index-find-pred}), if its birth epoch is bigger than the last recorded one, the executing thread rolls-back to its last recorded checkpoint (see Section~\ref{sec-mvcc-vbr}).
Otherwise, if \emph{pred}'s key is not smaller than the given input key, or its timestamp is not initialized yet 
(this is a possible scenario, as we show in Appendix~\ref{sec-index-correctness}), 
the thread starts another trial, with the same key. 
Otherwise, if \emph{pred} is either marked or flagged (line~\ref{alg-index-if-marked}), the thread starts another trial, with a smaller key. 
Otherwise, it is guaranteed that \emph{pred} and \emph{predNext} hold a valid node and its (unmarked and unflagged) \emph{next} pointer, respectively, and the thread may start its list traversal from line~\ref{alg-find-curr-gets-head-next} of Algorithm~\ref{pseudo-list}.
Note that the code presented in Algorithm~\ref{pseudo-index} always terminates, as in the worst case scenario, the loop breaks after a predefined number of attempts.
A full correctness proof appears in Appendix~\ref{sec-index-correctness}.

\section{Evaluation} \label{sec-evaluation}

For evaluating throughput of EEMARQ, we implemented the linked-list presented in Section~\ref{sec-algoritm}\footnote{For avoiding unnecessary accesses to the global timestamps clock, the $ts$ field updates from Algoritm~\ref{pseudo-list} and Algorithm~\ref{pseudo-range} were executed only after the $ts$ field was read, and only if it was still equal to $\bot$.}, including the extended VBR variant, as described in Section~\ref{sec-mvcc-vbr}. 
In addition, we implemented the lock-free skip list from~\cite{fraser2004practical} and the lock-free BST from~\cite{natarajan2014fast}. Both the skip list and the tree were used on top of the linked list, and served as fast indexes, as described in Section~\ref{sec-index}. Deleted nodes from both indexes were manually reclaimed, using the original VBR scheme, according to the integration guidelines from~\cite{DBLP:journals/corr/abs-2107-13843} (without the adjustments described in Section~\ref{sec-mvcc-vbr}).
Each data-structure had its own objects pool (as VBR forces type preservation).
Retire lists had 64 entries. Since VBR allows the immediate reclamation of retired nodes, retire lists were reclaimed as a whole every time they contained 64 nodes.
I.e., at most 8192 (64 retired nodes X 128 threads) objects were over-provisioned per data-structure at any given moment.

\begin{figure*}[!ht]
\centering
      \begin{subfigure}{0.63\textwidth}

         \includegraphics[width=\textwidth]{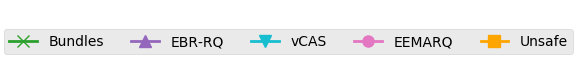}
      \end{subfigure}

      \begin{subfigure}{0.23\textwidth}
         \includegraphics[width=\textwidth]{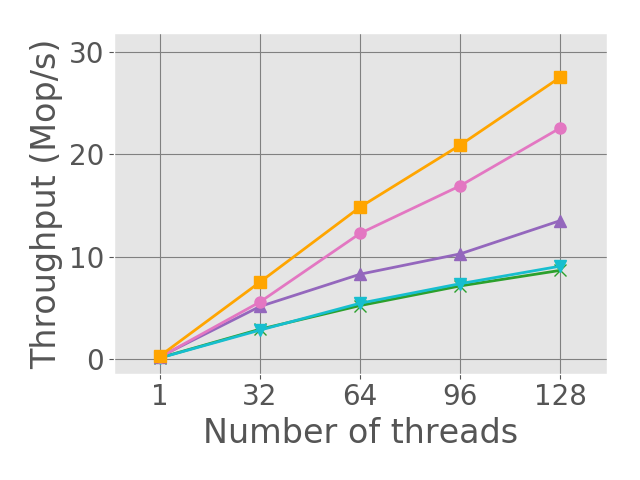}
        \caption{SL. Lookup-heavy: 0\%U-90\%C-10\%RQ}
        \label{fig:skiplist-0-10}
      \end{subfigure}
\hfill    
     \begin{subfigure}{0.23\textwidth}
    \includegraphics[width=\textwidth]{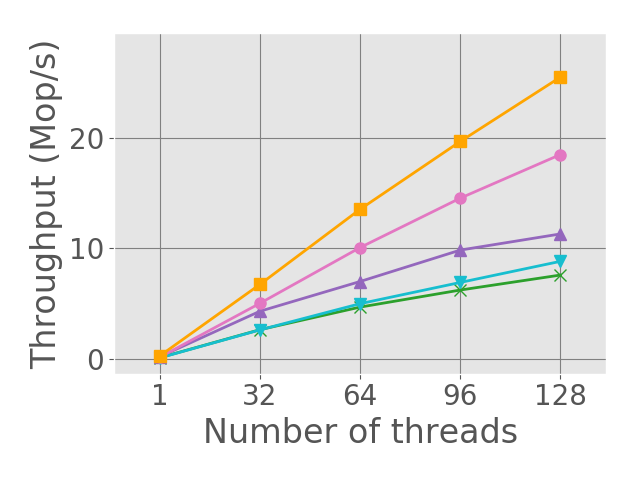}
	\caption{SL. Mixed Workload: 50\%U-40\%C-10\%RQ}
	        \label{fig:skiplist-25-10}

    \end{subfigure}
 \hfill
     \begin{subfigure}{0.23\textwidth}
    \includegraphics[width=\textwidth]{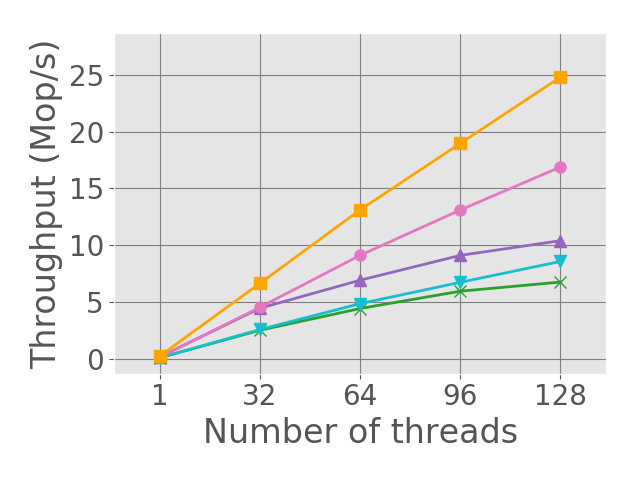}
	\caption{SL. Update-heavy: 90\%U-0\%C-10\%RQ}
	        \label{fig:skiplist-45-10}

    \end{subfigure}
 \hfill
    \begin{subfigure}{0.23\textwidth}
   \includegraphics[width=\textwidth]{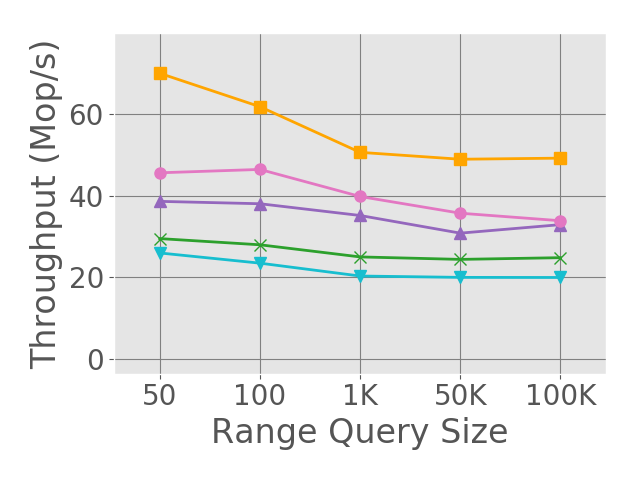}
	\caption{SL. 64 threads:50\%U-50\%C, 64 RQ threads}
	        \label{fig:skiplist-25-0}

    \end{subfigure}
    
      \begin{subfigure}{0.23\textwidth}
         \includegraphics[width=\textwidth]{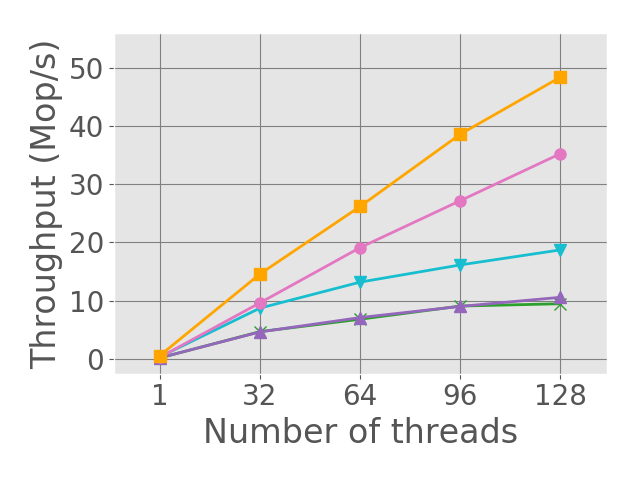}
        \caption{BST. Lookup-heavy: \\ 0\%U-90\%C-10\%RQ}
        \label{fig:tree-0-10}
      \end{subfigure}
\hfill  
     \begin{subfigure}{0.23\textwidth}
    \includegraphics[width=\textwidth]{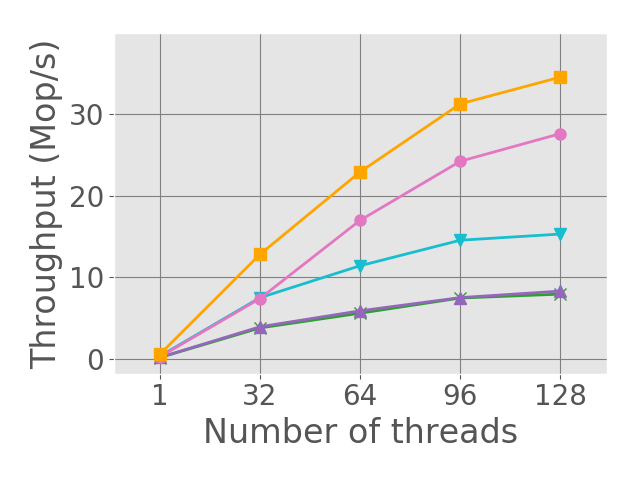}
	\caption{BST. Mixed workload: \\ 50\%U-40\%C-10\%RQ}
	        \label{fig:tree-25-10}

    \end{subfigure}
 \hfill
     \begin{subfigure}{0.23\textwidth}
    \includegraphics[width=\textwidth]{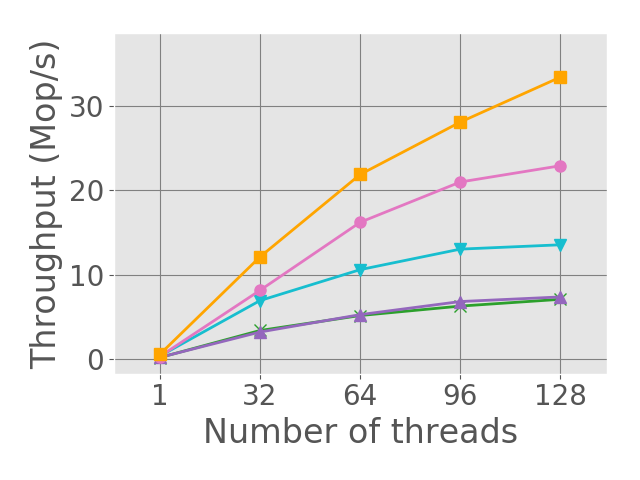}
	\caption{BST. Update-heavy: \\ 90\%U-0\%C-10\%RQ}
	        \label{fig:tree-45-10}

    \end{subfigure}
 \hfill
     \begin{subfigure}{0.23\textwidth}
    \includegraphics[width=\textwidth]{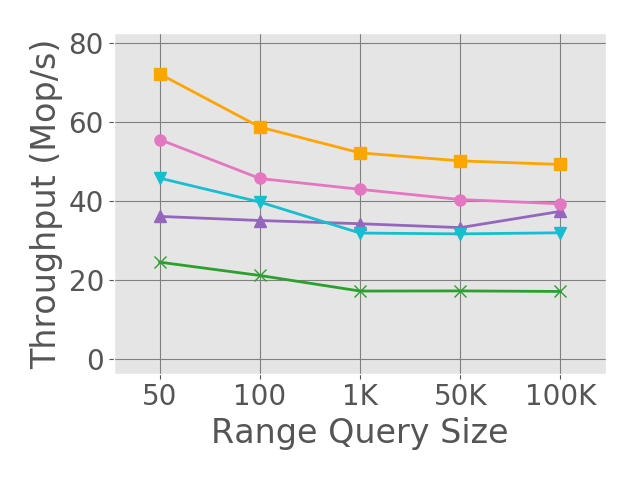}
	\caption{BST. 64 threads:50\%U-50\%C, 64 RQ threads}
	        \label{fig:tree-25-0}

   \end{subfigure}

\caption{Throughput evaluation under various workloads for the skip list (\ref{fig:skiplist-0-10}-\ref{fig:skiplist-25-0}) and the tree (\ref{fig:tree-0-10}-\ref{fig:tree-25-0}). The key range is 1M. In Figures~\ref{fig:skiplist-0-10}-\ref{fig:skiplist-45-10} and~\ref{fig:tree-0-10}-\ref{fig:tree-45-10}, the range query size is 1000. Y axis: throughput in million operations per second. X axis: \#threads in Figures~\ref{fig:skiplist-0-10}-\ref{fig:skiplist-45-10} and~\ref{fig:tree-0-10}-\ref{fig:tree-45-10}, and range query size in Figures~\ref{fig:skiplist-25-0} and~\ref{fig:tree-25-0}.} \label{fig:throughput}
\end{figure*}

We compared EEMARQ against four competitors, all using epoch-based reclamation. EBR-RQ is the lock-free epoch-based range queries technique by Arbel-Raviv and Brown~\cite{arbel2018harnessing}, vCAS is the lock-free technique by Wei et al.~\cite{wei2021constant}, Bundles is the lock-based bundled references technique by Nelson et al.~\cite{nelson2021bundled}, and Unsafe uses a naive non-linearizable scan of the nodes in the range without synchronizing with concurrent updates (used as our baseline). 
We did not compare EEMARQ against RLU~\cite{matveev2015read}, as its mechanism is not linearizable, and it was also shown to be slower than our competitors~\cite{nelson2021bundled}.
For EBR-RQ and Bundles, we used the implementation provided by the authors. The Unsafe code was provided by the EBR-RQ authors. 
The vCAS authors provided a vCAS-based lock-free BST, including their optimization for avoiding indirection (see Section~\ref{sec-range-queries} for more details). Since there does not exist any respective skip list implementation, we implemented a vCAS-based skip list according to the guidelines from~\cite{wei2021constant}. The vCAS-based skip list was not optimized, as the optimization technique, suggested in~\cite{wei2021constant}, does not fit to this data-structure.
For our competitors' memory reclamation, we used the original implementations, provided by~\cite{arbel2018harnessing,wei2021constant,nelson2021bundled}, without any code or object pools usage changes.
In particular, memory was not returned to the operating system. I.e., all implementations used pre-allocated object pools~\cite{singh2021nbr,DBLP:journals/corr/abs-2107-13843} for reclaiming memory.

\subparagraph{Setup}
We conducted our experiments on a machine running Linux (Ubuntu 20.04.4), equipped with 2 Intel Xeon Gold 6338 2.0GHz processors. Each processor had 32 cores, each capable of running 2 hyper-threads to a total of 128 threads overall.
The machine used 256GB RAM, an L1 data cache of 3MB and an L1 instruction cache of 2MB, an L2 unified cache of 80MB, and an L3 unified cache of 96MB.
The code was written in C++ and compiled using the GCC compiler version 9.4.0 with -std=c++11 -O3 -mcx16.
Each test was a fixed-time micro benchmark in which threads randomly call the \emph{insert()}, \emph{remove()}, \emph{contains()} and \emph{rangeQuery()} operations according to different workload profiles.
We ran the experiments with a range of 1M keys.
Each execution started by pre-filling the data-structure to half of its range size, and lasted 10 seconds (longer experiments showed similar results).
Each experiment was executed 10 times, and the average throughput across all executions is reported.
Figure~\ref{fig:throughput} shows the skip list and tree scalability under various workloads. 
The updates (half \emph{insert()} and half \emph{remove()}), \emph{contains()} and \emph{rangeQuery()} percentiles appear under each graph.
Figures~\ref{fig:skiplist-0-10}-\ref{fig:skiplist-45-10} and~\ref{fig:tree-0-10}-\ref{fig:tree-45-10} show the skip list and tree scalability as a function of the number of executing threads, under a variety of workloads. All queries have a fixed range of 1K keys (following~\cite{wei2021constant}).
Figures~\ref{fig:skiplist-25-0} and~\ref{fig:tree-25-0} show the effect of varying range query size on the skip list and tree performance. In these experiments, 64 threads perform 25\% \emph{insert()}, 25\% \emph{remove()} and 50\% \emph{contains()}, and 64 threads perform range queries only.
In Appendix~\ref{sec-additional_graphs} we present the respective range queries and updates throughput for Figures~\ref{fig:skiplist-25-0} and~\ref{fig:tree-25-0}, along with additional results for other workloads.

\subparagraph{Discussion}

The EEMARQ skip list surpasses the next best algorithm, EBR-RQ, by up to 65\% in the lookup-heavy workload (Figure~\ref{fig:skiplist-0-10}), by up to 50\% in the mixed workload (Figure~\ref{fig:skiplist-25-10}), and by up to 70\% in the update-heavy workload (Figure~\ref{fig:skiplist-45-10}). 
The EEMARQ tree surpasses its next best algorithm, vCAS, by up to 75\% in the lookup-heavy workload (Figure~\ref{fig:tree-0-10}), by up to 65\% in the mixed workload (Figure~\ref{fig:tree-25-10}), and by up to 70\% in the update-heavy workload (Figure~\ref{fig:tree-45-10}). 

The results show that avoiding indirection is crucial to performance.
In particular, EEMARQ outperforms its competitors in the lookup-heavy workloads (Figures~\ref{fig:skiplist-0-10} and~\ref{fig:tree-0-10}), in which memory is never reclaimed. I.e., its range query mechanism, which completely avoids traversing separate version nodes, has a significant advantage under such workloads.
It can also be seen when examining EEMARQ's competitors. 
While the vCAS-based tree (which avoids indirection) is EEMARQ's next best competitor, the vCAS-based skip list (that involves the traversal of designated version nodes) is the weakest among the skip list implementations. In addition, the Bundles technique, which employs such a level of indirection when range queries are executed, also performs worse than most competitors, under most workloads. 

Under update-dominated workloads, EEMARQ is faster also thanks to its efficient memory reclamation method. While all other algorithms use EBR as their memory reclamation scheme, EEMARQ enjoys VBR's inherent locality and fast reclamation process.
Moreover, EEMARQ avoids the original VBR's frequent accesses to the global epoch clock, as described in Section~\ref{sec-mvcc-vbr}.
Although using VBR forces rollbacks when accessing reclaimed nodes, our fast index mechanism allows a fast retry, which makes the impact of rollbacks small. Indeed, experiments with longer retire lists (i.e., fewer rollbacks) showed similar results.
This is clearly shown in Figures~\ref{fig:skiplist-25-0} and~\ref{fig:tree-25-0}: EEMARQ outperforms its competitors when the query ranges are big (10\% of the data-structure range). I.e., possible frequent rollbacks do not prevent EEMARQ from outperforming all other competitors.

\section{Conclusion} \label{sec-conclusion}

We presented EEMARQ, a design for a lock-free data-structure that supports linearizable inserts, deletes, contains, and range queries. 
Our design starts from a linked-list, which is easier to use with MVCC for fast range queries. We add lock-free memory reclamation to obtain full lock-freedom. Finally, we facilitate an easy integration of a fast external index to speed up the execution, while still providing full linearizability and lock-freedom. As the external index does not require version maintenance, it can remain simple and fast. 
We implemented the design with a skip list and a binary search tree as two possible fast indexes, and evaluated their performance against state-of-the-art solutions.
Evaluation shows that EEMARQ outperforms existing solutions across read-intensive and update-intensive workloads, and for varying range query sizes.
In addition, EEMARQ's memory footprint is relatively low, thanks to its tailored reclamation scheme, enabling the immediate reclamation of deleted objects.
EEMARQ is the only technique that provides lock-freedom, as other existing methods use blocking memory reclamation schemes.



\bibliography{lipics-v2021-sample-article}

\appendix
\newpage
\section{Auxiliary Methods and Initialization} \label{sec-auxiliary}

As described in Section~\ref{sec-list-implementation} and Algorithm~\ref{pseudo-list} and~\ref{pseudo-range}, the linked-list represents a map of key-value pairs. Each pair is represented by a node, consisting of five fields: the mutable \emph{ts} field represents its associated timestamp,
the immutable \emph{key} and \emph{value} fields represent the pair's key and value, respectively, the mutable \emph{next} field holds a pointer to the node's successor in the list, and the immutable \emph{prior} field points to the previous successor of its first predecessor in the list.

\begin{wrapfigure}{R}{0.4\textwidth}
\begin{minipage}{0.4\textwidth}
\begin{algorithm}[H]
\caption{Initializing the list.}
\label{pseudo-init}
\begin{algorithmic}

\State globalTS := 2 \label{alg-init-global-ts}
\State $\bot$ := 1
\State head := \textbf{alloc}($-\infty$, NO\_VAL, 2)
\State tail := \textbf{alloc}($\infty$, NO\_VAL, 2)
\State head $\rightarrow$ next := tail
\State head $\rightarrow$ prior := NULL
\State tail $\rightarrow$ next := NULL
\State tail $\rightarrow$ prior := NULL

\end{algorithmic}
\end{algorithm}
\end{minipage}
\end{wrapfigure}

The list initialization method appears in Algorithm~\ref{pseudo-init}.
The global timestamps clock is initialized to 2, and the $\bot$ constant, representing an uninitialized timestamp, is set to 1.
The list has a single entry point, which is a pointer to the \emph{head} sentinel node. \emph{head}'s key is the minimal key in the key range (denoted as $-\infty$). Its timestamp is set to the initial system timestamp (2 in our implementation) and its next pointer points to the \emph{tail} sentinel node. \emph{tail}'s key is the maximal key in the key range (denoted as $\infty$), its timestamp is equal to \emph{head}'s, and its next pointer points to null. Both \emph{prior} fields point to null, as \emph{head} has no predecessor, and \emph{tail}'s predecessor (which is \emph{head}) has no previous successor. 
After initialization, the list is considered as empty (the sentinel nodes do not represent map items).

\begin{algorithm*}[!ht]
\setlength\multicolsep{0pt}
\small
\begin{multicols}{2}
\begin{algorithmic}[1]


\State \textbf{getTS}() \label{methods-get-ts-call}
\Indent
    \State \textbf{return} globalTS.load() \label{methods-get-ts-return}
\EndIndent

\State \textbf{fetchAddTS}() \label{methods-fetch-add-call}
\Indent
    \State \textbf{return} globalTS.fetch\&add() \label{methods-fetch-add-return}
\EndIndent

\State MARK\_MASK := 0x1 \label{methods-mark-mask}
\State FLAG\_MASK := 0x2 \label{methods-flag-mask}
\State AUX\_MASK := 0x3 \label{methods-aux-mask}

\State \textbf{isMarked}(ptr) \label{methods-is-marked-call}
\Indent
    \If{(ptr $\wedge$ MARK\_MASK)} \textbf{return} true \label{methods-is-marked-return-true}
    \EndIf
    \State \textbf{return} false \label{methods-is-marked-return-false}
\EndIndent
\State \textbf{isFlagged}(ptr) \label{methods-is-flagged-call}
\Indent
    \If{(ptr $\wedge$ FLAG\_MASK)} \textbf{return} true \label{methods-is-flagged--return-true}
    \EndIf
    \State \textbf{return} false \label{methods-is-flagged--return-false}
\EndIndent
\State \textbf{isMarkedOrFlagged}(ptr) \label{methods-is-marked-flagged-call}
\Indent
    \If{(ptr $\wedge$ AUX\_MASK)} \textbf{return} true \label{methods-is-marked-flagged-return-true}
    \EndIf
    \State \textbf{return} false \label{methods-is-marked-flagged-return-false}
\EndIndent
\State \textbf{getRef}(ptr) \label{methods-get-ref-call}
\Indent
    \State \textbf{return} ptr $\wedge$ $\neg$ AUX\_MASK \label{methods-get-ref-return}
\EndIndent
\State \textbf{mark}(node) \label{methods-mark-call}
\Indent
    \State ptr := node $\rightarrow$ next \label{methods-mark-read}
    \If{(\textbf{isMarkedOrFlagged}(ptr))} \textbf{return} false \label{methods-mark-return-false}
    \EndIf
    \State markedPtr := ptr $\vee$ MARK\_MASK \label{methods-mark-mark}
    \State \textbf{return} CAS(\&node $\rightarrow$ next, ptr, markedPtr) \label{methods-mark-cas}
\EndIndent

\State \textbf{flag}(node) \label{methods-flag-call}
\Indent
    \State ptr := node $\rightarrow$ next \label{methods-flag-read}
    \If{(\textbf{isMarkedOrFlagged}(ptr))} \textbf{return} false \label{methods-flag-return-false}
    \EndIf
    \State flaggeddPtr := ptr $\vee$ FLAG\_MASK \label{methods-flag-flag}
    \State \textbf{return} CAS(\&node $\rightarrow$ next, ptr, flaggeddPtr) \label{methods-flag-cas}
\EndIndent

\end{algorithmic}
\end{multicols}
\caption{Our Auxiliary Methods Implementation.}
\label{pseudo-methods}
\end{algorithm*}

The pseudo code for the global timestamps clock and node pointers access auxiliary methods appears in Algorithm~\ref{pseudo-methods}.
The \emph{getTS()} method (lines~\ref{methods-get-ts-call}-\ref{methods-get-ts-return}) is used to atomically read the global timestamps clock (initialized in line~\ref{alg-init-global-ts} of Algorithm~\ref{pseudo-init}), and the \emph{fetchAddTS()} method (lines~\ref{methods-fetch-add-call}-\ref{methods-fetch-add-return}) is used to atomically update it.

We use the pointer's two least significant bits for encapsulating the mark and flag bits (see lines~\ref{methods-mark-mask}-\ref{methods-aux-mask}).
The \emph{isMarked()} (lines~\ref{methods-is-marked-call}-\ref{methods-is-marked-return-false}), \emph{isFlagged()} (lines~\ref{methods-is-flagged-call}-\ref{methods-is-flagged--return-false}), and \emph{isMarkedOrFlagged()} (lines~\ref{methods-is-marked-flagged-call}-\ref{methods-is-marked-flagged-return-false}) methods receive a pointer and return an answer using the relevant bit mask. 
The \emph{getRef()} method (lines~\ref{methods-get-ref-call}-\ref{methods-get-ref-return}) receives a (potentially marked or flagged) pointer and returns the actual reference, ignoring the mark and flag bits.
The \emph{mark()} (lines~\ref{methods-mark-call}-\ref{methods-mark-cas}) and \emph{flag()} (lines~\ref{methods-flag-call}-\ref{methods-flag-cas}) methods receive a node reference (the input pointer is assumed to be unmarked and unflagged), dereference it, and mark or flag the node's next pointer (respectively), assuming it is neither marked nor flagged.

\section{Additional Graphs} \label{sec-additional_graphs}

In this Section we present additional results. The experiments setting is described in Section~\ref{sec-evaluation}. 
In Figure~\ref{fig:throughput-diff-rq} below we present the respective updates and range queries throughput for Figure~\ref{fig:skiplist-25-0} and~\ref{fig:tree-25-0}. For convenience, Figure~\ref{fig:skiplist-throughput} is identical to Figure~\ref{fig:skiplist-25-0}, and Figure~\ref{fig:tree-throughput} is identical to Figure~\ref{fig:tree-25-0}. 
Figures~\ref{fig:skiplist-rq-throughput} and~\ref{fig:tree-rq-throughput} present the respective range queries throughput, and Figures~\ref{fig:skiplist-update-throughput} and~\ref{fig:tree-update-throughput} present the respective updates throughput.

\begin{figure*}[!ht]
\centering
      \begin{subfigure}{0.63\textwidth}

         \includegraphics[width=\textwidth]{figures/legend.png}
      \end{subfigure}
  
     \begin{subfigure}{0.3\textwidth}
    \includegraphics[width=\textwidth]{figures/skiplist-u25-rq0.png}
	\caption{Skiplist. Total throughput evaluation}
	        \label{fig:skiplist-throughput}

    \end{subfigure}
 \hfill
     \begin{subfigure}{0.3\textwidth}
    \includegraphics[width=\textwidth]{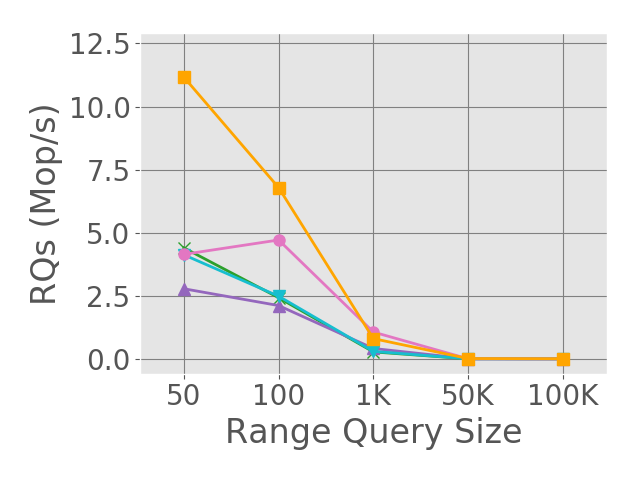}
	\caption{Skiplist. Range queries throughput evaluation}
	        \label{fig:skiplist-rq-throughput}

    \end{subfigure}
 \hfill
    \begin{subfigure}{0.3\textwidth}
   \includegraphics[width=\textwidth]{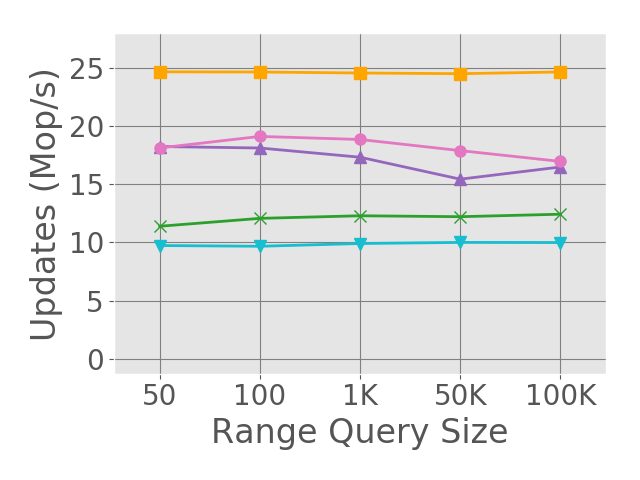}
	\caption{Skiplist. Updates throughput evaluation}
	        \label{fig:skiplist-update-throughput}

    \end{subfigure}
     
     \begin{subfigure}{0.3\textwidth}
    \includegraphics[width=\textwidth]{figures/tree-u25-rq0.png}
	\caption{BST. Total throughput evaluation}
	        \label{fig:tree-throughput}

    \end{subfigure}
 \hfill
     \begin{subfigure}{0.3\textwidth}
    \includegraphics[width=\textwidth]{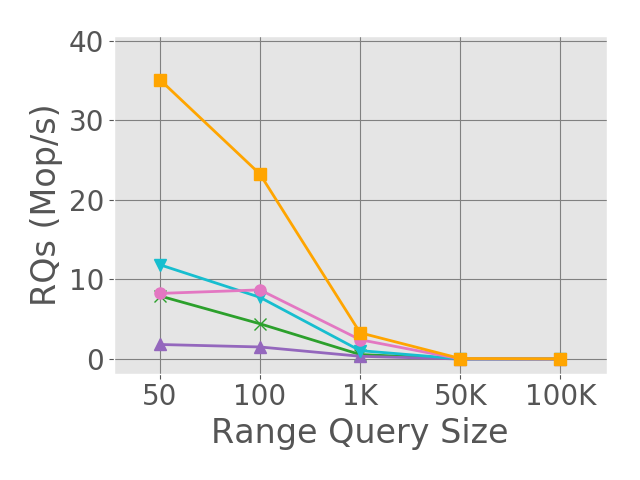}
	\caption{BST. Range queries throughput evaluation}
	        \label{fig:tree-rq-throughput}

    \end{subfigure}
 \hfill
     \begin{subfigure}{0.3\textwidth}
    \includegraphics[width=\textwidth]{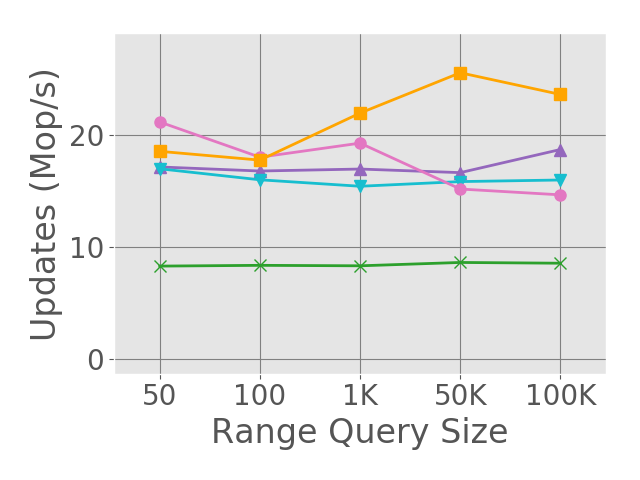}
	\caption{BST. Updates throughput evaluation}
	        \label{fig:tree-update-throughput}

   \end{subfigure}

\caption{Total throughput, range queries throughput and updates throughput for the skiplist (\ref{fig:skiplist-throughput}-\ref{fig:skiplist-update-throughput}) and the tree (\ref{fig:tree-throughput}-\ref{fig:tree-update-throughput}). 
Figures \ref{fig:skiplist-throughput}-\ref{fig:skiplist-update-throughput} and Figures~\ref{fig:tree-throughput}-\ref{fig:tree-update-throughput} present the results for the same runs, respectively.
In both runs, the key range is 1M, there are 64 threads that only perform range queries, and 64 threads that perform $\%50$ contains-$\%50$ updates.
Figures~\ref{fig:skiplist-throughput} and~\ref{fig:tree-throughput} are identical to Figures~\ref{fig:skiplist-25-0} and~\ref{fig:tree-25-0}, respectively.
X axis: range query size. 
Y axis: throughput in Figures~\ref{fig:skiplist-throughput} and~\ref{fig:tree-throughput}, range queries throughput in Figures~\ref{fig:skiplist-rq-throughput} and~\ref{fig:tree-rq-throughput}, and updates throughput in Figures~\ref{fig:skiplist-update-throughput} and~\ref{fig:tree-update-throughput}.} \label{fig:throughput-diff-rq}

\end{figure*}

The EEMARQ skip list surpasses all competitors, and is comparable to its next best scheme, EBR-RQ (see Figures~\ref{fig:skiplist-throughput}-\ref{fig:skiplist-update-throughput}). For long range queries, the EEMARQ tree surpasses its competitors (see Figure~\ref{fig:tree-throughput}) thanks to its range queries and searches high throughput, as its updates throughput deteriorates when the range queries are long (see Figure~\ref{fig:tree-update-throughput}).

\begin{figure*}[!ht]
\centering
      \begin{subfigure}{0.63\textwidth}

         \includegraphics[width=\textwidth]{figures/legend.png}
      \end{subfigure}
  
     \begin{subfigure}{0.3\textwidth}
    \includegraphics[width=\textwidth]{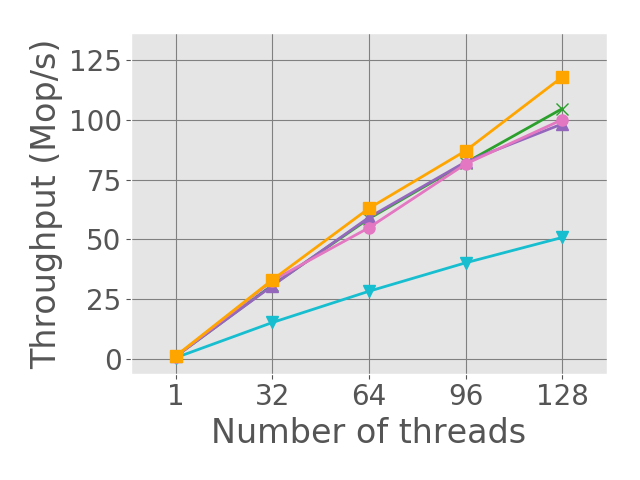}
	\caption{Skiplist. 100\% contains}
	        \label{fig:skiplist-0-0}

    \end{subfigure}
 \hfill
     \begin{subfigure}{0.3\textwidth}
    \includegraphics[width=\textwidth]{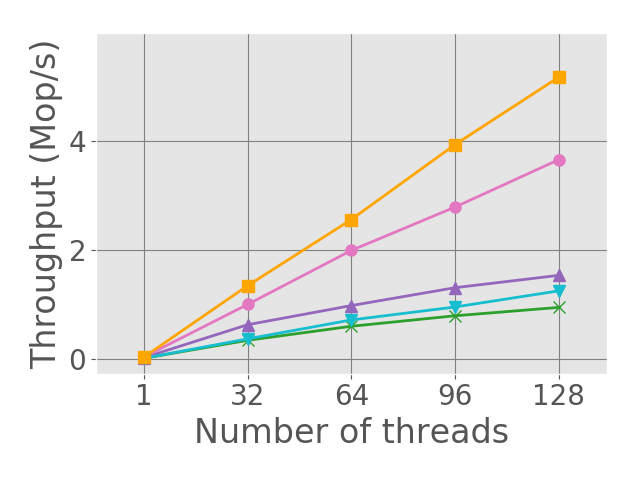}
	\caption{Skiplist. 100\% range queries}
	        \label{fig:skiplist-0-100}

    \end{subfigure}
 \hfill
    \begin{subfigure}{0.3\textwidth}
   \includegraphics[width=\textwidth]{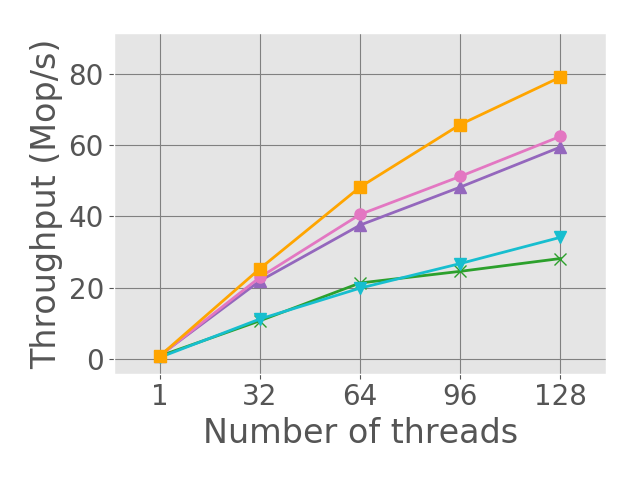}
	\caption{Skiplist. 100\% updates}
	        \label{fig:skiplist-50-0}

    \end{subfigure}
     
     \begin{subfigure}{0.3\textwidth}
    \includegraphics[width=\textwidth]{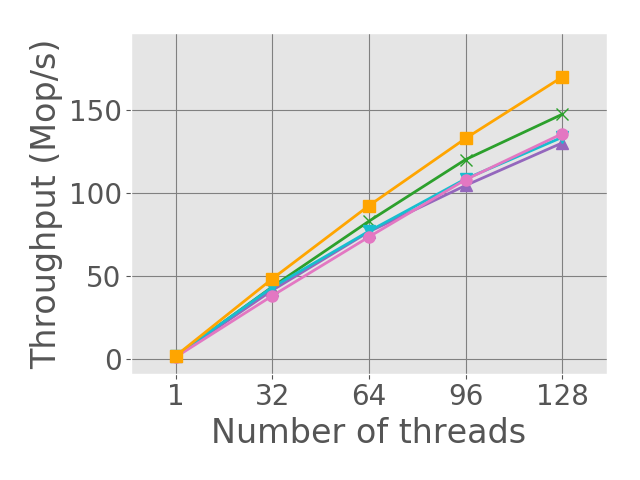}
	\caption{BST. 100\% contains}
	        \label{fig:tree-0-0}

    \end{subfigure}
 \hfill
     \begin{subfigure}{0.3\textwidth}
    \includegraphics[width=\textwidth]{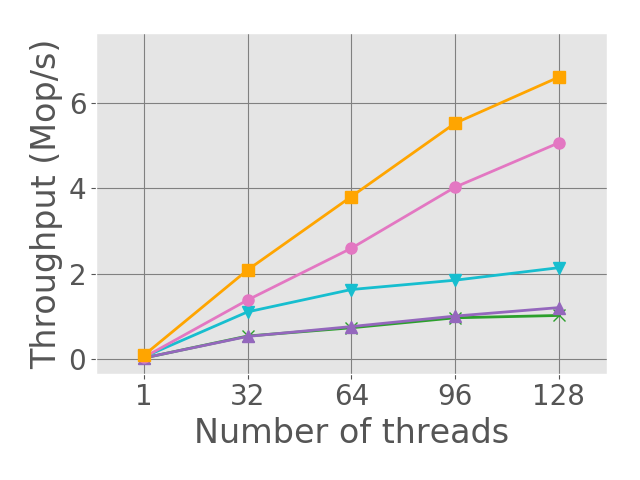}
	\caption{BST. 100\% range queries}
	        \label{fig:tree-0-100}

    \end{subfigure}
 \hfill
     \begin{subfigure}{0.3\textwidth}
    \includegraphics[width=\textwidth]{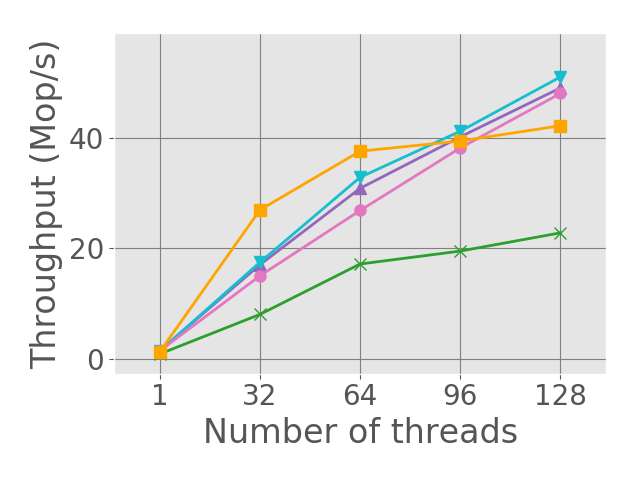}
	\caption{BST. 100\% updates}
	        \label{fig:tree-50-0}

   \end{subfigure}

\caption{Throughput evaluation under various workloads for the skip list (\ref{fig:skiplist-0-0}-\ref{fig:skiplist-50-0}) and the tree (\ref{fig:tree-0-0}-\ref{fig:tree-50-0}). The key range is 1M. The range query size for Figure~\ref{fig:skiplist-0-100} and~\ref{fig:tree-0-100} is 1000. Y axis: throughput in million operations per second. X axis: \#threads.} \label{fig:throughput-isolating}

\end{figure*}

In Figure~\ref{fig:throughput-isolating} we isolate the searches (Figure~\ref{fig:skiplist-0-0} and~\ref{fig:tree-0-0}), range queries (Figure~\ref{fig:skiplist-0-100} and~\ref{fig:tree-0-100}), and updates (Figure~\ref{fig:skiplist-50-0} and~\ref{fig:tree-50-0}). 
For the 100\% contains workload (Figure~\ref{fig:skiplist-0-0} and~\ref{fig:tree-0-0}), the Bundles data-structures surpass all competitors by small margins. As this workload does not use EEMARQ's range queries mechanism or VBR's locality, EEMARQ does not have an advantage when compared against its competitors.
For the 100\% range queries workload (Figure~\ref{fig:skiplist-0-100} and~\ref{fig:tree-0-100}), the EEMARQ data-structures surpass their competitors by high margins, as they impose no indirection, and the executing threads perform no rollbacks (memory is not reclaimed during the execution).
For the 100\% updates workload (Figure~\ref{fig:skiplist-50-0} and~\ref{fig:tree-50-0}), the EEMARQ data-structures are comparable to their best competitors (the EBR-RQ skip list and the EBR-RQ and vCAS trees). It seems that these results are heavily affected by the different skip list and BST implementations, as the Unsafe BST does not surpass the rest of the algorithms for this workload.

In Figure~\ref{fig:throughput-isolating-rq-100} we isolate range queries (as in Figure~\ref{fig:skiplist-0-100} and~\ref{fig:tree-0-100}) show the total range queries throughput when the range queries are shorter (100 instead of 1000). The EEMARQ skip list surpasses its competitors by a high margin for short range queries as well (Figure~\ref{fig:skiplist-0-100-100}), but the EEMARQ tree surpasses its next best competitor (vCAS) by a small margin (Figure~\ref{fig:tree-0-100-100}).

\begin{figure*}[!ht]
\centering
      \begin{subfigure}{0.63\textwidth}

         \includegraphics[width=\textwidth]{figures/legend.png}
      \end{subfigure}
  
     \begin{subfigure}{0.45\textwidth}
    \includegraphics[width=\textwidth]{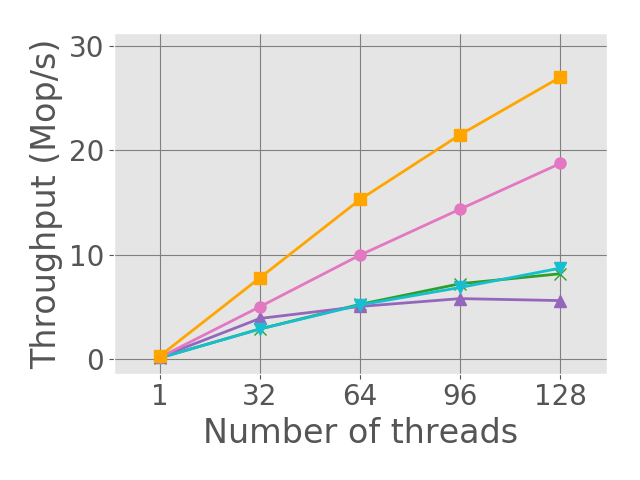}
	\caption{Skiplist. 100\% range queries}
	        \label{fig:skiplist-0-100-100}

    \end{subfigure}
 \hfill
     \begin{subfigure}{0.45\textwidth}
    \includegraphics[width=\textwidth]{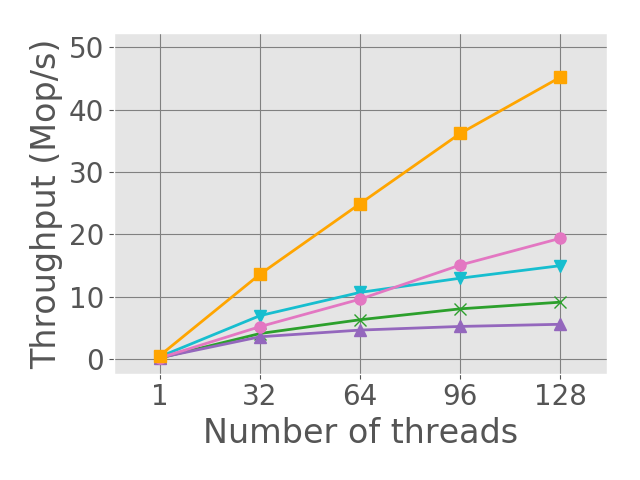}
	\caption{BST. 100\% range queries}
	        \label{fig:tree-0-100-100}

   \end{subfigure}

\caption{Range queries throughput for the skip list (\ref{fig:skiplist-0-100-100}) and the tree (\ref{fig:tree-0-100-100}). The key range is 1M. The range query size is 100. Y axis: throughput in million operations per second. X axis: \#threads.} \label{fig:throughput-isolating-rq-100}

\end{figure*}

\section{Correctness} \label{sec-correctness}

In this Section we prove that the linked-list implementation, as presented in Algorithms~\ref{pseudo-list} and~\ref{pseudo-range} (without taking reclamation or the fast index integration into account) is linearizable and lock-free.
We start by setting preliminaries in Section~\ref{sec-preliminaries}. 
We introduce some basic definitions and lemmas in Section~\ref{sec-basic-definitions}, continue with a full linearizability proof in Section~\ref{sec-linearization-points}, and prove that the implementation is lock-free in Section~\ref{sec-lock-free}.

\subsection{Preliminaries} \label{sec-preliminaries}

We use the basic asynchronous shared memory model from~\cite{herlihy1991wait}. The system consists of a set of executing threads, communicating through the shared memory. The shared memory is accessed by hardware-provided atomic operations, such as reads, writes, compare-and-swap (CAS) and wide-compare-and-swap (WCAS). The CAS operation receives an address of a memory word, an expected value and a new value. If the address content is equal to the expected value, it replaces it with the new value and returns true. Otherwise, it returns false. The WCAS does the same for two adjacent words in memory. 
%
We say that a concurrent implementation is \textit{lock-free} if, given that the executing threads are run for a sufficiently long time, at least one thread makes progress. 
In addition, we use the standard correctness condition of \emph{linearizability} from~\cite{herlihy1990linearizability}. Briefly, any operation must take effect at some point (referred to as its \emph{linearizability point}) between its invocation and response.

A mapping data structure represents a map from a set of unique keys to their respective values. Each key-value pair is represented by a node, consisting typically of immutable key and value, and some other mutable fields (e.g., \textit{next} pointers).
The data-structure has an entry point (e.g., a linked-list head~\cite{harris2001pragmatic} or a tree root~\cite{natarajan2014fast}).
The physical insertion of a node is the act of making it reachable from the entry point, and the physical deletion of a node is its unlinking from the data-structure (making it unreachable from the entry point). 
Many concurrent data-structure implementations~\cite{wei2021constant,harris2001pragmatic,natarajan2014fast} also include a level of logical insertion and deletion between the physical ones. I.e., a reachable node does not necessarily represent a key-value pair which is currently in the map. 

\subsection{Basic Definitions and Lemmas} \label{sec-basic-definitions}

We define reachability using Definition~\ref{definition-reachability} and~\ref{definition-reachable}:

\begin{definition} \label{definition-reachability}
We say that a node $m$ is \textit{reachable} from a node $n$ if there exist nodes $n_0,\ldots,n_k$ such that $n=n_0$, $m=n_k$, and for every $0 \leq i < k$, $n_i$'s $next$ pointer points to $n_{i+1}$.
\end{definition}

\begin{definition} \textbf{(reachable nodes)} \label{definition-reachable}
We say that a node $n$ is \textit{reachable} if it is reachable from \emph{head}.
\end{definition}

According to Definition~\ref{definition-reachable}, both \emph{head} and \emph{tail} are reachable at initialization.
However, while \emph{head} remains reachable throughout the execution, \emph{tail} may be removed, and replaced by another node with an $\infty$ key. Only the original node with an $\infty$ key is refereed to as \emph{tail}.

Note that the reachable path from Definition~\ref{definition-reachability} may consist of either marked or flagged pointers. Moreover, and as opposed to previous implementations~\cite{harris2001pragmatic,natarajan2014fast,herlihy2020art}, the marking of a node's next pointer does not serve as its logical deletion from the list. 

Before moving on to our linearizability proof, we must first define the term of being logically in the list. 
This term is presented in Definition~\ref{definition-logically-in}, after we define some other basic terms and prove several supporting lemmas.

\begin{definition} \label{definition-node-states}
Let $n$ be a node. We denote $n$'s possible states as follows:
\begin{itemize}
    \item We say that $n$ is \emph{pending} if $n$'s timestamp is $\bot$, and there does not exist an execution of the \emph{getTS()} command in line~\ref{alg-insert-update-ts},~\ref{alg-find-update-pred-ts},~\ref{alg-find-update-curr-ts},~\ref{alg-trim-update-curr-ts},~\ref{alg-trim-update-succ-ts}, or \ref{alg-trim-update-new-curr-ts} of Algorithm~\ref{pseudo-list}, or in line~\ref{alg-range-query-smaller-curr-succ-ts-update} or~\ref{alg-range-query-bigger-curr-succ-ts-update} of Algorithm~\ref{pseudo-range}, for which (1) the referenced node is $n$, (2) the following CAS execution is successful\footnote{Note that the node stops being pending before the CAS is executed}.
    \item We say that $n$ is \emph{marked} if its next pointer is marked.
    \item We say that $n$ is \emph{flagged} if its next pointer is flagged.
    \item We say that $n$ is \emph{active} if it is in none of the above states.
\end{itemize}
\end{definition}

\begin{definition} \textbf{(infant nodes)} \label{definition-infant}
We say that a node $n$ is an \textit{infant} if (1) it is not one of the two sentinel nodes , (2) it has never been the third input value to a successful CAS in line~\ref{alg-insert-update-pred} of Algorithm~\ref{pseudo-list}, and (3) it has never been the third input value to a successful CAS in line~\ref{alg-trim-cas} of Algorithm~\ref{pseudo-list}.
\end{definition}

\begin{definition} \textbf{(unlinked nodes)} \label{definition-unlinked}
Let $n_1$ be a node, traversed during the loop in lines~\ref{alg-trim-while}-\ref{alg-trim-curr-gets-curr-next}, and let $n_2$ be the node, allocated in line~\ref{alg-trim-alloc}. If $n_2$ eventually becomes active, then $n_1$ is considered as \textit{unlinked} once $n_2$ becomes active.
\end{definition}

\begin{lemma} \label{lemma-immutable-ts}
Once a node's timestamp is not $\bot$, it is immutable.
\end{lemma}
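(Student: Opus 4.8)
The plan is to prove the lemma by an exhaustive inspection of every instruction that can write to a node's \emph{ts} field, combined with the observation that $\bot$ is never a legal value produced by \emph{getTS()}. First I would note that, apart from the initialization to $\bot$ performed inside \textbf{alloc} (lines~\ref{alg-insert-alloc} and~\ref{alg-trim-alloc} of Algorithm~\ref{pseudo-list}), the only places in the code that modify a \emph{ts} field are the compare-and-swap instructions in lines~\ref{alg-insert-update-ts},~\ref{alg-find-update-pred-ts},~\ref{alg-find-update-curr-ts},~\ref{alg-trim-update-curr-ts},~\ref{alg-trim-update-succ-ts} and~\ref{alg-trim-update-new-curr-ts} of Algorithm~\ref{pseudo-list}, together with lines~\ref{alg-range-query-smaller-curr-succ-ts-update} and~\ref{alg-range-query-bigger-curr-succ-ts-update} of Algorithm~\ref{pseudo-range}. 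Each of these is a CAS on a \emph{ts} field whose \emph{expected} value is the constant $\bot$ and whose \emph{new} value is the result of a \emph{getTS()} call.

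Second, I would establish that \emph{getTS()} never returns $\bot$. By Algorithm~\ref{pseudo-init}, $\bot$ is the constant $1$ and \emph{globalTS} is initialized to $2$; the only operation that ever modifies \emph{globalTS} is \emph{fetchAddTS()} (invoked in line~\ref{alg-range-query-faa} of Algorithm~\ref{pseudo-range}), which atomically increments it. Hence \emph{globalTS} is monotonically non-decreasing and always at least $2$, so the value \emph{getTS()} reads in line~\ref{methods-get-ts-return} of Algorithm~\ref{pseudo-methods} is always strictly greater than $1 = \bot$, i.e., different from $\bot$.

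Combining the two observations finishes the argument. Fix a node $n$ and consider the execution point at which $n\rightarrow ts$ first becomes different from $\bot$; by the first observation this happens exactly at the first successful CAS among the lines listed above whose target is $n$, and after it $n\rightarrow ts$ equals the value returned by the corresponding \emph{getTS()} call, which by the second observation is $\neq \bot$. Any later write to $n\rightarrow ts$ must again be one of those CAS instructions, all of which expect the value $\bot$; since $n\rightarrow ts$ is now $\neq \bot$, every such CAS fails and leaves the field unchanged. Therefore, once a node's timestamp is not $\bot$ it is immutable.

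The argument is essentially a finite case analysis, so the only thing to be careful about is completeness of the enumeration: I would double-check that no API operation, auxiliary method, or \emph{trim()}/\emph{find()} line writes a \emph{ts} field other than through one of the listed CAS calls, and that the initialization value $\bot = 1$ is genuinely disjoint from the range of \emph{getTS()}. (In the VBR-augmented implementation of Section~\ref{sec-mvcc-vbr} these same instructions become WCAS instructions that additionally carry the node's birth epoch, but the expected \emph{ts} component is still $\bot$, so the identical reasoning applies there; the lemma as stated, however, concerns only the reclamation-free algorithm of Algorithms~\ref{pseudo-list} and~\ref{pseudo-range}.)
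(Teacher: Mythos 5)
Your proof is correct and follows essentially the same approach as the paper: enumerate all writes to the \emph{ts} field and observe that each is a CAS expecting $\bot$ with a non-$\bot$ new value. You additionally spell out why \emph{getTS()} can never return $\bot$ (since $\bot=1$, \emph{globalTS} starts at $2$, and \emph{fetchAddTS()} only increments), a step the paper leaves implicit; this is a welcome bit of extra rigor but does not change the argument's structure.
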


\begin{proof}
Timestamps are updated in lines~\ref{alg-insert-update-ts},~\ref{alg-find-update-pred-ts},~\ref{alg-find-update-curr-ts},~\ref{alg-trim-update-curr-ts},~\ref{alg-trim-update-succ-ts} and~\ref{alg-trim-update-new-curr-ts} of Algorithm~\ref{pseudo-list}, and in lines~\ref{alg-range-query-smaller-curr-succ-ts-update} and~\ref{alg-range-query-bigger-curr-succ-ts-update} of Algorithm~\ref{pseudo-range}.
In all cases, the expected value is $\bot$ and the new written value is not $\bot$. Therefore, once a node's timestamp is not $\bot$, it is immutable.
\end{proof}

\begin{lemma} \label{lemma-deleted-states}
Marked and flagged next pointers are immutable.
\end{lemma}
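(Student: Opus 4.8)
Marked and flagged \emph{next} pointers are immutable.

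The plan is to argue by inspecting every code location that writes to a node's \emph{next} pointer and checking that none of them can write to a pointer that is currently marked or flagged. First I would enumerate the writes to \emph{next} fields in Algorithms~\ref{pseudo-list} and~\ref{pseudo-range}. These are: the plain initializations in lines~\ref{alg-insert-update-next} and~\ref{alg-trim-init-next} (and the sentinel setup in Algorithm~\ref{pseudo-init}), which occur on \emph{infant} nodes before they are ever published, so their \emph{next} pointers cannot yet be reachable, let alone marked or flagged; the CAS in line~\ref{alg-insert-update-pred}, which expects \texttt{curr} (a clean, unmarked/unflagged reference, as guaranteed by \emph{find()}); the CAS in line~\ref{alg-trim-cas}, which expects \texttt{victim} — here I need the fact that \texttt{victim} is \emph{marked}, not flagged, so the expected value passed is the marked pointer, and a successful CAS replaces it with a fresh pointer to \texttt{newCurr}; and the CAS operations inside \emph{mark()} (line~\ref{methods-mark-cas}) and \emph{flag()} (line~\ref{methods-flag-cas}) in Algorithm~\ref{pseudo-methods}.

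The key observation is that \emph{mark()} and \emph{flag()} both first read the current \emph{next} value (lines~\ref{methods-mark-read},~\ref{methods-flag-read}) and bail out via \textbf{isMarkedOrFlagged} (lines~\ref{methods-mark-return-false},~\ref{methods-flag-return-false}) if it is already tagged; moreover their CAS uses that exact read value as the expected argument, so even in a race the CAS can only succeed when the pointer transitions from an untagged value to a tagged one — never from one tagged value to another. For line~\ref{alg-trim-cas}: a successful CAS means \texttt{pred}$\rightarrow$\emph{next} equaled the (marked) pointer to \texttt{victim} and is overwritten with a pointer to \texttt{newCurr}; but I must be careful — this is a write to \texttt{pred}'s \emph{next}, and \texttt{pred} is assumed neither marked nor flagged (the precondition of \emph{trim()}), so this does not constitute a write to a marked/flagged pointer. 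Similarly line~\ref{alg-insert-update-pred} writes \texttt{pred}$\rightarrow$\emph{next} where \texttt{pred} was returned clean by \emph{find()}. The only writes that overwrite a pointer which \emph{was} marked are the trimming CAS (line~\ref{alg-trim-cas}), and that changes \texttt{pred}'s pointer, not \texttt{victim}'s — the marked node \texttt{victim} keeps its marked \emph{next} forever. So in every case, once a \emph{next} pointer carries the mark or flag bit, no subsequent instruction can change it.

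I would therefore structure the proof as: (i) list the set of instructions that write \emph{next} fields; (ii) for each, show the write target is either an infant/not-yet-published node, or a node whose \emph{next} was untagged at the moment of the (successful) CAS — using the read-then-CAS pattern with matching expected value for \emph{mark()}/\emph{flag()}, and the clean-\texttt{pred} precondition for lines~\ref{alg-insert-update-pred} and~\ref{alg-trim-cas}; (iii) conclude that no tagged \emph{next} pointer is ever the address written by a successful update, hence tagged \emph{next} pointers are immutable. The main obstacle is being exhaustive and precise about line~\ref{alg-trim-cas}: one must clearly separate "the expected value is the marked pointer to \texttt{victim}" from "the field being written belongs to \texttt{victim}" — it belongs to \texttt{pred}, which is unmarked/unflagged by the trim precondition — and confirm that the trim precondition on \texttt{pred} is itself maintained (this may lean on an auxiliary invariant established elsewhere in the appendix that \emph{find()} and \emph{trim()} are only invoked with a clean \texttt{pred}). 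Once that bookkeeping is pinned down, the claim follows directly.
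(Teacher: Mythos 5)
Your overall decomposition (enumerate every write to a \emph{next} field, check each) and the handling of \emph{mark()}/\emph{flag()} via the read-then-CAS pattern are correct and match the paper. However, your treatment of line~\ref{alg-trim-cas} contains a genuine error: you assert that ``the expected value passed is the marked pointer,'' but this is false. \emph{trim()} is invoked as \textsc{trim}(\texttt{pred}, \textbf{getRef}(\texttt{predNext})) in line~\ref{alg-find-goto-after-trim}, so the \texttt{victim} argument — and hence the expected value of the CAS in line~\ref{alg-trim-cas} — is a \emph{clean} reference with the mark and flag bits stripped. (You appear to conflate ``\texttt{victim} is a marked node,'' i.e., \texttt{victim}$\rightarrow$\emph{next} carries the mark bit, with ``the pointer value \texttt{victim} carries the mark bit''; the former is true, the latter is not.)

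This misstatement then forces you into a weaker fallback argument based on \emph{trim()}'s precondition that \texttt{pred} is unmarked/unflagged, and you correctly flag that this would require a separate invariant that the precondition is preserved from \emph{find()} time until the CAS. But no such invariant is needed, and that worry is precisely the gap your argument leaves open: \texttt{pred} could become marked between the precondition check and the CAS. The correct and complete argument — and the one the paper uses — is simply that the CAS expected value is a clean reference, so a successful CAS can only fire when \texttt{pred}$\rightarrow$\emph{next} currently equals that clean reference, i.e., when it is neither marked nor flagged. The CAS semantics themselves enforce the property; the precondition is irrelevant here. Once you replace the precondition-based step with this expected-value observation, the proof is sound and essentially identical to the paper's.
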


\begin{proof}
Both the \emph{mark()} and \emph{flag()} methods, called in line~\ref{alg-remove-mark} and~\ref{alg-trim-flag} of Algorithm~\ref{pseudo-list}, respectively, change the pointer only if it is neither marked nor flagged (see Figure~\ref{pseudo-methods}).
The two remaining scenarios of pointer updates are in lines~\ref{alg-insert-update-pred} and~\ref{alg-trim-cas} of Algorithm~\ref{pseudo-list}. In both cases, the expected and new pointer values are neither marked nor flagged.
Therefore, marked and flagged pointer never change.
\end{proof}

\begin{lemma} \label{lemma-node-states}
Let $n$ be a node. Then $n$'s state can only go through the following transitions: 
\begin{itemize}
    \item From being pending to being active.
    \item From being active to being marked.
    \item From being active to being flagged.
\end{itemize}
\end{lemma}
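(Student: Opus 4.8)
The plan is to exhaust the possible state transitions by using the previous lemmas about immutability. First I would recall the definition of the four states (pending, marked, flagged, active) from Definition~\ref{definition-node-states}, and observe that the state of a node is determined entirely by two pieces of data: its timestamp field (whether it equals $\bot$ and whether a "committing" \emph{getTS()} for it has occurred) and its \emph{next} pointer (whether it is marked or flagged). So I would track how each of these can change.

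The key steps, in order: (1) Using Lemma~\ref{lemma-deleted-states}, a \emph{next} pointer that is marked stays marked forever and one that is flagged stays flagged forever; moreover, since a pointer cannot be simultaneously marked and flagged (enforced by the mask logic in Figure~\ref{pseudo-methods}), a node cannot move from marked to flagged or vice versa. Hence once a node is marked it remains marked, and once flagged it remains flagged — these are terminal states. (2) For the pending state, I would argue a node leaves it exactly when a successful \emph{getTS()}-then-CAS sequence for it completes (per the definition), and by Lemma~\ref{lemma-immutable-ts} the timestamp then never returns to $\bot$; also, observe that while a node is pending its \emph{next} pointer could in principle be marked/flagged, but by Definition~\ref{definition-node-states} the "pending" label takes precedence only when the timestamp is still $\bot$ — so I need to check the intended reading: a pending node that gets marked becomes "marked" (it is no longer active and the marked clause applies). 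I would handle this by noting that from pending the node can transition either to active (timestamp gets set, \emph{next} still clean) or — if its \emph{next} is marked/flagged first — directly toward the marked/flagged state; then argue that in the list such marking/flagging of a still-pending (infant, timestamp-$\bot$) node cannot occur because \emph{mark()} and \emph{flag()} are only applied to nodes already reachable in the list, which (by the insertion/trim protocol and the $ts$-update CAS that the inserting thread issues) have had a \emph{getTS()} committed — this is where I would lean on the surrounding algorithmic invariants. (3) Finally, from active: the timestamp is already non-$\bot$ and immutable (Lemma~\ref{lemma-immutable-ts}), so the node can never go back to pending; the only remaining way to change state is a change to \emph{next}, which can only be a marking (line~\ref{alg-remove-mark}) or a flagging (line~\ref{alg-trim-flag}), giving the transitions active$\to$marked and active$\to$flagged. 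Combining, the only transitions are pending$\to$active, active$\to$marked, active$\to$flagged, as claimed.

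The main obstacle I anticipate is the interaction between the pending state and the marked/flagged states: I must rule out a node jumping from pending directly to marked or flagged (which the lemma implicitly forbids, since it lists pending$\to$active as the sole exit from pending). This requires showing that whenever \emph{mark()} or \emph{flag()} successfully acts on a node, that node is no longer pending — i.e., a \emph{getTS()}-commit for it has already happened. This is not purely local to the three lemmas cited; it needs the protocol-level fact that a node is marked or flagged only after it has been physically inserted and its timestamp installed (the inserting thread performs the $ts$-CAS in line~\ref{alg-insert-update-ts} or~\ref{alg-trim-update-new-curr-ts}, and a remover calls \emph{find()} which runs the $ts$-updating CASes before returning a window). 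I would isolate this as a short sub-claim ("a marked or flagged node is never pending"), prove it from the structure of Algorithm~\ref{pseudo-list}, and then the rest of the case analysis is routine given Lemmas~\ref{lemma-immutable-ts} and~\ref{lemma-deleted-states}.
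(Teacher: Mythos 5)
Your proposal is correct and follows essentially the same route as the paper's proof: invoke Lemma~\ref{lemma-immutable-ts} to show a node never re-enters the pending state, invoke Lemma~\ref{lemma-deleted-states} to make marked and flagged terminal, and then isolate the remaining sub-claim that a pending node cannot be marked or flagged, which you discharge (as the paper does) by observing that \emph{mark()} is only invoked on a node returned by \emph{find()} (whose timestamp gets installed by line~\ref{alg-find-update-curr-ts} before \emph{find()} returns) and that \emph{flag()} in line~\ref{alg-trim-flag} is immediately preceded by the timestamp CAS in line~\ref{alg-trim-update-curr-ts}.
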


\begin{proof}
By Lemma~\ref{lemma-immutable-ts}, once a node's timestamp is not $\bot$, it is immutable.
In addition, once the respective \emph{getTS()} command from Definition~\ref{definition-node-states} is executed, a node cannot be pending anymore.  
Therefore, active, marked, and flagged nodes never become pending.
In addition, by Lemma~\ref{lemma-deleted-states}, marked and flagged pointers are immutable. In particular, a marked node cannot become active or flagged, and a flagged node cannot become active or marked.
It still remains to show that a pending node cannot become marked or flagged. 
Nodes are marked in line~\ref{alg-remove-mark} of Algorithm~\ref{pseudo-list}, after being returned as the second output parameter from the \emph{find()} call in line~\ref{alg-remove-find}. Before this \emph{find()} call returns, the node's timestamp is updated in line~\ref{alg-find-update-curr-ts} (if the node is not already active).
In a similar way, right before a node is flagged in line~\ref{alg-trim-flag} of Algorithm~\ref{pseudo-list}, its timestamp is updated in line~\ref{alg-trim-update-curr-ts}, for making sure that it is no longer pending (the update fails if the node is already active). 
\end{proof}

\begin{lemma} \label{lemma-invariants} Let $n_1$ and $n_2$ be two different nodes. Then:
\begin{enumerate}
    \item \label{lemma-invariants-infant} If $n_1$'s next pointer points to $n_2$, than $n_2$ is not an infant.
    
    \item \label{lemma-invariants-pending-infant} Suppose that $n_1$ is the third input parameter to a successful CAS execution, either in line~\ref{alg-insert-update-pred} or~\ref{alg-trim-cas} of Algorithm~\ref{pseudo-list}. Then, starting from its allocation and until this CAS execution, it holds that (1) $n_1$ is an infant, and (2) $n_1$ is pending.
    
    \item \label{lemma-invariants-still-reachable} If $n_2$ was reachable from $n_1$ at a certain point during execution, and $n_2$ is either pending or active, then $n_2$ is still reachable from $n_1$.
    
    \item \label{lemma-invariants-active-reachable} If $n_1$ is not an infant, and is either pending or active, then it is reachable.
    
    \item \label{lemma-invariants-succ-pending-reachable} If $n_1$ is not an infant, $n_2$ is $n_1$'s successor, and $n_2$ is pending, then $n_1$ is reachable.
    
    \item \label{lemma-invariants-not-reachable} If $n_1$ was not pending and not reachable at some point, then it is still not reachable.
    
    \item \label{lemma-invariants-pending-still-successor} If $n_2$ was $n_1$'s successor at a certain point during execution, and either $n_1$ or $n_2$ are pending, then $n_2$ is still $n_1$'s successor.
    
    \item \label{lemma-invariants-smaller-not-reachable} If $n_1$'s key is smaller than or equal to $n_2$'s key, then $n_1$ is not reachable from $n_2$. 
    
    \item \label{lemma-invariants-still-reachable-before-trim} Assume $n_1$ is not an infant. If either $n_1$ has not been traversed during the loop in lines~\ref{alg-trim-while}-\ref{alg-trim-curr-gets-curr-next} yet, or that the following CAS in line~\ref{alg-trim-cas} (during the same \emph{trim()} execution) has not been successfully executed yet, $n_1$ is reachable.
    
    \item \label{lemma-invariants-unlinked} If $n_1$ is unlinked, then it is not reachable.
    
    \item \label{lemma-invariants-tail} If $n_1$ is not an infant then there exists a node with a key $\infty$ which is reachable from $n_1$.
    
    \item \label{lemma-invariants-no-null-dereference} If a pointer is assigned into a local variable in Algorithm~\ref{pseudo-list} and is later dereferenced, then it is not null.
    
\end{enumerate}
\end{lemma}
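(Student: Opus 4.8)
The plan is to prove all twelve items simultaneously as a single invariant that is preserved by every atomic step of the execution. This is the only viable route: the items are mutually dependent — e.g.\ item~\ref{lemma-invariants-infant} (a \emph{next} pointer never targets an infant) is what makes item~\ref{lemma-invariants-active-reachable} (non-infant active/pending nodes are reachable) meaningful, and item~\ref{lemma-invariants-still-reachable} (persistence of reachability for pending/active nodes) feeds directly into the reachability-preservation arguments for items~\ref{lemma-invariants-active-reachable}, \ref{lemma-invariants-succ-pending-reachable}, and \ref{lemma-invariants-still-reachable-before-trim}. So I would set up the conjunction $I = \bigwedge_{i=1}^{12}(\text{item }i)$, check that $I$ holds after \texttt{Algorithm~\ref{pseudo-init}} (trivially: only \emph{head} and \emph{tail} exist, both reachable, no infants, keys $-\infty < \infty$), and then argue that each line that mutates shared state — the CAS in line~\ref{alg-insert-update-pred}, the CAS in line~\ref{alg-trim-cas}, the \emph{mark()} in line~\ref{alg-remove-mark}, the \emph{flag()} in line~\ref{alg-trim-flag}, the timestamp WCAS updates, and allocations — preserves $I$. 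The auxiliary lemmas already proved (Lemmas~\ref{lemma-immutable-ts}--\ref{lemma-node-states}: timestamps and marked/flagged pointers are immutable, state transitions only go pending~$\to$~active~$\to$~\{marked,flagged\}) do a lot of the bookkeeping, so the induction really only needs to track what the two structural CASes do to the \emph{next}-pointer graph.

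The skeleton of the case analysis: (a)~\textbf{Allocation} (lines~\ref{alg-insert-alloc}, \ref{alg-trim-alloc}) creates an infant with $ts=\bot$, not pointed-to by anyone — preserves item~\ref{lemma-invariants-infant} vacuously, establishes the ``until the successful CAS'' part of item~\ref{lemma-invariants-pending-infant}, and touches nothing else. (b)~\textbf{Timestamp WCAS} changes only a \emph{ts} field from $\bot$ to non-$\bot$; by Lemma~\ref{lemma-node-states} this is exactly the pending~$\to$~active transition. The subtle point here is item~\ref{lemma-invariants-active-reachable}: when a node becomes active it must already be reachable — this follows because a node's timestamp is only stamped \emph{after} it (or its representative) has been installed by a successful structural CAS (line~\ref{alg-insert-update-ts} after line~\ref{alg-insert-update-pred}; line~\ref{alg-trim-update-new-curr-ts} after line~\ref{alg-trim-cas}), or it is stamped as somebody's \emph{prior}/successor while that somebody is reachable — exactly the content of items~\ref{lemma-invariants-succ-pending-reachable} and~\ref{lemma-invariants-still-reachable-before-trim}, used inductively. (c)~\textbf{mark()} in line~\ref{alg-remove-mark} and \textbf{flag()} in line~\ref{alg-trim-flag} only set low bits; they change no \emph{reference}, so the reachability graph (which by Definition~\ref{definition-reachability} follows \emph{getRef}'d pointers, including marked/flagged ones) is untouched — every reachability item is preserved, and the state transitions are covered by Lemma~\ref{lemma-node-states}. (d)~\textbf{The \emph{insert} CAS} (line~\ref{alg-insert-update-pred}) redirects \texttt{pred.next} from \texttt{curr} to \texttt{newNode}, where \texttt{newNode.next = curr}; \texttt{pred} was reachable and active right before (from the \emph{find()} postcondition plus item~\ref{lemma-invariants-active-reachable}), so \texttt{newNode} becomes reachable and \texttt{curr} stays reachable — items~\ref{lemma-invariants-infant}, \ref{lemma-invariants-still-reachable}, \ref{lemma-invariants-tail} are maintained, and \texttt{newNode} ceases to be an infant, discharging item~\ref{lemma-invariants-pending-infant}. (e)~\textbf{The \emph{trim} CAS} (line~\ref{alg-trim-cas}) is the crux: it redirects \texttt{pred.next} from \texttt{victim} to \texttt{newCurr}, simultaneously unlinking the whole marked run \texttt{victim},\dots,\texttt{curr}; here one shows \texttt{newCurr} and \texttt{succ} become/stay reachable (keeping item~\ref{lemma-invariants-tail} via the preserved path to the $\infty$-keyed node), the unlinked nodes become non-reachable (item~\ref{lemma-invariants-unlinked}, and then item~\ref{lemma-invariants-not-reachable} makes non-reachability permanent because those nodes are non-infant and non-pending), and \texttt{newCurr} stops being an infant (item~\ref{lemma-invariants-pending-infant}). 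Items~\ref{lemma-invariants-smaller-not-reachable} (acyclicity: keys strictly increase along \emph{next}) and~\ref{lemma-invariants-no-null-dereference} (no null dereference) are maintained by observing that both CASes splice in a node whose key lies strictly between its new predecessor's and its new successor's keys, and that the only pointer ever stored into a traversed local that could be null is a \emph{next} pointer reaching \emph{tail}, which is guarded by the \emph{isMarkedOrFlagged}/\emph{getRef} checks.

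I expect the \emph{trim} CAS to be the main obstacle, for two reasons. First, it is the one step that changes reachability \emph{non-monotonically} for several nodes at once — a whole contiguous marked segment leaves the structure in a single atomic step — so the argument that \emph{exactly} \texttt{victim},\dots,\texttt{curr} become unreachable (and nothing else does, via item~\ref{lemma-invariants-still-reachable} for the rest of the list) needs the structural facts that (i)~\texttt{pred} is unmarked/unflagged hence its \emph{next} was still pointing at \texttt{victim}, (ii)~the segment \texttt{victim}\dots\texttt{curr} consists of marked nodes whose \emph{next} pointers are therefore frozen (Lemma~\ref{lemma-deleted-states}), and (iii)~\texttt{curr}'s \emph{next} was \emph{flagged}, so \texttt{succ} is pinned as the segment's exit. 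Second, item~\ref{lemma-invariants-unlinked} must be coordinated with Definition~\ref{definition-unlinked}, which ties ``unlinked'' to the moment \texttt{newCurr} becomes \emph{active} rather than the moment of the CAS — so one has to be careful that between the successful \emph{trim} CAS and the subsequent timestamping of \texttt{newCurr}, the invariant statement of item~\ref{lemma-invariants-unlinked} is not yet being asserted for those nodes, while item~\ref{lemma-invariants-not-reachable} (once non-reachable, always non-reachable) is what bridges the gap. Everything else is routine propagation through the remaining (read-only or bit-only) lines.
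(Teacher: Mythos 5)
Your proposal is correct and takes essentially the same approach as the paper's proof: a simultaneous induction on the length of the execution over the conjunction of all twelve invariants, with a base case at initialization and a case analysis over the mutating steps. The only difference is cosmetic — the paper nests the case analysis as ``for each invariant, consider which step could violate it'' whereas you nest it as ``for each step type, show all invariants survive'' — and you correctly identify both the crux (the \emph{trim()} CAS in line~\ref{alg-trim-cas} atomically unlinking a marked run, pinned by mark/flag immutability) and the one genuinely delicate timing issue (Definition~\ref{definition-unlinked} ties ``unlinked'' to the moment \emph{newCurr} becomes active, strictly after the CAS, so item~\ref{lemma-invariants-not-reachable} is the bridge to item~\ref{lemma-invariants-unlinked}).
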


\begin{proof}
We are going to prove the lemma by induction on the length of the execution. At the initial stage, head and tail are the only nodes in the list, having $-\infty$ and $\infty$ keys (respectively), and head's next pointer points to tail. By Definition~\ref{definition-infant}, tail is not an infant, and Invariant~\ref{lemma-invariants-infant} holds. Invariants~\ref{lemma-invariants-pending-infant},~\ref{lemma-invariants-succ-pending-reachable},~\ref{lemma-invariants-not-reachable},~\ref{lemma-invariants-pending-still-successor},~\ref{lemma-invariants-unlinked} and~\ref{lemma-invariants-no-null-dereference} vacuously hold, and Invariants~\ref{lemma-invariants-still-reachable},~\ref{lemma-invariants-active-reachable},~\ref{lemma-invariants-smaller-not-reachable},~\ref{lemma-invariants-still-reachable-before-trim} and~\ref{lemma-invariants-tail} hold since tail is head's successor and both nodes are reachable by Definition~\ref{definition-reachable}.
Now, assume that all of the invariants hold at a certain point during the execution, and let $s$ be the next execution step, executed by a thread $t$.

\begin{enumerate}
    \item Assume by contradiction that after executing $s$, there exist two different nodes $n_1$ and $n_2$, such that $n_1$'s next pointer points to $n_2$, and $n_2$ is an infant. By the induction hypothesis, either $n_2$ is not an infant before executing $s$ (and thus, by Definition~\ref{definition-infant}, it is also not an infant after executing $s$ -- a contradiction), or $n_1$'s next pointer does not point to $n_2$ right before executing $s$. Therefore, $s$ has to be a pointer initialization (line~\ref{alg-insert-update-next} or~\ref{alg-trim-init-next} in Algorithm~\ref{pseudo-list}) or update (line~\ref{alg-insert-update-pred} or~\ref{alg-trim-cas} in Algorithm~\ref{pseudo-list}). In the first two cases, $n_2$ already have a predecessor prior to $s$, and is not an infant by the induction hypothesis -- a contradiction. In addition, by Definition~\ref{definition-infant}, after executing line~\ref{alg-insert-update-pred} or~\ref{alg-trim-cas}, $n_2$ is no longer an infant, which also derives a contradiction. Therefore, the invariant holds.
    
    \item We first prove that $n_1$ is still an infant. By Definition~\ref{definition-infant}, it suffices to assume by contradiction that $n_1$ is not an infant after $s$, and that $t$'s next step is the execution of line~\ref{alg-insert-update-pred} or~\ref{alg-trim-cas} of Algorithm~\ref{pseudo-list}, having $n_1$ as the third input parameter to the respective successful CAS. $n_1$ is either allocated in line~\ref{alg-insert-alloc} or~\ref{alg-trim-alloc}, respectively. By Definition~\ref{definition-infant}, $n_1$ is an infant after allocated. Since $n_1$ is no longer an infant, there exist a former successful CAS execution, either in line~\ref{alg-insert-update-pred} or~\ref{alg-trim-cas}, for which $n_1$ is the third input parameter. This implies that $n_1$ was allocated twice -- a contradiction.
    
    Now, assume by contradiction that $n_1$ is not pending before either line~\ref{alg-insert-update-pred} or~\ref{alg-trim-cas} of Algorithm~\ref{pseudo-list} is executed, having $n_1$ as the third input parameter to the respective successful CAS. 
    $n_1$ is either allocated in line~\ref{alg-insert-alloc} or~\ref{alg-trim-alloc}, respectively. By Definition~\ref{definition-node-states}, $n_1$ is pending after allocated. Since $n_1$ is guaranteed to still be an infant, its timestamp was not updated in line~\ref{alg-insert-update-ts} or~\ref{alg-trim-update-new-curr-ts}. Therefore, it was either updated in line~\ref{alg-find-update-pred-ts},~\ref{alg-find-update-curr-ts},~\ref{alg-trim-update-curr-ts} or ~\ref{alg-trim-update-succ-ts} of Algorithm~\ref{pseudo-list}, or in line~\ref{alg-range-query-smaller-curr-succ-ts-update} or~\ref{alg-range-query-bigger-curr-succ-ts-update} of Algorithm~\ref{pseudo-range}. In all cases, $n_1$ already had a predecessor at this point, and by the induction hypothesis (Invariant~\ref{lemma-invariants-infant}), it is no longer an infant -- a contradiction.
    Therefore, $n_1$ is neither an infant nor pending, and the invariant holds.
    
    \item Assume by contradiction that $n_1$ was reachable from $n_2$ at a certain point during the execution, and that after $s$, $n_1$ is either pending or active, and not reachable from $n_2$. By the induction hypothesis, $n_2$ is reachable from $n_1$ before $s$. 
    Note that $s$ cannot be one of the pointer initializations (in line~\ref{alg-insert-update-next} and~\ref{alg-trim-init-next}), as the respective infants did not have any former successors.
    Therefore, $s$ must be a successful CAS execution, either in line~\ref{alg-insert-update-pred} or~\ref{alg-trim-cas} of Algorithm~\ref{pseudo-list}.
    
    If $s$ is the execution of line~\ref{alg-insert-update-pred}, then right before $s$, $pred$ must be reachable from $n_1$ and $n_2$ must be reachable from $curr$. Since $curr$ is still reachable from $pred$ after executing $s$, $n_2$ is also still reachable from $n_1$ -- a contradiction.
    
    Otherwise, if $s$ is the execution of line~\ref{alg-trim-cas}, then right before $s$, $pred$ must be reachable from $n_1$ and $n_2$ must be reachable from $victim$.
    Since $n_2$ is not marked or flagged, by Lemma~\ref{lemma-node-states} it is also guaranteed that it has not been marked or flagged before executing $s$. Therefore, $n_2$ is not among the marked nodes, traversed during the loop in lines~\ref{alg-trim-while}-\ref{alg-trim-curr-gets-curr-next}. In addition, it cannot be the flagged $curr$ node (flagged not later than the execution of line~\ref{alg-trim-flag}). Since marked and flagged pointers are immutable (see Lemma~\ref{lemma-deleted-states}), $n_2$ is reachable from $succ$, which is $curr$'s successor, read in line~\ref{alg-trim-read-succ}. As $newCurr$'s successor is set to be $succ$ in line~\ref{alg-trim-init-next}, by the induction hypothesis, $n_2$ is reachable from $newCurr$, right before executing $s$. Since $s$ does not update $newCurr$'s next pointer, $n_2$ is also reachable from $newCurr$ after executing $s$. Finally, since $newCurr$ is $pred$'s successor after executing $s$, $n_2$ is still reachable from $n_1$ -- a contradiction.
    
    \item Using Definition~\ref{definition-reachable} and Invariant~\ref{lemma-invariants-still-reachable}, it suffices to show that $n_1$ is reachable when it stops being an infant. I.e., when it is the third input parameter to a successful CAS, either in line~\ref{alg-insert-update-pred} or~\ref{alg-trim-cas}. In both cases, $pred$ is not an infant by Invariant~\ref{lemma-invariants-infant}, and is either active or pending~\footnote{Actually, $pred$ cannot be pending at this point.} by Lemma~\ref{lemma-deleted-states} (as its next pointer is mutable). Therefore, by the induction hypothesis, $pred$ is reachable. By Definition~\ref{definition-reachable}, $n_1$ is also reachable after executing the respective CAS.
    
    \item Assume by contradiction that after executing $s$, $n_1$ is not an infant, $n_2$ is $n_1$'s successor, $n_2$ is pending, and $n_1$ is not reachable. Note that by Invariant~\ref{lemma-invariants-infant}, $n_2$ is not an infant, and by Invariant~\ref{lemma-invariants-active-reachable}, it is still reachable after $s$.
    If $n_1$ is either pending or active, then by Invariant~\ref{lemma-invariants-active-reachable}, it is reachable -- a contradiction. Therefore, $n_1$ must either be marked or flagged. By Lemma~\ref{lemma-deleted-states}, $n_2$ is $n_1$'s successor before $s$ and thus, by the induction hypothesis, $n_1$ is reachable before $s$. I.e., $s$ updates a next pointer.
    
    If $s$ is the execution of line~\ref{alg-insert-update-pred}, then $n_1$ must be reachable from the node referenced by the $pred$ variable before $s$, and not reachable from it after $s$. By definition~\ref{definition-reachability}, $n_1$ must be reachable from the node referenced by the $curr$ variable. As the latter is still reachable from the former after the pointer change (via the node referenced by the $newNode$ variable), we get a contradiction.
    
    Otherwise, if $s$ is the execution of line~\ref{alg-trim-cas}, then $n_1$ must be reachable from the node referenced by the $pred$ variable before $s$, and not reachable from it after $s$. I.e., while $n_1$ is not reachable from the $newCurr$ variable, $n_2$ must be reachable from it (as $n_2$ remains reachable after the pointer change, and $s$ does not change other pointers). The only possible scenario is the one in which $n_2$ is referenced by the $succ$ variable, as the predecessors of all following nodes remain reachable after this pointer change. Since $n_2$ is no longer pending at this stage (its timestamp is updated no later than the execution of line~\ref{alg-trim-update-succ-ts}), we get a contradiction.
    
    Therefore, if $n_1$ is not an infant, $n_2$ is $n_1$'s successor, and $n_2$ is pending, then $n_1$ is reachable.
    
    \item Assume by contradiction that $n_1$ was not pending and not reachable at some point before executing $s$, and that it is reachable after $s$. By the induction hypothesis, it has not been reachable until right before executing $s$. 
    By Definition~\ref{definition-infant}, a node cannot become an infant after not being an infant. I.e., by Definition~\ref{definition-reachable}, $s$ physically inserts $n_1$ into the list. W.l.o.g., assume that for every other node $n'$, formerly not reachable (while not being an infant) and then reachable after $s$, $n'$ is reachable from $n_1$ after $s$. I.e., there are no such nodes that precede $n_1$ after $s$. This means that $n_1$'s reachable predecessor after $s$~\footnote{By Definition~\ref{definition-reachable}, $n_1$ has a single reachable predecessor} was also reachable before $s$, and that $s$ updated its next pointer to point to $n_1$.
    
    For convenience, let us denote this predecessor with $n_0$. Since $n_0$ is reachable prior to $s$, by Invariant~\ref{lemma-invariants-infant} it is not an infant. Consequently, by Invariant~\ref{lemma-invariants-pending-infant}, $s$ cannot be the execution of line~\ref{alg-insert-update-next} or~\ref{alg-trim-init-next}. It remains to show that it also cannot be the execution of line~\ref{alg-insert-update-pred} or~\ref{alg-trim-cas}.
    
    In both cases, by Invariant~\ref{lemma-invariants-pending-infant}, $n_0$'s successor after $s$ is an infant right before $s$, and we get a contradiction.
    Therefore, a former unreachable (while not being an infant) node never becomes reachable.

    \item Assume that $n_2$ was $n_1$'s successor at a certain point during the execution, and that either $n_1$ or $n_2$ are pending after $s$. If $n_2$ is $n_1$'s successor after $s$, then we are done. Otherwise, by Lemma~\ref{lemma-node-states}, either $n_1$ or $n_2$ are pending before $s$, and by the induction hypothesis, $n_2$ is $n_1$'s successor right before $s$. Therefore, $s$ must change $n_1$'s successor, either in line~\ref{alg-insert-update-pred} or~\ref{alg-trim-cas}.
    
    If $n_1$'s next pointer is updated in line~\ref{alg-insert-update-pred}, then both $n_1$ and $n_2$ are not pending, as their timestamps were updated no later than the execution of lines~\ref{alg-find-update-pred-ts} and~\ref{alg-find-update-curr-ts} (respectively), during the \emph{find()} execution, invoked in line~\ref{alg-insert-find}.
    
    If $n_1$'s next pointer is updated in line~\ref{alg-trim-cas}, then $n_1$ is not pending, as its timestamp is updated no later than the execution of line~\ref{alg-find-update-pred-ts}, during the calling \emph{find()} execution.
    In addition, during this calling \emph{find()} execution, a pointer to $n_2$ was written into the $predNext$ variable, either in line~\ref{alg-find-head-next} or~\ref{alg-find-get-pred-next}. Assuming $n_2$ is still pending, by the induction hypothesis, when the condition in line~\ref{alg-find-if-not-adjacent} was checked, $n_2$ was still $n_1$'s successor. I.e., the $curr$ variable was updated at least once during the loop in lines~\ref{alg-find-while-is-marked}-\ref{alg-find-curr-gets-curr-next}. This means that $n_2$ was either marked or flagged when the condition in line~\ref{alg-find-while-is-marked} was checked -- a contradiction to Lemma~\ref{lemma-node-states}.
    Therefore, if either $n_1$ or $n_2$ are pending after $s$, then $n_2$ is still $n_1$'s successor after $s$, and the invariant holds.
    
    \item Assume by contradiction that $n_1$'s key is smaller than or equal to $n_2$'s key, and that $n_1$ is reachable from $n_2$. Recall that keys are immutable. Therefore, by the induction hypothesis, $n_1$ was not reachable from $n_2$ before $s$. 
    Then $s$ must be a pointer update. 
    
    Obviously, $s$ cannot be the execution of line~\ref{alg-insert-update-next}; The $curr$ node, returned as output from the \emph{find()} method, has a key which is at least $newNode$'s key (checked either in line~\ref{alg-find-key-bigger} or~\ref{alg-find-key-bigger-after-trim}). Moreover, since the condition checked in line~\ref{alg-insert-return-curr-val} does not hold for $curr$, its key is strictly bigger than $newNode$'s key. By the induction hypothesis, all nodes reachable from $curr$ have bigger keys, so executing line~\ref{alg-insert-update-next} cannot violate the condition.
    
    In a similar way, $s$ cannot be the execution of line~\ref{alg-trim-init-next}; By the induction hypothesis, $succ$'s key is strictly bigger than $curr$'s key, and thus it is strictly bigger than $newCurr$'s key. By the induction hypothesis, all nodes reachable from $succ$ have bigger keys, so executing line~\ref{alg-trim-init-next} cannot violate the condition.
    
    It still remains to show that $s$ cannot be the execution of line~\ref{alg-insert-update-pred} or~\ref{alg-trim-cas}.
    First, assume by contradiction that $s$ is the execution of line~\ref{alg-insert-update-pred}. By the induction hypothesis, the invariant holds prior to $s$. Therefore, $pred$ must be reachable from $n_1$ and $n_2$ must be reachable from $newNode$. As explained above, the invariant holds for $newNode$ and all of its successive nodes, for any possible path. In addition, by the induction hypothesis, the invariant holds for $pred$ and all of the nodes preceding it. Since $pred$ is the node returned from the \emph{find()} method, its key is strictly smaller than $newNode$'s key. Therefore, the violating step cannot be the execution of line~\ref{alg-insert-update-pred}.
    
    The remaining possible step is the execution of line~\ref{alg-trim-cas}. Again, by the induction hypothesis, $pred$ must be reachable from $n_1$ and $n_2$ must be reachable from $newCurr$. As explained above, the invariant holds for $newCurr$ and all of its successive nodes, for any possible path. In addition, the invariant holds for $pred$ and all of the nodes preceding it. Since $curr$ is reachable from $pred$, its key is strictly bigger by the induction hypothesis. Consequently, $newCurr$'s key is also strictly bigger than $pred$'s. Therefore, the violating step cannot be the execution of line~\ref{alg-trim-cas}, and we get a contradiction.
    Since there are no other possible cases, the invariant holds.
    
    \item If $n_1$ is either pending or active, then by Invariant~\ref{lemma-invariants-active-reachable}, it is reachable. Otherwise, by the induction hypothesis, it is still reachable before $s$. Assume by contradiction that it is not reachable after $s$, and that either $s$ is not the execution of line~\ref{alg-trim-cas}, or $n_1$ was not traversed during the loop in lines~\ref{alg-trim-while}-\ref{alg-trim-curr-gets-curr-next} of the same \emph{trim()} execution. 
    
    If $s$ is not the execution of line~\ref{alg-trim-cas}, then it must be a next pointer change. I.e., it must be a successful CAS execution in line~\ref{alg-insert-update-pred}. If the node, referenced by the $pred$ variable, is reachable from $n_1$ before $s$, then by Invariant~\ref{lemma-invariants-smaller-not-reachable}, this pointer change does not affect $n_1$'s reachability. Otherwise, prior to $s$, $n_1$ must be reachable from the node referenced by the $curr$ variable. In this case, $n_1$ remains reachable after $s$ -- a contradiction.
    
    Otherwise, $n_1$ was not traversed during the loop in lines~\ref{alg-trim-while}-\ref{alg-trim-curr-gets-curr-next}, but $s$ is still the execution of line~\ref{alg-trim-cas}. Let $n_0$ be the node referenced by the $pred$ variable. If $n_1$'s key is not bigger then $n_0$'s key, then by Invariant~\ref{lemma-invariants-smaller-not-reachable}, $n_1$'s reachability is not affected by this CAS. 
    
    By Lemma~\ref{lemma-deleted-states}, all nodes, traversed during the loop in  lines~\ref{alg-trim-while}-\ref{alg-trim-curr-gets-curr-next}, are immutable after the execution of line~\ref{alg-trim-flag}. Recall that by the induction hypothesis, $n_1$ is still reachable before $s$. Therefore, if $n_1$'s key is bigger than $n_0$'s key, then it also must be reachable from the node referenced by the $succ$ variable. I.e., $n_1$ is still reachable after $s$ -- a contradiction. Any possible case lead to a contradiction and thus, the invariant holds.
    
    \item By Invariant~\ref{lemma-invariants-pending-infant}, $n_2$ (as defined in Definition~\ref{definition-unlinked}) becomes active only after the execution of line~\ref{alg-trim-cas}. At this point, $n_1$ is either marked or flagged, and by Lemma~\ref{lemma-node-states}, it is not pending when line~\ref{alg-trim-cas} is executed. By Invariant~\ref{lemma-invariants-not-reachable}, it suffices to show that $n_1$ is not reachable after the execution of line~\ref{alg-trim-cas}, during the respective \emph{trim()} execution from Definition~\ref{definition-unlinked}.
    
    Assume by contradiction that $s$ is the execution of this update, and that $n_1$ is reachable after $s$.
    Let $n_0$ be the node referenced by the $pred$ local variable. By Invariant~\ref{lemma-invariants-smaller-not-reachable}, $n_0$'s key is strictly smaller than $n_1$'s key, and $n_2$'s key is at least $n_1$'s key.
    As $n_0$'s timestamp had been updated no later then the execution of line~\ref{alg-find-update-pred-ts}, during the calling \emph{find()} method, $n_0$ is not pending.
    By Invariant~\ref{lemma-invariants-pending-infant}, $n_0$ is not an infant.
    By Definition~\ref{definition-infant} and Invariant~\ref{lemma-invariants-pending-infant}, $n_2$ is pending after executing line~\ref{alg-trim-cas}.
    Therefore, by Invariant~\ref{lemma-invariants-active-reachable} and~\ref{lemma-invariants-succ-pending-reachable}, both $n_0$ and $n_2$ are reachable after executing line~\ref{alg-trim-cas}. By Invariant~\ref{lemma-invariants-smaller-not-reachable}, $n_1$ cannot be reachable at this point -- a contradiction.
    Therefore, if $n_1$ is unlinked, then it is not reachable.
    
    \item Assume by contradiction that after $s$, $n_1$ is not an infant and there does not exist a node with a key $\infty$ which is reachable from $n_1$.
    By the induction hypothesis, the invariant holds before $s$. I.e., either $n_1$ is an infant before $s$, or there exists a node with a key $\infty$ which is reachable from $n_1$.
    
    If $n_1$ is an infant before $s$, then by Definition~\ref{definition-infant}, $s$ must be the execution of line~\ref{alg-insert-update-pred} or~\ref{alg-trim-cas}, having $n_1$ referenced by the $newNode$ or $newCurr$ variable, respectively.
    By Invariant~\ref{lemma-invariants-infant}, $n_1$'s successor is not an infant. By the induction hypothesis, a node with a key $\infty$ is reachable from $n_1$'s successor, and by Definition~\ref{definition-reachability}, it is also reachable from $n_1$ -- a contradiction.
    
    Therefore, there exists a node with a key $\infty$ which is reachable from $n_1$ before $s$. 
    W.l.o.g., assume that $n_1$ is the node with the maximal key, for which there does not exist a reachable node with a key $\infty$. By Definition~\ref{definition-reachability}, $n_1$'s key is smaller than $\infty$.
    Let us denote the respective node (with a key $\infty$), reachable from $n_1$ before $s$, with $n_0$. 
    In addition, let us denote $n_1$'s successor before $s$, with $n_2$ (By the induction hypothesis, $n_1$ necessarily has a successor before $s$).
    By the choice of $n_1$ and Invariant~\ref{lemma-invariants-smaller-not-reachable}, a node with a key $\infty$ is reachable from $n_2$ after $s$.
    By Definition~\ref{definition-reachability}, $s$ updates $n_1$'s next pointer. Since $n_1$ is not an infant, by Invariant~\ref{lemma-invariants-pending-infant}, $s$ is the execution of line~\ref{alg-insert-update-pred} or~\ref{alg-trim-cas}, having $n_1$ referenced by the $pred$ variable.
    
    If $s$ is the execution of line~\ref{alg-insert-update-pred}, then $n_2$ is still reachable from $n_1$ -- a contradiction. Therefore, $s$ is the execution of line~\ref{alg-trim-cas}.
    Let $n_3$ be the node referenced by the $succ$ variable.
    By Invariant~\ref{lemma-invariants-smaller-not-reachable}, $s$ does not affect the reachability of any node from $n_3$. I.e., by the induction hypothesis, a node with a key $\infty$ is reachable from $n_3$ after $s$. As $n_3$ is reachable from $n_1$ after $s$ (by Invariant~\ref{lemma-invariants-pending-infant} and~\ref{lemma-invariants-pending-still-successor}), we get a contradiction to Definition~\ref{definition-reachability}.
    Therefore, $succ$ points to NULL.
    By the induction hypothesis, this means that the last node referenced by the $curr$ variable, has a key $\infty$.
    As the node referenced by the $newCurr$ variable has the same key, a node with a key $\infty$ is reachable from $n_1$ after $s$ -- a contradiction. 
    
    Therefore, if $n_1$ is not an infant, there exists a node with a key $\infty$ which is reachable from $n_1$.
    
    \item By the induction hypothesis, it suffices to show that whenever the $pred$, $curr$ and $succ$ variables are assigned with a new value (which is later dereferenced), it is not null.
    
    If $s$ assigns new values into the $pred$ and $curr$ variables in line~\ref{alg-insert-find},~\ref{alg-remove-find} or~\ref{alg-contains-find}, then this value is not null by the induction hypothesis.
    
    If $s$ assigns a new value into the $pred$ variable in line~\ref{alg-find-head}, then this value is a pointer to $head$, which is obviously not null.
    
    If $s$ assigns a new value into the $curr$ variable in line~\ref{alg-find-curr-gets-head-next}, then by Invariant~\ref{lemma-invariants-tail}, this value is not null.
    
    If $s$ assigns a new value into the $curr$ variable in line~\ref{alg-find-curr-gets-curr-next}, then by Lemma~\ref{lemma-deleted-states}, this value is the same value for which the condition in line~\ref{alg-find-if-tail} did not hold. I.e., this value is not null.
    
    If $s$ assigns a new value into the $pred$ variable in line~\ref{alg-find-pred-gets-curr}, then by the induction hypothesis, this value is not null.
    
    If $s$ assigns a new value into the $curr$ variable in line~\ref{alg-find-curr-gets-pred-next} or~\ref{alg-find-curr-gets-pred-next-after-trim}, then since the condition, checked in line~\ref{alg-find-key-bigger}, does not hold for the node referenced by the $pred$ variable, and by Invariant~\ref{lemma-invariants-tail}, the assigned value is not null.
    
    If $s$ assigns a new value into the $curr$ variable in line~\ref{alg-trim-curr-gets-curr-next}, then the previous node, referenced by the $curr$ variable, is marked. I.e., it cannot have an $\infty$ key. By Invariant~\ref{lemma-invariants-tail}, the assigned value is not null.
    
    Finally, if $s$ assigns a new value into the $succ$ variable in line~\ref{alg-trim-read-succ}, then it is dereferenced iff it is not null.
    
\end{enumerate}
\end{proof}


Before setting linearization points, we still need to define the term of being logically in the list. Next, we are going to prove that (1) an \emph{insert()} operation logically inserts a new node iff there is no other node with the same key, logically in the list at the operation's linearization point, (2) a \emph{remove()} operation either logically removes a node, or does nothing in case there is no node with the input key, which is logically in the list at the operation's linearization point, (3) a \emph{contains()} returns true iff there exists a node with the operation's input key which is logically in the list at its linearization point, and (4) a \emph{rangeQuery()} operation returns the sequence of all nodes, having keys in the given range, and which are logically in the list during the operation's linearization point.

\begin{definition} \textbf{(logically in the list)} \label{definition-logically-in}
We say that a node $n$ is \textit{logically in the list} if it is not pending and not unlinked.
\end{definition}

According to Definition~\ref{definition-logically-in}, unreachable nodes may still be considered as logically in the list, as long as they are not unlinked (see Definition~\ref{definition-unlinked}). Allegedly, this may result in foiling linearizability, as a key may appear twice in the list: an unreachable (but not yet unlinked) node and a newly inserted one might have the same key. However, this scenario is impossible, as we prove in Lemmas~\ref{lemma-reachable-not-between},~\ref{lemma-logically-in-not-between} and~\ref{lemma-key-once} below.

\begin{lemma} \label{lemma-reachable-not-between}
Let $n_1$ and $n_2$ be two nodes, and let $k_1$ and $k_2$ be their keys, respectively. If both nodes are logically in the list, reachable, and $n_2$ is $n_1$'s successor, then there does not exist a different node $n_3$, with a key $k_1 \leq k_3 \leq k_2$, which is logically in the list.
\end{lemma}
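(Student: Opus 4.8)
The plan is to argue by contradiction: suppose $n_1,n_2$ are reachable, $n_2$ is $n_1$'s successor, both are logically in the list, and there is a third node $n_3$, logically in the list, with $k_1\le k_3\le k_2$. The key preliminary observation is that the reachable nodes form a strictly sorted simple path: following \emph{next} pointers from \emph{head} produces a chain $\mathit{head}=m_0\to m_1\to\cdots$ whose nodes are exactly the reachable ones, and for $i<j$ the node $m_j$ is reachable from $m_i$, so by the contrapositive of Invariant~\ref{lemma-invariants-smaller-not-reachable} the key of $m_i$ is strictly smaller than that of $m_j$. Hence the chain has no repeated node, distinct reachable nodes have distinct keys, and $k_1<k_2$. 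Writing $n_1=m_a$, $n_2=m_{a+1}$, a reachable node $m_c$ with $c\le a-1$ has key below $k_1$ and one with $c\ge a+2$ has key above $k_2$, so no reachable node other than $n_1,n_2$ lies (by key) in $[k_1,k_2]$. It therefore suffices to prove that $n_3$ is reachable.

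To show $n_3$ is reachable: being logically in the list, $n_3$ is not pending, and it is not an infant either, since an infant keeps timestamp $\bot$ (every timestamp-setting line either follows a successful CAS that links the node, line~\ref{alg-insert-update-pred} or~\ref{alg-trim-cas}, or acts on a node reached through a \emph{next} pointer, which by Invariant~\ref{lemma-invariants-infant} is not an infant). If $n_3$ is active it is reachable by Invariant~\ref{lemma-invariants-active-reachable}. Otherwise $n_3$ is marked or flagged, and by Invariant~\ref{lemma-invariants-still-reachable-before-trim} it is reachable unless it was traversed in the loop of lines~\ref{alg-trim-while}--\ref{alg-trim-curr-gets-curr-next} of some \emph{trim()} execution $T$ whose CAS in line~\ref{alg-trim-cas} has already succeeded. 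Suppose so; let $m$ be the node $T$ allocated in line~\ref{alg-trim-alloc} and $p$ its \emph{pred} argument. Since $n_3$ is not unlinked, Definition~\ref{definition-unlinked} forbids $m$ from ever becoming active, so by Lemma~\ref{lemma-node-states} $m$ stays pending for the rest of the execution; then $m$ remains $p$'s successor (Invariant~\ref{lemma-invariants-pending-still-successor}), $p$ is reachable (Invariant~\ref{lemma-invariants-succ-pending-reachable}), and $m$, which was $p$'s successor just after $T$'s successful CAS, is reachable (Invariant~\ref{lemma-invariants-still-reachable}). Writing $k_p,k_m$ for the keys of $p,m$: $n_3$ was reachable from $p$ when $T$'s loop traversed it, and that loop ends at a node of key $k_m$ reachable from $n_3$ through marked \emph{next} pointers (immutable by Lemma~\ref{lemma-deleted-states}), so Invariant~\ref{lemma-invariants-smaller-not-reachable} gives $k_p<k_3<k_m$.

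To close: $m$ is now reachable, so $m=m_\ell$ for some $\ell$, and $p$, being $m$'s \emph{next}-predecessor in the chain, equals $m_{\ell-1}$. Since $m$ is pending while $n_1,n_2$ are not, $\ell\notin\{a,a+1\}$. If $\ell\le a-1$ then $k_m$ is at most the key of $m_{a-1}$, hence below $k_1\le k_3<k_m$; if $\ell\ge a+2$ then $k_p$ is at least the key of $m_{a+1}=k_2\ge k_3>k_p$. Either case is a contradiction, so $n_3$ is reachable, and the first paragraph finishes the proof.

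The main obstacle is exactly the subcase where $n_3$ is marked or flagged but not reachable: one has to pin down the \emph{trim()} execution that removed $n_3$, use "$n_3$ is not unlinked" together with Definition~\ref{definition-unlinked} to force the replacement node $m$ to stay pending — hence permanently reachable — and then extract the inequalities $k_p<k_3<k_m$ from the trim loop, so that placing $m$ and $p$ in the strictly sorted reachable chain contradicts $n_1\to n_2$ being a consecutive pair of non-pending nodes with $k_1\le k_3\le k_2$. The rest is the routine bookkeeping of the sorted-chain argument.
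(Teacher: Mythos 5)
Your proof is correct and takes essentially the same route as the paper's: show $n_3$ cannot be reachable, invoke Invariant~\ref{lemma-invariants-still-reachable-before-trim} to pin down the \emph{trim()} that physically removed it, use the fact that $n_3$ is not unlinked to conclude that the replacement node $m$ is still pending (hence, with its predecessor $p$, a consecutive reachable pair), and derive a key-order contradiction against the consecutive pair $(n_1,n_2)$. Two small slips that do not affect validity: the inequality should be $k_p<k_3\le k_m$ (equality occurs when $n_3$ is the flagged \emph{curr} node itself), and ``$m$ stays pending for the rest of the execution'' overstates Definition~\ref{definition-unlinked}---it only gives that $m$ is pending at the time instant under consideration, which is all the remainder of your argument actually uses.
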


\begin{proof}
Assume by contradiction that $n_3$ exists. 

By Definition~\ref{definition-reachability},~\ref{definition-reachable}, and Lemma~\ref{lemma-invariants} (Invariant~\ref{lemma-invariants-smaller-not-reachable}), it is impossible that all three nodes are reachable. Therefore, $n_3$ is not reachable.

By Lemma~\ref{lemma-invariants} (Invariant~\ref{lemma-invariants-still-reachable-before-trim}), $n_3$ has already been traversed during the loop in lines~\ref{alg-trim-while}-\ref{alg-trim-curr-gets-curr-next}, followed by a successful CAS execution in line~\ref{alg-trim-cas}. Let $n_4$ be the node referenced by $pred$ and let $n_5$ be the node referenced by $newCurr$ at this point. By Lemma~\ref{lemma-invariants} (Invariant~\ref{lemma-invariants-smaller-not-reachable}), $n_4$'s key is smaller than $n_3$'s key, and $n_5$'s key is at least $n_3$'s key.
Since $n_3$ is still logically in the list, by Definition~\ref{definition-unlinked}, $n_5$ is still pending. Therefore, by Lemma~\ref{lemma-invariants} (Invariant~\ref{lemma-invariants-succ-pending-reachable}), $n_4$ is reachable (and consequently, $n_5$ is reachable as well). 
Now, by Lemma~\ref{lemma-invariants} (Invariant~\ref{lemma-invariants-smaller-not-reachable}), $n_4$'s key must be smaller than $n_2$'s key, and $n_5$'s key must be bigger than $n_1$'s key. As all four nodes are reachable, it must hold that $n_4$ is $n_1$ and $n_5$ is $n_2$. I.e., $n_2$ is pending, and by Definition~\ref{definition-logically-in}, is not logically in the list -- a contradiction. Therefore, $n_3$ does not exist.
\end{proof}

\begin{lemma} \label{lemma-logically-in-not-between}
Let $n_1$ and $n_2$ be two nodes, and let $k_1$ and $k_2$ be their keys, respectively. If both nodes are logically in the list, and $n_2$ is $n_1$'s successor, then there does not exist a different node $n_3$, with a key $k_1 \leq k_3 \leq k_2$, which is logically in the list.
\end{lemma}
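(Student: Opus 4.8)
The plan is to bootstrap from Lemma~\ref{lemma-reachable-not-between}, which already disposes of the case where $n_1$ and $n_2$ are reachable, and to reduce the general case to it. First I would split on whether $n_1$ is reachable. If $n_1$ is reachable, then its successor $n_2$ is reachable as well (Definition~\ref{definition-reachable}), both are logically in the list by hypothesis, and $n_2$ is $n_1$'s successor, so Lemma~\ref{lemma-reachable-not-between} applies verbatim. The substance is therefore the case where $n_1$ is not reachable.

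The key auxiliary fact I would establish is a structural claim: if a node $m$ is logically in the list but not reachable, then there are reachable consecutive nodes $a,b$ (with $b$ the successor of $a$) such that $a.\mathit{key} < m.\mathit{key} \le b.\mathit{key}$, and $b$ is pending, hence $b$ is \emph{not} logically in the list. To prove this, note $m$ is not pending (Definition~\ref{definition-logically-in}), not an infant (infants are pending, Invariant~\ref{lemma-invariants-pending-infant}), and not active (a non-infant active node is reachable, Invariant~\ref{lemma-invariants-active-reachable}), so $m$ is marked or flagged by Lemma~\ref{lemma-node-states}; then Invariant~\ref{lemma-invariants-still-reachable-before-trim} forces $m$ to have been traversed in the loop of lines~\ref{alg-trim-while}--\ref{alg-trim-curr-gets-curr-next} of some \emph{trim()} whose subsequent CAS at line~\ref{alg-trim-cas} succeeded. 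Taking $\mathit{pred},\mathit{curr},\mathit{succ},\mathit{newCurr}$ from that execution, Definition~\ref{definition-unlinked} together with the fact that $m$ is not unlinked implies $\mathit{newCurr}$ was never active; since $\mathit{newCurr}$ is non-infant after line~\ref{alg-trim-cas} (Invariant~\ref{lemma-invariants-pending-infant}) and the pending state is left only by becoming active (Lemma~\ref{lemma-node-states}), $\mathit{newCurr}$ is currently pending. Hence $\mathit{newCurr}$ is reachable (Invariant~\ref{lemma-invariants-active-reachable}), it is still $\mathit{pred}$'s successor (Invariant~\ref{lemma-invariants-pending-still-successor}), $\mathit{pred}$ is reachable (Invariant~\ref{lemma-invariants-succ-pending-reachable}; $\mathit{pred}$ is non-infant by Invariant~\ref{lemma-invariants-infant}, being \emph{head} or a node with an incoming \emph{next} pointer), and from the marked/flagged chain $\mathit{pred}\to\mathit{curr}$ that existed while \emph{trim()} ran, together with Invariant~\ref{lemma-invariants-smaller-not-reachable}, we get $\mathit{pred}.\mathit{key} < m.\mathit{key} \le \mathit{curr}.\mathit{key} = \mathit{newCurr}.\mathit{key}$. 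Set $a=\mathit{pred}$, $b=\mathit{newCurr}$.

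Next I would record that the reachable nodes form a simple path from \emph{head} with strictly increasing keys (each reachable node has a unique successor, and Invariant~\ref{lemma-invariants-smaller-not-reachable} forbids a key from repeating or decreasing along the path); so between two consecutive reachable nodes there is no third reachable node, and a reachable node is determined by its key. Now apply the structural claim to $n_1$, obtaining $a_1\to b_1$ with $a_1.\mathit{key} < k_1 \le b_1.\mathit{key}$ and $b_1$ pending. Since $n_1$ is a node of a marked/flagged chain ending at the flagged $\mathit{curr}$ (whose key is $b_1$'s), its successor $n_2$ either lies in that same chain — so $k_2 \le \mathit{curr}.\mathit{key} = b_1.\mathit{key}$ and $[k_1,k_2]\subseteq(a_1.\mathit{key},b_1.\mathit{key}]$ — or $n_1=\mathit{curr}$, in which case $n_1$'s flagged (hence immutable, Lemma~\ref{lemma-deleted-states}) \emph{next} pointer makes $n_2$ exactly $b_1$'s successor on the reachable path, with $k_1=b_1.\mathit{key}$. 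Assume now that some $n_3\neq n_1,n_2$ is logically in the list with $k_1\le k_3\le k_2$. If $n_3$ is reachable, then $k_3$ lies in $[a_1.\mathit{key},b_1.\mathit{key}]$ or $[b_1.\mathit{key},n_2.\mathit{key}]$, so by the path structure $n_3\in\{a_1,b_1\}$ or $n_3\in\{b_1,n_2\}$; but $a_1.\mathit{key}<k_3$ excludes $a_1$, $n_3\neq n_2$ excludes $n_2$, and $b_1$ is pending hence not logically in the list -- contradiction. If $n_3$ is not reachable, apply the structural claim to $n_3$ to get $a_3\to b_3$ with $a_3.\mathit{key}<k_3\le b_3.\mathit{key}$ and $b_3$ pending; if $b_3=b_1$ then $n_3$ lies in the same \emph{trim()} chain as $n_1$ and strict monotonicity of that chain's keys forces $n_3\in\{n_1,n_2\}$, a contradiction, so $b_3\neq b_1$ and $b_1,b_3$ are distinct reachable nodes with distinct keys; whichever of $b_1.\mathit{key},b_3.\mathit{key}$ is the smaller, the corresponding pending anchor turns out to be a reachable node sitting strictly between the consecutive pair on the other side ($a_3\to b_3$ or $a_1\to b_1$), contradicting the path structure.

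I expect the main obstacle to be exactly this last step: the bookkeeping showing that the two pending anchors $b_1$ and $b_3$, arising from two different \emph{trim()} executions, cannot simultaneously sit on the reachable path in a way compatible with $k_1\le k_3\le k_2$. Everything else is a careful but essentially routine unwinding of the invariants of Lemma~\ref{lemma-invariants}, and the handling of the two closed-interval endpoints $k_3=k_1$ and $k_3=k_2$ is absorbed by keeping the intervals above closed.
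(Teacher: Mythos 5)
Your proof is correct, but it takes a genuinely different route from the paper's. The paper's argument is temporal: after reducing to $n_1$ unreachable, it first proves that $n_3$ could only have become logically in the list \emph{after} the trim CAS that unlinked $n_1$ (via a sub-contradiction showing $n_2$ would otherwise have been pending, which line~\ref{alg-trim-update-succ-ts} forbids), then compares $k_3$ with the key of that trim's $newCurr$ to force $n_2$ to coincide with $succ$, and finally derives a contradiction from $n_3$ being reachable from $n_2$. You instead isolate a reusable structural ``anchoring'' claim — every logically-in-the-list but unreachable node $m$ has a currently consecutive reachable pair $a\to b$ with $a.\mathit{key}<m.\mathit{key}\le b.\mathit{key}$ and $b$ pending — apply it to $n_1$ (and, when $n_3$ is also unreachable, to $n_3$), and then close by purely positional reasoning on the current reachable path: distinct consecutive reachable pairs span disjoint key intervals, so the two anchors cannot coexist with $k_1\le k_3\le k_2$. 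The paper's route is shorter but buries its key fact in temporal bookkeeping about when $n_3$ appeared relative to the trim CAS; yours trades that for extra case analysis (the marked-vs.-flagged dichotomy for $n_1$, and reachable-vs.-not for $n_3$) in exchange for a symmetric, self-contained auxiliary lemma and an argument that never needs to compare points in time. Both land on essentially the same core contradiction via Invariant~\ref{lemma-invariants-smaller-not-reachable}. The step you flagged as the ``main obstacle'' does go through: with $b_1\ne b_3$ both reachable and pending, whichever pair $(a_i,b_i)$ sits earlier on the path forces $a_j.\mathit{key}\ge b_i.\mathit{key}$, which combined with $a_j.\mathit{key}<k_j$ and either $k_2\le b_1.\mathit{key}$ (marked case) or $k_2=n_2.\mathit{key}$ with $b_1\to n_2$ consecutive (flagged case) yields $k_3>k_2$ or $k_3<k_1$, a contradiction.
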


\begin{proof}
Assume by contradiction that $n_3$ exists. By Lemma~\ref{lemma-reachable-not-between}, if $n_1$ is reachable, then we are done (as $n_2$ is also reachable, by Definition~\ref{definition-reachable}). Therefore, $n_1$ is not reachable.

By Lemma~\ref{lemma-invariants} (Invariant~\ref{lemma-invariants-still-reachable-before-trim}), $n_1$ has already been traversed during the loop in lines~\ref{alg-trim-while}-\ref{alg-trim-curr-gets-curr-next}, followed by a successful CAS execution in line~\ref{alg-trim-cas}. 

By Lemma~\ref{lemma-invariants} (Invariant~\ref{lemma-invariants-active-reachable}), $n_1$ is either marked or flagged. Therefore, by Lemma~\ref{lemma-deleted-states}, $n_2$ was also $n_1$'s successor right before it stopped being reachable.
Assume by contradiction that $n_3$ was logically in the list while $n_1$ was still reachable. Then by Lemma~\ref{lemma-reachable-not-between}, $n_2$ was not logically in the list when $n_1$ stopped being reachable (as $n_2$ necessarily was $n_1$'s successor and reachable at this point). In particular, $n_2$ was pending, and by Definition~\ref{definition-node-states}, it was neither marked nor flagged at this point, which means that it was not traversed during the loop in lines~\ref{alg-trim-while}-\ref{alg-trim-curr-gets-curr-next} (during which $n_1$ was traversed).
I.e., $n_2$ cannot be pending, as its timestamp was updated no later than the execution of line~\ref{alg-trim-update-succ-ts} -- a contradiction.
Therefore, $n_3$ started being logically in the list after $n_1$ stopped being reachable.
Moreover, by Lemma~\ref{lemma-invariants} (Invariant~\ref{lemma-invariants-active-reachable} and~\ref{lemma-invariants-not-reachable}), $n_3$ became reachable after $n_1$ stopped being reachable.

Going back to $n_1$'s traversal in lines~\ref{alg-trim-while}-\ref{alg-trim-curr-gets-curr-next}, guaranteed by Definition~\ref{definition-unlinked}, let $n_0$ be the node referenced by $pred$, let $n_4$ be the node referenced by $newCurr$, and let $k_0$ and $k_4$ be their keys, respectively. By Lemma~\ref{lemma-invariants} (Invariant~\ref{lemma-invariants-smaller-not-reachable}), $k_0 < k_1 \leq k_4$. Since $n_1$ is still logically in the list, by Definition~\ref{definition-unlinked} and~\ref{definition-logically-in}, $n_4$ is pending.
By Lemma~\ref{lemma-invariants} (Invariant~\ref{lemma-invariants-active-reachable}) $n_4$ became reachable after the successful CAS in line~\ref{alg-trim-cas} was executed. Since $n_1$ was reachable prior to this CAS (by Lemma~\ref{lemma-invariants}, Invariant~\ref{lemma-invariants-still-reachable-before-trim}), $n_4$ became reachable before $n_3$.

By Lemma~\ref{lemma-invariants} (Invariant~\ref{lemma-invariants-succ-pending-reachable}), both $n_0$ and $n_4$ are still reachable. 
As $n_3$ is currently logically in the list, by Lemma~\ref{lemma-invariants} (Invariant~\ref{lemma-invariants-still-reachable-before-trim}), it must have been reachable at some point. Since $k_0 < k_1 \leq k_3$, by Lemma~\ref{lemma-invariants} (Invariant~\ref{lemma-invariants-smaller-not-reachable}), while being reachable, $n_3$ was reachable from $n_4$. However, $n_3 \neq n_4$, as $n_3$ is currently logically in the list and $n_4$ is still pending. By Lemma~\ref{lemma-invariants} (Invariant~\ref{lemma-invariants-smaller-not-reachable}), $k_4 < k_3$. Consequently, $k_4 < k_2$. This means that $n_2$ was not traversed during the loop in lines~\ref{alg-trim-while}-\ref{alg-trim-curr-gets-curr-next} (as all traversed nodes must have a key which is not bigger than $k_4$, by Lemma~\ref{lemma-invariants}, Invariant~\ref{lemma-invariants-smaller-not-reachable}). Actually, Since $n_2$ is $n_1$'s successor, it must have been referenced by the $succ$ variable, which means that it was $n_4$'s successor when line~\ref{alg-trim-cas} was executed. By Lemma~\ref{lemma-invariants} (Invariant~\ref{lemma-invariants-pending-still-successor}), $n_2$ is still $n_4$'s successor. By Definition~\ref{definition-reachability}, $n_3$ is reachable from $n_2$ -- a contradiction to Lemma~\ref{lemma-invariants} (Invariant~\ref{lemma-invariants-smaller-not-reachable}).
Therefore, $n_3$ does not exist.
\end{proof}

\begin{lemma} \label{lemma-key-once}
Let $n_1$ and $n_2$ be two different nodes with the same key. Then either $n_1$ or $n_2$ is not logically in the list.
\end{lemma}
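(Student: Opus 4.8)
The plan is to argue by contradiction. Suppose $n_1 \neq n_2$ have the same key $k$ and both are logically in the list. I would first record a cheap observation: \emph{at most one of $n_1,n_2$ is reachable}. The reachable nodes are exactly those on the chain obtained by following $next$ pointers out of $head$, and along this chain keys strictly increase — if $y$ is $x$'s successor then $y$ is reachable from $x$, so by Invariant~\ref{lemma-invariants-smaller-not-reachable} $\mathrm{key}(y) > \mathrm{key}(x)$ — hence two distinct reachable nodes never share a key.

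The heart of the argument is a structural description of a node that is logically in the list but \emph{not} reachable. I claim such a node $n$ is always a ``trimmed‑but‑not‑yet‑superseded'' node: since $n$ is not pending it is not an infant (infants are pending, Invariant~\ref{lemma-invariants-pending-infant}), so by Invariant~\ref{lemma-invariants-still-reachable-before-trim} it must already have been reached by some $trim()$ call (as $victim$, or inside the loop of lines~\ref{alg-trim-while}--\ref{alg-trim-curr-gets-curr-next}) whose CAS at line~\ref{alg-trim-cas} then succeeded. Let $n_0$ be the $pred$ argument of that call and $R$ the node it allocates at line~\ref{alg-trim-alloc}. Because $n$ is not unlinked, Definition~\ref{definition-unlinked} together with Lemma~\ref{lemma-node-states} forces $R$ to still be pending; then $R$ is $n_0$'s successor (Invariant~\ref{lemma-invariants-pending-still-successor}), $n_0$ is reachable (Invariant~\ref{lemma-invariants-succ-pending-reachable}), hence $R$ is reachable, and, since $n$ lies on the $next$‑connected chain running from $victim$ to the first unmarked node whose key $R$ copies, Invariant~\ref{lemma-invariants-smaller-not-reachable} gives $\mathrm{key}(n_0) < \mathrm{key}(n) \le \mathrm{key}(R)$. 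Thus $n_0 \to R$ is a direct successor pair on the $head$‑chain with $R$ reachable and $\mathrm{key}(n_0) < k \le \mathrm{key}(R)$.

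Given this, the case split is short. If both $n_1,n_2$ are reachable we are done by the first observation. If exactly one, say $n_1$, is reachable, apply the structural claim to $n_2$ to get $n_0,R$: either $\mathrm{key}(R)=k$, so $R$ and $n_1$ are distinct reachable nodes of key $k$ (contradiction), or $\mathrm{key}(n_0) < k < \mathrm{key}(R)$, and since $R$ immediately follows $n_0$ on the $head$‑chain no reachable node has key strictly between $\mathrm{key}(n_0)$ and $\mathrm{key}(R)$, contradicting that $n_1$ is reachable with key $k$. If neither is reachable, apply the structural claim to both, getting pairs $(n_0^1,R_1)$, $(n_0^2,R_2)$ from $trim()$ calls $T_1,T_2$; if $T_1=T_2$ then $n_1,n_2$ both sit on the single $next$‑connected chain traversed by that call and so have distinct keys — impossible; if $T_1 \neq T_2$ then $R_1 \neq R_2$, and taking w.l.o.g.\ $\mathrm{key}(R_1) < \mathrm{key}(R_2)$ we get $\mathrm{key}(n_0^2) < k \le \mathrm{key}(R_1) < \mathrm{key}(R_2)$, so $R_1$ is a reachable node with key strictly between $\mathrm{key}(n_0^2)$ and $\mathrm{key}(R_2)$, contradicting that $n_0^2 \to R_2$ is a direct successor pair. (Lemma~\ref{lemma-logically-in-not-between} shortens the sub‑cases in which the bracketing pair is itself logically in the list, but it does not on its own dispose of the unreachable case.)

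The step I expect to fight with is the structural claim: pinning down that an unreachable logically‑in‑list node is precisely a node whose $trim()$ has committed but whose replacement $R$ has not yet become active, and squeezing out of the $trim()$ code that $R$ is reachable and hangs directly off $pred$ with $\mathrm{key}(pred) < \mathrm{key}(n) \le \mathrm{key}(R)$. This forces a careful joint use of Invariants~\ref{lemma-invariants-still-reachable-before-trim}, \ref{lemma-invariants-succ-pending-reachable}, \ref{lemma-invariants-pending-still-successor} and \ref{lemma-invariants-smaller-not-reachable}, Definition~\ref{definition-unlinked}, and Lemma~\ref{lemma-node-states}; once it is in place, the rest is routine bookkeeping on the unique $head$‑to‑$tail$ chain.
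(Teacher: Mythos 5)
Your proof is correct and follows essentially the same route as the paper's: both reduce to the case where neither node is reachable, extract from Invariant~\ref{lemma-invariants-still-reachable-before-trim} that each was swept by a committed \emph{trim()} whose replacement node is still pending, and then use Invariants~\ref{lemma-invariants-succ-pending-reachable}, \ref{lemma-invariants-pending-still-successor} and~\ref{lemma-invariants-smaller-not-reachable} to exhibit two reachable adjacent pairs bracketing the shared key, which is impossible. The only cosmetic difference is that you rederive the bracketing facts inline rather than invoking Lemmas~\ref{lemma-reachable-not-between} and~\ref{lemma-logically-in-not-between} as the paper does, and your parenthetical justification that logically-in-list nodes are non-infant cites Invariant~\ref{lemma-invariants-pending-infant}, which is about nodes destined to be CAS targets; the cleaner observation is that every \emph{ts}-update site acts only on non-infants, so an infant is forever pending and hence never logically in the list.
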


\begin{proof}
Assume by contradiction that both $n_1$ and $n_2$ are logically in the list. 
If either $n_1$ or $n_2$ are reachable, then we get a contradiction to Lemma~\ref{lemma-reachable-not-between}.
Otherwise, if either $n_1$ or $n_2$ have a predecessor which is logically in the list, we get a contradiction to Lemma~\ref{lemma-logically-in-not-between}.
Therefore, both are not reachable, and do not have a predecessor which is logically in the list.

By Lemma~\ref{lemma-invariants} (Invariant~\ref{lemma-invariants-still-reachable-before-trim}), both nodes have already been traversed during the loop in lines~\ref{alg-trim-while}-\ref{alg-trim-curr-gets-curr-next}, followed by a successful CAS execution in line~\ref{alg-trim-cas}. However, they were not traversed during the same \emph{trim()} invocation, as that would imply they were reachable from each other (a contradiction to Lemma~\ref{lemma-invariants}, Invariant~\ref{lemma-invariants-smaller-not-reachable}).
W.l.o.g., assume that the CAS, executed in line~\ref{alg-trim-cas} in the scope of the \emph{trim()} invocation, associated with $n_1$, was executed first.
Let $n_0$ be the node referenced by $pred$ and let $n_3$ be the node referenced by $newCurr$ at this point. 
As $n_1$ is still logically in the list, by Definition~\ref{definition-unlinked}, $n_3$ is pending. Therefore, by Lemma~\ref{lemma-invariants} (Invariant~\ref{lemma-invariants-succ-pending-reachable} and~\ref{lemma-invariants-pending-still-successor}), both $n_0$ and $n_3$ are still reachable, and $n_3$ is still $n_0$'s successor.
By Lemma~\ref{lemma-invariants} (Invariant~\ref{lemma-invariants-smaller-not-reachable}), $n_0$'s key is smaller than $n_1$'s key, and $n_3$'s key is at least $n_1$'s key.

In a similar way to $n_1$, by Lemma~\ref{lemma-invariants} (Invariant~\ref{lemma-invariants-still-reachable-before-trim}), $n_2$ has already been traversed during the loop in lines~\ref{alg-trim-while}-\ref{alg-trim-curr-gets-curr-next}, followed by a successful CAS execution in line~\ref{alg-trim-cas}. Let $n_4$ be the node referenced by $pred$ and let $n_5$ be the node referenced by $newCurr$ at this point. 
Since $n_1$ is logically in the list, by Definition~\ref{definition-unlinked}, $n_5$ is pending. Therefore, by Lemma~\ref{lemma-invariants} (Invariant~\ref{lemma-invariants-succ-pending-reachable} and~\ref{lemma-invariants-pending-still-successor}), both $n_4$ and $n_5$ are reachable, and $n_5$ is $n_4$'s successor.
By Lemma~\ref{lemma-invariants} (Invariant~\ref{lemma-invariants-smaller-not-reachable}), $n_4$'s key is smaller than $n_2$'s key, and $n_5$'s key is at least $n_2$'s key.

By Lemma~\ref{lemma-invariants} (Invariant~\ref{lemma-invariants-still-reachable-before-trim}), $n_0$ and $n_4$ are the same node, and $n_3$ and $n_5$ are the same node -- a contradiction to the fact that $n_1$ and $n_2$ were traversed during separate \emph{trim()} invocations.

Therefore, for every two different nodes, $n_1$ and $n_2$, with the same key, either $n_1$ or $n_2$ is not logically in the list.
\end{proof}

\subsection{Linearizability} \label{sec-linearization-points}

Before handling the list operations, we set a linearization point per \emph{find()} execution, as the rest of the operations heavily rely on this method. This point is guaranteed by Lemma~\ref{lemma-find-linearization} below.

\begin{lemma} \label{lemma-find-linearization}
Let $n_1$, $n_2$ be the two nodes returned from a \emph{find(key)} call. Then:
\begin{enumerate}
    \item $n_1$'s key is smaller than the input key.
    \item $n_2$'s key is not smaller than the input key
    \item There exists a point during the method execution in which both nodes are reachable and logically in the list and $n_2$ is $n_1$'s successor.
    \item During the guaranteed point, there does not exist a node $n_3 \neq n_2$, with a key bigger than $n_1$'s key and not bigger than $n_2$'s key, and which is logically in the list.
\end{enumerate}
\end{lemma}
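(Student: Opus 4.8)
Parts 1--2 are pure control-flow bookkeeping plus immutability of keys. The variable \emph{pred} equals \emph{head} (key $-\infty$) until it is overwritten in line~\ref{alg-find-pred-gets-curr}, and that happens only when the test in line~\ref{alg-find-key-bigger} has just failed, i.e. the node's key is strictly below the input key; after the loop \emph{pred} is not touched again before line~\ref{alg-find-return-window}. The returned \emph{curr} is reached only by exiting the outer loop through line~\ref{alg-find-key-bigger} (key $\ge$ input), or through line~\ref{alg-find-if-tail}, in which case Invariant~\ref{lemma-invariants-tail} of Lemma~\ref{lemma-invariants} forces $\emph{curr}\to\textit{key}=\infty\ge$ input; in the non-adjacent branch the re-check in line~\ref{alg-find-key-bigger-after-trim} rules out a smaller key before returning.

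For parts 3--4 the plan is to name one read that certifies physical adjacency, then ``slide it forward'' to a time at which both endpoints are simultaneously non-pending. Let $\rho$ be the last evaluation of $\emph{pred}\to\textit{next}$ into \emph{predNext} on the successful run (line~\ref{alg-find-head-next}, line~\ref{alg-find-get-pred-next}, or line~\ref{alg-find-pred-next-after-trim}). This read is immediately followed by a guard (line~\ref{alg-find-pred-marked-flagged} or~\ref{alg-find-pred-marked-flagged-after-trim}) that is passed, and \emph{head} is never marked nor flagged (it is never a \emph{remove()} target, and it is never the flagged node in \emph{trim()} since nothing points to it); hence at time $\rho$ the pointer \emph{predNext} is clean and, because we are either in the adjacent branch or in the \emph{trim()} branch just before line~\ref{alg-find-curr-gets-pred-next-after-trim}, equals \emph{curr}. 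Thus at $\rho$ we have $\emph{pred}\to\textit{next}=\emph{curr}$ with \emph{pred} neither marked nor flagged. Neither node is an infant (Invariant~\ref{lemma-invariants-infant}: \emph{curr} has the predecessor \emph{pred}, and \emph{pred} is \emph{head} or was obtained as a successor pointer), so \emph{pred}, being pending or active, is reachable at $\rho$ by Invariant~\ref{lemma-invariants-active-reachable}, and hence so is \emph{curr}.

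It remains to produce a single time $\tau$ at which both \emph{pred} and \emph{curr} are not pending while adjacency survives. If \emph{pred} is pending at $\rho$, then by Invariant~\ref{lemma-invariants-pending-still-successor} \emph{curr} remains \emph{pred}'s successor until \emph{pred} stops being pending, which happens no later than the CAS in line~\ref{alg-find-update-pred-ts}; let $\tau_1$ be that instant (or $\tau_1=\rho$ if \emph{pred} is already non-pending). At $\tau_1$: \emph{pred} is active (Lemma~\ref{lemma-node-states}), hence reachable (Invariant~\ref{lemma-invariants-active-reachable}), and $\emph{pred}\to\textit{next}=\emph{curr}$ still holds since the only intervening steps are $ts$-writes, which never change \emph{next} pointers. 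If \emph{curr} is still pending at $\tau_1$, apply Invariant~\ref{lemma-invariants-pending-still-successor} again (and Invariant~\ref{lemma-invariants-succ-pending-reachable} to keep \emph{pred} reachable meanwhile): \emph{curr} stops being pending no later than line~\ref{alg-find-update-curr-ts}, yielding $\tau$ at which \emph{curr} is active, $\emph{pred}\to\textit{next}=\emph{curr}$, and \emph{pred} is still reachable and, by Lemma~\ref{lemma-immutable-ts}, still non-pending. At $\tau$ both nodes are reachable, hence not unlinked by Invariant~\ref{lemma-invariants-unlinked}, and not pending, so both are logically in the list (Definition~\ref{definition-logically-in}) with \emph{curr} the successor of \emph{pred}; this is the point asserted in item~3, and item~4 then follows by Lemma~\ref{lemma-reachable-not-between} applied to \emph{pred} and \emph{curr} at $\tau$.

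The main obstacle is exactly this last step: committing to one timestamp $\tau$ that certifies non-pendingness of \emph{both} endpoints and arguing that in the window $[\rho,\tau]$ the predecessor \emph{pred} neither loses its edge to \emph{curr} nor gets trimmed away while \emph{curr} is still pending. This is what Invariants~\ref{lemma-invariants-pending-still-successor} and~\ref{lemma-invariants-succ-pending-reachable} of Lemma~\ref{lemma-invariants} are for, but one has to be careful that \emph{curr} itself may become marked somewhere in $(\rho,\tau]$, so the conclusion ``\emph{curr} is logically in the list at $\tau$'' must be routed through reachability inherited from \emph{pred} (plus Invariant~\ref{lemma-invariants-unlinked}) rather than through \emph{curr}'s own mark/flag bits.
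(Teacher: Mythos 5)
Your proof is correct and rests on the same machinery (the invariants of Lemma~\ref{lemma-invariants}, Lemma~\ref{lemma-node-states}, Lemma~\ref{lemma-immutable-ts}, and Lemma~\ref{lemma-reachable-not-between}), but the decomposition is genuinely different. The paper anchors on the \emph{last assignment to} \texttt{curr} and splits into three cases (\cref{alg-find-curr-gets-head-next}, \cref{alg-find-curr-gets-pred-next}, \cref{alg-find-curr-gets-pred-next-after-trim}), separately ruling out \cref{alg-find-curr-gets-curr-next} via Invariant~\ref{lemma-invariants-smaller-not-reachable} and stating a case-specific ``guaranteed point'' in each branch. You instead anchor on the \emph{last read $\rho$ of} $\texttt{pred}\to\texttt{next}$ and give one uniform slide-forward argument to a moment $\tau$ at which both endpoints are non-pending, using Invariant~\ref{lemma-invariants-pending-still-successor} to preserve the edge and Invariant~\ref{lemma-invariants-succ-pending-reachable} to preserve \texttt{pred}'s reachability over $[\rho,\tau]$, then routing \texttt{curr}'s logically-in-the-list status through reachability and Invariant~\ref{lemma-invariants-unlinked} rather than through its mark/flag bits. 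This buys you a shorter proof with no repeated argument and automatically absorbs the ``\texttt{curr} last set in \cref{alg-find-curr-gets-curr-next}'' case (it merely forces $\rho$ to be \cref{alg-find-pred-next-after-trim}); the paper's version is somewhat more explicit about exactly which code line witnesses the linearization, which it leans on later. One small imprecision in your part~2: \cref{alg-find-if-tail} exits only the inner loop, so the outer loop still terminates through the test in \cref{alg-find-key-bigger}, which succeeds because Invariant~\ref{lemma-invariants-tail} forces \texttt{curr} (having no successor) to carry key $\infty$ -- the substance is right, just state the control flow accordingly.
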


\begin{proof}
Let $n_1$ and $n_2$ be the two nodes returned from a \emph{find(key)} call.
\begin{enumerate}
    \item A pointer to $n_1$ was assigned into the $pred$ variable, either in line~\ref{alg-find-head} or~\ref{alg-find-pred-gets-curr}. If it was assigned in line~\ref{alg-find-head}, then $n_1$ is the head sentinel and its key is obviously smaller than the input key. Otherwise, the condition in line~\ref{alg-find-key-bigger}, checked right before the assignment, did not hold for $n_1$. I.e., $n_1$'s key must be smaller than the input key.  

    \item A pointer to $n_2$ was assigned into the $curr$ variable, either in line~\ref{alg-find-curr-gets-head-next},~\ref{alg-find-curr-gets-curr-next},~\ref{alg-find-curr-gets-pred-next}, or~\ref{alg-find-curr-gets-pred-next-after-trim}.
    If the last such assignment was in line~\ref{alg-find-curr-gets-head-next},~\ref{alg-find-curr-gets-curr-next} or~\ref{alg-find-curr-gets-pred-next}, then the condition checked in line~\ref{alg-find-key-bigger} holds for $n_2$. Otherwise, the condition checked in line~\ref{alg-find-key-bigger-after-trim} does not hold for $n_2$. In both cases, $n_2$'s key is not smaller than the input key.

    \item We are now going to set the guaranteed point, with respect to the last time $n_2$ is assigned into the $curr$ variable.

    If $n_2$ is assigned into the $curr$ variable for the last time in line~\ref{alg-find-curr-gets-head-next}, then $n_1$ is the head sentinel node, which is not pending and always reachable (by Definition~\ref{definition-reachable}), and thus (By Lemma~\ref{lemma-invariants}, Invariant~\ref{lemma-invariants-unlinked}), is logically in the list. 
    In addition, $n_2$ is neither marked nor flagged at this point, as the condition checked in line~\ref{alg-find-while-is-marked} does not hold.
    If $n_2$ is active when line~\ref{alg-find-head-next} is executed, then it is reachable by Lemma~\ref{lemma-invariants} (Invariant~\ref{lemma-invariants-active-reachable}), and by the same lemma (Invariant~\ref{lemma-invariants-unlinked}), it is not unlinked. By Definition~\ref{definition-logically-in}, it is logically in the list at this point, and we are done. 
    Otherwise, by Lemma~\ref{lemma-invariants} (Invariant~\ref{lemma-invariants-pending-still-successor}), it is guaranteed that when $n_2$ becomes active, it is still $n_1$'s successor. I.e., it is reachable, thus neither an infant (by Lemma~\ref{lemma-invariants}, Invariant~\ref{lemma-invariants-infant}) nor unlinked (by Lemma~\ref{lemma-invariants}, Invariant~\ref{lemma-invariants-unlinked}), and by Definition~\ref{definition-logically-in}, it is logically in the list.
    Therefore, \textbf{if $n_2$ is assigned into the $curr$ variable for the last time in line~\ref{alg-find-curr-gets-head-next}, then the guaranteed point is the first time $n_2$ is active after the last execution of line~\ref{alg-find-head-next}}.

    The next scenario is when $n_2$ is assigned into the $curr$ variable for the last time in line~\ref{alg-find-curr-gets-pred-next}. Since the condition from line~\ref{alg-find-pred-marked-flagged} does not hold, $n_1$ is either pending or active at this point. In addition, since the condition from line~\ref{alg-find-while-is-marked} does not hold for $n_2$ in the next loop iteration, $n_2$ is also pending or active.
    Therefore, when $n_1$'s successor is read in line~\ref{alg-find-get-pred-next}, by Lemma~\ref{lemma-invariants} (Invariant~\ref{lemma-invariants-active-reachable}), both nodes are reachable.
    If both nodes are active at this point, then by Lemma~\ref{lemma-invariants} (Invariant~\ref{lemma-invariants-infant} and~\ref{lemma-invariants-unlinked}) and Definition~\ref{definition-logically-in}, we are done.
    Otherwise, $n_1$ becomes active no later than the execution of line~\ref{alg-find-update-pred-ts}, and $n_2$ becomes active no later than the execution of line~\ref{alg-find-update-curr-ts}.
    By Lemma~\ref{lemma-invariants} (Invariant~\ref{lemma-invariants-pending-still-successor}), once the latter becomes active, $n_2$ is still $n_1$'s successor. In order to show that both nodes are logically in the list at this point, by Lemma~\ref{lemma-invariants} (Invariant~\ref{lemma-invariants-infant} and~\ref{lemma-invariants-unlinked}) and Definition~\ref{definition-logically-in}, it still remains to show that both nodes are reachable at this point. If $n_1$ is the latter one becoming active, then by Lemma~\ref{lemma-invariants} (Invariant~\ref{lemma-invariants-active-reachable}) and the fact that $n_2$ is $n_1$'s successor, both nodes are reachable at this point.
    Otherwise, if $n_2$ is the latter one, then by Lemma~\ref{lemma-invariants} (Invariant~\ref{lemma-invariants-succ-pending-reachable}), $n_1$ is still reachable when $n_2$ becomes active. By Definition~\ref{definition-reachability} and~\ref{definition-reachable}, $n_2$ is reachable at this point as well. By Definition~\ref{definition-logically-in}, both nodes are logically in the list at this point.
    Therefore, \textbf{if $n_2$ is assigned into the $curr$ variable for the last time in line~\ref{alg-find-curr-gets-pred-next}, then the guaranteed point is the first time both $n_1$ and $n_2$ are not pending after the last execution of line~\ref{alg-find-get-pred-next}}.

    The last scenario is when $n_2$ is assigned into the $curr$ variable for the last time in line~\ref{alg-find-curr-gets-pred-next-after-trim}. In this scenario, $n_1$'s next pointer is read for the last time in line~\ref{alg-find-pred-next-after-trim}. Since the conditions in lines~\ref{alg-find-pred-marked-flagged-after-trim} and~\ref{alg-find-key-bigger-after-trim} do not hold, both $n_1$ and $n_2$ are not marked or flagged. Moreover, $n_1$ is not pending, as its timestamp is updated no later than the execution of line~\ref{alg-find-update-pred-ts}.
    In addition, by Lemma~\ref{lemma-invariants}, both are not infants (by Invariant~\ref{lemma-invariants-infant}) and thus, are reachable (by Invariant~\ref{lemma-invariants-active-reachable}).
    By Lemma~\ref{lemma-invariants} (Invariant~\ref{lemma-invariants-succ-pending-reachable}), $n_1$ is still reachable once $n_2$ becomes active.
    By Definition~\ref{definition-reachability} and~\ref{definition-reachable}, $n_2$ is reachable at this point as well.
    By Lemma~\ref{lemma-invariants} (Invariant~\ref{lemma-invariants-infant} and~\ref{lemma-invariants-unlinked}) and Definition~\ref{definition-logically-in}, both nodes are logically in the list at this point.
    Therefore, \textbf{if $n_2$ is assigned into the $curr$ variable for the last time in line~\ref{alg-find-curr-gets-pred-next-after-trim}, then the guaranteed point is the first time $n_2$ is not pending after the last execution of line~\ref{alg-find-pred-next-after-trim}}.

    Note that $n_2$ cannot be assigned into the $curr$ variable for the last time in line~\ref{alg-find-curr-gets-curr-next}. By Lemma~\ref{lemma-invariants} (Invariant~\ref{lemma-invariants-smaller-not-reachable}), and since the executing thread traverses at least one node during the loop in lines~\ref{alg-find-while-is-marked}-\ref{alg-find-curr-gets-curr-next}, $n_2$'s key is strictly bigger than the key of the node, referenced be the $predNext$ variable. Therefore, they cannot be equal, and the condition from line~\ref{alg-find-if-not-adjacent} holds. I.e., a node pointer is written into the $curr$ variable, either in line~\ref{alg-find-curr-gets-pred-next-after-trim}, or in the next traversal trial (in case the method restarts before line~\ref{alg-find-curr-gets-pred-next-after-trim} is executed) -- a contradiction to the fact that a pointer to the second output parameter was assigned into the $curr$ variable for the last time in line~\ref{alg-find-curr-gets-curr-next}.
    
    \item Directly holds by Lemma~\ref{lemma-reachable-not-between}.

\end{enumerate}
\end{proof}

Given Lemma~\ref{lemma-find-linearization}, we are now able to set linearization points per list operation.

\subsubsection{The insert() Operation}

We separate between successful executions (returning NO\_VAL in line~\ref{alg-insert-return-no-val}) and unsuccessful executions (returning $curr$'s value in line~\ref{alg-insert-return-curr-val}).

\begin{definition} \label{definition-successful-insert-linearization} \textbf{(Successful inserts - linearization points)}
Let $n$ be the node referenced by the $newNode$ variable when a successful \emph{insert()} execution returns. Then the operation's linearization point is set to the step that turns $n$ into an active node.
\end{definition}

\begin{definition} \label{definition-unsuccessful-insert-linearization} \textbf{(Unsuccessful inserts - linearization points)}
An unsuccessful \emph{insert()} execution is linearized at the point guaranteed by Lemma~\ref{lemma-find-linearization}, with respect to the last \emph{find()} execution (called in line~\ref{alg-insert-find}).
\end{definition}

We are now going to prove, using Lemmas~\ref{lemma-insert-linearization-between},~\ref{lemma-successful-insert-linearization-correctness}, and~\ref{lemma-unsuccessful-insert-linearization-correctness} below, that Definitions~\ref{definition-successful-insert-linearization}-\ref{definition-unsuccessful-insert-linearization} above indeed define adequate linearization points. I.e., that each linearization point occurs between the invocation and termination of the operation, and that the operation indeed takes affect at its linearization point.

\begin{lemma} \label{lemma-insert-linearization-between}
The linearization points, guaranteed by Definitions~\ref{definition-successful-insert-linearization} and~\ref{definition-unsuccessful-insert-linearization}, occur between the invocation and termination of the respective execution.
\end{lemma}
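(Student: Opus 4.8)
The plan is to handle the two clauses of Definition~\ref{definition-successful-insert-linearization} and Definition~\ref{definition-unsuccessful-insert-linearization} separately, as they rest on different mechanisms. For an \emph{unsuccessful} execution the claim is immediate: its linearization point is the point guaranteed by Lemma~\ref{lemma-find-linearization} with respect to the last \emph{find()} call, invoked in line~\ref{alg-insert-find}. That lemma locates this point \emph{during the execution} of that \emph{find()} call, and the call is itself invoked and returns within the surrounding \emph{insert()} execution; hence the linearization point lies between the invocation and the termination of the operation.

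For a \emph{successful} execution, let $n$ be the node referenced by \emph{newNode} when the execution returns; by Definition~\ref{definition-successful-insert-linearization} the linearization point is the step that turns $n$ active. To see that this step occurs after the invocation, note that $n$ is created by the \textbf{alloc} call of line~\ref{alg-insert-alloc}, which runs after the operation is invoked; by Definition~\ref{definition-node-states} a freshly allocated node is pending, and by Lemma~\ref{lemma-node-states} the only way it can leave the pending state is by becoming active, so the step turning $n$ active is strictly later than line~\ref{alg-insert-alloc} (alternatively, Invariant~\ref{lemma-invariants-pending-infant} shows this step cannot precede the successful CAS of line~\ref{alg-insert-update-pred}). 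To see that the step occurs before the termination, recall that a successful execution returns in line~\ref{alg-insert-return-no-val}, which is reached only after line~\ref{alg-insert-update-ts} has executed, namely the CAS that tries to change $n$'s timestamp from $\bot$ to the value returned by \emph{getTS()}. If that CAS succeeds, $n$'s timestamp becomes non-$\bot$; if it fails, $n$'s timestamp must already have been non-$\bot$ when the CAS was attempted, since $\bot$ was the expected value and timestamps never return to $\bot$ by Lemma~\ref{lemma-immutable-ts}. In either case, once line~\ref{alg-insert-update-ts} has executed $n$'s timestamp is non-$\bot$, so by Definition~\ref{definition-node-states} together with Lemma~\ref{lemma-node-states} $n$ is active from that moment on; hence the step that turned $n$ active occurred no later than line~\ref{alg-insert-update-ts}, which precedes line~\ref{alg-insert-return-no-val}.

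The only delicate point is the successful case, and specifically that the step making $n$ active need not be performed by the inserting thread itself: a concurrent thread may set $n$'s timestamp through one of the helping lines (for instance line~\ref{alg-find-update-curr-ts}), and, by the footnote to Definition~\ref{definition-node-states}, the transition out of the pending state is the \emph{getTS()} call preceding the successful timestamp CAS rather than the CAS itself. Neither fact is an obstacle: Invariant~\ref{lemma-invariants-pending-infant} bounds this transition from below by the CAS of line~\ref{alg-insert-update-pred}, and the argument above bounds it from above by line~\ref{alg-insert-update-ts}, so no matter which thread performs it the step falls within the \emph{insert()} execution, which is what we need.
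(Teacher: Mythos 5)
Your proof is correct and follows essentially the same route as the paper's: the unsuccessful case is dispatched by Lemma~\ref{lemma-find-linearization} placing the point inside the \emph{find()} call, and the successful case is bounded below by allocation (you refine this slightly via Invariant~\ref{lemma-invariants-pending-infant} to the CAS of line~\ref{alg-insert-update-pred}, which the paper does not bother with) and above by line~\ref{alg-insert-update-ts}, using exactly the observation that after that line the timestamp is non-$\bot$ regardless of whether the CAS itself succeeded. Your third paragraph spells out the helping-thread subtlety that the paper leaves implicit, but the argument is the same one; no gaps.
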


\begin{proof}
The linearization point guaranteed by Definition~\ref{definition-unsuccessful-insert-linearization} occurs in the scope of the \emph{find()} execution (by Lemma~\ref{lemma-find-linearization}), and consequently, it occurs between the invocation and response of the unsuccessful \emph{insert()} execution. 
The linearization point guaranteed by Definition~\ref{definition-successful-insert-linearization} occurs after the node's allocation (done in the scope of the execution), and no later than the execution of line~\ref{alg-insert-update-ts} (by Definition~\ref{definition-node-states}). Therefore, it occurs between the invocation and response of the successful \emph{insert()} execution.
\end{proof}

\begin{lemma} \label{lemma-successful-insert-linearization-correctness}
Let $n$ be the node referenced by the $newNode$ variable when a successful \emph{insert()} execution returns, let $k$ be its key, and let $s$ be the step guaranteed by Definition~\ref{definition-successful-insert-linearization}. Then: 
\begin{enumerate}
    \item Immediately after $s$, $n$ is logically in the list.
    \item Immediately before $s$, there is no node with a key $k$, which is logically in the list.
\end{enumerate}
\end{lemma}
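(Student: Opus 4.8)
The statement has two parts. For part (1), I would argue as follows. By Definition~\ref{definition-successful-insert-linearization}, $s$ is the step turning $n$ into an active node. A node can only become active by transitioning out of the pending state (Lemma~\ref{lemma-node-states}), so immediately after $s$ the node $n$ is active. By Lemma~\ref{lemma-invariants} (Invariant~\ref{lemma-invariants-active-reachable}), since $n$ is not an infant at this point (the successful \emph{insert()} execution already performed the successful CAS of line~\ref{alg-insert-update-pred}, which by Definition~\ref{definition-infant} makes $n$ non-infant, and this CAS necessarily precedes the update of $n$'s timestamp, hence precedes $s$) and is active, $n$ is reachable. Being reachable, $n$ is not unlinked by Invariant~\ref{lemma-invariants-unlinked}. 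Since $n$ is not pending and not unlinked, by Definition~\ref{definition-logically-in} it is logically in the list. I should be a little careful about the ordering claim ``the CAS of line~\ref{alg-insert-update-pred} precedes $s$'': the timestamp is only ever updated after the successful CAS has been observed (line~\ref{alg-insert-update-ts} comes after line~\ref{alg-insert-update-pred} in the same execution, and any other thread updating $n$'s timestamp must first have reached $n$ via a \emph{next} pointer, which by Invariant~\ref{lemma-invariants-infant} requires $n$ to be non-infant, i.e.\ the CAS already happened). This is the cleanest way to pin down that $s$ occurs after $n$ stops being an infant.

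For part (2), I would argue by contradiction: suppose some node $n' \neq n$ with key $k$ is logically in the list immediately before $s$. The key observation is that $n$ itself is \emph{not} logically in the list immediately before $s$ — because just before $s$ it is still pending (by Definition~\ref{definition-node-states}, $n$ is pending up to but not including the step that makes it active). So $n$ and $n'$ are genuinely two different nodes with the same key, but this does not yet contradict Lemma~\ref{lemma-key-once} since only one of them ($n'$) is logically in the list before $s$. The real contradiction comes from looking at the state of the list right \emph{at} the successful CAS of line~\ref{alg-insert-update-pred}. Before that CAS, the \emph{find(key)} execution (line~\ref{alg-insert-find}) returned \emph{pred} and \emph{curr} with \emph{curr}'s key strictly greater than $k$ (because the check in line~\ref{alg-insert-return-curr-val} failed) and \emph{pred}'s key strictly less than $k$; moreover the CAS succeeded, so \emph{pred}'s \emph{next} pointer still pointed to \emph{curr} at the moment of the CAS, meaning \emph{pred} and \emph{curr} were consecutive reachable nodes at that moment (\emph{pred} not marked/flagged since its \emph{next} pointer changed). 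Then by Lemma~\ref{lemma-reachable-not-between} there was no node with key $k$ logically in the list at that moment.

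To close the gap between ``that moment of the CAS'' and ``immediately before $s$'', I need to show $n'$ was \emph{already} logically in the list at the time of the CAS, so that I get a contradiction with Lemma~\ref{lemma-reachable-not-between} directly. Here I would use the fact that a node that is logically in the list was logically in the list continuously from some earlier point: more precisely, by Definition~\ref{definition-logically-in} a node is logically in the list once it stops being pending and before it becomes unlinked, and by Lemma~\ref{lemma-node-states} pending is the initial state and never recurs, while by Definition~\ref{definition-unlinked} once unlinked it stays unlinked (Invariant~\ref{lemma-invariants-unlinked} plus Invariant~\ref{lemma-invariants-not-reachable}). So ``logically in the list'' is an interval $[\text{active-time}, \text{unlink-time})$. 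If $n'$ became active \emph{after} the CAS of line~\ref{alg-insert-update-pred}, I instead compare against a later witness: I run the argument at the first moment after the CAS at which $n'$ is active, showing $n$ and $n'$ are both logically in the list then (using Invariant~\ref{lemma-invariants-still-reachable-before-trim} / the fact that $n$ has not yet been unlinked since the \emph{insert} just returned) — and then $n'$ has key $k = $ key of $n$ with both logically in the list, contradicting Lemma~\ref{lemma-key-once}. Either way we reach a contradiction.

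\textbf{Main obstacle.} The delicate point is the temporal bookkeeping in part (2): ``immediately before $s$'' is a different instant from ``at the successful CAS of line~\ref{alg-insert-update-pred}'', and $n'$'s status (pending vs.\ active vs.\ unlinked) may differ between the two. The proof must handle the case split on whether $n'$ is already active at CAS-time (use Lemma~\ref{lemma-reachable-not-between} at CAS-time) versus $n'$ becoming active only later (use Lemma~\ref{lemma-key-once} against $n$ at the later instant, since $n$ stays logically-in-the-list until the \emph{insert} returns, which is after $s$). Getting this case analysis airtight — in particular, justifying that $n$ remains logically in the list throughout the window from $s$ until the operation returns — is where the real work lies; everything else is a direct appeal to the invariants of Lemma~\ref{lemma-invariants} and to Lemmas~\ref{lemma-node-states}, \ref{lemma-reachable-not-between}, and \ref{lemma-key-once}.
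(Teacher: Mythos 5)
There is a genuine gap in your proof of part~(2), and it is in exactly the place you flagged as "where the real work lies."

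Your second case considers an $n'$ that becomes logically in the list only after the successful CAS of line~\ref{alg-insert-update-pred} but before $s$. You then propose to apply Lemma~\ref{lemma-key-once} at "the first moment after the CAS at which $n'$ is active," asserting that $n$ and $n'$ are both logically in the list at that moment. But that moment precedes $s$, and by Definition~\ref{definition-successful-insert-linearization} and Definition~\ref{definition-node-states}, $n$ is still \emph{pending} at every step before $s$. Being not-yet-unlinked and reachable (which is what Invariant~\ref{lemma-invariants-still-reachable-before-trim} gives you) is necessary but not sufficient for "logically in the list": Definition~\ref{definition-logically-in} also requires not-pending. So $n$ is not logically in the list at your witness moment, and Lemma~\ref{lemma-key-once} does not apply. (You could salvage this case via Invariant~\ref{lemma-invariants-smaller-not-reachable} — both $n$ and $n'$ would be reachable from \emph{head} with equal keys, a contradiction — but that is a different argument than the one you wrote.)

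The paper avoids the case split entirely with a shorter argument that you missed: since part~(1) shows $n$ is logically in the list immediately after $s$, Lemma~\ref{lemma-key-once} forces any key-$k$ node $n_0$ that was logically in the list immediately before $s$ to \emph{stop} being so precisely at $s$. By Lemma~\ref{lemma-node-states} a node can never re-enter the pending state, so $n_0$ must become unlinked at $s$. But by Definition~\ref{definition-unlinked}, a node is unlinked exactly when the node allocated in line~\ref{alg-trim-alloc} during a \emph{trim()} becomes active — which would make $n$ (the node becoming active at $s$) a \emph{trim()}-allocated node, contradicting the fact that $n$ was allocated in line~\ref{alg-insert-alloc} of \emph{insert()}. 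Your CAS-time analysis via Lemma~\ref{lemma-reachable-not-between} is a reasonable idea and the first case of your split does go through, but the overall structure is more work than needed and, as written, the second case is broken. Part~(1) of your proposal is fine and matches the paper's, with welcome extra care about when $n$ stops being an infant.
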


\begin{proof}
Recall that $n$ becomes active at $s$ (and in particular, it is not pending).
\begin{enumerate}
    \item By Lemma~\ref{lemma-invariants} (Invariant~\ref{lemma-invariants-active-reachable}), $n$ is reachable immediately before and after $s$. Therefore, by Lemma~\ref{lemma-invariants} (Invariant~\ref{lemma-invariants-unlinked}), it is not unlinked, and by Definition~\ref{definition-logically-in}, it is logically in the list immediately after $s$.

    \item By Lemma~\ref{lemma-key-once}, immediately after $s$, there is no node with a key $k$, which is logically in the list. Assume by contradiction that there exists such a node, denoted $n_0$, immediately before $s$.
    I.e., by Definition~\ref{definition-logically-in}, $s$ unlinks $n_0$.
    By Definition~\ref{definition-unlinked}, $n$ was allocated during a \emph{trim()} execution -- a contradiction.
    Therefore, there is no node with a key $k$, which is logically in the list immediately before $s$.
\end{enumerate}
\end{proof}

\begin{lemma} \label{lemma-unsuccessful-insert-linearization-correctness}
Let $k$ be the input key to an unsuccessful \emph{insert()} execution.
Then at the linearization point, defined by Definition~\ref{definition-unsuccessful-insert-linearization}, there exists a node with a key $k$, which is logically in the list.
\end{lemma}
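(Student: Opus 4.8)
The plan is to unwind the definition of the linearization point and then invoke Lemma~\ref{lemma-find-linearization} essentially verbatim. First I would trace the control flow of an unsuccessful \emph{insert()} execution: it can only return through line~\ref{alg-insert-return-curr-val}, and that return is taken immediately after the \emph{find()} call in line~\ref{alg-insert-find}, in the last iteration of the \texttt{while} loop. Hence the ``last \emph{find()} execution'' referred to in Definition~\ref{definition-unsuccessful-insert-linearization} is exactly the call that produced the pair $(pred, curr)$ for which the guard $curr \rightarrow key == key$ in line~\ref{alg-insert-return-curr-val} evaluated to true. Denote by $n_2$ the node referenced by $curr$ at that point; since keys are immutable, $n_2$'s key equals $k$ throughout the execution.

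Next I would apply Lemma~\ref{lemma-find-linearization} to this \emph{find(key)} call. Part~3 of that lemma guarantees a point during the method execution at which $n_2$ is reachable and logically in the list (and is $pred$'s successor). By Definition~\ref{definition-unsuccessful-insert-linearization}, this is precisely the linearization point of the unsuccessful \emph{insert()} execution. Therefore at the linearization point the node $n_2$, whose key is $k$, is logically in the list, which is exactly what the lemma asserts.

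For completeness I would note that this linearization point lies between the operation's invocation and response, but that is already covered by Lemma~\ref{lemma-insert-linearization-between}, so nothing further is needed. I expect no real obstacle in this lemma: the only mild subtleties are (i) confirming, from the structure of \emph{insert()}, that the relevant \emph{find()} call is the one whose output $curr$ carries the matching key, and (ii) using immutability of keys to transfer the property ``key is $k$'' from the moment the guard is checked back to the earlier point guaranteed by Lemma~\ref{lemma-find-linearization}. Both are immediate.
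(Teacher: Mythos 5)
Your proof is correct and follows essentially the same route as the paper: both identify the node referenced by $curr$ in the final \emph{find()} call, invoke part~3 of Lemma~\ref{lemma-find-linearization} to conclude it is logically in the list at the guaranteed point (which is the linearization point by Definition~\ref{definition-unsuccessful-insert-linearization}), and observe its key is $k$ from the guard at line~\ref{alg-insert-return-curr-val}. Your explicit appeal to key immutability is a small additional care the paper leaves implicit but is not a substantive difference.
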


\begin{proof}
By Lemma~\ref{lemma-find-linearization}, the node referenced by the $curr$ variable is indeed logically in the list at this point. In addition, since the operation returns in line~\ref{alg-insert-return-curr-val}, its key is indeed $k$.
\end{proof}

\subsubsection{The remove() Operation}

We separate between successful executions (returning $curr$'s value in line~\ref{alg-remove-return-curr-val}), and unsuccessful executions (returning NO\_VAL in line~\ref{alg-remove-return-no-val}).

\begin{definition} \label{definition-successful-remove-linearization} \textbf{(Successful removals - linearization points)}
Let $n$ be the node referenced by the $curr$ variable when a successful \emph{remove()} execution returns. Then the operation's linearization point is set to the step that unlinks $n$.
\end{definition}

\begin{definition} \label{definition-unsuccessful-remove-linearization} \textbf{(Unsuccessful removals - linearization points)}
An unsuccessful \emph{remove()} execution is linearized at the point guaranteed by Lemma~\ref{lemma-find-linearization}, with respect to the last \emph{find()} execution (called in line~\ref{alg-remove-find}).
\end{definition}

We are now going to prove, using Lemmas~\ref{lemma-remove-linearization-between},~\ref{lemma-successful-remove-linearization-correctness}, and~\ref{lemma-unsuccessful-remove-linearization-correctness} below, that Definitions~\ref{definition-successful-remove-linearization}-\ref{definition-unsuccessful-remove-linearization} above indeed define adequate linearization points. I.e., that each linearization point occurs between the invocation and termination of the operation, and that the operation indeed takes affect at its linearization point.

\begin{lemma} \label{lemma-remove-linearization-between}
The linearization points, guaranteed by Definitions~\ref{definition-successful-remove-linearization} and~\ref{definition-unsuccessful-remove-linearization}, occur between the invocation and termination of the respective execution.
\end{lemma}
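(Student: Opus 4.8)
The plan is to treat the two cases of Definition~\ref{definition-unsuccessful-remove-linearization} and Definition~\ref{definition-successful-remove-linearization} separately, in analogy with Lemma~\ref{lemma-insert-linearization-between}. The unsuccessful case is immediate: the linearization point is the point guaranteed by Lemma~\ref{lemma-find-linearization} for the \emph{find()} call in line~\ref{alg-remove-find}, and that point occurs within this \emph{find()} execution, hence between the invocation and the response of the enclosing \emph{remove()} execution. For the successful case, let $n$ be the node referenced by the $curr$ variable when the execution returns. Since the execution does not return in line~\ref{alg-remove-return-no-val}, $n$ was returned as the second output of the first \emph{find(key)} call in line~\ref{alg-remove-find} with $n$'s key equal to the input key $k$, and the \emph{mark()} call in line~\ref{alg-remove-mark} on $n$ returned true. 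By Lemma~\ref{lemma-node-states}, $n$ was active immediately before this successful marking, so afterwards $n$ is never pending; moreover $n$ cannot have been traversed by the loop in lines~\ref{alg-trim-while}-\ref{alg-trim-curr-gets-curr-next} before the successful mark, since that loop only visits already-marked nodes. The linearization point is the step $s$ that unlinks $n$ (Definition~\ref{definition-unlinked}), so I must show that $s$ exists and lies strictly after the invocation and no later than the response.

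For the lower bound I would argue that the successful \emph{mark()} of $n$ in line~\ref{alg-remove-mark} occurs after the \emph{remove()} invocation, and that, as just noted, $n$ cannot have been trim-traversed before it; since being unlinked (Definition~\ref{definition-unlinked}) requires such a traversal, $n$ is not unlinked at the time of the successful mark, so $s$ occurs strictly after it and hence after the invocation. For the upper bound I would use the second \emph{find(k)} call in line~\ref{alg-remove-trim}. Let $p$ be the point guaranteed by Lemma~\ref{lemma-find-linearization} for that call and let $n_1,n_2$ be its two outputs; $p$ occurs inside this call, hence before line~\ref{alg-remove-return-curr-val} is reached. At $p$, $n_1$ and $n_2$ are adjacent, reachable, and logically in the list, with $n_2$ unmarked (this is what the proof of Lemma~\ref{lemma-find-linearization} actually establishes for its guaranteed point), so $n\neq n_2$; since $n_1$'s key is strictly smaller than $k$ and $n_2$'s key is at least $k$, the key $k$ of $n$ lies strictly above $n_1$'s key and not above $n_2$'s key. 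By part~4 of Lemma~\ref{lemma-find-linearization}, $n$ cannot be logically in the list at $p$; since $n$ is not pending, Definition~\ref{definition-logically-in} forces $n$ to be unlinked at $p$. Hence $s$ exists and $s\le p$, i.e., $s$ occurs before the response.

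I expect the upper bound to be the delicate step. It requires combining part~4 of Lemma~\ref{lemma-find-linearization} with the key comparison to conclude that the marked node $n$ would otherwise be a forbidden ``in-between'' node for the window returned by the second \emph{find()}, and it relies on the auxiliary fact --- extractable from the proof of Lemma~\ref{lemma-find-linearization} rather than from its statement --- that the $curr$ output of a \emph{find()} call is unmarked at its guaranteed point, which is precisely what rules out $n=n_2$. Everything else is a routine bookkeeping argument about which code point precedes which.
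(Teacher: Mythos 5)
Your proof takes essentially the same route as the paper's for the delicate step: for the unsuccessful case both rely directly on Lemma~\ref{lemma-find-linearization}, and for the successful case both use the guaranteed point of the second \emph{find()} call (line~\ref{alg-remove-trim}), show $n\neq n_2$, and then invoke part~4 of Lemma~\ref{lemma-find-linearization} to conclude $n$ is not logically in the list at that point --- hence, being non-pending, must already be unlinked. Your argument for $n\neq n_2$ (that $n_2$ is unmarked at the guaranteed point, a fact visible in the proof of Lemma~\ref{lemma-find-linearization} though not in its statement) differs slightly from the paper's (which observes that \emph{find()} can never return an already-marked node, cf.\ lines~\ref{alg-find-while-is-marked} and~\ref{alg-find-key-bigger-after-trim}), but the two are equivalent.

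You also go beyond the paper by spelling out the lower bound (that the unlinking step occurs after the invocation), which the paper leaves implicit. That conclusion is correct, but the stated justification --- that the loop in lines~\ref{alg-trim-while}--\ref{alg-trim-curr-gets-curr-next} ``only visits already-marked nodes'' --- is not quite accurate: the loop initializes $curr$ to $victim$, which may be flagged rather than marked, and terminates on a node whose next pointer is unmarked (the node that is subsequently flagged in line~\ref{alg-trim-flag}), which is active at the moment it is reached. The correct observation is that $n$'s next pointer is neither marked nor flagged until the successful \emph{mark()} in line~\ref{alg-remove-mark} (else \emph{mark()} would fail), and for $n$ to be unlinked it would have to appear in a \emph{trim()} traversal as $victim$ or an intermediate node (requiring $n$'s next pointer to be marked or flagged), or as the terminal $curr$ that gets flagged before the CAS in line~\ref{alg-trim-cas} (which would again make $n$ flagged, contradicting the successful mark). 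This yields the same conclusion but for a slightly different reason than the one you gave.
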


\begin{proof}
The linearization point guaranteed by Definition~\ref{definition-unsuccessful-remove-linearization} occurs in the scope of the \emph{find()} execution (by Lemma~\ref{lemma-find-linearization}), and consequently, it occurs between the invocation and response of the unsuccessful \emph{remove()} execution. 

We are now going to show that at the linearization point, guaranteed by Definition~\ref{definition-successful-remove-linearization}, occurs no later then the termination of the \emph{find()} method, called in line~\ref{alg-remove-trim}.
Let $n_1$,$n_2$ be the two nodes, returned from this \emph{find()} execution (and not used by the calling \emph{remove()} operation).
By Lemma~\ref{lemma-find-linearization}, $n_1$'s key is smaller than $n$'s key, $n_2$'s key is at least $n$'s key, and there exists a point during the \emph{find()} execution, in which $n_2$ is $n_1$'s successor, and both nodes are reachable and logically in the list.

First, assume by contradiction that $n_2$ is $n$. As $n$ is already marked when the \emph{find()} method is invoked, it is impossible that the \emph{find()} method returns it (see the conditions in lines~\ref{alg-find-while-is-marked} and~\ref{alg-find-key-bigger-after-trim}) -- a contradiction. Therefore, by Lemma~\ref{lemma-invariants} (Invariant~\ref{lemma-invariants-smaller-not-reachable}), $n$ is already not reachable at the point, guaranteed by Lemma~\ref{lemma-find-linearization}.
By Lemma~\ref{lemma-find-linearization}, $n$ is necessarily not logically in the list at this point.
Therefore, both linearization points occur between the invocation and response of the \emph{remove()} execution.
\end{proof}

\begin{lemma} \label{lemma-successful-remove-linearization-correctness}
Let $n$ be the node, marked during a successful \emph{remove()} execution, let $k$ be its key, and let $s$ be the step guaranteed by Definition~\ref{definition-successful-remove-linearization}. Then: 
\begin{enumerate}
    \item $n$ is not marked during any other \emph{remove()} execution.
    \item After $s$, $n$ is not logically in the list.
    \item Immediately before $s$, $n$ is logically in the list.
\end{enumerate}
\end{lemma}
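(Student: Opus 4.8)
The plan is to prove the three claims in order, leaning heavily on the structural invariants of Lemma~\ref{lemma-invariants} and on the uniqueness result Lemma~\ref{lemma-key-once}. For part~1, I would argue that only one thread can successfully mark $n$: the \emph{mark()} auxiliary method CASes $n \rightarrow next$ from an unmarked/unflagged value to a marked one, and by Lemma~\ref{lemma-deleted-states} a marked pointer is immutable, so at most one such CAS can succeed. Hence $n$ can be the node marked in line~\ref{alg-remove-mark} of at most one \emph{remove()} execution; any other \emph{remove()} that reached line~\ref{alg-remove-mark} with $n$ as \emph{curr} would get \texttt{false} from \emph{mark()} and \textbf{continue} rather than proceed to a successful return. (One should also note $n$ is not a sentinel, since \emph{find()} only returns sentinels with keys $\pm\infty$ and the removing thread checked \texttt{curr}$\rightarrow$\texttt{key} $=$ \texttt{key}.)

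For part~3, I need to show $n$ is logically in the list immediately before $s$, where $s$ is the step that unlinks $n$ — i.e., by Definition~\ref{definition-unlinked}, the successful CAS in line~\ref{alg-trim-cas} that makes the replacement node $n_2$ (allocated in line~\ref{alg-trim-alloc} of the \emph{trim()} call that traversed $n$) active. By Lemma~\ref{lemma-invariants}~(Invariant~\ref{lemma-invariants-still-reachable-before-trim}), $n$ is reachable right up until that CAS succeeds, hence by Invariant~\ref{lemma-invariants-unlinked} it is not unlinked before $s$; and since $n$ was successfully marked, its \emph{next} pointer was not marked or flagged at that moment, so by the argument in the proof of Lemma~\ref{lemma-node-states} its timestamp was already updated (in line~\ref{alg-find-update-curr-ts} during the \emph{find()} that returned it), so $n$ is not pending. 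By Definition~\ref{definition-logically-in}, $n$ is logically in the list immediately before $s$. I also need to confirm that such an unlinking step $s$ actually exists and is well-defined: this is exactly what Lemma~\ref{lemma-remove-linearization-between} establishes (the \emph{find()} call in line~\ref{alg-remove-trim} cannot return without $n$ having been unlinked), so I can cite it.

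For part~2, $s$ is the step that turns the replacement node $n_2$ active, and by Definition~\ref{definition-unlinked} this is precisely the moment $n$ becomes unlinked; by Definition~\ref{definition-logically-in} an unlinked node is not logically in the list, so after $s$, $n$ is not logically in the list. The main obstacle here is really part~3 together with the existence claim feeding into it: one must be careful that the \emph{trim()} execution whose CAS serves as $s$ is well-matched to this \emph{remove()} — in principle the physical unlinking of $n$ could be performed by a \emph{trim()} call spawned by a concurrent \emph{find()} rather than the one in line~\ref{alg-remove-trim} of this \emph{remove()}. But Definition~\ref{definition-unlinked} and Invariant~\ref{lemma-invariants-still-reachable-before-trim} are stated so that ``$n$ becomes unlinked'' is an absolute, thread-independent event (the first successful line~\ref{alg-trim-cas} CAS in a \emph{trim()} that traversed $n$), so $s$ is unambiguous regardless of which thread performs it; and by Lemma~\ref{lemma-remove-linearization-between} this event has occurred by the time the \emph{find()} in line~\ref{alg-remove-trim} returns, placing $s$ between the invocation and response of the \emph{remove()}. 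The remaining steps are routine bookkeeping over the invariants already proved.
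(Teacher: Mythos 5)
Your proposal is correct and follows essentially the same route as the paper's proof for parts~1 and~2: part~1 from the immutability of marked pointers (Lemma~\ref{lemma-deleted-states}), and part~2 directly from Definitions~\ref{definition-unlinked} and~\ref{definition-logically-in}. For part~3 you are actually more careful than the paper. The paper's proof establishes that $n$ is not unlinked and then argues that $n$ is not an \emph{infant} (via Definition~\ref{definition-unlinked} and Invariant~\ref{lemma-invariants-infant}); but Definition~\ref{definition-logically-in} requires ``not pending,'' and ``not an infant'' does not by itself imply ``not pending'' (a node stops being an infant at the CAS in line~\ref{alg-insert-update-pred} or~\ref{alg-trim-cas}, yet may remain pending until the subsequent \emph{getTS()} fires). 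You close exactly this gap: $n$ was successfully marked, so by Lemma~\ref{lemma-node-states} it was active before being marked, hence its timestamp was already set and it is not pending. Your additional observation that $s$ is well-defined as a thread-independent event (the trim that unlinks $n$ need not be the one spawned by line~\ref{alg-remove-trim}) and that Lemma~\ref{lemma-remove-linearization-between} guarantees its placement within the operation, matches the paper's intent and is a worthwhile clarification.
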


\begin{proof}
Recall that a node is necessarily marked during a successful \emph{remove()} operation (in line~\ref{alg-remove-mark}).
\begin{enumerate}
    \item By Lemma~\ref{lemma-deleted-states}, a node can only be marked once. Therefore, $n$ is not marked during any other \emph{remove()} execution.

    \item As $s$ unlinks $n$, by Definition~\ref{definition-logically-in}, immediately after $s$, $n$ is not logically in the list. Moreover, at any point after $s$, $n$ remains unlinked by Definition~\ref{definition-unlinked}, and thus, not logically in the list.
    
    \item Since $n$ is not unlinked before $s$, by Definition~\ref{definition-logically-in}, it remains to show that $n$ is not an infant immediately before $s$. By Definition~\ref{definition-unlinked}, $n$ must have a predecessor prior to $s$. Therefore, by Lemma~\ref{lemma-invariants} (Invariant~\ref{lemma-invariants-infant}), $n$ is not an infant before $s$.  
\end{enumerate}
\end{proof}

\begin{lemma} \label{lemma-unsuccessful-remove-linearization-correctness}
Let $k$ be the input key to an unsuccessful \emph{remove()} execution.
Then at the linearization point, defined by Definition~\ref{definition-unsuccessful-remove-linearization}, there does not exist a node with a key $k$, which is logically in the list.
\end{lemma}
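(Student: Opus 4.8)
The plan is to derive this as an immediate corollary of Lemma~\ref{lemma-find-linearization}, applied to the last \emph{find()} call (line~\ref{alg-remove-find}) of the unsuccessful \emph{remove()} execution. Recall that such an execution reaches line~\ref{alg-remove-return-no-val}, which means that for the pair of nodes $n_1,n_2$ returned by that \emph{find()} call, the guard checked immediately after it fails, i.e.\ $n_2$'s key is not equal to $k$. By Definition~\ref{definition-unsuccessful-remove-linearization}, the linearization point of the operation is precisely the point guaranteed by Lemma~\ref{lemma-find-linearization} for this \emph{find()} call, so it suffices to reason about that single point in the execution.

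Next I would invoke the four conclusions of Lemma~\ref{lemma-find-linearization} at that guaranteed point: $n_1$'s key is strictly smaller than $k$; $n_2$'s key is at least $k$; $n_2$ is $n_1$'s successor with both nodes reachable and logically in the list; and, crucially, there is no node $n_3 \neq n_2$ whose key is strictly greater than $n_1$'s key and at most $n_2$'s key which is logically in the list at that point. Combining the second conclusion with the failed guard ($n_2$'s key $\neq k$) shows that $n_2$'s key is in fact strictly greater than $k$.

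Finally I would argue by contradiction: suppose some node $n'$ with key $k$ is logically in the list at the linearization point. Then $n' \neq n_2$, since their keys differ, and $n'$ has a key strictly greater than $n_1$'s key (as $n_1$'s key $<k$) and at most $n_2$'s key (as $k<$ $n_2$'s key). Thus $n'$ is exactly a node of the kind forbidden by the fourth conclusion of Lemma~\ref{lemma-find-linearization}, a contradiction; hence no node with key $k$ is logically in the list at that point. I do not expect a genuine obstacle here — the only points that need care are that the ``guaranteed point'' invoked in Definition~\ref{definition-unsuccessful-remove-linearization} is the same point relative to which conclusion~4 of Lemma~\ref{lemma-find-linearization} is stated (so that conclusion applies verbatim), and that a node of key exactly $k$ really lies in the half-open interval between $n_1$'s and $n_2$'s keys and is distinct from $n_2$, both of which follow directly from conclusions~1--2 together with the failed guard at line~\ref{alg-remove-return-no-val}.
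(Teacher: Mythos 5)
Your proposal is correct and follows essentially the same route as the paper: apply all four conclusions of Lemma~\ref{lemma-find-linearization} to the last \emph{find()} call, combine the second conclusion with the failed guard at line~\ref{alg-remove-return-no-val} to get that $curr$'s key is strictly greater than $k$, and then use the fourth conclusion to rule out any logically-in-the-list node with key $k$. The paper's proof is terser but relies on exactly the same facts.
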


\begin{proof}
By Lemma~\ref{lemma-find-linearization}, the node referenced by the $curr$ variable is logically in the list at this point. Since the condition in line~\ref{alg-remove-return-no-val} holds, by Lemma~\ref{lemma-find-linearization}, its key is bigger than $k$. By Lemma~\ref{lemma-find-linearization}, this node, along with the node referenced by the $pred$ variable, are logically in the list, and $k$ is between their respective keys. By Lemma~\ref{lemma-find-linearization}, there does not exist a node with a key $k$, which is logically in the list at this point.
\end{proof} 

\subsubsection{The contains() Operation}

As opposed to the \emph{insert()} and \emph{remove()} operations, there is no need to separate between \emph{contains()} invocations that return NO\_VAL in line~\ref{alg-contains-return-no-val} and an actual value in line~\ref{alg-contains-return-curr-val}. Both types of executions are linearized at the same point, as defined in Definition~\ref{definition-contains-linearization} below.

\begin{definition} \label{definition-contains-linearization} \textbf{(contains - linearization points)}
A \emph{contains()} execution is linearized at the point guaranteed by Lemma~\ref{lemma-find-linearization}, with respect to the \emph{find()} execution (called in line~\ref{alg-contains-find}).
\end{definition}

We are now going to prove, using Lemmas~\ref{lemma-contains-linearization-between} and~\ref{lemma-contains-linearization-correctness} below, that Definition~\ref{definition-contains-linearization} above indeed defines adequate linearization points. I.e., that each linearization point occurs between the invocation and termination of the operation, and that the operation indeed takes affect at its linearization point.

\begin{lemma} \label{lemma-contains-linearization-between}
The linearization points, guaranteed by Definition~\ref{definition-contains-linearization}, occur between the invocation and termination of the respective execution.
\end{lemma}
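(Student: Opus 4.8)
The plan is to mimic exactly the argument already used for the unsuccessful cases of \emph{insert()} and \emph{remove()} (Lemmas~\ref{lemma-insert-linearization-between} and~\ref{lemma-remove-linearization-between}), since the \emph{contains()} operation has no separate successful/unsuccessful linearization points to handle. By Definition~\ref{definition-contains-linearization}, the linearization point of a \emph{contains()} execution is the point guaranteed by Lemma~\ref{lemma-find-linearization} with respect to the \emph{find()} call in line~\ref{alg-contains-find}. Lemma~\ref{lemma-find-linearization} guarantees that this point occurs during the execution of that \emph{find()} method.

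First I would observe that the \emph{find()} call in line~\ref{alg-contains-find} is made within the dynamic scope of the \emph{contains()} operation, i.e., strictly after its invocation and strictly before its response (which is one of lines~\ref{alg-contains-return-no-val} or~\ref{alg-contains-return-curr-val}). Then, combining this with the fact that the guaranteed point of Lemma~\ref{lemma-find-linearization} lies between the invocation and response of that \emph{find()} execution, the linearization point lies between the invocation and response of the enclosing \emph{contains()} execution. This completes the argument.

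There is essentially no obstacle here: the lemma is a one-line consequence of Lemma~\ref{lemma-find-linearization} together with the observation that \emph{find()} is invoked inside \emph{contains()}. The only thing to be careful about is simply citing Lemma~\ref{lemma-find-linearization} correctly and noting there is exactly one \emph{find()} call (line~\ref{alg-contains-find}), so no case analysis on which call produced the returned pair is needed, in contrast to the \emph{insert()} and \emph{remove()} proofs.
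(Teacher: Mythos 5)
Your proof is correct and matches the paper's argument exactly: both invoke Lemma~\ref{lemma-find-linearization} to place the guaranteed point inside the \emph{find()} execution of line~\ref{alg-contains-find}, and then observe that this execution lies entirely within the enclosing \emph{contains()} call. The extra remark that no successful/unsuccessful case split is needed is a fine observation but not part of the paper's proof.
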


\begin{proof}
By Lemma~\ref{lemma-find-linearization}, the linearization point, guaranteed by Definition~\ref{definition-contains-linearization}, occurs in the scope of the \emph{find()} execution, called in line~\ref{alg-contains-find}. Consequently, it occurs between the invocation and response of the \emph{contains()} execution.
\end{proof}

\begin{lemma} \label{lemma-contains-linearization-correctness}
Let $k$ be the input key to a \emph{contains()} execution.
Then at the linearization point, defined by Definition~\ref{definition-contains-linearization}, it holds that:
\begin{enumerate}
    \item If the operation eventually returns in line~\ref{alg-contains-return-no-val}, then there does not exist a node with a key $k$, which is logically in the list.
    \item If the operation eventually returns in line~\ref{alg-contains-return-curr-val}, then there exists a node with a key $k$, which is logically in the list, and its value is returned in line~\ref{alg-contains-return-curr-val}. 
\end{enumerate}
\end{lemma}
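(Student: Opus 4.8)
The plan is to unpack the \emph{contains()} code, which merely calls \emph{find(key)} in line~\ref{alg-contains-find}, obtains nodes \emph{pred} and \emph{curr}, and then branches on whether \emph{curr}'s key equals $k$. Since the linearization point fixed by Definition~\ref{definition-contains-linearization} is exactly the point guaranteed by Lemma~\ref{lemma-find-linearization} for that \emph{find()} call, essentially all the work is already done; I would just translate the four guarantees of that lemma into the two cases of the present statement. (That the point lies between invocation and response is separately handled by Lemma~\ref{lemma-contains-linearization-between}.)

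First I would name $n_1,n_2$ the two nodes returned by \emph{find(key)}, so \emph{pred} points to $n_1$ and \emph{curr} to $n_2$, and invoke Lemma~\ref{lemma-find-linearization}: at the guaranteed point $n_1$ and $n_2$ are both reachable and logically in the list, $n_2$ is $n_1$'s successor, $n_1$'s key $< k \le n_2$'s key, and there is no node $n_3 \neq n_2$ with key strictly larger than $n_1$'s key and at most $n_2$'s key that is logically in the list at that point.

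For part~1, the operation reaches line~\ref{alg-contains-return-no-val} only when \emph{curr}'s key $\neq k$; combined with $n_2$'s key $\ge k$ this gives $n_2$'s key $> k$. Then any node with key $k$ would have key strictly between $n_1$'s key and $n_2$'s key and be distinct from $n_2$, so by the fourth guarantee of Lemma~\ref{lemma-find-linearization} it is not logically in the list at the linearization point, which is exactly the claim. For part~2, the operation reaches line~\ref{alg-contains-return-curr-val} only when \emph{curr}'s key $= k$, so $n_2$ itself is a node with key $k$ that, by Lemma~\ref{lemma-find-linearization}, is logically in the list at the linearization point; and the returned value is \emph{curr}$\to$\emph{val}, which equals $n_2$'s value because the \emph{value} field is immutable. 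That settles both parts.

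I do not expect a real obstacle here: the only mild points to get right are witnessing ``there exists a node with key $k$ logically in the list'' by $n_2$ directly (so no uniqueness argument is needed, though it would in any case follow from Lemma~\ref{lemma-key-once}), and using immutability of \emph{value} to identify \emph{curr}$\to$\emph{val} with $n_2$'s value. All the heavy lifting — reachability, simultaneity of the ``logically in the list'' status of \emph{pred} and \emph{curr}, and the absence of an intervening key — is packaged inside Lemma~\ref{lemma-find-linearization}, so this proof is essentially a short case split on the return line.
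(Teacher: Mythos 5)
Your proof matches the paper's almost line for line: both name $n_1,n_2$ as the nodes returned by \emph{find()}, invoke Lemma~\ref{lemma-find-linearization} for the guaranteed point, and split on whether line~\ref{alg-contains-return-no-val} or line~\ref{alg-contains-return-curr-val} is taken, using the fourth guarantee (no intervening logically-in-the-list key) for part 1 and $n_2$ itself as the witness for part 2. The small additions you make (immutability of \emph{value}, the remark that Lemma~\ref{lemma-key-once} would also give uniqueness) are harmless and do not change the argument.
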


\begin{proof}
Let $n_1$ and $n_2$ be the nodes, referenced by the $pred$ and $curr$ variables (respectively) when the operation returns.
By Lemma~\ref{lemma-find-linearization}, both $n_1$ and $n_2$ are logically in the list at this point. In addition, $n_1$'s key is smaller than $k$ and $n_2$'s key is at least $k$.
\begin{enumerate}
    \item Since the condition in line~\ref{alg-contains-return-no-val} holds, by Lemma~\ref{lemma-find-linearization}, $n_2$'s key is bigger than $k$. Therefore, by Lemma~\ref{lemma-find-linearization}, there does not exist a node with a key $k$, which is logically in the list at this point.
    \item By Lemma~\ref{lemma-find-linearization}, $n_2$ is logically in the list at this point. Since the condition in line~\ref{alg-contains-return-no-val} does not hold, $n_2$'s key is $k$. In addition, $n_2$'s value is value returned in line~\ref{alg-contains-return-curr-val}. 
\end{enumerate}
\end{proof} 

\subsubsection{The rangeQuery() Operation}

For proving that the \emph{rangeQuery()} operation is linearizable, we first need to prove some basic claims. As these claims are irrelevant to the other operations' linearizability, we are going to prove them below. 
For convenience, we use the term \textit{epoch} when relating to the global timestamps clock's value. In addition, we are going to define the terms of birth epoch and retire epoch in this context\footnote{Note that to this end, we ignore memory reclamation. Therefore, there is no problem with overloading the birth epoch term.}.

\begin{definition} \textbf{(birth epoch and retire epoch)} \label{definition-birth-retire-epoch}
Let $n$ be a node. We say that $n$'s \textit{birth epoch} is the first epoch in which $n$ was logically in the list, and that $n$'s \textit{retire epoch} is the first epoch in which $n$ stopped being logically in the list.
\end{definition}

\begin{lemma} \label{lemma-birth-epoch}
Suppose that $n$'s timestamp is $t$, and that $t \neq \bot$. Then $t$ is $n$'s birth epoch. 
\end{lemma}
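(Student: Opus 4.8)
The plan is to prove the lemma directly from the structure of the timestamp updates rather than by induction. First I would identify the event that fixes $n$'s timestamp: by Lemma~\ref{lemma-immutable-ts}, once $n$'s timestamp leaves $\bot$ it never changes, so among the timestamp-update sites (lines~\ref{alg-insert-update-ts},~\ref{alg-find-update-pred-ts},~\ref{alg-find-update-curr-ts},~\ref{alg-trim-update-curr-ts},~\ref{alg-trim-update-succ-ts},~\ref{alg-trim-update-new-curr-ts} of Algorithm~\ref{pseudo-list} and lines~\ref{alg-range-query-smaller-curr-succ-ts-update},~\ref{alg-range-query-bigger-curr-succ-ts-update} of Algorithm~\ref{pseudo-range}) there is a unique \emph{first} successful CAS whose target is $n$, and it writes exactly $t$. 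Each such update is a CAS of $n$'s timestamp from $\bot$ to the value returned by a \emph{getTS()} call in the same statement; hence the \emph{getTS()} call $g$ immediately preceding this successful CAS returns $t$, so $g$ is executed at a moment when the global timestamps clock equals $t$. I would also observe that no \emph{getTS()} call referencing $n$ has a successful following CAS before this one --- otherwise that earlier CAS would have found $n$'s timestamp still equal to $\bot$ and would have succeeded, contradicting minimality --- so by Definition~\ref{definition-node-states} the call $g$ together with its CAS is the unique witness that $n$ leaves the pending state, and $n$ is pending at every point strictly before $g$.

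Next I would pin down $n$'s status around $g$. Before $g$, $n$ is pending, hence not logically in the list (Definition~\ref{definition-logically-in}); since the global clock is monotone non-decreasing (it is changed only by \emph{fetchAddTS()}) and equals $t$ at $g$, every configuration in which the clock reads strictly less than $t$ occurs before $g$, so $n$ is not logically in the list during any epoch smaller than $t$. At the instant $g$ executes, $n$ ceases to be pending; moreover $n$'s timestamp is still $\bot$ (the CAS has not yet run), and the reasoning in the proof of Lemma~\ref{lemma-node-states} shows that a node must have its timestamp set before it can be marked or flagged, so at this instant $n$ is neither marked nor flagged. Consequently $n$ has not been traversed in the loop of lines~\ref{alg-trim-while}--\ref{alg-trim-curr-gets-curr-next} of any \emph{trim()} execution, so by Definition~\ref{definition-unlinked} $n$ is not unlinked. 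Thus at $g$, with the clock reading $t$, $n$ is not pending and not unlinked, i.e.\ $n$ is logically in the list.

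Combining the two observations yields the claim: $n$ is logically in the list at some moment during epoch $t$ and at no moment during an earlier epoch, and (by Lemma~\ref{lemma-node-states} $n$ stays non-pending forever, and there is a nonempty window between $g$ and any subsequent unlinking) its set of ``logically in the list'' configurations is nonempty, so the birth epoch of Definition~\ref{definition-birth-retire-epoch} is well defined and equals $t$.

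I expect the main obstacle to be the bookkeeping around the global clock at exactly the instant of $g$: arguing cleanly that the successful CAS writes the value that $g$ read, that $g$ is the first \emph{getTS()}/successful-CAS pair for $n$, and that monotonicity of the clock lets one identify ``before $g$'' with ``epochs at most $t$''; together with the auxiliary fact, extractable from the proof of Lemma~\ref{lemma-node-states}, that a node cannot be marked or flagged (hence cannot be unlinked) at the moment it stops being pending.
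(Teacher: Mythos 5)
Your proof is correct and follows the same core idea as the paper's two-sentence argument: identify the \emph{getTS()} call $g$ that reads $t$ and whose following CAS writes $n$'s timestamp, observe that $n$ is pending (hence not logically in the list) strictly before $g$, and that $n$ becomes logically in the list at $g$, so that $t$ is the first epoch witnessing $n$ logically in the list. The difference is that you are considerably more careful: the paper's proof cites only Definitions~\ref{definition-node-states},~\ref{definition-logically-in}, and~\ref{definition-birth-retire-epoch} and implicitly assumes that $n$ is not unlinked at $g$, whereas you explicitly verify this half of Definition~\ref{definition-logically-in}. That added care is genuinely needed to make the argument go through.

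One small imprecision: from ``$n$ is neither marked nor flagged at $g$'' you conclude ``$n$ has not been traversed in the loop of lines~\ref{alg-trim-while}--\ref{alg-trim-curr-gets-curr-next}.'' That inference is not literally valid --- the final \emph{curr} of that loop is reached precisely because its \emph{next} pointer is unmarked, so $n$ could have been visited as the terminating \emph{curr} of some \emph{trim()} while still pending (indeed this happens whenever $g$ is the \emph{getTS()} of line~\ref{alg-trim-update-curr-ts} referencing $n$). The conclusion you want, though, is still correct: by Definition~\ref{definition-unlinked} unlinking requires the corresponding \emph{newCurr} to become active, which requires the CAS of line~\ref{alg-trim-cas} to succeed, which in turn requires $n$ to be flagged by line~\ref{alg-trim-flag}; since $n$ is neither marked nor flagged at $g$ (Lemma~\ref{lemma-node-states}), that CAS has not yet happened, so $n$ is not unlinked at $g$. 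Tightening that one sentence closes the only gap; the rest of the argument is sound.
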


\begin{proof}
By Definition~\ref{definition-node-states} and~\ref{definition-logically-in}, $n$ became logically in the list when $t$ was read from the global epoch clock. I.e., by Definition~\ref{definition-birth-retire-epoch}, $t$ is indeed $n$'s birth epoch.
\end{proof}

\begin{lemma} \label{lemma-retire-epoch}
Suppose that $n$ is an unlinked node, and let $n_2$ be the node guaranteed by Definition~\ref{definition-unlinked}. Then $n_2$'s timestamp is $n$'s retire epoch. 
\end{lemma}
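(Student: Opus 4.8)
The plan is to reduce the statement to a claim about the precise step at which $n$ ceases to be logically in the list, and then to identify that step's epoch with $n_2$'s (immutable) timestamp via Lemma~\ref{lemma-birth-epoch}. First I would observe that $n$ is \emph{marked}: by Definition~\ref{definition-unlinked} it is traversed during the loop in lines~\ref{alg-trim-while}--\ref{alg-trim-curr-gets-curr-next} of some \emph{trim()} execution, and the loop condition forces $n$'s \emph{next} pointer to be marked at that time. By Lemma~\ref{lemma-node-states}, a node reaches the marked state only via active, and only after having been pending; hence $n$ goes pending $\to$ active $\to$ marked, it is active strictly before it is ever unlinked, and (again by Lemma~\ref{lemma-node-states}) it never returns to pending. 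Since ``unlinked'' is a permanent status, the predicate ``$n$ is logically in the list'' (Definition~\ref{definition-logically-in}, which for the non-pending node $n$ reduces to ``not unlinked'') holds on a single contiguous interval: it becomes true when $n$ first becomes active and becomes false exactly at the step $s$ at which $n$ becomes unlinked. By Definition~\ref{definition-unlinked}, $s$ is precisely the step that turns $n_2 = \mathit{newCurr}$ into an active node. Therefore, by Definition~\ref{definition-birth-retire-epoch}, $n$'s retire epoch equals the value of the global timestamps clock at step $s$.

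Next I would analyze $n_2$ at the same step $s$. Immediately after $s$, $n_2$ is active, hence not pending, and it is not unlinked (a freshly allocated node can be unlinked only after a subsequent \emph{trim()} traverses it and allocates yet another node), so by Definition~\ref{definition-logically-in} it is logically in the list; immediately before $s$ it is still pending, hence not. Thus $s$ lies in $n_2$'s birth epoch, i.e.\ the clock value at $s$ is exactly $n_2$'s birth epoch. Since $n_2$ becomes active, its timestamp is not $\bot$, so Lemma~\ref{lemma-birth-epoch} gives that $n_2$'s timestamp equals $n_2$'s birth epoch, which equals the clock value at $s$, which by the previous paragraph equals $n$'s retire epoch. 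This proves the lemma.

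The main obstacle is the bookkeeping around the ``logically in the list'' predicate rather than any hard algorithmic fact: one must justify that $n$ does not oscillate in and out of this status and, in particular, that it is active \emph{strictly before} it is unlinked, so that ``the first epoch in which $n$ stopped being logically in the list'' is genuinely the epoch at $s$ and not at some earlier transition — this is exactly where Lemma~\ref{lemma-node-states} is used. A secondary subtlety is that the step ``$n_2$ becomes active'' must be a well-defined single step carrying a well-defined clock value; this is the convention already fixed in Definition~\ref{definition-node-states} (a node stops being pending at the \emph{getTS()} invocation preceding the successful timestamp-CAS, and there is exactly one such CAS since it expects $\bot$), and it holds regardless of which thread performs that CAS or via which of the timestamp-update lines it is performed. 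Everything else follows directly from the cited definitions and from Lemma~\ref{lemma-immutable-ts} and Lemma~\ref{lemma-birth-epoch}.
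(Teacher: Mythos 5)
Your proof is correct and takes essentially the same approach as the paper's: both identify the unlinking step (Definition~\ref{definition-unlinked}) with the \emph{getTS()} read that makes $n_2$ active (Definition~\ref{definition-node-states}), and then invoke Definition~\ref{definition-birth-retire-epoch} to conclude that $n_2$'s timestamp is $n$'s retire epoch. Your version is more detailed — in particular it supplies, via Lemma~\ref{lemma-node-states}, the justification that $n$ was active strictly before being unlinked so that its retire epoch is well defined, and it routes through Lemma~\ref{lemma-birth-epoch} rather than observing directly that the \emph{getTS()} result becomes $n_2$'s timestamp; one small slip is that not every node traversed by $curr$ in the \emph{trim()} loop is marked (the last one becomes flagged rather than marked), but Lemma~\ref{lemma-node-states} handles that case identically, so the argument is unaffected.
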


\begin{proof}
By Definition~\ref{definition-node-states} and~\ref{definition-unlinked}, $n$ was unlinked from the list when $n_2$'s timestamp was read from the global epoch clock. I.e., by Definition~\ref{definition-birth-retire-epoch}, $n_2$'s timestamp is indeed $n$'s retire epoch.
\end{proof}

\begin{lemma} \label{lemma-continuous-succ}
Let $n_1$ and $n_2$ be two nodes, and let $t_1$ and $t_2$ be their birth epochs, respectively. Let $t_{max}$ be the maximum between $t_1$ and $t_2$. If $n_2$ is $n_1$'s successor during an epoch $t \geq t_{max}$, then $n_2$ has been $n_1$'s successor since epoch $t_{max}$.
\end{lemma}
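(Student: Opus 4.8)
The plan is to follow the reference stored in $n_1$'s \emph{next} field (ignoring the mark/flag bits) as a function of time, and to show that once this reference equals $n_2$ it can never equal $n_2$ again after being changed. First I would note that, after $n_1$ has been allocated, the only steps that alter the node referenced by $n_1$'s \emph{next} pointer are the successful CAS executions in line~\ref{alg-insert-update-pred} and line~\ref{alg-trim-cas} with $n_1$ in the role of \emph{pred}: by Lemma~\ref{lemma-deleted-states} and the definitions in Algorithm~\ref{pseudo-methods} the \emph{mark()} and \emph{flag()} operations only set the low-order bits and leave \textbf{getRef} unchanged, and the initializations in line~\ref{alg-insert-update-next}/line~\ref{alg-trim-init-next} touch $n_1$ only while $n_1$ is still an infant, hence strictly before $n_1$ is logically in the list and a fortiori before epoch $t_1$.

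Next I would establish a ``set at most once'' property for $n_1$'s successor. Whenever $n_2$ is $n_1$'s successor, Lemma~\ref{lemma-invariants}(Invariant~\ref{lemma-invariants-infant}) gives that $n_2$ is not an infant, and by Definition~\ref{definition-infant} ``not an infant'' is permanent. On the other hand, every step that changes $n_1$'s successor (line~\ref{alg-insert-update-pred} or line~\ref{alg-trim-cas}) installs the freshly allocated node \emph{newNode}/\emph{newCurr}, which by Lemma~\ref{lemma-invariants}(Invariant~\ref{lemma-invariants-pending-infant}) is an infant right up to that CAS. Hence, if $n_1$'s successor ever equals $n_2$ at a step, it cannot be re-installed as $n_2$ by any later step. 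Combining this with the previous paragraph, there is a single moment $\tau_0$ at which $n_1$'s successor becomes $n_2$ -- either when $n_1$ itself is allocated, or at a line~\ref{alg-insert-update-pred}/line~\ref{alg-trim-cas} step with $n_1=\mathit{pred}$ -- and, since $n_1$'s successor is observed to be $n_2$ during epoch $t$, it remains $n_2$ without interruption from $\tau_0$ to that observation point.

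It then remains to bound $\tau_0$: I claim the global timestamps clock reads at most $t_{max}$ at time $\tau_0$. If $\tau_0$ is $n_1$'s allocation, the clock reads at most $t_1\le t_{max}$ there, since $n_1$ becomes logically in the list no earlier than epoch $t_1$. Otherwise $\tau_0$ is a line~\ref{alg-insert-update-pred}/line~\ref{alg-trim-cas} step with $n_1=\mathit{pred}$, so $n_2$ is the installed \emph{newNode}/\emph{newCurr}; by Lemma~\ref{lemma-invariants}(Invariant~\ref{lemma-invariants-pending-infant}) this node is pending right up to $\tau_0$, so its timestamp -- which equals its birth epoch $t_2$ by Lemma~\ref{lemma-birth-epoch} -- is a value read from the clock at or after $\tau_0$, whence the clock reads at most $t_2\le t_{max}$ at time $\tau_0$. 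In either case the clock first reaches $t_{max}$ only at or after $\tau_0$, so the whole stretch of real time from the start of epoch $t_{max}$ to the observation in epoch $t$ lies inside the interval from $\tau_0$ to that observation, during which $n_1$'s successor is $n_2$; this is exactly the assertion of the lemma.

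The step I expect to be the main obstacle is the ``set at most once'' property: the argument that $n_1$'s successor, once equal to $n_2$, is never made equal to $n_2$ again rests on the interplay of the recorded-once-flavoured Invariant~\ref{lemma-invariants-pending-infant} with Invariant~\ref{lemma-invariants-infant} and the permanence of ``infant'' status, and one must check it applies uniformly regardless of whether $n_2$ first became $n_1$'s successor through an \emph{insert()}, through a \emph{trim()}, or already at $n_1$'s own allocation. The remaining epoch bookkeeping is routine given Lemma~\ref{lemma-birth-epoch} and the monotonicity of the timestamps clock.
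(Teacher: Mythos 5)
Your overall approach is sound and closely parallels the paper's own: both proofs hinge on the observation that once $n_2$ ceases to be $n_1$'s successor it can never become so again, since every later change to $n_1$'s next pointer (via line~\ref{alg-insert-update-pred} or~\ref{alg-trim-cas}) installs a freshly allocated node that, by Invariant~\ref{lemma-invariants-pending-infant} of Lemma~\ref{lemma-invariants}, is an infant right up to that CAS, whereas $n_2$ is permanently a non-infant once it has been anyone's successor. The paper reaches the conclusion by first invoking Invariant~\ref{lemma-invariants-pending-still-successor} for persistence up to the moment both nodes cease to be pending and then appealing to ``they could not have re-become successive''; you instead derive continuity directly from the observation at epoch $t$ together with the recorded-once property, which is a bit more streamlined, and your explicit bound on the clock value at $\tau_0$ is a cleaner rendering of the paper's implicit bound.

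The final step, however, contains a flawed inference. From ``the clock reads at most $t_{max}$ at $\tau_0$'' you conclude ``the clock first reaches $t_{max}$ only at or after $\tau_0$.'' That does not follow: if the clock already reads $t_{max}$ at $\tau_0$, it may have advanced to $t_{max}$ strictly earlier. And the stronger claim this would support --- that the successor relation holds from the \emph{start} of epoch $t_{max}$ --- is in fact false in general: $n_1$'s next pointer may first be initialized to $n_2$ (line~\ref{alg-insert-update-next}/\ref{alg-trim-init-next}) at a time when the clock already reads $t_1 = t_{max}$, so $n_2$ was not yet $n_1$'s successor at the start of that epoch. What your earlier, correct observations actually establish is the weaker statement: the relation holds without interruption from $\tau_0$, a time at which the clock shows at most $t_{max}$, through the observation in epoch $t$. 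That is all the paper's own proof establishes and all that Lemma~\ref{lemma-succ-curr-succ} uses (the relation holding at the end of epoch $ts \geq t_{max}$ when the current epoch exceeds $ts$). Drop the claim about the clock first reaching $t_{max}$ at or after $\tau_0$, and state the weaker conclusion.
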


\begin{proof}
By Definition~\ref{definition-infant} and Lemma~\ref{lemma-invariants} (Invariants~\ref{lemma-invariants-infant} and~\ref{lemma-invariants-pending-infant}), indeed $n_2$ started to be $n_1$'s successor when one of them was pending.

In addition, the other one was not pending:
\begin{enumerate}
    \item If $n_2$ started to be $n_1$'s successor in line~\ref{alg-insert-update-next} of Algorithm~\ref{pseudo-list}, then $n_1$ was pending (by Lemma~\ref{lemma-invariants}, Invariant~\ref{lemma-invariants-pending-infant}) and $n_2$ was not (its timestamp was updated no later than a former execution of line~\ref{alg-find-update-curr-ts} in Algorithm~\ref{pseudo-list}).
    
    \item If $n_2$ started to be $n_1$'s successor in line~\ref{alg-insert-update-pred} of Algorithm~\ref{pseudo-list}, then $n_2$ was pending (by Lemma~\ref{lemma-invariants}, Invariant~\ref{lemma-invariants-pending-infant}) and $n_1$ was not (its timestamp was updated no later than a former execution of line~\ref{alg-find-update-pred-ts} in Algorithm~\ref{pseudo-list}).
    
    \item If $n_2$ started to be $n_1$'s successor in line~\ref{alg-trim-init-next} of Algorithm~\ref{pseudo-list}, then $n_1$ was pending (by Lemma~\ref{lemma-invariants}, Invariant~\ref{lemma-invariants-pending-infant}) and $n_2$ was not (its timestamp was updated no later than the execution of line~\ref{alg-trim-update-succ-ts} in Algorithm~\ref{pseudo-list}).
    
    \item If $n_2$ started to be $n_1$'s successor in line~\ref{alg-trim-cas} of Algorithm~\ref{pseudo-list}, then $n_2$ was pending (by Lemma~\ref{lemma-invariants}, Invariant~\ref{lemma-invariants-pending-infant}) and $n_1$ was not (its timestamp was updated no later than a former execution of line~\ref{alg-find-update-pred-ts} in Algorithm~\ref{pseudo-list}).
\end{enumerate}

By Lemma~\ref{lemma-invariants} (Invariant~\ref{lemma-invariants-pending-still-successor}), $n_2$ remained $n_1$'s successor until both were not pending.
After this point, as explained above, they could not have re-become successive nodes again. Therefore, $n_2$ has been $n_1$'s successor since epoch $t_{max}$.
\end{proof}

\begin{definition} \textbf{(p-predecessor and p-successor)} \label{definition-p-successor}
We say that a node $n_2$ is the \textit{p-successor} of a node $n_1$, if $n_1$'s $prior$ pointer points to $n_2$. In this case, we also say that $n_1$ is a \textit{p-predecessor} of $n_2$.
\end{definition}

\begin{lemma} \label{lemma-p-successor}
Let $n_1$ and $n_2$ be two nodes, let $t_1$ and $t_2$ be their timestamps, and let $k_1$ and $k_2$ be their keys, respectively.
If $n_2$ is $n_1$'s p-successor, then:
\begin{enumerate}
    \item $t_2 \neq \bot$.
    \item If $t_1 \neq \bot$, then $t_2 \leq t_1$.
    \item If $t_1 \neq \bot$, then $n_2$'s retire epoch is at least $t_1$.
    \item If $t_1 \neq \bot$ and $n_2$'s key is smaller than $n_1$'s key, then $n_2$'s retire epoch is $t_1$.
    \item Let $n_0$ be $n_1$'s first predecessor, let $t_0$ be its birth epoch, and let $k_0$ be its key. In addition, let $t_{max}$ be the maximum between $t_0$ and $t_2$. Then:
    \begin{enumerate}
        \item $n_2$ became $n_0$'s predecessor no later than epoch $t_{max}$.
        \item $n_2$ had been $n_0$'s successor until $n_1$ was its successor.
        \item If either $t_1$ is $\bot$, or both $t_1 \neq \bot$ and $t_1 > t_{max}$, then since the global epoch clock showed $t_{max}$ for the last time, and while $n_1$ was not logically in the list yet, no node with a key between $k_0$ and $k_2$ had been logically in the list.
    \end{enumerate}
\end{enumerate}
\end{lemma}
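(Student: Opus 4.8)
The plan is to reduce everything to the two lines that set a $prior$ field. A node $n_1$ has its $prior$ set to $n_2$ either in line~\ref{alg-insert-update-prior} of a successful \emph{insert()} --- there $n_2=curr$, and $n_1$ is the node spliced in front of by the CAS in line~\ref{alg-insert-update-pred} --- or in line~\ref{alg-trim-update-prior} of a \emph{trim()} --- there $n_2=victim$ (which is marked) and $n_1=newCurr$ is installed by the CAS in line~\ref{alg-trim-cas}. Let $n_0$ be the $pred$ operand of that CAS, i.e.\ $n_1$'s first predecessor. From the pseudo-code I read off: immediately before the CAS $n_2$ is $n_0$'s successor (so in particular $k_0<k_2$ by Invariant~\ref{lemma-invariants-smaller-not-reachable}); immediately after, $n_1$ is $n_0$'s successor; in the \emph{insert()} case $n_1\to n_2$ afterwards and (since the \emph{insert()} succeeded) $k_1<k_2$; in the \emph{trim()} case $n_1$'s successor is $succ$, which is reachable from $n_2$, and $k_1\ge k_2$. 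I also record that by Invariant~\ref{lemma-invariants-pending-infant} $n_1$ is an infant (hence unreachable) until this CAS, so $t_1$ is first written at or after the CAS, whereas $n_2$'s timestamp is written strictly before it (by line~\ref{alg-find-update-curr-ts} in the \emph{insert()} case, and because a marked node already carries a timestamp in the \emph{trim()} case, exactly as in the proof of Lemma~\ref{lemma-node-states}). If the CAS never succeeds then $n_0$ does not exist and $t_1=\bot$, making every clause vacuous, so I may assume it succeeds.

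Parts~1 and~2 are then immediate: $t_2\neq\bot$ since $n_2$'s timestamp is written before its link to $n_1$; and $t_2\le t_1$ because the global clock is only ever increased (only \emph{fetchAddTS()} touches it) and $t_2$ is read before $t_1$. For Part~3 I show $n_2$ is logically in the list from its birth epoch through the instant $t_1$ is written: up to the CAS because a successful CAS on $n_0\to n_1$ forces $n_2$ to still be $n_0$'s reachable successor (and nodes are never re-spliced), and from the CAS on because $n_1$ stays pending, so by Invariants~\ref{lemma-invariants-pending-still-successor} and~\ref{lemma-invariants-active-reachable} $n_1$ --- hence $n_2$ (via $n_1\to n_2$ in the \emph{insert()} case, or because $n_0$, the only node that can point to $n_1$, is reachable) --- is reachable; with $t_2\le t_1$ this gives retire epoch $\ge t_1$. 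In the \emph{trim()} case $n_2=victim$ is unlinked in the sense of Definition~\ref{definition-unlinked} exactly when $n_1=newCurr$ becomes active, so by Lemma~\ref{lemma-retire-epoch} its retire epoch equals $t_1$; and since Part~4's hypothesis $k_2<k_1$ rules out the \emph{insert()} case, Part~4 is exactly this.

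For Part~5 I get~5(a)--5(b) from Lemma~\ref{lemma-continuous-succ}: just before the CAS the epoch is $\ge\max(t_0,t_2)=t_{max}$ (both timestamps were already written, $t_2$ by Lemma~\ref{lemma-birth-epoch}), so $n_2$ has been $n_0$'s successor continuously since epoch $t_{max}$, and remains so until the CAS installs $n_1$. For~5(c), consider the interval $[\sigma,\rho)$, from the last instant the clock reads $t_{max}$ (well-defined and, by the hypothesis $t_1=\bot$ or $t_1>t_{max}$, at most $\rho$) until $n_1$ first becomes logically in the list, and split it at the CAS. Before the CAS, $n_0$ cannot have been marked, flagged, or trimmed (else the CAS would fail), so it is active and reachable, $n_2$ is reachable via $n_0\to n_2$ and not pending, so $n_0,n_2$ are both logically in the list and adjacent, and Lemma~\ref{lemma-logically-in-not-between} forbids any logically-in node with a key strictly between $k_0$ and $k_2$. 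After the CAS, while $n_1$ stays pending, no \emph{find()} can have returned $n_1$ as $curr$ (that would write $n_1$'s timestamp, via line~\ref{alg-find-update-curr-ts} or~\ref{alg-trim-update-curr-ts}, ending the interval), so nothing is spliced between $n_0$ and $n_1$ and $n_0$ stays the unique predecessor of the reachable node $n_1$; thus the reachable chain reads $\dots\to n_0\to n_1\to n_2'\to\dots$ with $n_2'$ of key $\ge k_2$. A hypothetical logically-in node $m$ with $k_0<m.key<k_2$ is $\neq n_1$ (which is pending): if $m$ is reachable it would have to sit strictly between $n_0$ and $n_1$ (or between $n_0$ and $n_2$), impossible since $n_0$ points there directly; if $m$ is unreachable it is marked or flagged, hence by Invariant~\ref{lemma-invariants-still-reachable-before-trim} was removed by some successful \emph{trim()}, whose $newCurr$ must still be pending (else $m$ would be unlinked, not logically in) and whose $pred$ is therefore still reachable (Invariants~\ref{lemma-invariants-succ-pending-reachable},~\ref{lemma-invariants-pending-still-successor}); forcing this $pred\to newCurr$ link into the reachable chain identifies it with $n_0\to n_1$ (or $n_0\to n_2$), which puts $m$ inside $n_1$'s own removed block, whose keys are all $\ge k_2$ --- a contradiction.

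The main obstacle is Part~5(c), specifically the fact that $n_0$ cannot be disturbed while $n_1$ is pending: every operation that would re-point $n_0$'s $next$ or unlink $n_0$ --- an \emph{insert()} splicing a node in front of $n_1$, a \emph{remove()}+\emph{trim()} of $n_0$, or a \emph{trim()} that flags $n_0$ --- necessarily executes one of the timestamp WCASes in lines~\ref{alg-find-update-curr-ts}, \ref{alg-trim-update-curr-ts}, or~\ref{alg-trim-update-succ-ts} on $n_1$, which is exactly the event ending the interval. Pinning down which line fires in each scenario, and then using uniqueness of the reachable predecessor together with the key bounds ($k_1<k_2$ in the \emph{insert()} case versus $k_1\ge k_2$ in the \emph{trim()} case) to squeeze out intermediate nodes, is the delicate part; everything else is short bookkeeping on top of Lemmas~\ref{lemma-birth-epoch},~\ref{lemma-retire-epoch},~\ref{lemma-continuous-succ},~\ref{lemma-logically-in-not-between} and the invariants of Lemma~\ref{lemma-invariants}.
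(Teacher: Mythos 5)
Your proposal is correct and follows the same overall plan as the paper's proof: reduce to the two lines that set the $prior$ field, split into the \emph{insert()} and \emph{trim()} cases, and drive Parts~1--4 through Lemmas~\ref{lemma-birth-epoch}, \ref{lemma-retire-epoch}, \ref{lemma-invariants} (Invariants~\ref{lemma-invariants-pending-infant}, \ref{lemma-invariants-active-reachable}, \ref{lemma-invariants-pending-still-successor}, \ref{lemma-invariants-smaller-not-reachable}) and Lemma~\ref{lemma-find-linearization}, with Parts~5(a)--5(b) coming from Lemma~\ref{lemma-continuous-succ}. The one place you genuinely diverge is the post-CAS half of~5(c): you argue forward, tracing the hypothetical conflicting node $m$ back to the \emph{trim()} that unlinked it, identifying that trim's $pred\to newCurr$ edge with $n_0\to n_1$, and then contradicting via the key range of $n_1$'s own removed block; the paper instead argues backward, showing that any such $n_3$ must have stopped being reachable before $n_2$ became $n_0$'s successor and then invoking Lemma~\ref{lemma-reachable-not-between} to conclude $n_3$ was already not logically in the list. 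Both work; yours is more structural and reuses the trim-block machinery, the paper's is a shorter temporal argument that leans harder on Lemma~\ref{lemma-reachable-not-between}. One small note: in your Part~3 paragraph the phrase ``or because $n_0$, the only node that can point to $n_1$, is reachable'' does not actually establish that $n_2$ is logically in the list in the \emph{trim()} case (after the CAS, $n_2$ is no longer reachable at all); you do not need it, since you correctly finish that case with Definition~\ref{definition-unlinked} and Lemma~\ref{lemma-retire-epoch}, exactly as the paper does.
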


\begin{proof}
Recall that by Lemma~\ref{lemma-immutable-ts}, timestamps are immutable once set. In addition, $prior$ fields are set upon initialization (either in line~\ref{alg-insert-update-prior} or~\ref{alg-trim-update-prior}), and are immutable as well.
\begin{enumerate}
    \item It suffices to show that $t_2 \neq \bot$ when $n_1$'s $prior$ pointer was initialized. If $n_2$ became $n_1$'s p-successor in line~\ref{alg-insert-update-prior}, then a pointer to $n_2$ was returned as the second output parameter from the \emph{find()} execution in line~\ref{alg-insert-find}. Therefore, $n_2$'s timestamp stopped being $\bot$ no later than the respective execution of line~\ref{alg-find-update-curr-ts}. 
    
    Otherwise, $n_2$ became $n_1$'s p-successor in line~\ref{alg-trim-update-prior}. In this case, $n_2$ is marked, and by Lemma~\ref{lemma-node-states}, $t_2 \neq \bot$.
    
    \item When $n_1$'s $prior$ is set, $t_2 \neq \bot$. In addition, by Lemma~\ref{lemma-invariants} (Invariant~\ref{lemma-invariants-pending-infant}), $t_1 = \bot$ at this stage. As the global epoch clock can only increase over time, if $t_1 \neq \bot$, then $t_2 \leq t_1$.
    
    \item If $n_2$ became $n_1$'s p-successor in line~\ref{alg-insert-update-prior}, then by Definition~\ref{definition-reachability} and Lemma~\ref{lemma-invariants} (Invariants~\ref{lemma-invariants-pending-infant},~\ref{lemma-invariants-active-reachable} and~\ref{lemma-invariants-pending-still-successor}), $n_2$ was still reachable when $n_1$ became active. By Lemma~\ref{lemma-invariants} (Invariant~\ref{lemma-invariants-unlinked}), $n_2$ was still logically in the list at this point. By Definition~\ref{definition-birth-retire-epoch}, $n_2$'s retire epoch is at least $t_1$.
    
    Otherwise, if $n_2$ became $n_1$'s p-successor in line~\ref{alg-trim-update-prior}, then by Lemma~\ref{lemma-retire-epoch}, $n_2$'s retire epoch is $t_1$.
    
    \item If $n_2$'s key is smaller than $n_1$'s key, then by Lemma~\ref{lemma-invariants} (Invariant~\ref{lemma-invariants-smaller-not-reachable}) and Lemma~\ref{lemma-find-linearization}, $n_2$ necessarily became $n_1$'s p-successor in line~\ref{alg-trim-update-prior}. By Lemma~\ref{lemma-retire-epoch}, $n_2$'s retire epoch is $t_1$ in this case.
    
    \item First, note that both $n_0$'s and $n_2$'s timestamps are not $\bot$: as $n_2$ is $n_1$'s p-successor, its timestamp is not $\bot$ (as already proven above). In addition, by Definition~\ref{definition-infant} and Lemma~\ref{lemma-invariants} (Invariants~\ref{lemma-invariants-infant} and~\ref{lemma-invariants-pending-infant}), $n_0$'s timestamp was updated no later than a former execution of line~\ref{alg-find-update-pred-ts}.
    
    Second, note that if $t_1 \neq \bot$, then $t_1 \geq t_{max}$ (as both $n_0$'s and $n_2$'s timestamps were updated during former \emph{find()} executions).
    
    \begin{enumerate}
        \item $n_2$ was $n_0$'s successor just before $n_1$. Therefore, by Lemma~\ref{lemma-birth-epoch} and~\ref{lemma-continuous-succ}, $n_2$ became $n_0$'s predecessor no later than epoch $t_{max}$.
        
        \item $n_2$ was $n_0$'s successor just before $n_1$. By Lemma~\ref{lemma-continuous-succ}, $n_2$ had been $n_0$'s successor until $n_1$ was its successor.
        \item Assume that $t_1 \neq \bot$ and $t_1 > t_{max}$. By Lemma~\ref{lemma-continuous-succ}, $n_2$ was $n_0$'s successor when the global epoch clock showed $t_{max}$ for the last time, and until $n_1$ became reachable.
        
        By Lemma~\ref{lemma-invariants} (Invariants~\ref{lemma-invariants-pending-infant} and~\ref{lemma-invariants-active-reachable}), $n_1$ became reachable either after the execution of line~\ref{alg-insert-update-pred} or~\ref{alg-trim-cas} in Algorithm~\ref{pseudo-list}.
        
        If $t_1$ became reachable after the execution of line~\ref{alg-insert-update-pred}, then both $n_0$ and $n_2$ were logically in the list at this point, and by Lemma~\ref{lemma-logically-in-not-between}, no node with a key between $k_0$ and $k_2$ had been logically in the list since the global epoch clock showed $t_{max}$ for the last time, and until $n_1$ became reachable. Assume by contradiction that since $n_1$ became reachable, and before it was logically in the list, there had been a logically in the list node, $n_3$, with a key $k_0 < k_3 < k_2$. By Definition~\ref{definition-logically-in}, $n_3$ is not $n_1$. In addition, by Lemma~\ref{lemma-invariants} (Invariant~\ref{lemma-invariants-smaller-not-reachable}), $n_3$ was not reachable since $n_1$ became reachable, and before it was logically in the list.
        Since $n_3$ was logically in the list, by Lemma~\ref{lemma-invariants} (Invariant~\ref{lemma-invariants-still-reachable-before-trim}), $n_3$ must have been reachable from $n_0$ while still being reachable. I.e, $n_3$ stopped being reachable before $n_2$ became $n_0$'s successor, and thus, by Lemma~\ref{lemma-reachable-not-between}, stooped being logically in the list before $n_1$ became reachable -- a contradiction.
        
        Therefore, $t_1$ became reachable after the execution of line~\ref{alg-trim-cas}. By Definition~\ref{definition-unlinked}, $n_2$ remained logically in the list until $n_1$ was logically in the list, and $k_1 \geq k_2$. Assume by contradiction that since $n_1$ became reachable, and before it was logically in the list, there had been a logically in the list node, $n_3$, with a key $k_0 < k_3 < k_2$. By Definition~\ref{definition-logically-in}, $n_3$ is not $n_1$. In addition, by Lemma~\ref{lemma-invariants} (Invariant~\ref{lemma-invariants-smaller-not-reachable}), $n_3$ was not reachable since $n_1$ became reachable, and before it was logically in the list.
        Since $n_3$ was logically in the list, by Lemma~\ref{lemma-invariants} (Invariant~\ref{lemma-invariants-still-reachable-before-trim}), $n_3$ must have been reachable from $n_0$ while still being reachable. I.e, $n_3$ stopped being reachable before $n_2$ became $n_0$'s successor, and thus, by Lemma~\ref{lemma-reachable-not-between}, stooped being logically in the list before $n_1$ became reachable -- a contradiction.
        
        Therefore, since the global epoch clock showed $t_{max}$ for the last time, and until $n_1$ was logically in the list for the first time, no node with a key between $k_0$ and $k_2$ had been logically in the list.
    \end{enumerate}
\end{enumerate}
\end{proof}

\begin{definition} \textbf{(p-reachability)} \label{definition-p-reachability}
We say that a node $m$ is \textit{p-reachable} from a node $n$ if there exist nodes $n_0,\ldots,n_k$ such that $n=n_0$, $m=n_k$, and for every $0 \leq i < k$, $n_{i+1}$ is $n_i$'s p-successor.
\end{definition}

\begin{lemma} \label{lemma-tail-p-reachable}
Let $n$ be a node, and assume that $n$'s $prior$ pointer is initialized. Then $tail$\footnote{The original tail sentinel node, and not just any node with an $\infty$ key} is p-reachable from $n$.
\end{lemma}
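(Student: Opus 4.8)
The plan is to prove this by induction on the execution, maintaining the invariant that, at every point in the run, for every node $n \neq head$ whose $prior$ pointer has been initialized, $tail$ is $p$-reachable from $n$. Two easy observations make this invariant well-behaved: every node is $p$-reachable from itself (Definition~\ref{definition-p-reachability} with $k=0$, so in particular $tail$ from $tail$), and a node's $prior$ field, once written by line~\ref{alg-insert-update-prior} or line~\ref{alg-trim-update-prior} of Algorithm~\ref{pseudo-list}, is immutable — so once the invariant holds for a given $n$ it persists. (Since $head\rightarrow prior$ is NULL, the hypothesis ``$n$'s $prior$ is initialized'' is understood to mean $n$'s $prior$ points to an actual node, which is automatic for every non-sentinel node; $n=tail$ is the trivial case.)

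For the base case, at initialization only $head$ and $tail$ exist, $tail\rightarrow prior$ is NULL, and the invariant holds trivially. For the inductive step it suffices to consider a step $s$ that actually sets some node's $prior$ field, namely the execution of line~\ref{alg-insert-update-prior} ($newNode\rightarrow prior := curr$) or line~\ref{alg-trim-update-prior} ($newCurr\rightarrow prior := victim$); no other step alters a $prior$ field. Write $n$ for the node being updated — in either case a freshly allocated non-sentinel node, so $n\neq head$ — and $m$ for the target ($curr$, respectively $victim$). Once we show that at time $s$ the node $m$ is distinct from $head$ and has its own $prior$ field already initialized, the induction hypothesis gives a $p$-reachability chain from $m$ to $tail$, and prepending $n$ (whose $prior$ now points to $m$) yields a chain from $n$ to $tail$, which is exactly the invariant for $n$.

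So the crux is the claim about $m$. If $m=curr$: by Lemma~\ref{lemma-find-linearization}, $curr$ is reachable (indeed logically in the list) at some point during the \emph{find()} call that produced it, hence not an infant at that point by Lemma~\ref{lemma-invariants} (Invariant~\ref{lemma-invariants-infant}), hence not an infant at time $s$ (infancy, once lost, is never regained); and $curr$'s key is at least the search key $>-\infty$, so $curr\neq head$. If $m=victim$: when the enclosing \emph{trim()} was invoked, $victim$ was read off some node's \emph{next} pointer, so again $victim$ is not an infant by Lemma~\ref{lemma-invariants} (Invariant~\ref{lemma-invariants-infant}); and $victim$ is marked, whereas $head$ is never marked, since \emph{mark()} is only ever applied (line~\ref{alg-remove-mark}) to a \emph{find()} output whose key exceeds $-\infty$, so $victim\neq head$. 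Finally, in both constructions a node can only stop being an infant via the successful CAS of line~\ref{alg-insert-update-pred} or line~\ref{alg-trim-cas}, and the executing thread performs that CAS only after it has already assigned the node's $prior$ field (lines~\ref{alg-insert-update-prior} and~\ref{alg-trim-update-prior} precede those CASes); hence $m$'s $prior$ was initialized strictly before $s$, and the induction hypothesis indeed applies to $m$.

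The main obstacle is precisely this last bundle of bookkeeping: certifying that the node we link to is already ``good'' — non-$head$ and with its own $prior$ in place — which requires nailing down the timing (that $m$ had already stopped being an infant, and therefore had had its $prior$ set, by step $s$) and ruling out the degenerate target $m=head$ in the \emph{trim()} case. Everything else is a routine induction driven by the immutability of $prior$ pointers and Invariant~\ref{lemma-invariants-infant}.
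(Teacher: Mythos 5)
Your proof follows essentially the same plan as the paper's: induct on execution length, and when a step sets $n\rightarrow prior := m$, argue that $m$ is non-$head$ and already has an initialized $prior$, so the inductive hypothesis supplies a $p$-chain from $m$ to $tail$ that you extend by one hop. The supporting facts you invoke (Invariant~\ref{lemma-invariants-infant} via Lemma~\ref{lemma-find-linearization} for $curr$, markedness for $victim$, immutability of $prior$, Invariant~\ref{lemma-invariants-pending-infant}-style reasoning that $prior$ is written before the CAS that ends infancy) are the same ones the paper uses, just slightly re-packaged.

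There is one small but real hole in your inductive step. You reduce everything to showing ``$m$ is distinct from $head$ \emph{and has its own $prior$ field already initialized}.'' That second half is false when $m=tail$: $tail$ is a sentinel, is never an infant (Definition~\ref{definition-infant} excludes sentinels outright rather than by a CAS transition), never passes through line~\ref{alg-insert-update-pred} or~\ref{alg-trim-cas}, and its $prior$ is permanently NULL. The case does arise — e.g.\ an \emph{insert()} of a new maximum key yields $curr=tail$ and sets $newNode\rightarrow prior := tail$. Your chain ``$m$ is not an infant $\Rightarrow$ the CAS occurred $\Rightarrow$ $prior$ was written before the CAS'' breaks at the second implication for $tail$. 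The paper sidesteps this with the explicit branch ``If $n$'s p-successor is $tail$, then we are done,'' relying on $tail$ being $p$-reachable from itself — a fact you do state in your preamble but never invoke in the step. The fix is just to add that branch before arguing about $m$'s $prior$; with it, your proof matches the paper's.
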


\begin{proof}
Recall that $head$'s prior is never initialized, so the lemma holds for it vacuously.
The lemma is proven by Induction on the execution length.
For the base case, $tail$ is p-reachable from itself by Definition~\ref{definition-p-reachability}.
For the induction step, It suffices to show that the lemma holds for $n$'s (single) p-successor. 
If $n$'s p-successor is $tail$, then we are done.

Otherwise, if $n$'s p-successor is set in line~\ref{alg-insert-update-prior}, then $n$'s p-successor is also its successor. By Lemma~\ref{lemma-invariants} (Invariant~\ref{lemma-invariants-smaller-not-reachable}), $n$'s successor is not $head$. In addition, by Lemma~\ref{lemma-invariants} (Invariant~\ref{lemma-invariants-pending-infant}), its prior pointer is already set. Therefore, by the induction hypothesis and Definition~\ref{definition-p-reachability}, $tail$ is p-reachable from $n$.

Otherwise, if $n$'s p-successor is set in line~\ref{alg-trim-update-prior}, then $n$'s p-successor is marked. I.e., it is not $head$, and by Lemma~\ref{lemma-invariants} (Invariant~\ref{lemma-invariants-pending-infant}), its prior pointer is already set. Therefore, by the induction hypothesis and Definition~\ref{definition-p-reachability}, $tail$ is p-reachable from $n$ in this case as well.
\end{proof}

Given Lemma~\ref{lemma-tail-p-reachable}, we can now prove that we never dereference a null pointer during the \emph{rangeQuery()} operation.

\begin{lemma} \label{lemma-range-query-never-null}
A null pointer is never assigned into the $pred$, $curr$ and $succ$ pointers during a \emph{rangeQuery()} execution.
\end{lemma}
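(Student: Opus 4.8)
The plan is to walk through every line of \emph{rangeQuery()} (Algorithm~\ref{pseudo-range}) that writes into $pred$, $curr$, or $succ$ and check non‑nullity one assignment at a time, using two facts about the global clock: it is initialised to $2$ and never decreases (Algorithm~\ref{pseudo-init}), and $head$, $tail$ are allocated with timestamp $2$, which by Lemma~\ref{lemma-immutable-ts} stays $2$ forever. First I would record that throughout a \emph{rangeQuery()} execution the local variable $ts$ satisfies $ts \ge 2$: it is set in line~\ref{alg-range-query-faa} to the pre‑increment value of the clock (at least $2$), and its only other assignment, $ts := \mathrm{getTS}()-1$ in line~\ref{alg-range-query-new-ts}, occurs after the same execution already incremented the clock in line~\ref{alg-range-query-faa}, so $\mathrm{getTS}() \ge 3$ there. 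Hence any node $n$ with $n\rightarrow ts > ts$ has $n\rightarrow ts > 2$, so it is neither $head$ nor $tail$; and since $\bot = 1 < 2 \le ts$, it also has $n\rightarrow ts \ne \bot$, so it is not an infant (Lemma~\ref{lemma-invariants}, Invariant~\ref{lemma-invariants-pending-infant}: an infant stays pending and hence carries timestamp $\bot$). Also, a node returned by \emph{find()} in line~\ref{alg-range-query-find} is reachable and logically in the list by Lemma~\ref{lemma-find-linearization}, hence non‑null and not an infant (Invariant~\ref{lemma-invariants-infant}), and its timestamp is $\ne\bot$ because \emph{find()} executes the CASes in lines~\ref{alg-find-update-pred-ts} and~\ref{alg-find-update-curr-ts} before returning.

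The core of the proof is an invariant for the three \emph{prior}‑following loops (lines~\ref{alg-range-query-while-pred}--\ref{alg-range-query-pred-gets-prior}, \ref{alg-range-query-smaller-curr-while-succ}--\ref{alg-range-query-smaller-curr-succ-gets-prior}, \ref{alg-range-query-bigger-curr-while-succ}--\ref{alg-range-query-bigger-curr-succ-gets-prior}): the pointer being rotated always holds a non‑null node whose timestamp is $\ne\bot$. The base cases are the \emph{find()}‑output fact above for the $pred$‑loop, and, for the two $succ$‑loops, the node $\emph{getRef}(curr\rightarrow next)$ — non‑null by the next‑pointer argument below, and with timestamp $\ne\bot$ after the CASes in lines~\ref{alg-range-query-smaller-curr-succ-ts-update}/\ref{alg-range-query-bigger-curr-succ-ts-update}. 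For the inductive step: the loop dereferences $n\rightarrow prior$ only when $n\rightarrow ts > ts \ge 2$, so $n$ is neither a sentinel nor an infant; a non‑infant, non‑sentinel node has had its \emph{prior} field initialised (lines~\ref{alg-insert-update-prior}/\ref{alg-trim-update-prior} precede the linking CASes of lines~\ref{alg-insert-update-pred}/\ref{alg-trim-cas}, and the node remains an infant until that CAS by Invariant~\ref{lemma-invariants-pending-infant}), and that field points either to a \emph{find()}‑output node or to the \emph{trim()} argument $victim = \emph{getRef}(pred\rightarrow next)$ — in either case a non‑null node; finally $n\rightarrow prior$ has timestamp $\ne\bot$ by Lemma~\ref{lemma-p-successor}(1). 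So the invariant is preserved, and in particular the loops never attempt to follow the null \emph{prior} of $head$ or $tail$, whose timestamp $2$ is not $> ts$.

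The next‑pointer step, $succ := \emph{getRef}(curr\rightarrow next)$ in lines~\ref{alg-range-query-succ-gets-curr-next-smaller} and~\ref{alg-range-query-succ-gets-curr-next-bigger}, is handled separately: there $curr$ is non‑infant — it is a \emph{find()}‑output node, an earlier value of $\emph{getRef}(\cdot\rightarrow next)$, or obtained by following a \emph{prior}‑chain, all non‑infant by the above — and $curr\rightarrow key$ is strictly below $\infty$, since the surrounding loops guard it by $curr\rightarrow key < low$, resp.\ $curr\rightarrow key \le high$, with the query bounds being genuine keys (distinct from the sentinel keys $-\infty,\infty$). By Lemma~\ref{lemma-invariants}, Invariant~\ref{lemma-invariants-tail}, a node with key $\infty$ is reachable from $curr$, and $curr$ is not that node, so $curr\rightarrow next$ is not null and $\emph{getRef}(curr\rightarrow next)$ is not null. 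The remaining assignments are immediate: $curr := pred$ (line~\ref{alg-range-query-curr-gets-pred}) and $curr := succ$ (lines~\ref{alg-range-query-smaller-curr-gets-succ}, \ref{alg-range-query-bigger-curr-gets-succ}) copy pointers already shown non‑null, and the \emph{find()} call in line~\ref{alg-range-query-find} is the \emph{find()}‑output fact. Collecting these cases gives the lemma. I expect the main obstacle to be the \emph{prior}‑chain invariant: one must combine the $ts \ge 2$ lower bound — precisely what stops the chain before it reaches $head$'s or $tail$'s null \emph{prior} — with the observation that every node on the chain is non‑infant and therefore has an initialised \emph{prior} field, which ties together Invariants~\ref{lemma-invariants-infant}, \ref{lemma-invariants-pending-infant} and~\ref{lemma-invariants-tail} of Lemma~\ref{lemma-invariants} with Lemma~\ref{lemma-p-successor}.
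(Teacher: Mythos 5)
Your proof is correct, and it reaches the same conclusion via the same overall structure (case analysis on every assignment to $pred$, $curr$, $succ$, with Lemma~\ref{lemma-invariants} Invariant~\ref{lemma-invariants-tail} handling the $next$-pointer steps), but it swaps out the key supporting lemma for the $prior$-chain steps. The paper invokes Lemma~\ref{lemma-tail-p-reachable} (``if $n$'s $prior$ is initialised then $tail$ is p-reachable from $n$''), together with the observation that $tail$'s timestamp is $2 \le ts$ so the loop would break before reaching a node with a null $prior$. You instead establish the explicit lower bound $ts \ge 2$ from the clock initialisation and the $\mathrm{getTS}()-1$ assignment, and then argue locally: any node the loop tries to follow a $prior$ from has timestamp $> ts \ge 2$, hence is neither a sentinel (timestamp $2$) nor an infant (timestamp $\bot = 1$), hence has an initialised $prior$ pointing to one of the known non-null targets, and by Lemma~\ref{lemma-p-successor}(1) the target again has non-$\bot$ timestamp. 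The trade-off: the paper's route is shorter because Lemma~\ref{lemma-tail-p-reachable} is already proved and reused elsewhere (in the lock-freedom argument), whereas your route is more self-contained and makes explicit the hidden assumptions (the $2$ vs.~$\bot{=}1$ numerology and the ``non-infant $\Rightarrow$ $prior$ initialised'' step) that the paper uses only implicitly. Both are sound; yours is arguably the more transparent argument for this specific lemma, at the cost of re-deriving a fact the paper gets for free from its auxiliary lemma.
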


\begin{proof}
We are going to prove the lemma by induction on the execution length. The lemma holds vacuously when the operation is invoked.
For the induction step:
\begin{itemize}
    \item The lemma holds after executing line~\ref{alg-range-query-find}, by Lemma~\ref{lemma-invariants} (Invariant~\ref{lemma-invariants-no-null-dereference}).
    
    \item The lemma holds after executing line~\ref{alg-range-query-pred-gets-prior}, by Lemma~\ref{lemma-tail-p-reachable}. Note that if a pointer to $tail$ is assigned into the $pred$ variable before the execution of  line~\ref{alg-range-query-pred-gets-prior}, then the loop breaks (as $tail$'s timestamp is necessarily not bigger than $ts$).
    
    \item The lemma holds after executing line~\ref{alg-range-query-curr-gets-pred}, by the induction hypothesis.
    
    \item The lemma holds after executing line~\ref{alg-range-query-succ-gets-curr-next-smaller}, by Lemma~\ref{lemma-invariants} (Invariant~\ref{lemma-invariants-tail}). Note that if a pointer to $tail$ is assigned into the $curr$ variable before the execution of  line~\ref{alg-range-query-pred-gets-prior}, then the loop breaks (as $tail$'s key is necessarily bigger than $low$).
    
    \item The lemma holds after executing line~\ref{alg-range-query-smaller-curr-succ-gets-prior}, by Lemma~\ref{lemma-tail-p-reachable}. Note that if a pointer to $tail$ is assigned into the $succ$ variable before the execution of line~\ref{alg-range-query-smaller-curr-succ-gets-prior}, then the loop breaks (as $tail$'s timestamp is necessarily not bigger than $ts$).
    
    \item The lemma holds after executing line~\ref{alg-range-query-smaller-curr-gets-succ}, by the induction hypothesis.
    
    \item The lemma holds after executing line~\ref{alg-range-query-succ-gets-curr-next-bigger}, by Lemma~\ref{lemma-invariants} (Invariant~\ref{lemma-invariants-tail}). Note that if a pointer to $tail$ is assigned into the $curr$ variable before the execution of  line~\ref{alg-range-query-succ-gets-curr-next-bigger}, then the loop breaks (as $tail$'s key is necessarily bigger than $high$).
    
    \item The lemma holds after executing line~\ref{alg-range-query-bigger-curr-succ-gets-prior}, by Lemma~\ref{lemma-tail-p-reachable}. Note that if a pointer to $tail$ is assigned into the $succ$ variable before the execution of line~\ref{alg-range-query-bigger-curr-succ-gets-prior}, then the loop breaks (as $tail$'s timestamp is necessarily not bigger than $ts$).
    
    \item The lemma holds after executing line~\ref{alg-range-query-bigger-curr-gets-succ}, by the induction hypothesis.
    
\end{itemize}
\end{proof}

\begin{corollary} \label{corollary-dereference}
Null pointers are never dereferenced during the \emph{rangeQuery()} operation.
\end{corollary}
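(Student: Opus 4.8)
The plan is to derive Corollary~\ref{corollary-dereference} almost immediately from Lemma~\ref{lemma-range-query-never-null}, together with Invariant~\ref{lemma-invariants-no-null-dereference} of Lemma~\ref{lemma-invariants}, by a routine inspection of Algorithm~\ref{pseudo-range}. The only thing that needs checking is that every pointer that is dereferenced during a \emph{rangeQuery()} execution has already been shown to be non-null at the moment of the access.

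First I would classify the dereference sites in Algorithm~\ref{pseudo-range}. Apart from the call to \emph{find()} in line~\ref{alg-range-query-find}, every field access in the procedure has the form $pred\to\cdot$, $curr\to\cdot$, or $succ\to\cdot$: the reads of \emph{key}, \emph{ts}, \emph{value}, \emph{next}, and \emph{prior}, together with the \emph{ts}-updating CAS instructions in lines~\ref{alg-range-query-smaller-curr-succ-ts-update} and~\ref{alg-range-query-bigger-curr-succ-ts-update}. Each of the variables $pred$, $curr$, $succ$ is only ever overwritten by one of the assignments treated in Lemma~\ref{lemma-range-query-never-null} — an output of \emph{find()}, a \emph{next}-dereference (via \textbf{getRef}), a \emph{prior}-dereference, or a copy of another of these three variables — and that lemma states precisely that none of those assignments ever stores a null pointer. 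Hence at every point where one of these variables is dereferenced it holds a non-null value, so the access is safe.

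For the remaining dereference sites, namely those executed inside the \emph{find()} invocation of line~\ref{alg-range-query-find}, the needed guarantee is already Lemma~\ref{lemma-invariants}, Invariant~\ref{lemma-invariants-no-null-dereference}, which asserts that every pointer assigned into a local variable in Algorithm~\ref{pseudo-list} and later dereferenced is non-null; since \emph{find()} is part of Algorithm~\ref{pseudo-list}, its internal dereferences are covered. Combining the two cases yields the corollary.

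I do not expect a real obstacle here. The substantive work — in particular ruling out that a \emph{prior}-walk (the loops in lines~\ref{alg-range-query-while-pred}--\ref{alg-range-query-pred-gets-prior}, \ref{alg-range-query-smaller-curr-while-succ}--\ref{alg-range-query-smaller-curr-succ-gets-prior}, and \ref{alg-range-query-bigger-curr-while-succ}--\ref{alg-range-query-bigger-curr-succ-gets-prior}) ever runs off the end of the list into a null \emph{prior} pointer — has already been discharged inside the proof of Lemma~\ref{lemma-range-query-never-null}, using Lemma~\ref{lemma-tail-p-reachable} (the original \emph{tail} is p-reachable from any node whose \emph{prior} is set, and its timestamp is small enough, respectively its key large enough, that the corresponding loop guard fails no later than \emph{tail}, so a null \emph{prior} is never reached). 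Thus the corollary is a short bookkeeping step layered on top of that lemma.
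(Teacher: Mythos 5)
Your argument is correct and matches the paper's (implicit) reasoning: the corollary is stated without proof precisely because it follows directly from Lemma~\ref{lemma-range-query-never-null}, and you rightly note that dereferences inside the \emph{find()} call are already covered by Lemma~\ref{lemma-invariants}, Invariant~\ref{lemma-invariants-no-null-dereference}. Nothing further is needed.
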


\begin{lemma} \label{lemma-range-pred-first-loop}
Let $n$ be the node referenced by the $pred$ variable, when the loop in lines~\ref{alg-range-query-while-find}-\ref{alg-range-query-new-ts} of Algorithm~\ref{pseudo-range} breaks in (line~\ref{alg-range-query-break}). Then $n$ was logically in the list when epoch $ts$ ended.
\end{lemma}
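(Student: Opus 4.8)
The plan is to fix $T$ as the value of the local variable $ts$ at the instant the loop of lines~\ref{alg-range-query-while-find}--\ref{alg-range-query-new-ts} breaks, and to reason about the final iteration of that loop. Let $p_0$ be the node returned as $pred$ by the \emph{find()} call of line~\ref{alg-range-query-find} on this iteration, and let $p_0, p_1, \ldots, p_j = n$ be the nodes visited by the inner loop of lines~\ref{alg-range-query-while-pred}--\ref{alg-range-query-pred-gets-prior}, so that each $p_{i+1}$ is the p-successor of $p_i$, we have $p_i \rightarrow ts > T$ for every $i < j$, and $p_j \rightarrow ts \leq T$ (for $j=0$ this just says the inner loop's guard fails immediately, $p_0 \rightarrow ts \leq T$). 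Throughout, I will use that being pending forms an initial segment of a node's life (Lemma~\ref{lemma-node-states}) while, once a node is unlinked, it stays unlinked (Definition~\ref{definition-unlinked}); consequently the period during which a node is logically in the list is a single interval of epochs $[b, r)$, with $b$ its birth epoch and $r$ its retire epoch (put $r = \infty$ if it is never unlinked). It therefore suffices to establish $b \leq T < r$ for $n$.

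The first step is to control the global clock. The variable $ts$ can only have reached the value $T$ either directly from \emph{fetchAddTS()} in line~\ref{alg-range-query-faa}, when the final iteration is the first one, in which case that call already raised the clock to $T+1$; or as one less than \emph{getTS()} in line~\ref{alg-range-query-new-ts} on the preceding iteration, in which case that \emph{getTS()} must have read $T+1$. Either way the clock is at least $T+1$ before --- and hence, by monotonicity, throughout --- the \emph{find()} call of line~\ref{alg-range-query-find} on the final iteration, so epoch $T$ has already ended by the time this \emph{find()} runs. I will also record that $p_0 \rightarrow ts \neq \bot$, because \emph{find()} performs the CAS of line~\ref{alg-find-update-pred-ts} before it returns $pred$, and that $p_1 \rightarrow ts, \ldots, p_j \rightarrow ts$ are all different from $\bot$ by the first item of Lemma~\ref{lemma-p-successor} (they are also non-increasing along the chain by its second item, though this is not needed).

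Next I split on $j$. If $j \geq 1$, then $n = p_j$ is the p-successor of $p_{j-1}$ and $p_{j-1} \rightarrow ts > T$ with $p_{j-1} \rightarrow ts \neq \bot$; by Lemma~\ref{lemma-birth-epoch} and the loop's exit condition, $n$'s birth epoch equals $p_j \rightarrow ts \leq T$, and by the third item of Lemma~\ref{lemma-p-successor} applied to $(p_{j-1}, p_j)$, $n$'s retire epoch is at least $p_{j-1} \rightarrow ts > T$; hence $b \leq T < r$. If $j = 0$, then $n = p_0$ is the node returned by the final \emph{find()} call, so by Lemma~\ref{lemma-find-linearization} it is logically in the list at that call's guaranteed point, which by the first step lies in an epoch $\geq T+1$; since Lemma~\ref{lemma-birth-epoch} and the exit condition give $b = p_0 \rightarrow ts \leq T$, contiguity of the logically-in-the-list interval forces $r > T$, so again $b \leq T < r$. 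In both cases $n$ is logically in the list when epoch $T$ ends, which is the claim. The hard part, I expect, is the first step --- identifying exactly which values the repeatedly overwritten variable $ts$ can carry at the break and turning that into a lower bound on the global clock --- together with the $j = 0$ corner case, which has no p-predecessor to feed into Lemma~\ref{lemma-p-successor} and so must instead lean on the \emph{find()} linearization guarantee.
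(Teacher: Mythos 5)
Your proof is correct and follows essentially the same route as the paper's: both cases ($j=0$ via Lemma~\ref{lemma-find-linearization}, $j\geq1$ via item~3 of Lemma~\ref{lemma-p-successor} together with Lemma~\ref{lemma-birth-epoch}) are exactly the paper's case split, with the paper merely carving out the $head$ sentinel as a trivial sub-case of your $j=0$ branch. Your only additions are to spell out explicitly what the paper leaves as a parenthetical ("$ts$ is always smaller than the current epoch") — namely tracing how $ts$ acquires the value $T$ via \emph{fetchAddTS()} or \emph{getTS()}$-1$ — and to make the contiguity of the logically-in-the-list interval $[b,r)$ explicit; both are harmless elaborations of the same underlying argument.
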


\begin{proof}

First, note that $n$'s timestamp is not $\bot$, as it is either the $head$ sentinel node, returned as the first output parameter, from the \emph{find()} call in line~\ref{alg-range-query-find} (i.e., its timestamp was updated no later than the respective execution of line~\ref{alg-find-update-pred-ts}), or has a p-predecessor (and therefore, has a non-$\bot$ timestamp, by Lemma~\ref{lemma-p-successor}).

Now, as the condition, checked in line~\ref{alg-range-query-while-pred}, does not hold for $n$, $n$'s timestamp is at most $ts$. By Lemma~\ref{lemma-birth-epoch}, $n$ was logically in the list before epoch $ts$ ended. 

If $n$ is the $head$ sentinel node, then it obviously was logically in the list when epoch $ts$ ended.

Otherwise, if a pointer to $n$ was returned as the first output parameter, from the \emph{find()} call in line~\ref{alg-range-query-find} (i.e., the condition in line~\ref{alg-range-query-while-pred} was checked only once), then by Lemma~\ref{lemma-find-linearization}, $n$ was still logically in the list during an epoch which is bigger than $ts$ (as $ts$ is always smaller than the current epoch). Therefore, $n$ was logically in the list when epoch $ts$ ended.

Otherwise, $n$ has a p-predecessor with a timestamp which is bigger than $ts$ (as the condition in line~\ref{alg-range-query-while-pred} held for this p-predecessor). By Lemma~\ref{lemma-p-successor}, $n$'s retire epoch is bigger then $ts$. Therefore, $n$ was logically in the list when epoch $ts$ ended.
\end{proof}

\begin{lemma} \label{lemma-succ-curr-succ}
Let $n_2$ be the last node assigned into the $succ$ variable during a loop iteration in lines~\ref{alg-range-query-while-curr-smaller}-\ref{alg-range-query-smaller-curr-gets-succ}, and let $n_1$ be the node referenced by the $curr$ variable during this assignment.
In addition, let $k_1$ be $n_1$'s key, and let $k_2$ be $n_2$'s key.
Then when epoch $ts$ ended:
\begin{enumerate}
    \item Both nodes were logically in the list.
    \item There was no node with a key $k_1 < k_3 < k_2$, that was logically in the list when epoch $ts$ ended.
\end{enumerate}
\end{lemma}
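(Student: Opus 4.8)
The plan is to prove both parts simultaneously by induction on the index of the iteration of the outer loop in lines~\ref{alg-range-query-while-curr-smaller}--\ref{alg-range-query-smaller-curr-gets-succ}, maintaining the invariant that whenever control reaches line~\ref{alg-range-query-while-curr-smaller} the node currently referenced by $curr$ was logically in the list when epoch $ts$ ended. The base case is exactly Lemma~\ref{lemma-range-pred-first-loop} followed by the assignment $curr := pred$ in line~\ref{alg-range-query-curr-gets-pred}; in the inductive step, once part~1 of the present lemma is established, the assignment $curr := succ$ in line~\ref{alg-range-query-smaller-curr-gets-succ} re-establishes the invariant for the next iteration. Hence it suffices to analyse one iteration whose entering $curr$ is $n_1$, which the invariant guarantees was logically in the list when epoch $ts$ ended and whose key is at most $low$ (strictly smaller than $low$ for every iteration but the first).

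Inside the iteration I would track the $prior$-chain explicitly: let $s_0$ be the node read from $n_1\rightarrow next$ in line~\ref{alg-range-query-succ-gets-curr-next-smaller}, at a moment $\tau$ that lies in an epoch strictly larger than $ts$ (the global clock was already advanced in line~\ref{alg-range-query-faa} and only ever increases), and let $s_0 \to s_1 \to \cdots \to s_j = n_2$ be the successive $prior$-steps of lines~\ref{alg-range-query-smaller-curr-while-succ}--\ref{alg-range-query-smaller-curr-succ-gets-prior}, so $s_i$'s timestamp exceeds $ts$ for $i<j$ while $s_j$'s timestamp is $\le ts$. By Lemma~\ref{lemma-p-successor}(1) for $i\ge 1$ and by the CAS of line~\ref{alg-range-query-smaller-curr-succ-ts-update} for $s_0$, the timestamp of $n_2$ is non-$\bot$, hence by Lemma~\ref{lemma-birth-epoch} equals $n_2$'s birth epoch and is $\le ts$. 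For part~1 it then remains to show $n_2$'s retire epoch exceeds $ts$. If $j\ge 1$ this is immediate from Lemma~\ref{lemma-p-successor}(3) applied to the p-predecessor $s_{j-1}$, whose timestamp exceeds $ts$. If $j=0$, then $n_2=s_0$ is $n_1$'s successor at $\tau$; since the birth epochs of $n_1$ and $n_2$ are both $\le ts$ and $\tau$ lies in an epoch $>ts$, Lemma~\ref{lemma-continuous-succ} shows $n_2$ has been $n_1$'s successor throughout epoch $ts$, and then (because $n_2$ is not pending) a short argument shows $n_2$ is still logically in the list at the end of epoch $ts$: otherwise $n_2$ would be unlinked by some $trim$ whose CAS occurs no later than epoch $ts$, and since $n_1$ still points to $n_2$ at that moment, $n_1$ would be one of the marked nodes removed by the same $trim$, forcing $n_1$'s retire epoch to be $\le ts$, contradicting the invariant.

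For part~2 the strategy is to establish that $n_2$ is precisely $n_1$'s successor at the instant epoch $ts$ ends, after which Lemma~\ref{lemma-logically-in-not-between} applied at that instant to the pair $n_1,n_2$ — both logically in the list by part~1 — excludes any node with key strictly between $k_1$ and $k_2$. When $j=0$ this identification is exactly the conclusion of Lemma~\ref{lemma-continuous-succ} already used above. When $j\ge 1$ I would obtain it by telescoping along the chain: for each link $s_i\to s_{i+1}$, apply item~(5) of Lemma~\ref{lemma-p-successor} to the pair $(s_i,s_{i+1})$ — its hypotheses hold because $s_i$'s timestamp is non-$\bot$ and, being larger than $ts$, exceeds the maximum of $s_{i+1}$'s timestamp and the birth epoch of $s_i$'s first predecessor — to learn that $s_{i+1}$ was the successor of $s_i$'s first predecessor throughout the epoch window ending when $s_i$ was born, and (sub-item~(c)) that no node with a key between those of that first predecessor and $s_{i+1}$ was logically in the list during that window; stitching these windows together, and using Lemma~\ref{lemma-continuous-succ} to push $s_0$ (which is $n_1$'s successor at $\tau$) back to the end of epoch $ts$, forces $n_1\to n_2$ at the end of epoch $ts$ and leaves no gap containing a live key.

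The step I expect to be the real obstacle is exactly this telescoping for $j\ge 1$: one must align the "first predecessor" appearing in Lemma~\ref{lemma-p-successor}(5) with the chain actually traversed (the first predecessor of $s_i$ need not be $s_{i-1}$, and the keys along a $prior$-chain are not monotone, since a $prior$ pointer may lead to a node with a larger key), and one must bookkeep the epoch windows carefully, including the borderline case where consecutive chain timestamps are equal rather than strictly decreasing, so that the "$t_1>t_{max}$" clause of Lemma~\ref{lemma-p-successor}(5)(c) is invoked correctly. Part~1 and the $j=0$ case of part~2, by contrast, are routine consequences of the already-established invariants.
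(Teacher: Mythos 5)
Your proposal follows the same route as the paper's proof: induction on the outer-loop iterations with the invariant that the entering $curr$ was logically in the list at the end of epoch $ts$, base case Lemma~\ref{lemma-range-pred-first-loop}, and a case split on whether the inner $prior$-loop executed ($j=0$ versus $j\geq 1$). For $j=0$, both you and the paper use Lemma~\ref{lemma-birth-epoch} and Lemma~\ref{lemma-continuous-succ} to pin down $n_2$'s birth epoch and its successorship at epoch $ts$'s end, then a contradiction argument (unreachability of $n_2$ forces $n_1$ into the same unlink) for part~1, and Lemma~\ref{lemma-logically-in-not-between} for part~2. For $j\geq 1$, your route to part~1 via Lemma~\ref{lemma-p-successor}(3) applied to the last $prior$-link is cleaner than the paper's corresponding contradiction (whose phrasing — ``its next pointer was later updated to point to $n_3$'' — only literally reads correctly when $j=1$). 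For part~2 in the $j\geq 1$ case, the paper collapses the argument to a single ``By Lemma~\ref{lemma-p-successor}\ldots'' citation naming only $n_3 = s_{j-1}$, while you explicitly note that one must telescope Lemma~\ref{lemma-p-successor}(5) along the whole chain and identify each $s_i$'s ``first predecessor'' with $n_1$. That identification, which the paper leaves implicit, does in fact go through — each $s_i$ with $i<j$ has birth epoch $>ts$ while $n_1$'s is $\leq ts$, so by the reasoning of Lemma~\ref{lemma-continuous-succ} together with Invariant~\ref{lemma-invariants-pending-infant}, $s_i$ became $n_1$'s successor while $s_i$ was pending, hence $n_1$ is its first predecessor — and walking Lemma~\ref{lemma-p-successor}(5)(b) backward from $s_0$ establishes each $s_i$ was $n_1$'s successor at some point, which the one-shot citation does not justify. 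So your plan is not merely a rephrasing of the paper's proof; it correctly flags and proposes to fill a step the paper skips, and your stated worries about alignment with ``first predecessor,'' non-monotone keys along $prior$-chains, and the strictness needed for clause~(c) of Lemma~\ref{lemma-p-successor}(5) are exactly the bookkeeping the paper's terse version omits.
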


\begin{proof}
We are going to prove the lemma by induction on the number of loop iterations.
For the base case, before the loop started, by Lemma~\ref{lemma-range-pred-first-loop}, $n_1$ was logically in the list when epoch $ts$ ended, and the rest holds vacuously.

For the induction step, first assume that a pointer to $n_2$ was assigned into the $succ$ variable in line~\ref{alg-range-query-succ-gets-curr-next-smaller}.
By the induction hypothesis, $n_1$ was logically in the list when epoch $ts$ ended (either by Lemma~\ref{lemma-range-pred-first-loop}, or by the induction hypothesis). In addition, as the condition from line~\ref{alg-range-query-smaller-curr-while-succ} did not hold for $n_2$, by Lemma~\ref{lemma-birth-epoch}, $n_2$'s birth epoch is at most $ts$. Since $n_2$ is $n_1$'s successor during the current epoch, which is obviously bigger than $ts$, by Lemma~\ref{lemma-continuous-succ}, $n_2$ was also $n_1$'s successor when epoch $ts$ ended. Assume by contradiction that $n_2$ was not logically in the list when epoch $ts$ ended. Then by Lemma~\ref{lemma-invariants} (Invariant~\ref{lemma-invariants-unlinked}), it was also not reachable. By Definition~\ref{definition-reachability}, $n_1$ was not reachable as well at this point. By Definition~\ref{definition-unlinked}, $n_1$ became unlinked when $n_2$ became unlinked -- a contradiction.
Therefore, $n_2$ was also logically in the list when epoch $ts$ ended. By Lemma~\ref{lemma-logically-in-not-between}, there was no node with a key $k_1 < k_3 < k_2$, that was logically in the list when epoch $ts$ ended.

Now, assume that a pointer to $n_2$ was assigned into the $succ$ variable in line~\ref{alg-range-query-smaller-curr-succ-gets-prior}. Then since the condition in line~\ref{alg-range-query-smaller-curr-while-succ} did not hold for $n_2$, $n_2$'s birth epoch is at most $ts$.
By the induction hypothesis, $n_1$ was logically in the list when epoch $ts$ ended. Specifically, $n_1$'s birth epoch is at most $ts$ as well.

Let $n_3$ be the previous node referenced by the $succ$ variable. Then $n_3$'s timestamp is bigger than $ts$. 
By Lemma~\ref{lemma-p-successor}, $n_2$ had been $n_1$'s successor when epoch $ts$ ended, and there was no node with a key $k_1 < k_3 < k_2$, that was logically in the list when epoch $ts$ ended.

It still remains to show that $n_2$ was logically in the list when epoch $ts$ ended.
Assume by contradiction that $n_2$ was not logically in the list when epoch $ts$ ended. Then by Lemma~\ref{lemma-invariants} (Invariant~\ref{lemma-invariants-unlinked}), it was also not reachable. By Definition~\ref{definition-reachability}, $n_1$ was not reachable as well at this point. By Lemma~\ref{lemma-invariants} (Invariant~\ref{lemma-invariants-active-reachable}), it was neither pending nor active -- a contradiction to Lemma~\ref{lemma-deleted-states} (as its next pointer was later updated to point to $n_3$).
Therefore, $n_2$ was logically in the list when epoch $ts$ ended.
\end{proof}

\begin{lemma} \label{lemma-range-curr-smaller-loop}
Let $n_1$ and $n_2$ be two nodes, referenced by the $curr$ variable consecutively, during the loop in lines~\ref{alg-range-query-while-curr-smaller}-\ref{alg-range-query-smaller-curr-gets-succ}, and let $k_1$ and $k_2$ be their respective keys. Then when epoch $ts$ ended, both nodes were logically in the list. In addition, for every node $n_3$ with a key $k_1 < k_3 < k_2$, $n_3$ was not logically in the list when epoch $ts$ ended.
\end{lemma}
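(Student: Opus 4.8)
### Proof Proposal

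The plan is to prove Lemma~\ref{lemma-range-curr-smaller-loop} by combining Lemma~\ref{lemma-succ-curr-succ} with the structure of the loop in lines~\ref{alg-range-query-while-curr-smaller}--\ref{alg-range-query-smaller-curr-gets-succ}, using induction on the number of loop iterations. The key observation is that two nodes $n_1, n_2$ that are referenced by the \emph{curr} variable consecutively are exactly a pair where $n_1$ is the \emph{curr} value at the start of some iteration and $n_2$ is the value of \emph{succ} at the end of the inner \emph{prior}-following loop (lines~\ref{alg-range-query-smaller-curr-while-succ}--\ref{alg-range-query-smaller-curr-succ-gets-prior}), since line~\ref{alg-range-query-smaller-curr-gets-succ} assigns \emph{curr := succ}. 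So the statement to prove is essentially the conclusion of Lemma~\ref{lemma-succ-curr-succ}, packaged for a consecutive pair.

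First I would set up the base case: before the loop begins, the first \emph{curr} value is the node obtained from the first phase (line~\ref{alg-range-query-curr-gets-pred}, i.e. \emph{pred} after the loop in lines~\ref{alg-range-query-while-find}--\ref{alg-range-query-new-ts} breaks), and by Lemma~\ref{lemma-range-pred-first-loop} this node was logically in the list when epoch $ts$ ended; there is no ``previous'' \emph{curr}, so the between-nodes claim is vacuous. For the induction step, suppose the current \emph{curr} value $n_1$ was logically in the list when epoch $ts$ ended (this is guaranteed either by the base case or by applying the induction hypothesis to the preceding consecutive pair). Then one iteration of the outer loop reads $succ := \mathrm{getRef}(\mathit{curr} \rightarrow \mathit{next})$ in line~\ref{alg-range-query-succ-gets-curr-next-smaller}, performs the timestamp-refresh CAS in line~\ref{alg-range-query-smaller-curr-succ-ts-update}, and then follows \emph{prior} pointers until the timestamp drops to at most $ts$, arriving at $n_2$. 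This is precisely the situation analyzed in Lemma~\ref{lemma-succ-curr-succ}: $n_2$ is the last node assigned into \emph{succ} during that iteration and $n_1$ is the \emph{curr} value at the time of that assignment. Lemma~\ref{lemma-succ-curr-succ} then gives directly that (a) both $n_1$ and $n_2$ were logically in the list when epoch $ts$ ended, and (b) no node with a key strictly between $k_1$ and $k_2$ was logically in the list at that time. Feeding $n_2$ back in as the new \emph{curr} value completes the induction, since $n_2$'s membership at the end of epoch $ts$ is exactly the hypothesis needed for the next step.

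The main obstacle I anticipate is not mathematical depth but bookkeeping precision: one must carefully argue that the hypothesis of Lemma~\ref{lemma-succ-curr-succ} — namely that $n_1$ was logically in the list when epoch $ts$ ended — is available at each step, i.e. that the ``$n_1$ logically in the list'' conclusion produced for one consecutive pair is syntactically the same statement needed as the premise for the next pair. Since Lemma~\ref{lemma-succ-curr-succ} itself is already proved by induction on loop iterations and already establishes ``both nodes were logically in the list when epoch $ts$ ended,'' this lemma is almost an immediate corollary; the only care needed is to match the indexing of ``consecutive \emph{curr} references'' in this statement with the ``\emph{curr} at assignment time / last \emph{succ} value'' indexing of Lemma~\ref{lemma-succ-curr-succ}, and to handle the first iteration via Lemma~\ref{lemma-range-pred-first-loop} rather than the induction hypothesis. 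I would also note, for completeness, that Corollary~\ref{corollary-dereference} ensures all the dereferences in the loop are well-defined, so no degenerate (null) cases arise.
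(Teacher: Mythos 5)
Your proof is correct and matches the paper's approach: the paper simply observes that the lemma ``holds directly from Lemma~\ref{lemma-succ-curr-succ},'' and your unfolding of the induction (with the base case handled via Lemma~\ref{lemma-range-pred-first-loop}) is exactly the internal structure already present in the proof of Lemma~\ref{lemma-succ-curr-succ}. Your care about matching ``consecutive \emph{curr} references'' with the ``\emph{curr} at assignment time / last \emph{succ} value'' indexing is the right thing to check, but the paper treats it as immediate.
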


\begin{proof}
The lemma holds directly from Lemma~\ref{lemma-succ-curr-succ}.
\end{proof}

\begin{lemma} \label{lemma-range-curr-bigger-loop}
Let $n_1$ and $n_2$ be two nodes, referenced by the $curr$ variable consecutively, during the loop in lines~\ref{alg-range-query-while-curr-bigger}-\ref{alg-range-query-bigger-curr-gets-succ}, and let $k_1$ and $k_2$ be their respective keys. Then when epoch $ts$ ended, both nodes were logically in the list. In addition, for every node $n_3$ with a key $k_1 < k_3 < k_2$, $n_3$ was not logically in the list when epoch $ts$ ended.
\end{lemma}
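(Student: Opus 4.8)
The plan is to mirror exactly the argument that established Lemma~\ref{lemma-range-curr-smaller-loop}. First I would prove the analogue of Lemma~\ref{lemma-succ-curr-succ} for the second traversal loop: if $n_2$ is the last node assigned into the $succ$ variable during a single iteration of the loop in lines~\ref{alg-range-query-while-curr-bigger}--\ref{alg-range-query-bigger-curr-gets-succ}, and $n_1$ is the node held by $curr$ at that assignment, then when epoch $ts$ ended both $n_1$ and $n_2$ were logically in the list, and no node with a key strictly between $n_1$'s key and $n_2$'s key was logically in the list at that instant. Lemma~\ref{lemma-range-curr-bigger-loop} then follows immediately, since $n_2$ becomes the next value of $curr$ in line~\ref{alg-range-query-bigger-curr-gets-succ}.

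The proof of this auxiliary claim goes by induction on the number of iterations of the bigger-key loop. For the base case I would use the state in which that loop is entered: the value of $curr$ at loop entry is the node returned by the smaller-key loop, which by Lemma~\ref{lemma-range-curr-smaller-loop} (or, if that loop's body never executed, by Lemma~\ref{lemma-range-pred-first-loop}) was logically in the list when epoch $ts$ ended, and the remaining parts of the claim hold vacuously. For the induction step I would split on where $n_2$ was read, exactly as in Lemma~\ref{lemma-succ-curr-succ}. If $n_2$ was assigned in line~\ref{alg-range-query-succ-gets-curr-next-bigger} then it is $n_1$'s current successor, so by Lemma~\ref{lemma-birth-epoch} and Lemma~\ref{lemma-continuous-succ} it was $n_1$'s successor already when epoch $ts$ ended; $n_1$ was logically in the list there by the induction hypothesis, and if $n_2$ were not, then by Lemma~\ref{lemma-invariants} (Invariant~\ref{lemma-invariants-unlinked}) neither would be reachable, forcing $n_1$ to be unlinked together with $n_2$ (Definition~\ref{definition-unlinked}), a contradiction; the "no node in between" part is Lemma~\ref{lemma-logically-in-not-between}. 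If instead $n_2$ was reached by following $prior$ pointers (line~\ref{alg-range-query-bigger-curr-succ-gets-prior}), I would invoke Lemma~\ref{lemma-p-successor}: its parts give that $n_2$ was $n_1$'s successor at the end of epoch $ts$ and that nothing with an intermediate key was logically in the list at that time, while the argument that $n_2$ itself was logically in the list at the end of epoch $ts$ uses Lemma~\ref{lemma-invariants} (Invariants~\ref{lemma-invariants-unlinked} and~\ref{lemma-invariants-active-reachable}) together with Lemma~\ref{lemma-deleted-states}, precisely as in the second half of the proof of Lemma~\ref{lemma-succ-curr-succ}.

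The only genuinely new bookkeeping relative to the smaller-key loop is the base case, where I must connect the value of $curr$ at loop entry to the guarantees produced by the earlier parts of \emph{rangeQuery()}; everything else is a line-for-line transcription, since the two loops differ only in the guard ($curr \rightarrow key < low$ versus $curr \rightarrow key \leq high$), which plays no role in the reasoning. The main obstacle I anticipate is checking that the hypotheses of Lemma~\ref{lemma-p-successor}(5) — in particular the case distinction on whether $n_1$'s timestamp is $\bot$ and whether $t_1 > t_{max}$ — are actually met in this loop; this is the same subtlety that arises in the $prior$-following case of Lemma~\ref{lemma-succ-curr-succ}, and I would resolve it by observing that the last node passed over through $prior$ has timestamp strictly larger than $ts$ while $n_1$'s birth epoch is at most $ts$, which forces the required inequalities.
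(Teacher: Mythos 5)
Your proposal is correct and follows the same route the paper takes: the paper's proof of Lemma~\ref{lemma-range-curr-bigger-loop} is a one-line ``W.l.o.g., the lemma holds from Lemma~\ref{lemma-succ-curr-succ} as well,'' and your detailed plan is exactly the expansion of that w.l.o.g.\ claim, with the only genuinely new ingredient (the base case supplied by Lemma~\ref{lemma-range-curr-smaller-loop} or Lemma~\ref{lemma-range-pred-first-loop}) correctly identified.
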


\begin{proof}
W.l.o.g., the lemma holds from Lemma~\ref{lemma-succ-curr-succ} as well.
\end{proof}

\begin{lemma} \label{lemma-range-first-in-range}
Let $n$ be the node, referenced by the $curr$ variable when line~\ref{alg-range-query-init-count} is executed. Then among all nodes that are logically in the list when epoch $ts$ ends, $n$ has the minimal key which is at least $low$.
\end{lemma}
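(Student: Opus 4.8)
Lemma~\ref{lemma-range-first-in-range} will be proved by splitting on whether the loop in lines~\ref{alg-range-query-while-curr-smaller}--\ref{alg-range-query-smaller-curr-gets-succ} of Algorithm~\ref{pseudo-range} performs at least one iteration before line~\ref{alg-range-query-init-count} is reached. In both cases I will first note that, since the loop condition $curr \to key < low$ fails exactly when line~\ref{alg-range-query-init-count} is about to execute, $n$'s key is at least $low$; it then remains to show that $n$ is logically in the list when epoch $ts$ ends and that no node logically in the list at that time has a key in the interval $[low, k)$, where $k$ is $n$'s key.

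If the loop does not iterate, then $n$ is precisely the node written into the $curr$ variable in line~\ref{alg-range-query-curr-gets-pred} (the only assignment to $curr$ between the exit of the first loop and line~\ref{alg-range-query-init-count}), namely the node referenced by $pred$ when the loop in lines~\ref{alg-range-query-while-find}--\ref{alg-range-query-new-ts} broke. The break condition then gives that $n$'s key is at most $low$, and combined with $n$'s key being at least $low$ we get that $n$'s key equals $low$. By Lemma~\ref{lemma-range-pred-first-loop}, $n$ was logically in the list when epoch $ts$ ended, and by Lemma~\ref{lemma-key-once} it is the only node with key $low$ that is logically in the list at that point. Since every node has key at least $low = n$'s key, $n$ trivially realizes the minimal key which is at least $low$.

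If the loop iterates at least once, let $n_1$ be the node referenced by $curr$ at the start of the last iteration and let $n_2 = n$ be the node assigned into $curr$ in line~\ref{alg-range-query-smaller-curr-gets-succ} of that iteration; write $k_1, k_2$ for their keys. Since the loop body ran on $n_1$ we have $k_1 < low$, and since the loop then exited we have $k_2 \geq low$. Applying Lemma~\ref{lemma-range-curr-smaller-loop} to the consecutive pair $(n_1, n_2)$ yields that both are logically in the list when epoch $ts$ ended and that no node with a key strictly between $k_1$ and $k_2$ is logically in the list at that time. Now if some node $m$ logically in the list when epoch $ts$ ends had $low \leq m\text{'s key} < k_2$, then $k_1 < low \leq m\text{'s key} < k_2$, so $m$'s key lies strictly between $k_1$ and $k_2$ — contradicting the lemma. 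Hence $n = n_2$ is logically in the list when epoch $ts$ ends, has key at least $low$, and no node logically in the list at that time has a smaller key that is at least $low$, which is the claim.

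The one point that needs care is the application of Lemma~\ref{lemma-range-curr-smaller-loop} to the \emph{first} iteration of the loop, where the predecessor value of $curr$ is the node carried over from the starting-point loop; this is exactly what the base case of Lemma~\ref{lemma-succ-curr-succ} (invoking Lemma~\ref{lemma-range-pred-first-loop}) is there to handle, so no separate treatment of a single-iteration run is needed. The only other subtlety is the use of Lemma~\ref{lemma-key-once} in the non-iterating case to exclude a second node with key exactly $low$; everything else is a routine consequence of the cited lemmas together with the loop's exit condition.
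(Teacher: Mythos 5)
Your proof is correct and follows essentially the same chain of lemmas as the paper's: Lemma~\ref{lemma-range-pred-first-loop} for the starting node, Lemma~\ref{lemma-range-curr-smaller-loop} for consecutive nodes traversed by the loop, and the loop-exit conditions to pin down where $low$ sits relative to the keys. The only organizational difference is the case split — you split on whether the loop at lines~\ref{alg-range-query-while-curr-smaller}--\ref{alg-range-query-smaller-curr-gets-succ} performed any iterations, whereas the paper splits on whether $n$'s key equals $low$ or is strictly greater; both decompositions exhaust the same cases — and you apply the ``no node strictly between'' conclusion of Lemma~\ref{lemma-range-curr-smaller-loop} directly rather than routing through Lemma~\ref{lemma-logically-in-not-between} as the paper does, which is a minor simplification since the former already delivers the needed exclusion.
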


\begin{proof}
Since the condition in line~\ref{alg-range-query-while-curr-smaller} holds for $n$, $n$'s key is at least $low$. In addition, by Lemma~\ref{lemma-range-curr-smaller-loop}, $n$ was logically in the list when epoch $ts$ ended. 
If $n$'s key is $low$, then the lemma holds by Lemma~\ref{lemma-key-once}.
Otherwise, since $n$'s key is bigger than $low$, $n$ is not the first node referenced by the $curr$ variable (see line~\ref{alg-range-query-curr-gets-pred}).
By Lemma~\ref{lemma-range-curr-smaller-loop}, the previous node referenced by the $curr$ variable was also logically in the list, and also $n$'s predecessor, when epoch $ts$ ended.
Since the condition in line~\ref{alg-range-query-while-curr-smaller} did not hold for that node, its key is necessarily smaller than $low$. By Lemma~\ref{lemma-logically-in-not-between}, among all nodes that were logically in the list when epoch $ts$ ended, $n$ had the minimal key which was at least $low$. 
\end{proof}

\begin{lemma} \label{lemma-range-key-in-range}
For every key $k$, written into the output array in line~\ref{alg-range-query-write-key}, it holds that $low \leq k \leq high$, and $k$ belongs to a node that was logically in the list when $ts$ ended.
\end{lemma}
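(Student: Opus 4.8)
The plan is to observe that every key written in line~\ref{alg-range-query-write-key} is precisely the key of the node referenced by the $curr$ variable at the start of some iteration of the loop in lines~\ref{alg-range-query-while-curr-bigger}--\ref{alg-range-query-bigger-curr-gets-succ} (line~\ref{alg-range-query-write-key} copies $curr$'s key, and $curr$ only changes at line~\ref{alg-range-query-bigger-curr-gets-succ}), so it suffices to prove the claim for each such $curr$-node. The upper bound $k \leq high$ is then immediate: the body of that loop, hence line~\ref{alg-range-query-write-key}, is executed only after the guard in line~\ref{alg-range-query-while-curr-bigger} has held, i.e. $curr$'s key is at most $high$. The membership part of the statement --- that $k$ belongs to a node that was logically in the list when $ts$ ended --- follows directly from Lemma~\ref{lemma-range-curr-bigger-loop}, which already asserts that every node referenced by $curr$ during this loop was logically in the list when epoch $ts$ ended (the base case being the node handed over by Lemma~\ref{lemma-range-first-in-range}). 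I would also dispose of the degenerate case where this loop performs zero iterations: then no key is written and the statement holds vacuously.

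For the lower bound $k \geq low$, I would argue by induction on the iterations of the loop in lines~\ref{alg-range-query-while-curr-bigger}--\ref{alg-range-query-bigger-curr-gets-succ} that every node referenced by $curr$ during the loop has key at least $low$. For the base case, the node referenced by $curr$ when line~\ref{alg-range-query-init-count} is executed has key at least $low$, because the preceding loop in lines~\ref{alg-range-query-while-curr-smaller}--\ref{alg-range-query-smaller-curr-gets-succ} exits only once its guard $curr \rightarrow key < low$ fails; this is exactly the conclusion of Lemma~\ref{lemma-range-first-in-range}. For the inductive step, if $n_1$ and $n_2$ are the nodes referenced by $curr$ in two consecutive iterations, then Lemma~\ref{lemma-range-curr-bigger-loop} gives $k_1 < k_2$ (the lemma is phrased in terms of nodes strictly between $n_1$ and $n_2$, and its proof, via Lemma~\ref{lemma-succ-curr-succ}, establishes this strict order), so $k_2 > k_1 \geq low$ by the induction hypothesis. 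Combining the two bounds yields $low \leq k \leq high$ together with the membership property, which is the statement.

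The only point that needs care is the strict inequality $k_1 < k_2$ between the keys of consecutive $curr$-values: within an iteration $succ$ is first set to $curr$'s successor, which has a strictly larger key by Lemma~\ref{lemma-invariants} (Invariant~\ref{lemma-invariants-smaller-not-reachable}), but the subsequent walk along \emph{prior} pointers in lines~\ref{alg-range-query-bigger-curr-while-succ}--\ref{alg-range-query-bigger-curr-succ-gets-prior} can move to nodes with larger \emph{or} smaller keys, so one cannot see $k_1 < k_2$ from a single \emph{prior} step. This is exactly what Lemma~\ref{lemma-succ-curr-succ} (through Lemma~\ref{lemma-p-successor}, which ties the key of a node's first predecessor to the key of its $p$-successor, plus the fact that keys strictly increase along \emph{next} pointers) is designed to deliver; accordingly, for the present lemma I would simply invoke Lemma~\ref{lemma-range-curr-bigger-loop} rather than re-derive the ordering, and the proof reduces to the short bookkeeping above.
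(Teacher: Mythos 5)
Your proposal is correct and takes essentially the same route as the paper: the upper bound from the loop guard in line~\ref{alg-range-query-while-curr-bigger}, membership from Lemma~\ref{lemma-range-curr-bigger-loop}, and the lower bound from Lemma~\ref{lemma-range-first-in-range} propagated along the loop via Lemma~\ref{lemma-range-curr-bigger-loop} together with Invariant~\ref{lemma-invariants-smaller-not-reachable}. The paper cites this same trio of lemmas tersely for the bound $low \leq k$; you merely unfold the citation into an explicit induction and correctly flag that the needed ordering $k_1 < k_2$, while not literally part of the statement of Lemma~\ref{lemma-range-curr-bigger-loop}, follows from the successor relation it establishes combined with the key-monotonicity invariant --- which is exactly why the paper includes Invariant~\ref{lemma-invariants-smaller-not-reachable} in its citation list.
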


\begin{proof}
Let $k$ be a key, written into the output array in line~\ref{alg-range-query-write-key}, and let $n$ be the node referenced by the $curr$ variable at this point. Since the condition in line~\ref{alg-range-query-while-curr-bigger} holds for $k$, $k \leq high$. In addition, by Lemma~\ref{lemma-invariants} (Invariant~\ref{lemma-invariants-smaller-not-reachable}),~\ref{lemma-range-curr-bigger-loop}, and~\ref{lemma-range-first-in-range}, $low \leq k$.
Finally, by Lemma~\ref{lemma-range-curr-bigger-loop}, $n$ was logically in the list when epoch $ts$ ended. 
\end{proof}

\begin{lemma} \label{lemma-range-no-missing-keys}
Let $n$ be a node that was logically in the list when epoch $ts$ ended, with a key between $low$ and $high$. Then $n$'s key and value were written into the output array in lines~\ref{alg-range-query-write-key}-\ref{alg-range-query-write-value}.
\end{lemma}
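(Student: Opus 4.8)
The plan is to analyze the sequence of nodes referenced by the $curr$ variable during the final loop (lines~\ref{alg-range-query-while-curr-bigger}--\ref{alg-range-query-bigger-curr-gets-succ}) and show that it enumerates, in strictly increasing key order, exactly the nodes that were logically in the list when epoch $ts$ ended and whose key is at least $low$, with the loop body executing for every such node whose key is also at most $high$. Fix notation: let $c_0$ be the node referenced by $curr$ when line~\ref{alg-range-query-init-count} is executed, and let $c_0, c_1, \ldots$ be the sequence of nodes subsequently assigned into $curr$, in order, during the iterations of that loop.

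First I would set up the enumeration. By Lemma~\ref{lemma-range-first-in-range}, among all nodes that were logically in the list when epoch $ts$ ended, $c_0$ has the minimal key that is at least $low$. By Lemma~\ref{lemma-range-curr-bigger-loop}, for every consecutive pair $c_i, c_{i+1}$ both nodes were logically in the list when epoch $ts$ ended and no node with a key strictly between their keys was logically in the list when epoch $ts$ ended; in particular the keys along the sequence are strictly increasing. Combining this with Lemma~\ref{lemma-key-once} (at any fixed time at most one node with a given key is logically in the list, hence the same holds for ``logically in the list when epoch $ts$ ended''), a short induction shows that $c_0, c_1, \ldots$ is a gap-free, strictly increasing enumeration of an initial segment of the set $S$ of nodes that were logically in the list when epoch $ts$ ended and whose key is at least $low$.

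Next I would argue that this initial segment reaches $n$. Since $n \in S$ — its key is at least $low$ by hypothesis and it was logically in the list when $ts$ ended — and since $c_0$ realizes the minimum key of $S$, the key of $n$ is at least the key of $c_0$. If the keys are equal then $n = c_0$ by Lemma~\ref{lemma-key-once}; otherwise, because the enumeration of $S$ is strictly increasing and gap-free, there is a (unique) index $i \geq 1$ with $c_i = n$ (a skip would yield consecutive $c_j, c_{j+1}$ whose keys straddle $n$'s key, contradicting the ``nothing in between'' part of Lemma~\ref{lemma-range-curr-bigger-loop}). For every $j < i$, the key of $c_j$ is strictly smaller than the key of $n$, which is at most $high$, so when $curr = c_j$ the guard in line~\ref{alg-range-query-while-curr-bigger} holds and the loop body executes, advancing $curr$ to $c_{j+1}$. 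Hence the execution actually reaches the iteration with $curr = c_i = n$; since $n$'s key is at most $high$ the guard holds there too, so lines~\ref{alg-range-query-write-key}--\ref{alg-range-query-write-value} write $n$'s key and value into the output array, which is exactly the claim.

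The main obstacle I expect is the bookkeeping of the induction establishing that the $c_j$'s form a gap-free increasing enumeration of $S$: one must stitch together the base case from Lemma~\ref{lemma-range-first-in-range}, the step from Lemma~\ref{lemma-range-curr-bigger-loop}, and the uniqueness-per-key statement of Lemma~\ref{lemma-key-once}, and one must check that the loop's termination condition (a $curr$ key exceeding $high$) can only halt the process after every element of $S$ with key in $[low, high]$ has been written — which follows immediately from strict monotonicity of the keys along $c_0, c_1, \ldots$, so that no element of $S$ with key at most $high$ can sit beyond a $c_j$ whose key already exceeded $high$.
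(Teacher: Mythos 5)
Your proof is correct and relies on the same three ingredients as the paper's argument (Lemma~\ref{lemma-range-first-in-range} for the starting node, Lemma~\ref{lemma-range-curr-bigger-loop} for consecutive $curr$ values, and Lemma~\ref{lemma-key-once} for key uniqueness at a fixed time). The paper phrases the same reasoning as a minimal-counterexample contradiction, whereas you build the gap-free increasing enumeration directly, but the underlying logic is essentially identical.
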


\begin{proof}
Assume by contradiction that $n$'s key and value were not written into the output array in lines~\ref{alg-range-query-write-key}-\ref{alg-range-query-write-value}. W.l.o.g., let $n$ be the node with the minimal key, whose key and value were not written. By Lemma~\ref{lemma-range-first-in-range}, $n$ cannot be the node referenced by the $curr$ variable when line~\ref{alg-range-query-init-count} is executed. Let $n_0$ be the node with the maximal node in range which is smaller than $n$'s key.
By the choice of $n$, $n$ was referenced by the $curr$ variable during the loop in lines~\ref{alg-range-query-while-curr-bigger}-\ref{alg-range-query-bigger-curr-gets-succ}.
As $n_0$'s key is smaller than $n$'s key, and $n$'s key is in range, let $n_1$ be the next node, referenced by the $curr$ variable. By the choice of $n$, $n_1$'s key is bigger than $n$'s key. By Lemma~\ref{lemma-range-curr-bigger-loop}, $n$ was not logically in the loop when epoch $ts$ ended -- a contradiction.
\end{proof}

We can now define a linearization point per \emph{rangeQuery()} execution.

\begin{definition} \label{definition-range-query-linearization} \textbf{(rangeQuery - linearization points)}
A \emph{rangeQuery()} execution is linearized when epoch $ts$\footnote{the final $ts$ value} terminates.
\end{definition}

We are now going to prove, using Lemmas~\ref{lemma-range-query-linearization-between} and~\ref{lemma-range-query-linearization-correctness} below, that Definition~\ref{definition-range-query-linearization} above indeed defines adequate linearization points. I.e., that each linearization point occurs between the invocation and termination of the operation, and that the operation indeed takes affect at its linearization point.

\begin{lemma} \label{lemma-range-query-linearization-between}
The linearization point, guaranteed by Definition~\ref{definition-range-query-linearization}, occurs between the invocation and termination of the respective execution.
\end{lemma}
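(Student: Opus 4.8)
The plan is to track the value of the global timestamp clock (read via \emph{getTS()} and advanced via \emph{fetchAddTS()}) and to pin down the moment at which epoch $ts$ — for the \emph{final} value of the local variable $ts$ — ends, relative to the operation's invocation and response. Throughout I would use the two basic facts that the clock is monotonically non-decreasing and that ``epoch $ts$ terminates'' means the clock's value first becomes strictly greater than $ts$. So the statement amounts to: (a) epoch $ts$ has not terminated at the invocation, and (b) epoch $ts$ has terminated by the response.

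First I would observe that the final value of $ts$ is assigned at exactly one of two places: at line~\ref{alg-range-query-faa} (the initial \emph{fetchAddTS()}), in case the loop in lines~\ref{alg-range-query-while-find}--\ref{alg-range-query-new-ts} breaks at line~\ref{alg-range-query-break} during its first iteration without ever reaching line~\ref{alg-range-query-new-ts}; or else at the last execution of line~\ref{alg-range-query-new-ts}. In either case the assignment occurs after the invocation and before the response, and no write to $ts$ happens afterwards. This gives the upper bound (b): if $ts$ was set at line~\ref{alg-range-query-faa}, then \emph{fetchAddTS()} atomically moved the clock from $ts$ to $ts+1$, so epoch $ts$ terminates exactly at that step; if $ts$ was set at the last execution of line~\ref{alg-range-query-new-ts}, then $ts = \emph{getTS()} - 1$ where that \emph{getTS()} observed a clock value of at least $ts+1$, so epoch $ts$ had already terminated by that read. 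Both of these moments lie strictly within the execution.

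For the lower bound (a) I would let $ts_0$ be the value returned by the initial \emph{fetchAddTS()} at line~\ref{alg-range-query-faa}. By monotonicity, the clock's value at the invocation is at most $ts_0$. Any later reassignment of $ts$ happens only at line~\ref{alg-range-query-new-ts}, which sets $ts = \emph{getTS()} - 1$; since the clock is already at least $ts_0 + 1$ immediately after line~\ref{alg-range-query-faa} and never decreases, every value produced this way is at least $ts_0$. Hence the final $ts \ge ts_0 \ge$ (clock value at the invocation), so the clock has not exceeded $ts$ at the invocation, i.e.\ epoch $ts$ has not terminated then. Combining (a) and (b), the end of epoch $ts$ — the linearization point of Definition~\ref{definition-range-query-linearization} — lies between the invocation and the response.

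The argument is short; the only point that needs care is the lower bound in the presence of the $ts$-update at line~\ref{alg-range-query-new-ts} (whose purpose, as the paper notes, is VBR-related rather than correctness-related): one must confirm that re-reading the clock and subtracting one never drives $ts$ strictly below the value originally fetched, so that epoch $ts$ cannot already be over at the operation's start. I expect this to be the main (mild) obstacle.
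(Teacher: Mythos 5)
Your proof is correct and follows essentially the same case split as the paper's: the final $ts$ is either set by the initial \emph{fetchAddTS()} at line~\ref{alg-range-query-faa}, or by the last execution of line~\ref{alg-range-query-new-ts}, and in either case the moment at which the clock first exceeds $ts$ lies within the execution. You are more careful than the paper on the lower bound: where the paper's second case simply asserts that epoch $ts$ ``necessarily ended between the invocation and termination,'' you explicitly use clock monotonicity to show the final $ts$ is at least the value $ts_0$ returned at line~\ref{alg-range-query-faa}, hence the epoch cannot already have terminated at the invocation — a point the paper leaves implicit.
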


\begin{proof}
If $ts$ is read in line~\ref{alg-range-query-faa}, then epoch $ts$ terminates in this line, which is between the invocation and termination of the respective execution.
Otherwise, $ts$ is the epoch before the current epoch, when line~\ref{alg-range-query-new-ts} is executed for the last time. As the global epoch was incremented at least once during this execution, epoch $ts$ necessarily ended between the invocation and termination of the execution.
\end{proof}

\begin{lemma} \label{lemma-range-query-linearization-correctness}
Let $low$ and $high$ be the input parameters to a \emph{rangeQuery()} execution, and let $S$ be the set of keys, returned using the output array.
Then at the linearization point, defined by Definition~\ref{definition-range-query-linearization}, it holds that $S$ contains exactly the keys of all of the nodes which are logically in the list, with keys between $low$ and $high$.
\end{lemma}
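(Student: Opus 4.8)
The plan is to obtain the statement as an immediate corollary of the two preceding ``soundness'' and ``completeness'' lemmas, together with the uniqueness of keys. By Definition~\ref{definition-range-query-linearization} the linearization point is the instant at which epoch $ts$ (its final value) terminates; at that instant the predicate ``logically in the list'' is well defined for every node, since it is determined by the node's state (pending or not) and by whether it has been unlinked, and both of these change only at discrete steps. So it suffices to prove that the output set $S$ equals the set $L$ of keys of nodes that are logically in the list when epoch $ts$ ends and whose key lies in $[low,high]$.

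First I would show $S \subseteq L$. Every key placed in the output array is written in line~\ref{alg-range-query-write-key}, inside the loop of lines~\ref{alg-range-query-while-curr-bigger}--\ref{alg-range-query-bigger-curr-gets-succ}, and Lemma~\ref{lemma-range-key-in-range} states that each such key $k$ satisfies $low \le k \le high$ and is the key of a node that was logically in the list when $ts$ ended; hence $S \subseteq L$. For the reverse inclusion $L \subseteq S$ I would invoke Lemma~\ref{lemma-range-no-missing-keys} verbatim: any node that is logically in the list when $ts$ ends and whose key lies in $[low, high]$ has its key (and value) written into the output array in lines~\ref{alg-range-query-write-key}--\ref{alg-range-query-write-value}. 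Combining the two inclusions gives $S = L$, which is exactly the claim. If one additionally wants the returned \emph{count} to equal $|L|$ (so that the array holds no repeated keys), one notes that consecutive values of \emph{curr} in the loop of lines~\ref{alg-range-query-while-curr-bigger}--\ref{alg-range-query-bigger-curr-gets-succ} have strictly increasing keys by Lemma~\ref{lemma-range-curr-bigger-loop}, so no key is emitted twice, while Lemma~\ref{lemma-key-once} guarantees that the map sending a node logically in the list to its key is injective.

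A few routine side conditions make the above well posed, and I would dispatch them up front. Termination: the loop of lines~\ref{alg-range-query-while-find}--\ref{alg-range-query-new-ts} cannot run forever, because in the worst case the \emph{find()} call of line~\ref{alg-range-query-find} returns the \emph{head} sentinel, whose timestamp is necessarily at most $ts$, so the loop breaks; the two traversal loops terminate because successive \emph{curr} values have strictly increasing keys (Lemmas~\ref{lemma-range-curr-smaller-loop} and~\ref{lemma-range-curr-bigger-loop}) while, by Lemma~\ref{lemma-invariants} (Invariant~\ref{lemma-invariants-tail}), a node with key $\infty$ is always reachable ahead, so eventually \emph{curr}'s key is at least $low$ and then exceeds $high$. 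No null dereference occurs along the way, by Corollary~\ref{corollary-dereference}. Well-definedness of ``$ts$'': line~\ref{alg-range-query-new-ts} is executed only finitely often before the first loop breaks, and $ts$ is never modified afterwards, so ``epoch $ts$ terminated'' denotes a single instant, which by Lemma~\ref{lemma-range-query-linearization-between} lies between the operation's invocation and response.

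I do not anticipate a genuine obstacle: the content of this lemma has already been front-loaded into Lemmas~\ref{lemma-key-once}, \ref{lemma-range-key-in-range} and~\ref{lemma-range-no-missing-keys}, and what remains is bookkeeping. The only mildly delicate point is the identification between ``logically in the list at the linearization point'' and ``logically in the list when epoch $ts$ ends'', but these denote the very same moment by Definition~\ref{definition-range-query-linearization}, so the argument reduces to quoting that definition and stitching the two inclusions together.
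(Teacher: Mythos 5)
Your proof is correct and follows the same approach as the paper, which simply cites Lemmas~\ref{lemma-range-key-in-range} and~\ref{lemma-range-no-missing-keys} for the two set inclusions; your added remarks on termination, non-null dereference, and injectivity are sound bookkeeping but not part of the paper's one-line argument.
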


\begin{proof}
The Lemma immediately holds by Lemma~\ref{lemma-range-key-in-range} and~\ref{lemma-range-no-missing-keys}.
\end{proof}

We conclude with the following theorem:

\begin{theorem} \label{theorem-linearizable}
The list implementation, presented in Algorithm~\ref{pseudo-list} and~\ref{pseudo-range}, is a linearizable map implementation.
\end{theorem}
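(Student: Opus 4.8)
The plan is to fix an arbitrary (finite or infinite) execution $\sigma$ of the implementation and exhibit a linearization of it. For every operation that has both an invocation and a response in $\sigma$, I would assign the linearization point given by Definition~\ref{definition-successful-insert-linearization} or~\ref{definition-unsuccessful-insert-linearization} (for \emph{insert()}), Definition~\ref{definition-successful-remove-linearization} or~\ref{definition-unsuccessful-remove-linearization} (for \emph{remove()}), Definition~\ref{definition-contains-linearization} (for \emph{contains()}), and Definition~\ref{definition-range-query-linearization} (for \emph{rangeQuery()}); each of these points is a single atomic step of $\sigma$ (a successful CAS, the step that activates a node, a \emph{getTS()} read, or the increment that terminates epoch $ts$). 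For operations that are invoked but never return, I would keep an operation in the linearization exactly when its would-be linearization step has already occurred in $\sigma$, completing it with the response dictated by the abstract state below, and discard it otherwise; this is well-defined since whether the linearization step has occurred is simply a fact about $\sigma$. Linearization steps are then totally ordered by their order of occurrence in $\sigma$, with ties inside a single atomic step broken arbitrarily. By Lemmas~\ref{lemma-insert-linearization-between},~\ref{lemma-remove-linearization-between},~\ref{lemma-contains-linearization-between} and~\ref{lemma-range-query-linearization-between}, each chosen point lies between the operation's invocation and response, so this total order refines the real-time order of $\sigma$; it then remains to check that the induced sequential history is legal for the map type.

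To do that I would track the abstract map state $M$, defined at each point of $\sigma$ as the set of pairs $(k,v)$ such that some node with key $k$ and value $v$ is logically in the list (Definition~\ref{definition-logically-in}); Lemma~\ref{lemma-key-once} makes this a well-defined partial function from keys to values, so $M$ is a genuine map. The key observation is that membership in $M$ changes only at the atomic steps that are linearization points: a node starts being logically in the list exactly when it becomes active (for a non-sentinel node this is the successful-insert linearization step of Definition~\ref{definition-successful-insert-linearization}, by Definition~\ref{definition-node-states} and Lemma~\ref{lemma-node-states}), and it stops being logically in the list exactly when it becomes unlinked (Definitions~\ref{definition-unlinked} and~\ref{definition-logically-in}), which by Definition~\ref{definition-successful-remove-linearization} is the linearization step of each \emph{remove()} whose marked node lies in the trimmed run. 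The one place needing care is the successful CAS in line~\ref{alg-trim-cas}: it simultaneously unlinks the run of marked nodes together with the flagged node \emph{curr} and makes the fresh copy \emph{newCurr} active; since \emph{newCurr} carries the same key and value as \emph{curr} (line~\ref{alg-trim-alloc}) while the marked nodes carry pairwise distinct keys (Lemma~\ref{lemma-key-once}), the net effect on $M$ at that step is precisely the removal of the keys of the marked nodes — exactly the combined effect of the \emph{remove()} operations linearized there, in any internal tie-break order. Hence $M$ is constant between consecutive linearization steps and transforms across each linearization step exactly as the corresponding operation(s) demand.

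Finally, I would match each operation's response against $M$ at its linearization point using the correctness lemmas already proved: Lemmas~\ref{lemma-successful-insert-linearization-correctness} and~\ref{lemma-unsuccessful-insert-linearization-correctness} for \emph{insert()}, Lemmas~\ref{lemma-successful-remove-linearization-correctness} and~\ref{lemma-unsuccessful-remove-linearization-correctness} for \emph{remove()}, Lemma~\ref{lemma-contains-linearization-correctness} for \emph{contains()}, and Lemma~\ref{lemma-range-query-linearization-correctness} together with Lemmas~\ref{lemma-range-key-in-range} and~\ref{lemma-range-no-missing-keys} (which also pin down the returned values) for \emph{rangeQuery()}. For instance, an unsuccessful \emph{insert(k)} returns a value iff a node with key $k$ is logically in the list at its point, i.e. iff $k \in \operatorname{dom}(M)$, which is the sequential-map behaviour; and \emph{rangeQuery(low,high)} returns exactly $\{(k,v)\in M : low \le k \le high\}$. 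Assembling these, the induced sequential history is legal and refines the real-time order, so the implementation is linearizable. The genuine difficulty of the whole development lies not in this last assembly but in the structural invariants underneath it — Lemma~\ref{lemma-invariants} and the "each key appears at most once" chain, Lemmas~\ref{lemma-reachable-not-between}--\ref{lemma-key-once} — which the excerpt has already established; at the level of the theorem the only residual subtlety is the simultaneity analysis of the line~\ref{alg-trim-cas} step above, plus the routine bookkeeping for incomplete operations.
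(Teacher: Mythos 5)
Your proposal follows the same strategy as the paper: the paper proves Theorem~\ref{theorem-linearizable} by establishing, for each operation type, the adequacy of the linearization points in Definitions~\ref{definition-successful-insert-linearization}--\ref{definition-range-query-linearization} (Lemmas~\ref{lemma-insert-linearization-between}--\ref{lemma-range-query-linearization-correctness}), and you use exactly these points and exactly these lemmas. What you add --- the explicit construction of the abstract map $M$, the observation that $M$ changes only at linearization steps, the tie-breaking of coincident points, and the handling of pending operations --- is the routine assembly that the paper leaves implicit between its final lemma and its theorem statement, and you are right that it needs saying. Your simultaneity analysis at the trim step is the genuine content missing from the paper's write-up: since several \emph{remove()} operations can unlink together, and since a replacement node enters the abstract state at the same instant, one must check that the cumulative effect matches each remove applied in some order, and you do this correctly by noting that \emph{newCurr} inherits \emph{curr}'s key and value.

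One factual slip worth flagging: you attribute the simultaneity to the CAS in line~\ref{alg-trim-cas}, saying it ``simultaneously unlinks the run of marked nodes together with the flagged node \emph{curr} and makes the fresh copy \emph{newCurr} active.'' That CAS only physically detaches the nodes. Under the paper's definitions, \emph{newCurr} is still \emph{pending} immediately after the CAS (Lemma~\ref{lemma-invariants}, Invariant~\ref{lemma-invariants-pending-infant}), and the marked nodes are not yet \emph{unlinked} in the formal sense of Definition~\ref{definition-unlinked}. All of this happens together at the later \emph{getTS()} that first activates \emph{newCurr} (which might be line~\ref{alg-trim-update-new-curr-ts} of the same trim, or a \emph{getTS()} in a concurrent \emph{find()} or \emph{rangeQuery()}). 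The linearization step of each associated \emph{remove()} (Definition~\ref{definition-successful-remove-linearization}) is that \emph{getTS()}, not the CAS, and Lemma~\ref{lemma-remove-linearization-between} relies on this placement. Your net-effect argument at the simultaneous step is unchanged once the step is correctly identified, but as written the attribution contradicts the paper's own definitions and would mislead a reader into thinking the CAS is a linearization point.
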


\subsection{Lock-Freedom} \label{sec-lock-free}

We are going to prove that during every possible execution, at least one of the executing threads terminates its operation.
To derive a contradiction, we assume the existence of an execution $E$, in which, starting from a certain point, no thread terminates its operation. 
Let $s$ be the last step executed before this point.
As the total number of threads is finite and bounded, we assume, w.l.o.g., that no operation is invoked after $s$.

As the CAS execution in line~\ref{alg-insert-update-pred} of Algorithm~\ref{pseudo-list} is successfully executed at most once per \emph{insert()} execution, we can assume that the next pointer of a finite number of nodes are updated in line~\ref{alg-insert-update-pred} after $s$. 
In a similar way, as nodes can only be marked once, during a \emph{remove()} execution, we can assume that a finite number of nodes become marked after $s$. 
In addition, by Definition~\ref{definition-unlinked} and Lemma~\ref{lemma-invariants} (Invariant~\ref{lemma-invariants-still-reachable-before-trim} and~\ref{lemma-invariants-unlinked}), there is a one-to-one function between the set of flagged nodes to the set of marked nodes. Therefore, we can also assume that a finite number of nodes become flagged after $s$. 
Finally, by Definition~\ref{definition-unlinked} and Lemma~\ref{lemma-invariants} (Invariant~\ref{lemma-invariants-still-reachable-before-trim} and~\ref{lemma-invariants-unlinked}), there is also a one-to-one function between the set of pointer updates in line~\ref{alg-trim-cas} to the set of marked nodes.
Although the number of pointer changes in lines~\ref{alg-insert-update-next},~\ref{alg-insert-update-prior},~\ref{alg-trim-init-next}, and~\ref{alg-trim-update-prior} (in Algorithm~\ref{pseudo-list}) may still be unbounded, by Lemma~\ref{lemma-invariants} (Invariant~\ref{lemma-invariants-pending-infant}), these pointers belong to infant nodes, which are not reachable.
Therefore, w.l.o.g., we can assume that all of the list pointers (i.e., pointers that belong to nodes that are reachable from the $head$ sentinel node) are immutable after $s$.

Since the list pointers are immutable after $s$, by Lemma~\ref{lemma-invariants} (Invariant~\ref{lemma-invariants-tail}), the loop in lines~\ref{alg-find-while-true}-\ref{alg-find-curr-gets-pred-next} of Algorithm~\ref{pseudo-list} should always terminates. Then, if the condition in line~\ref{alg-find-if-not-adjacent} does not hold, the \emph{find()} execution terminates, and otherwise, the \emph{trim()} method is invoked.

Assume by contradiction that the \emph{trim()} method may indeed be invoked at this stage.
During the \emph{trim()} execution, by Lemma~\ref{lemma-invariants} (Invariant~\ref{lemma-invariants-tail}), the loop in lines~\ref{alg-trim-while}-\ref{alg-trim-curr-gets-curr-next} must terminate, and the condition in line~\ref{alg-trim-flag} must hold (as otherwise, the CAS in line~\ref{alg-trim-cas} must be successful due to the list immutability -- a contradiction).
This means that the node referenced by the $curr$ is not flagged, and the executing thread did not flag it. As this node is necessarily not marked (the condition in line~\ref{alg-trim-while} did not hold for it), we get a contradiction to the list's immutability.

Therefore, every \emph{find()} invocation eventually terminates after $s$. This implies that every \emph{contains()} invocation eventually terminates after $s$ as well.

Moving on to the \emph{rangeQuery()} operation, by Lemma~\ref{lemma-invariants} (Invariant~\ref{lemma-invariants-tail}) and Lemma~\ref{lemma-tail-p-reachable}, the loops in lines~\ref{alg-range-query-while-pred}-\ref{alg-range-query-pred-gets-prior},~\ref{alg-range-query-while-curr-smaller}-\ref{alg-range-query-smaller-curr-gets-succ}, and~\ref{alg-range-query-while-curr-bigger}-\ref{alg-range-query-bigger-curr-gets-succ} must terminate.
It still remains to show that the loop in lines~\ref{alg-range-query-while-find}-\ref{alg-range-query-new-ts} terminates.
The \emph{find()} invocation in line~\ref{alg-range-query-find} must terminate. In addition, by Lemma~\ref{lemma-find-linearization}, every new loop iteration searches for a smaller key. At a worst case scenario, the \emph{find()} invocation in line~\ref{alg-range-query-find} eventually finds $head$, which obviously has a timestamp which is smaller than $ts$, and a key which is smaller than $low$, and the loop terminates.
Therefore, every \emph{rangeQuery()} invocation eventually terminates after $s$ as well.

Now, assume an \emph{insert()} execution after $s$. The \emph{find()} execution in line~\ref{alg-insert-find} must terminate, and the operation must return in line~\ref{alg-insert-return-curr-val}. Otherwise, by the list immutability, the CAS in line~\ref{alg-insert-update-pred} should be successful -- a contradiction to the list immutability.

Finally, assume a \emph{remove()} execution after $s$. By Lemma~\ref{lemma-find-linearization} and the list's immutability, the second output parameter, returned from the \emph{find()} invocation in line~\ref{alg-remove-find}, should not be marked or flagged. Therefore, the execution must return in line~\ref{alg-remove-return-no-val}, as otherwise, the marking in line~\ref{alg-remove-mark} must succeed (which is a contradiction to the list's immutability).

Since every possible scenario derives a contradiction, we conclude with Theorem~\ref{theorem-lock-free}.

\begin{theorem} \label{theorem-lock-free}
The list implementation, presented in Algorithm~\ref{pseudo-list} and~\ref{pseudo-range}, is a lock-free map implementation.
\end{theorem}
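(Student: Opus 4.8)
The plan is to prove Theorem~\ref{theorem-lock-free} by the usual contradiction argument for Harris-style lists: assume there is an execution $E$ in which, from some step $s$ onward, no thread ever returns from its operation. Since the number of threads is finite, I would pass to the case where no operation is invoked after $s$, so only finitely many operations remain ``live'' past $s$. The goal is then to show that some live operation must in fact terminate, which is the desired contradiction.

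The first — and main — step is to show that the reachable portion of the list becomes \emph{immutable} after $s$. The successful CAS of line~\ref{alg-insert-update-pred} fires at most once per \emph{insert()} call, each node is marked at most once (Lemma~\ref{lemma-deleted-states}), and by Lemma~\ref{lemma-invariants} (Invariants~\ref{lemma-invariants-still-reachable-before-trim} and~\ref{lemma-invariants-unlinked}) the set of flagged nodes and the set of successful line-\ref{alg-trim-cas} CASes are each in one-to-one correspondence with the set of marked nodes; hence only finitely many such structural steps occur after $s$, and after the last of them no \emph{next} pointer of a reachable node changes again. The remaining writes (lines~\ref{alg-insert-update-next},~\ref{alg-insert-update-prior},~\ref{alg-trim-init-next},~\ref{alg-trim-update-prior}) touch only infant nodes, which are unreachable by Lemma~\ref{lemma-invariants} (Invariant~\ref{lemma-invariants-pending-infant}), so they are irrelevant to the reachable structure. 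I expect this step to be the crux: it is exactly where the nonstandard structure (marked-yet-reachable nodes, the extra \emph{prior}/representative-node machinery) could break the usual ``each failed CAS is caused by someone else's successful CAS'' bookkeeping, and it is rescued only by the bijections tying flags and trim-CASes back to once-per-node markings.

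Given immutability, I would then verify that every loop terminates. Using Lemma~\ref{lemma-invariants} (Invariant~\ref{lemma-invariants-tail}), a reachable node with key $\infty$ always exists, so the traversal loop of \emph{find()} (lines~\ref{alg-find-while-true}--\ref{alg-find-curr-gets-pred-next}) advances through strictly increasing keys and halts; if it then takes the non-adjacent branch and calls \emph{trim()}, the inner loop of \emph{trim()} halts by the same invariant, and either the flag CAS of line~\ref{alg-trim-flag} or the CAS of line~\ref{alg-trim-cas} would have to succeed, contradicting immutability — so after $s$ every \emph{find()}, and hence every \emph{contains()}, returns. For \emph{rangeQuery()}, the \emph{prior}-chasing loops terminate because \emph{tail} is p-reachable (Lemma~\ref{lemma-tail-p-reachable}) with a small enough timestamp to stop them, and the outer loop (lines~\ref{alg-range-query-while-find}--\ref{alg-range-query-new-ts}) queries a strictly smaller key each iteration (Lemma~\ref{lemma-find-linearization}), bottoming out at \emph{head} in the worst case; so every \emph{rangeQuery()} returns. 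Finally, after $s$ an \emph{insert()} with an absent key would make the CAS of line~\ref{alg-insert-update-pred} succeed (contradiction), so it returns at line~\ref{alg-insert-return-curr-val}; and a \emph{remove()} whose \emph{find()} returns an unmarked, unflagged node (forced by immutability together with Lemma~\ref{lemma-find-linearization}) either returns NO\_VAL at line~\ref{alg-remove-return-no-val} or makes the marking of line~\ref{alg-remove-mark} succeed (contradiction). In every case a live thread completes its operation after $s$, contradicting the assumption, so the implementation is lock-free.
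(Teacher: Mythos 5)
Your proposal is correct and follows essentially the same route as the paper's proof: the same contradiction setup, the same enumeration of the finitely many structural writes (once-per-\emph{insert()} CAS, once-per-node markings, the injections from flaggings and from successful \emph{trim()} CASes into markings via Invariants~\ref{lemma-invariants-still-reachable-before-trim} and~\ref{lemma-invariants-unlinked}, and the infant argument for the remaining writes), the same conclusion that the reachable list becomes immutable, and the same per-operation termination analysis using Invariant~\ref{lemma-invariants-tail}, Lemma~\ref{lemma-tail-p-reachable}, and Lemma~\ref{lemma-find-linearization}. The only cosmetic difference is that you frame the bijections as the ``crux'' insight rescuing the standard Harris-style bookkeeping, which is a fair and accurate reading of why the paper needs those particular invariants.
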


\section{VBR Integration - Correctness} \label{sec-vbr-correctness}

As discussed in Section~\ref{sec-mvcc-vbr}, we had to change VBR in order to integrate it correctly into our list implementation. 
More specifically, let $t$ be an executing thread, and assume that $t$ currently has a pointer to a node $n$, with a birth epoch $b$\footnote{As opposed to Appendix~\ref{sec-correctness}, here we must separate between a node's timestamp (related to the global timestamps clock) an birth epoch (related to the global epoch clock, used for VBR).} (already read by $t$). In addition, let $e$ be the last reclamation epoch $t$ has recorded. Then $t$ must follow these three guidelines:
\begin{enumerate}
    \item \label{guideline-ts} If $t$'s next step accesses one of $n$'s fields (e.g., its timestamp, key, value, next pointer or prior pointer), then it must be followed by a read of $t$'s birth epoch. If it is bigger than $e$, then $t$ must rollback to its previous checkpoint.
    \item \label{guideline-next} If $t$'s next step dereferences $n$'s next pointer (assuming it is not null), then $t$ must additionally check that $n$'s successor's birth epoch is not bigger than $n$'s next pointer version. If it is bigger, then $t$ must rollback to its previous checkpoint.
    \item \label{guideline-prior} If $t$'s next step dereferences $n$'s prior pointer (assuming it is not null), then $t$ must additionally check that $n$'s p-successor's birth epoch (see Definition~\ref{definition-p-successor} in Appendix~\ref{sec-correctness}) is not bigger than $b$. If it is bigger, then $t$ must rollback to its previous checkpoint.
\end{enumerate}

\begin{wrapfigure}{R}{0.4\textwidth}
\begin{minipage}{0.4\textwidth}
\begin{algorithm}[H]
\caption{The VBR-integrated Node}
\label{pseudo-vbr}
\begin{algorithmic}

\State \textbf{class} Node \label{alg-vbr-class-node}
\Indent
    \State $\langle$Long ts, Long birth\_epoch$\rangle$
    \State K key
    \State V value
    \State $\langle$Node* next, Long next\_version$\rangle$
    \State Node* prior
    
\EndIndent

\end{algorithmic}
\end{algorithm}
\end{minipage}
\end{wrapfigure}

Recall that nodes are retired after being unlinked (see Definition~\ref{definition-unlinked} in Appendix~\ref{sec-correctness}). I.e., a retired node is never reachable, but may still be p-reachable from a reachable node.
The original VBR paper~\cite{DBLP:journals/corr/abs-2107-13843} rigorously proves that VBR maintains the original implementation's linearizability and lock freedom guarantee.  
Given this proof, we are going to prove that, given our list implementation and the new VBR variant, stale values are always ignored. In addition, we prove that the new VBR variant does not foil the original VBR's lock-freedom.
These two guarantees will be proven by showing equivalence between the two reading mechanisms.
Note that, as updates and checkpoint installations remain unchanged, we do not need to handle them.
For convenience, let us denote the new VBR variant with VBR$_N$, and assume both schemes are integrated into the same lock-free linearizable implementation. 
Note that a rollback during a VBR-integrated execution does not necessarily imply a rollback during the equivalent VBR$_N$-integrated execution. E.g., a thread $t$ may hold a pointer to a node $n$, with a birth epoch which is smaller than the current epoch. If the global epoch changes, a further read of one of $n$'s fields in the VBR-integrated execution, would result in a rollback. However, since $n$'s birth epoch is still smaller than the current epoch, this read would not result in a rollback during the equivalent VBR$_N$-integrated execution.
On the other hand, a rollback during a VBR$_N$-integrated execution does imply a rollback during the equivalent VBR-integrated execution, as we prove in Lemma~\ref{lemma-rollback-equivalence}:

\begin{lemma} \label{lemma-rollback-equivalence} 
If a rollback is performed in the VBR$_N$-integrated execution, then a rollback is also performed in the VBR-integrated execution.
\end{lemma}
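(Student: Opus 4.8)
The plan is to fix an execution $E_N$ of the VBR$_N$-integrated implementation in which some thread $t$ performs a rollback at a step $s$, and to exhibit a rollback in the \emph{equivalent} VBR-integrated execution $E$, i.e.\ the one that runs identically to $E_N$ (same thread steps, same shared-memory contents) up to the first point at which the two schemes diverge. If VBR already forces some thread to roll back strictly before $s$, we are done. Otherwise $E$ and $E_N$ agree step-by-step up to $s$, and it suffices to show that VBR forces a rollback at $s$ as well — equivalently, that the global epoch clock at step $s$ differs from the value $e$ that $t$ recorded at its last checkpoint. Note $e$ is the same in both schemes, since checkpoints are installed identically and the one epoch read VBR$_N$ performs per segment coincides with VBR's recorded value.

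The core is a case analysis on which of the three guidelines triggers the rollback at $s$. In every case the rollback is caused by $t$ freshly reading a birth epoch $B$ — of $n$ itself under guideline~\ref{guideline-ts}, of $n$'s successor under guideline~\ref{guideline-next}, or of $n$'s $p$-successor under guideline~\ref{guideline-prior} — that exceeds a recorded bound. Since the global epoch clock is monotone non-decreasing and a node with birth epoch $B$ can only be allocated while the clock reads $B$, the clock at $s$ is at least $B$; hence it suffices to prove $B > e$ in each case, for then the clock at $s$ exceeds $e$ and plain VBR rolls back. For guideline~\ref{guideline-ts} this is immediate: the guideline fires precisely when $B > e$.

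The main obstacle is guidelines~\ref{guideline-next} and~\ref{guideline-prior}, where the trigger condition is that $B$ exceeds, respectively, the version of the traversed \emph{next} pointer or the birth epoch $b$ of the node $n$ whose \emph{prior} pointer is traversed — and neither of these bounds is a priori at least $e$. Here I would argue as follows. Since $t$ has not rolled back before $s$, guideline~\ref{guideline-ts} guarantees that every node $t$ has already inspected — in particular $n$ — has birth epoch at most $e$, and that $t$ inspected $n$'s \emph{next}/\emph{prior} field only after its current checkpoint. If $B > e$ the argument above applies. The remaining sub-case, $B \le e$ while $B$ still exceeds the version (resp.\ $b$), I would reduce to a clock tick occurring after $t$'s checkpoint: by the version-list invariant that birth epochs are non-increasing along \emph{next}/\emph{prior} chains, a (p-)successor whose freshly read birth epoch $B$ exceeds the pointer's version (resp.\ $b$) cannot be the node originally linked there, so it is a reallocation of that address; by the VBR invariant that the epoch clock ticks at least once between a node's retirement and any subsequent reallocation at its address, that reallocation is accompanied by a tick; and since $t$ read the linking pointer, and found $n$ valid, only after recording $e$, that tick falls after $t$'s checkpoint, forcing the clock at $s$ above $e$. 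In all cases the clock at $s$ differs from $e$, so VBR rolls back, which completes the proof; pinning down this last reduction cleanly (handling exactly when the offending reallocation happens relative to $t$'s checkpoint) is where the real work lies.
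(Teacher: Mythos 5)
Your proposal takes the same route as the paper's proof: a case analysis over the three guidelines, reducing each rollback to a clock tick that plain VBR would observe. Your handling of guideline~1 and of the $B > e$ sub-case for guidelines~2 and~3 matches the paper's argument.

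The gap is exactly where you flag it, and your attempted reduction does not close it. The inference ``since $t$ read the linking pointer, and found $n$ valid, only after recording $e$, that tick falls after $t$'s checkpoint'' is a non-sequitur. The thread $t$ may be dereferencing a frozen \emph{next} (or a \emph{prior}) pointer out of a node $n$ that was retired long before $t$'s checkpoint but simply never reclaimed; the old occupant of the target address may then have been retired, reclaimed, and replaced by a node with birth epoch $B \le e$ entirely before $t$ recorded $e$. In that case the tick separating the old occupant's retirement from $B$'s allocation precedes the checkpoint, and plain VBR sees no epoch change even though VBR$_N$ rolls back on $B > v$ (or $B > b$). Nothing about when $t$ reads $n$'s pointer constrains when the target address was recycled; what is needed is an invariant bounding how far back along the version/next chain a traversal that has not yet rolled back can reach (so that any retired-but-reachable node's frozen pointers still target their original occupants, or else the recycling tick necessarily postdates the checkpoint), and your sketch does not supply one. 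For what it is worth, the paper's own proof treats guidelines~2 and~3 tersely, asserting directly that the freshly read successor's birth epoch exceeds $e$ (guideline~2) and that ``the global epoch has necessarily changed'' (guideline~3), without exhibiting the $B \le e$ sub-case you correctly isolated; so you have identified a genuine subtlety that the paper leaves implicit, but the reduction you offer does not resolve it.
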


\begin{proof}
We prove the lemma with respect to the three VBR$_N$ guidelines.
Let $t$ be an executing thread, and assume that $t$ currently has a pointer to a node $n$, with a timestamp $ts$ and a birth epoch $b$ (already read by $t$). In addition, let $e$ be the last reclamation epoch $t$ has recorded (obviously, $b \leq e$).
Recall that in VBR, the global epoch is re-read upon each (either mutable or immutable) field access, followed by a rollback whenever the global epoch has changed.

Assume that $t$ rollbacks due to guideline~\ref{guideline-ts}. Then after reading one of $n$'s fields, it re-reads $n$'s birth epoch, which is bigger than $b$. By the VBR original invariants, this necessarily implies an epoch change since the last time $t$ read $n$'s birth epoch. Therefore, $t$ would read the global epoch and rollback in the VBR-integrated execution as well (if it has not done so already).

Assume that $t$ rollbacks due to guideline~\ref{guideline-next}. Then after reading $n$'s next pointer, $t$ re-reads $n$'s birth epoch, which is still $b$, followed by a read of $n$'s successor's birth epoch, which is bigger than $e$. 
As birth epochs are set according to the current global epoch, this implies that the global epoch has necessarily changed.
Therefore, $t$ would read the global epoch and rollback in the VBR-integrated execution as well (if it has not done so already).

Assume that $t$ rollbacks due to guideline~\ref{guideline-prior}. Then after reading $n$'s prior pointer, $t$ re-reads $n$'s birth epoch, which is still $b$, followed by a read of $n$'s p-successor's birth epoch, which is bigger than $b$. 
As nodes are always allocated after their p-successors already have a birth epoch (see Lemma~\ref{lemma-invariants}, Invariant~\ref{lemma-invariants-infant} and~\ref{lemma-invariants-pending-infant}), this implies that the referenced node is not $n$'s p-successor. I.e., the global epoch has necessarily changed. Therefore, $t$ would read the global epoch and rollback in the VBR-integrated execution as well (if it has not done so already).
\end{proof}

Lemma~\ref{lemma-rollback-equivalence} directly implies that VBR$_N$ maintains lock-freedom. It still remains to show that stale values are always ignored during VBR$_N$-integrated executions. 

\begin{lemma} \label{lemma-stale-values} 
During a VBR$_N$-integrated execution, a rollback is performed upon each read of a stale value.
\end{lemma}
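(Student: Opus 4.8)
The plan is to prove Lemma~\ref{lemma-stale-values} by case analysis on \emph{how} a thread $t$ can come to hold a pointer to a stale (reclaimed and possibly re-allocated) node, and to show that in every case one of the three VBR$_N$ guidelines forces a rollback before $t$ can act on the stale data. The key structural fact supplied by the modified VBR setup is: whenever a node is reclaimed, the global epoch clock has ticked at least once between its retirement and its re-allocation, so a re-allocated node necessarily carries a birth epoch strictly greater than the epoch in which it was last (legitimately) reachable. Since $t$ only begins trusting memory after recording a reclamation epoch $e$ and reading a node's birth epoch $b \le e$, any node whose data has become stale \emph{relative to what $t$ has already read} must have birth epoch $> e$ (if reached by a fresh read) or must have been reached through a pointer whose version is now inconsistent.

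First I would set up the invariant precisely: at any point in a VBR$_N$-integrated execution, if $t$ holds a pointer to node $n$ and has read $n$'s birth epoch as $b$ with $b \le e$ (its last recorded reclamation epoch), and the \emph{real} current contents of that memory address no longer correspond to the node $t$ believes it holds, then $t$ has not yet performed the guideline check associated with the step that exposed the staleness. Then I would walk through the three ways $t$ obtains or uses a node pointer: (i) a fresh read of some field of a node $n$ that $t$ already trusts — here guideline~\ref{guideline-ts} re-reads $n$'s birth epoch after the access, and if $n$ itself was reclaimed between $t$'s original read and now, the re-read birth epoch exceeds $b \le e$ (because the epoch ticked during reclamation), triggering a rollback; (ii) dereferencing $n$'s \emph{next} pointer to obtain successor $m$ — guideline~\ref{guideline-next} demands $m$'s birth epoch $\le$ the version stored with $n$'s next pointer; if $m$ is a stale/re-allocated node, its birth epoch exceeds that version (the version was written when the legitimate successor was installed, and any subsequent reclamation advanced the clock), so $t$ rolls back; (iii) dereferencing $n$'s \emph{prior} pointer to obtain p-successor $m$ — this is the genuinely new case specific to the MVCC setting, and guideline~\ref{guideline-prior} demands $m$'s birth epoch $\le b$, using the monotonicity fact (Lemma~\ref{lemma-invariants}, Invariants~\ref{lemma-invariants-infant} and~\ref{lemma-invariants-pending-infant}) that a node is always allocated after its p-successor already has a birth epoch, hence the true p-successor has birth epoch $\le b$, so a re-allocated impostor (born later, after a clock tick) is detected.

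The main obstacle I expect is case (iii): the \emph{prior} pointer carries no version field, so the argument cannot lean on a recorded version the way the \emph{next} case does. The crux is the observation already flagged in Section~\ref{sec-mvcc-vbr} — that along a legitimate version-list (\emph{prior}) chain the birth epochs are non-increasing, and that the birth epoch of a node is, in the non-reclaimed world, exactly its timestamp / the epoch when it first became logically in the list. I would need to argue carefully that if $t$'s reads of $n$ (its birth epoch $b$, then its \emph{prior} field) both occurred without the clock ticking — which is what it means for no rollback to have been triggered yet by guideline~\ref{guideline-ts} — then either the \emph{prior} pointer $t$ read is the genuine immutable one (in which case the true p-successor has birth epoch $\le b$ by the non-increasing invariant, and any value with birth epoch $> b$ is an impostor and is caught), or $n$ itself has already been reclaimed, which by the inter-reclamation clock tick forces guideline~\ref{guideline-ts}'s re-read of $n$'s birth epoch to exceed $e$. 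I would also need to combine this with Lemma~\ref{lemma-rollback-equivalence} only insofar as it is already established; the stale-value claim itself is the forward direction (VBR$_N$ catches at least as much as VBR does on stale reads), which I would get by noting that every stale read in VBR is caught by a detected global-epoch change, and the three guidelines each entail, as shown in the proof of Lemma~\ref{lemma-rollback-equivalence}, that the relevant global-epoch change has occurred and will be observed. Finally I would conclude that since a rollback restores $t$ to a checkpoint and re-reads fresh values, no stale value is ever used, completing the correctness argument for the VBR$_N$ integration.
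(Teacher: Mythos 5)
Your case analysis mirrors the paper's proof almost exactly: the paper proceeds by contradiction and splits into (a) a field read of a node $t$ already holds that has since been reclaimed (caught by guideline~\ref{guideline-ts}); (b) a first access to $n$ via $n_0$'s \emph{next} pointer where an impostor sits at the old successor's address (caught by guideline~\ref{guideline-next}, using a reachability argument to show the impostor's birth epoch exceeds $t$'s recorded epoch); and (c) a first access via $n_0$'s \emph{prior} pointer, where the monotonicity of birth epochs along the version list (Lemma~\ref{lemma-invariants}, Invariants~\ref{lemma-invariants-infant} and~\ref{lemma-invariants-pending-infant}) forces the impostor to carry a birth epoch strictly larger than $n_0$'s, so guideline~\ref{guideline-prior} fires. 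You correctly identify (c) as the genuinely new case and isolate the right invariant, and you also correctly anticipate the one subtle sub-point the paper glosses over: it is not enough that the \emph{true} p-successor has birth epoch $\le b$; one must also argue that the true p-successor's retire epoch is $\ge n_0$'s birth epoch (since $n_0$'s \emph{prior} field is set while its referent is still live), so that any re-allocation at that address must postdate a clock tick past $n_0$'s birth epoch.

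One part of your sketch is off-target, though: the closing paragraph proposes to lean on Lemma~\ref{lemma-rollback-equivalence} to ``get the forward direction.'' That lemma runs in the \emph{opposite} direction from what you need — it says a VBR$_N$ rollback implies a VBR rollback, and its only role in the paper is to establish lock-freedom (that the new guidelines do not introduce spurious extra rollbacks). It does not, and cannot, show that every stale read triggers one of the three guideline checks; that is precisely the content your three-case argument has to establish from scratch. The statement that ``every stale read in VBR is caught by a detected global-epoch change, and the three guidelines each entail that the relevant global-epoch change has occurred'' has the implications in the wrong order: the guidelines entailing an epoch change does not mean every epoch change (or every stale read) triggers a guideline. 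Drop this appeal; your three cases already carry the full weight, exactly as in the paper's proof, which does not cite Lemma~\ref{lemma-rollback-equivalence} here at all.
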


\begin{proof}
We prove the lemma with respect to our list implementation (as presented in Algorithm~\ref{pseudo-list} and~\ref{pseudo-range}).

Assume by contradiction that at some point, an executing thread $t$ reads a stale value and does not perform a following rollback, for the first time during the execution.

If the stale value is a node's field (i.e., a node's timestamp, key, value, next pointer or prior pointer), and the node has already been reclaimed since $t$ accessed it for the first time, then by Guideline~\ref{guideline-ts}, $t$ must read the node's birth epoch. As $t$ must have recorded the global epoch before accessing the node for the first time, this node's birth epoch is necessarily bigger than $t$'s recorded epoch. I.e., by Guideline~\ref{guideline-ts}, $t$ must perform a rollback -- a contradiction.

Otherwise, if $t$ accesses a node $n$ for the first time, then it must access it either via a next pointer or a prior pointer of some node $n_0$.

First, assume that $n_0$ is already reclaimed by the time $t$ reads the next pointer. By Guideline~\ref{guideline-ts}, and as already explained above, $t$ must perform a rollback after reading the pointer.
Therefore, we can assume that $n_0$ has not been reclaimed at this point.

Now, assume that $n$ is accessed via $n_0$'s next pointer, and that $n$ is not $n_0$'s original successor. Let $n_1$ be $n_0$'s successor  (i.e., $n$ and $n_1$ were allocated from the same address).
Since $n_0$'s next pointer changes during an epoch which is at least the one recorded by $t$, by Lemma~\ref{lemma-deleted-states} and Lemma~\ref{lemma-invariants} (Invariant~\ref{lemma-invariants-active-reachable}), it is still reachable when $n_1$ stops being its successor. I.e., by Definition~\ref{definition-reachability}, $n_1$ is still reachable at this point as well. By Lemma~\ref{lemma-invariants} (Invariant~\ref{lemma-invariants-unlinked}), $n_1$'s retire epoch is at least the global epoch when $t$ reads $n_0$'s next pointer. I.e., $n$'s birth epoch must be bigger than the last epoch, recorded by $t$. By Guideline~\ref{guideline-next}, $t$ must perform a rollback -- a contradiction.

The remaining possible scenario is that $n$ is accessed via $n_0$'s prior pointer. Let $n_1$ be $n_0$'s p-successor  (i.e., $n$ and $n_1$ were allocated from the same address).
By Lemma~\ref{lemma-invariants}, (Invariant~\ref{lemma-invariants-infant} and~\ref{lemma-invariants-pending-infant}), $n_1$'s birth epoch is at most $n_0$'s birth epoch. Therefore, since $n$ and $n_1$ are allocated from the same address, $n$'s birth epoch is strictly bigger than $n_0$'s birth epoch. By Guideline~\ref{guideline-prior}, $t$ must perform a rollback -- a contradiction.

As there are no other types of accesses, and every possible scenario derives a contradiction, threads always rollback after reading stale values.
\end{proof}

Theorem~\ref{theorem-vbr-n} directly derives from Theorem~\ref{theorem-linearizable}, Theorem~\ref{theorem-lock-free}, Lemma~\ref{lemma-rollback-equivalence}, and Lemma~\ref{lemma-stale-values}:

\begin{theorem} \label{theorem-vbr-n}
The implementation presented in Algorithm~\ref{pseudo-list} and~\ref{pseudo-range}, integrated with $VBR_N$, is a safe, lock-free, and linearizable map implementation.
\end{theorem}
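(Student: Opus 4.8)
The plan is to establish Theorem~\ref{theorem-vbr-n} by assembling the four ingredients already in place rather than proving anything from scratch. The three properties we need --- safety (no use-after-free can corrupt the computation), lock-freedom, and linearizability --- have each been isolated into a dedicated statement: Theorem~\ref{theorem-linearizable} gives linearizability of the reclamation-free implementation, Theorem~\ref{theorem-lock-free} gives its lock-freedom, Lemma~\ref{lemma-rollback-equivalence} shows that every rollback $VBR_N$ performs is also performed by plain VBR, and Lemma~\ref{lemma-stale-values} shows that $VBR_N$ rolls back on every read of a stale value. The task of the proof is therefore purely one of \emph{composition}: argue that combining the unchanged data-structure logic with the $VBR_N$ reading discipline preserves the guarantees established for VBR, which the original VBR paper~\cite{DBLP:journals/corr/abs-2107-13843} already proved preserves linearizability and lock-freedom of its host implementation.

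The steps I would carry out, in order: (1) Invoke Lemma~\ref{lemma-stale-values} to conclude that in any $VBR_N$-integrated execution, whenever a thread would observe a value that has been overwritten by a reallocation, it performs a rollback before that value can influence the outcome of the operation; hence, modulo rollbacks, the execution is indistinguishable from a run of the reclamation-free algorithm of Algorithms~\ref{pseudo-list} and~\ref{pseudo-range}. This is the ``safety'' half. (2) Note that a rollback returns control to an installed checkpoint (beginning of each API call, or immediately after a successful mark in line~\ref{alg-remove-mark}), and that checkpoints were deliberately placed so that no rollback crosses a step that fixes linearization-relevant state (inserter/remover identity) --- this is exactly the reasoning already given in Section~\ref{sec-mvcc-vbr}. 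Therefore a rolled-back partial execution contributes no visible effect, and the surviving (completed) operations form a valid execution of the reclamation-free algorithm, to which Theorem~\ref{theorem-linearizable} applies verbatim. (3) For lock-freedom, combine Lemma~\ref{lemma-rollback-equivalence} (so the rollback behavior of $VBR_N$ is no worse than VBR's) with the original VBR paper's theorem that VBR integration preserves lock-freedom of a lock-free host, and with Theorem~\ref{theorem-lock-free} which supplies that the host is lock-free; additionally invoke the mechanism described in Section~\ref{sec-mvcc-vbr} that guarantees non-overlapping node life-cycles (the retire-list epoch bump) so that a finite number of reallocations per epoch cannot produce an infinite rollback loop --- the $ts$-refresh in line~\ref{alg-range-query-new-ts} together with the global-timestamp increment on reallocation ensures a range query eventually stops hitting a stale reallocated node. (4) Conclude that all three properties hold simultaneously, which is the statement of Theorem~\ref{theorem-vbr-n}.

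The main obstacle I expect is step (2)/(3): making rigorous the claim that rollbacks are ``invisible'' and do not break the linearization argument of Theorem~\ref{theorem-linearizable}. The subtlety is that Theorem~\ref{theorem-linearizable}'s proof reasons about an execution of the \emph{abstract} algorithm, whereas a $VBR_N$ run contains extra reads, extra epoch checks, and aborted attempts; one must argue that deleting the aborted prefixes yields a legal abstract execution with the same completed operations and the same per-operation return values. For update operations this is clean because a checkpoint sits before any irrevocable CAS; the delicate case is an \emph{insert} that has already executed the successful CAS of line~\ref{alg-insert-update-pred} (which sets the inserter identity and is \emph{not} followed by a checkpoint) --- here one must appeal to the remark in Section~\ref{sec-mvcc-vbr} that this CAS is not followed by any read of potentially reclaimed memory, so no rollback can occur between it and the end of the operation, hence the identity-setting step is never ``undone.'' Spelling out this case-analysis carefully, and likewise arguing that range-query rollbacks only ever discard work done strictly before the linearizing epoch boundary (so correctness per Lemma~\ref{lemma-range-query-linearization-correctness} is untouched), is where the real work lies; the remaining plumbing is a direct citation of the stated lemmas and of the original VBR correctness theorem.
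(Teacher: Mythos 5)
Your proposal takes essentially the same route as the paper: the paper's own ``proof'' of Theorem~\ref{theorem-vbr-n} is a one-line citation that the result ``directly derives from'' Theorem~\ref{theorem-linearizable}, Theorem~\ref{theorem-lock-free}, Lemma~\ref{lemma-rollback-equivalence}, and Lemma~\ref{lemma-stale-values}, leaning on the original VBR paper's guarantee that VBR preserves the host implementation's linearizability and lock-freedom, and on the observation (stated just before Lemma~\ref{lemma-rollback-equivalence}) that updates and checkpoint placement are unchanged so only the reading discipline needs re-analysis. You identify exactly these ingredients and the same reduction-to-VBR strategy; what you add is an explicit account of the compositional work the paper leaves implicit (rollback invisibility, the insert-CAS case with no following reads of reclaimed memory, range-query rollbacks landing before the linearizing epoch boundary), which is a reasonable fleshing-out rather than a different argument. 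One small caution: the appeal to the $ts$-refresh of line~\ref{alg-range-query-new-ts} and the retire-list epoch bump as load-bearing for lock-freedom is not what the paper's formal argument relies on --- the paper presents those only as practical mitigations (``We reduce the probability of such scenarios in practice''), and the formal lock-freedom claim rests solely on Lemma~\ref{lemma-rollback-equivalence} reducing to VBR's guarantee; a single range-query thread rolling back forever does not by itself violate lock-freedom as long as the reclaiming/updating threads progress, so no such bound is actually required.
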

\section{Fast Index Integration - Correctness} \label{sec-index-correctness}

After showing how to integrate our list with a safe and lock-free memory reclamation scheme (see Section~\ref{sec-mvcc-vbr} and Appendix~\ref{sec-vbr-correctness}), in this section we prove that adding a fast index (as described in Section~\ref{sec-index}), does not foil lock-freedom or linearizability.

Integrating the code from Figure~\ref{pseudo-index} obviously does not foil lock-freedom, as the loop in lines~\ref{alg-index-while}-\ref{alg-index-break} stops after a finite and predefined number of iterations.
Therefore, we only need to show that the integration does not foil linearizability.

\begin{lemma} \label{lemma-index-pred}
Let $n$ be the node referenced by the $pred$ variable whe the code in Figure~\ref{pseudo-index} terminates. Then $n$'s key is smaller than $key$, and when line~\ref{alg-index-pred-next} was executed for the last time, $n$ was logically in the list, and neither marked nor flagged.
\end{lemma}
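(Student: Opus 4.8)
The plan is to split on the two ways the routine in Algorithm~\ref{pseudo-index} can terminate: either the loop exits through the \textbf{break} in line~\ref{alg-index-break}, or it exhausts its attempts and falls through to lines~\ref{alg-index-pred-head}--\ref{alg-index-head-next}. I would first dispose of the key-ordering claim. In the break case, let the relevant iteration be the one that reaches line~\ref{alg-index-break}; throughout that iteration $pred$ is not reassigned, so $n$ is this $pred$, and since keys are immutable, $n$'s key equals $predKey$ read in line~\ref{alg-index-pred-key}. Reaching line~\ref{alg-index-break} means the guard of line~\ref{alg-index-continue} failed, hence $predKey < key$, giving the claim. In the fall-through case $n=head$, whose key is $-\infty$; since the inputs to \emph{find()} are genuine map keys (strictly above $-\infty$), the claim holds there too.

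For the ``logically in the list, neither marked nor flagged'' part in the break case I would pin everything to the instant at which line~\ref{alg-index-pred-next} runs in that last iteration. At that instant $n$'s timestamp equals $predTS$, which was read one line earlier in line~\ref{alg-index-pred-ts} and is $\neq\bot$ because the guard of line~\ref{alg-index-continue} failed; by Lemma~\ref{lemma-immutable-ts} a non-$\bot$ timestamp is immutable, so the timestamp is already $\neq\bot$ at line~\ref{alg-index-pred-next}, hence (Definition~\ref{definition-node-states}) $n$ is not pending there. Also the value $predNext$ of $n$'s \emph{next} pointer read in line~\ref{alg-index-pred-next} is found to be neither marked nor flagged, since the branch of line~\ref{alg-index-if-marked} was not taken; so at line~\ref{alg-index-pred-next} the node $n$ is neither marked nor flagged. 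Combining, $n$ is \emph{active} at that instant. Because $n$ was returned by \emph{findPred} it had earlier been inserted into the index, which by the index-maintenance rules of Section~\ref{sec-index} happens only after the node has been physically inserted into the list via a successful CAS in line~\ref{alg-insert-update-pred} or line~\ref{alg-trim-cas} of Algorithm~\ref{pseudo-list}; by Definition~\ref{definition-infant} such a node is not an infant, and the infant/non-infant status is monotone. Hence Lemma~\ref{lemma-invariants} (Invariant~\ref{lemma-invariants-active-reachable}) makes $n$ reachable at line~\ref{alg-index-pred-next}, so not unlinked by Invariant~\ref{lemma-invariants-unlinked}, and together with ``not pending'' this is precisely ``logically in the list'' (Definition~\ref{definition-logically-in}).

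For the fall-through case $n=head$ and the relevant read of $pred\rightarrow next$ is line~\ref{alg-index-head-next}. I would show once and for all that $head$ is never marked or flagged: by Lemma~\ref{lemma-invariants} (Invariant~\ref{lemma-invariants-smaller-not-reachable}), $head$ (minimal key) is not reachable from any node along \emph{next} pointers, so the $curr$ variable of \emph{find()} — which starts at $head\rightarrow next$ and only advances along \emph{next} — is never $head$; hence line~\ref{alg-remove-mark} never marks $head$, and the $victim/curr$ chain of \emph{trim()} (which likewise starts at a $curr$ of some \emph{find()} and advances along \emph{next}) never flags $head$. So $head$ is always active, hence never pending, and it is never unlinked, as it is never traversed by the loop of lines~\ref{alg-trim-while}--\ref{alg-trim-curr-gets-curr-next}; thus $head$ is logically in the list at every instant, in particular when line~\ref{alg-index-head-next} runs.

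I expect the main obstacle to be the ``not unlinked'' clause in the break case, which rests on the chain active $\Rightarrow$ reachable $\Rightarrow$ not unlinked and therefore on the not-quite-syntactic fact that every node \emph{findPred} can return is a non-infant list node; this is where the ordering ``list insertion precedes index insertion'' (Section~\ref{sec-index}) and the monotonicity of non-infant status must be invoked carefully. A secondary wrinkle is that the lemma lives in the VBR-integrated setting, so one should also note that on the execution prefix reaching line~\ref{alg-index-break} (or lines~\ref{alg-index-pred-head}--\ref{alg-index-head-next}) without a rollback, the reads in lines~\ref{alg-index-pred-ts}--\ref{alg-index-pred-key} are not stale (Lemma~\ref{lemma-stale-values}), so $n$ is a genuine current node whose timestamp and \emph{next} value are as read; and that ``line~\ref{alg-index-pred-next} executed for the last time'' should be read as ``the last assignment of the form $predNext := pred\rightarrow next$,'' which in the fall-through case is line~\ref{alg-index-head-next}.
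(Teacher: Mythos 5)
Your decomposition (break vs.\ fall-through) and the key-ordering part match the paper, and your treatment of the fall-through case is correct (more detailed than the paper's, which simply asserts that \emph{head} is always logically in the list). The genuine gap is in how you reach ``not an infant.'' You argue ``$n$ was returned by \emph{findPred}, hence it was earlier inserted into the index, hence it was list-inserted, hence non-infant.'' But this is precisely the inference the paper explicitly rules out: the index may be stale, i.e.\ \emph{findPred} may return a pointer to an address where a \emph{different} node was once stored and inserted into the index, but that node has since been retired and reclaimed, and the node $n$ now at that address was never inserted into the index. Your appeal to Lemma~\ref{lemma-stale-values} does not close this: ``not stale'' only guarantees that between \emph{findPred} returning the pointer and your reads in lines~\ref{alg-index-pred-ts}--\ref{alg-index-pred-key} there was no reclamation/reallocation at that address; it says nothing about whether the index entry itself was created for $n$ or for an earlier occupant of the slot.

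The paper's proof avoids this by never invoking the index-insertion order at all: it observes that since $predTS\neq\bot$ and $predNext$ is unmarked and unflagged, $n$ was \emph{active} at line~\ref{alg-index-pred-next}, and concludes ``logically in the list'' directly from Definition~\ref{definition-logically-in}. The implicit supporting step is that an active node cannot be unlinked, which holds because a node, once unlinked per Definition~\ref{definition-unlinked}, is already marked or flagged at the moment $newCurr$ becomes active, and Lemma~\ref{lemma-node-states} makes these states terminal. If you prefer your route through Invariants~\ref{lemma-invariants-active-reachable} and~\ref{lemma-invariants-unlinked}, the clean way to get non-infantness is not via the index but via the observed timestamp: $n$ active means $n$'s $ts$ has been CAS'd, and every line that sets a $ts$ (lines~\ref{alg-insert-update-ts}, \ref{alg-find-update-pred-ts}, \ref{alg-find-update-curr-ts}, \ref{alg-trim-update-curr-ts}, \ref{alg-trim-update-succ-ts}, \ref{alg-trim-update-new-curr-ts}, \ref{alg-range-query-smaller-curr-succ-ts-update}, \ref{alg-range-query-bigger-curr-succ-ts-update}) operates on a node that is either $head$, was just successfully CAS-inserted, or was reached by dereferencing a \emph{next} pointer; by Invariants~\ref{lemma-invariants-infant} and~\ref{lemma-invariants-pending-infant} such a node is not an infant, and non-infant status is monotone.
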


\begin{proof}
The lemma obviously holds if the condition in line~\ref{alg-index-if-attempts} holds, since $head$'s key is necessarily smaller than the input key, and it is always logically in the list.

Otherwise, a pointer to $n$ was returned from the \emph{findPred()} call in line~\ref{alg-index-find-pred}.
As the condition in line~\ref{alg-index-continue} does not hold for $n$, its key is indeed smaller than $key$.

Assume by contradiction that $n$ is not the node, encountered during this \emph{findPred()} execution.
Since the condition in line~\ref{alg-index-rollback} did not hold for $n$, it is guaranteed that $n$ was allocated before this \emph{findPred()} invocation -- a contradiction.

It is still possible that the index is not up to date. I.e., a different node was allocated from $n$'s address before $n$, and its respective index node has not been removed from the index yet.
However, regardless of this scenario, since the conditions in lines~\ref{alg-index-continue} and~\ref{alg-index-if-marked} did not hold for $n$, it was an active node (see Definition~\ref{definition-node-states} in Appendix~\ref{sec-correctness}) before line~\ref{alg-index-pred-next} was executed for the last time. By Definition~\ref{definition-logically-in}, $n$ was logically in the list when line~\ref{alg-index-pred-next} was executed for the last time. In addition, since the condition in line~\ref{alg-index-if-marked} did not hold for $n$, by Lemma~\ref{lemma-node-states}, it was neither marked nor flagged when line~\ref{alg-index-pred-next} was executed for the last time.
\end{proof}

Theorem~\ref{theorem-index-correctness} directly derives from Theorem~\ref{theorem-linearizable},
Theorem~\ref{theorem-lock-free},
Theorem~\ref{theorem-vbr-n} and Lemma~\ref{lemma-index-pred}:

\begin{theorem} \label{theorem-index-correctness}
The implementation presented in Algorithm~\ref{pseudo-list} and~\ref{pseudo-range}, integrated with $VBR_N$, and with an external index (as described in Figure~\ref{pseudo-index}), is a safe, lock-free, and linearizable map implementation.
\end{theorem}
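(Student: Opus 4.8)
The plan is to obtain Theorem~\ref{theorem-index-correctness} by \emph{composition}, since the fast index changes the algorithm in exactly one localized way: each \emph{find()} invocation, rather than beginning its traversal from \emph{head} via lines~\ref{alg-find-head}--\ref{alg-find-head-next} of Algorithm~\ref{pseudo-list}, runs the code of Figure~\ref{pseudo-index} and then resumes at line~\ref{alg-find-curr-gets-head-next}; in addition, the index structure is updated (a fresh node inserted just before line~\ref{alg-insert-return-no-val}, a node replaced around line~\ref{alg-trim-cas}, etc.). Accordingly it suffices to establish three preservation claims: (i) the substitution preserves linearizability, (ii) it preserves lock-freedom, and (iii) it preserves the $VBR_N$ safety argument. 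Safety, linearizability, and lock-freedom of the underlying object are already supplied by Theorem~\ref{theorem-vbr-n}, Theorem~\ref{theorem-linearizable}, and Theorem~\ref{theorem-lock-free}, so all of the work lies in these three claims.

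For linearizability, I would first locate where the proof of Section~\ref{sec-linearization-points} uses that \emph{find()} starts at \emph{head}. The entire argument funnels through Lemma~\ref{lemma-find-linearization}, whose proof only requires that when the main traversal loop begins, the executing thread holds a pointer \emph{pred} to a node with key strictly smaller than the input key, together with that node's then-current \emph{next} pointer in \emph{predNext}, such that \emph{pred} is logically in the list and neither marked nor flagged at the moment \emph{predNext} is read --- which is exactly the property that lines~\ref{alg-find-head}--\ref{alg-find-head-next} guarantee for the \emph{head} sentinel. Lemma~\ref{lemma-index-pred} establishes precisely this property for the node returned by the index code, including the delicate case of a stale index entry (when \emph{findPred()} returns a node re-allocated from the address of a previously index-inserted node): the birth-epoch test of line~\ref{alg-index-rollback} excludes reclaimed nodes, the continue/break tests of lines~\ref{alg-index-continue}--\ref{alg-index-break} force \emph{pred} to be active, to have a smaller key, and to carry an unmarked/unflagged \emph{next} pointer, and the fallback to \emph{head} in lines~\ref{alg-index-pred-head}--\ref{alg-index-head-next} is trivially covered. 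Hence every \emph{find()} invocation still satisfies Lemma~\ref{lemma-find-linearization}, so all linearization-point definitions and their supporting lemmas (Definitions~\ref{definition-successful-insert-linearization}--\ref{definition-range-query-linearization} and the accompanying lemmas) carry over unchanged. I would also observe that the index itself is read only inside \emph{findPred()} and written only outside the region that any reachability or timestamp invariant depends on, so it cannot disturb any invariant of Lemma~\ref{lemma-invariants} nor the linearization order.

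For lock-freedom I would use the fact, already noted in the text preceding Lemma~\ref{lemma-index-pred}, that the loop in lines~\ref{alg-index-while}--\ref{alg-index-break} runs at most a fixed, predefined number of iterations, each doing a bounded amount of work (one \emph{findPred()} call on a lock-free index plus a constant number of reads), and then control reaches line~\ref{alg-find-curr-gets-head-next} with either an index-supplied \emph{pred} or with \emph{head}. Thus the index code contributes only a bounded prefix to each \emph{find()}, and the argument of Section~\ref{sec-lock-free} --- which shows that in any hypothetical stalled execution the reachable list pointers eventually become immutable and then every \emph{find()}, hence every operation, terminates --- is untouched. For $VBR_N$ safety I would verify that every access inside Figure~\ref{pseudo-index} already obeys the three guidelines of Appendix~\ref{sec-vbr-correctness}: the reads of \emph{pred}'s fields in lines~\ref{alg-index-pred-ts}--\ref{alg-index-pred-key} are guarded by the birth-epoch re-read of line~\ref{alg-index-rollback} (Guideline~1), and no \emph{next} or \emph{prior} dereference of a list node happens before control returns to the main \emph{find()} code; hence Lemma~\ref{lemma-rollback-equivalence} and Lemma~\ref{lemma-stale-values} continue to apply, and any stale value read through the index triggers a rollback exactly as before.

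The main obstacle I anticipate is the staleness of the external index --- \emph{findPred()} may hand back a pointer to a node that has been reclaimed and re-allocated, or to one whose key no longer qualifies it as a predecessor. The full weight of the argument rests on Lemma~\ref{lemma-index-pred} neutralizing precisely these cases, so the real content of the proof is the linkage between Lemma~\ref{lemma-index-pred} and Lemma~\ref{lemma-find-linearization}; once that bridge is in place, everything else is a routine appeal to Theorem~\ref{theorem-linearizable}, Theorem~\ref{theorem-lock-free}, and Theorem~\ref{theorem-vbr-n}.
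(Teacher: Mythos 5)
Your proposal is correct and follows essentially the same route as the paper: Lemma~\ref{lemma-index-pred} serves as the bridge showing that the index-supplied \emph{pred} satisfies exactly the properties (smaller key; logically in the list; neither marked nor flagged at the moment \emph{predNext} is read) that the \emph{head}-based start of \emph{find()} provides, so Lemma~\ref{lemma-find-linearization} and everything downstream of it carry over; lock-freedom follows from the bounded loop; the theorem is then a direct corollary of Theorems~\ref{theorem-linearizable},~\ref{theorem-lock-free},~\ref{theorem-vbr-n} and Lemma~\ref{lemma-index-pred}. Your write-up is somewhat more explicit than the paper's (the paper leaves the linkage between Lemma~\ref{lemma-index-pred} and Lemma~\ref{lemma-find-linearization} and the $VBR_N$ guideline check for Figure~\ref{pseudo-index} implicit), but the decomposition and key lemma are the same.
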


\end{document}